\def\withcolors{1}
\def\withnotes{0}
\def\withindex{0}
  \newcommand\indexed[2][]{\ifthenelse{\equal{#1}{}}{#2\index{#2}}{#2\index{#1}}}
  	\newaliascnt{coro}{theorem}
  	  \newtheorem{corollary}[coro]{Corollary}
  	\newaliascnt{lem}{theorem}
  		\newtheorem{lemma}[lem]{Lemma}
  	\newaliascnt{clm}{theorem}
  		\newtheorem{claim}[clm]{Claim}
	\newaliascnt{fact}{theorem}
 	 	\newtheorem{fact}[theorem]{Fact}
  \newaliascnt{prop}{theorem}
  		\newtheorem{proposition}[prop]{Proposition}
	\newaliascnt{conj}{theorem}
  \theoremstyle{remark}
  \theoremstyle{definition}   	\newaliascnt{defn}{theorem}
 		 \newtheorem{definition}[defn]{Definition}
\newenvironment{proofof}[1]{\begin{proof}[Proof of {#1}]}{\end{proof}}
\providecommand{\email}[1]{\href{mailto:#1}{\nolinkurl{#1}\xspace}}
  \newcommand{\new}[1]{{\color{red} {#1}}}   \newcommand{\newer}[1]{{\color{blue} {#1}}}   \newcommand{\newest}[1]{{\color{orange} {#1}}}                \else
  \newcommand{\new}[1]{{{#1}}}
  \newcommand{\newer}[1]{{{#1}}}
  \newcommand{\newest}[1]{{{#1}}}
  \newcommand{\cnote}[1]{}
  \newcommand{\tnote}[1]{}
  \newcommand{\inote}[1]{}
  \newcommand{\rnote}[1]{}
  \newcommand{\todonote}[2][]{}
	\newcommand{\questionnote}[2][]{}
	\newcommand{\todonotedone}[2][]{}
	\newcommand{\todonoteinline}[2][]{}
  \newcommand{\marginnote}[1]{}
\newcommand{\ignore}[1]{\leavevmode\unskip}     
\newcommand{\eps}{\ensuremath{\varepsilon}\xspace}
 \newcommand{\Tester}{\ensuremath{\mathcal{T}}\xspace} \newcommand{\Learner}{\ensuremath{\mathcal{L}}\xspace} \newcommand{\property}{\ensuremath{\mathcal{P}}\xspace} \newcommand{\class}{\ensuremath{\mathcal{C}}\xspace} \newcommand{\eqdef}{\stackrel{\rm def}{=}}
\newcommand{\accept}{\textsf{ACCEPT}\xspace}
\newcommand{\fail}{\textsf{FAIL}\xspace}
\newcommand{\reject}{\textsf{REJECT}\xspace}
\newcommand{\opt}{{\textsc{opt}}\xspace}
\newcommand{\domain}{\ensuremath{\Omega}\xspace}  \newcommand{\yes}{{\sf{}yes}\xspace}
\newcommand{\no}{{\sf{}no}\xspace}
\newcommand{\littleO}[1]{{o\mleft( #1 \mright)}}
\newcommand{\bigO}[1]{{O\mleft( #1 \mright)}}
\newcommand{\bigTheta}[1]{{\Theta\mleft( #1 \mright)}}
\newcommand{\bigOmega}[1]{{\Omega\mleft( #1 \mright)}}
\newcommand{\tildeO}[1]{\tilde{O}\mleft( #1 \mright)}
\providecommand{\poly}{\operatorname*{poly}}
\newcommand{\setOfSuchThat}[2]{ \left\{\; #1 \;\colon\; #2\; \right\} } 			                                                                                           
\newcommand{\dtv}{\operatorname{d_{\rm TV}}}
\newcommand{\totalvardist}[2]{{\dtv\!\left({#1, #2}\right)}}
\newcommand{\kolmogorov}[2]{{\norm{#1 - #2}_{\rm Kol}}}
\newcommand{\dist}[2]{{\operatorname{dist}\!\left({#1, #2}\right)}}
\newcommand\restr[2]{{  \left.\kern-\nulldelimiterspace   #1   \vphantom{\big|}   \right|_{#2}   }}
\newcommand{\proba}{\Pr}
\newcommand{\probaOf}[1]{\proba\!\left[\, #1\, \right]}
\newcommand{\shortexpect}{\mathbb{E}}
\newcommand{\var}{\operatorname{Var}}
\newcommand{\uniform}{\ensuremath{\mathcal{U}}}
\newcommand{\binomial}[2]{\ensuremath{\operatorname{Bin}\!\left( #1, #2 \right)}}
\newcommand{\poisson}[1]{\ensuremath{\operatorname{Poisson}\!\left( #1 \right) }}
\newcommand{\norm}[1]{\lVert#1{\rVert}}
\newcommand{\normone}[1]{{\norm{#1}}_1}
\newcommand{\normtwo}[1]{{\norm{#1}}_2}
\newcommand{\norminf}[1]{{\norm{#1}}_\infty}
\newcommand{\abs}[1]{\left\lvert #1 \right\rvert}
\newcommand{\dabs}[1]{\lvert #1 \rvert}
\newcommand{\clg}[1]{\left\lceil #1 \right\rceil}
\newcommand{\R}{\ensuremath{\mathbb{R}}\xspace}
\newcommand{\N}{\ensuremath{\mathbb{N}}\xspace}
\newcommand{\SAMP}{{\sf SAMP}\xspace}
\newcommand{\pdfsamp}{dual\xspace}
\newcommand{\cdfsamp}{cumulative dual\xspace}
\newcommand{\Pdfsamp}{\expandafter\capitalisewords\expandafter{\pdfsamp}}
\newcommand{\Cdfsamp}{\expandafter\capitalisewords\expandafter{\cdfsamp}}
\newcommand{\lp}[1][1]{\ell_{#1}}
\newcommand{\D}{\ensuremath{D}}
\newcommand{\Rom}[1]{\expandafter\@slowromancap\romannumeral #1@}
\newcommand\ackname{Acknowledgements}
  \newenvironment{acknowledgements}{      \titlepage
      \null\vfil
      \@beginparpenalty\@lowpenalty
      \begin{center}        \bfseries \ackname
        \@endparpenalty\@M
      \end{center}}     {\par\vfil\null\endtitlepage}
\newcommand{\classmon}{\ensuremath{\mathcal{M}}\xspace}
\newcommand{\classtmo}[1][t]{\ensuremath{\classmon_{#1}}\xspace}
\newcommand{\classuni}{\classtmo[1]}
\newcommand{\classlog}{\ensuremath{\mathcal{L}}\xspace}
\newcommand{\classcvx}{\ensuremath{\mathcal{K}^+}\xspace}
\newcommand{\classcve}{\ensuremath{\mathcal{K}^-}\xspace}
\newcommand{\classmhr}{\ensuremath{\mathcal{MHR}}\xspace}
\newcommand{\classpbd}{\ensuremath{\mathcal{PBD}}\xspace}
\newcommand{\classbin}{\ensuremath{\mathcal{BIN}}\xspace}
\newcommand{\classpoly}[1][t,d]{\ensuremath{\mathcal{P}_{#1}}\xspace}
\newcommand{\classhist}[1][t]{\ensuremath{\mathcal{H}_{#1}}\xspace}
\newcommand{\classksiirv}[1][k]{\ensuremath{#1}\text{-}\ensuremath{\mathcal{SIIRV}}\xspace}
\newcommand{\estimdist}[1][\class]{\textsc{ProjectionDist}_{#1}}
\newcounter{IRL}
\renewcommand{\theIRL}{\textsf{Succinctness}}\newcommand*{\inlineref}[1]{\refstepcounter{IRL}({\theIRL})\label{#1}}
\def\authornamecc{Cl\'ement L. Canonne}
\def\authorafficc{Columbia University. Email: \email{ccanonne@cs.columbia.edu}. Research supported by NSF CCF-1115703 and NSF CCF-1319788.}
\def\authornamedi{Ilias Diakonikolas}
\def\authoraffidi{University of Edinburgh. Email: \email{ilias.d@ed.ac.uk}. Research supported by EPSRC grant EP/L021749/1, a Marie Curie Career Integration Grant, and a SICSA grant. This work was performed in part while visiting CSAIL, MIT.}
\def\authornamegt{Themis Gouleakis}
\def\authoraffigt{CSAIL, MIT. Email: \email{tgoule@mit.edu}.}
\def\authornamerr{Ronitt Rubinfeld}
\def\authoraffirr{CSAIL, MIT and the Blavatnik School of Computer Science, Tel Aviv University. Email: \email{ronitt@csail.mit.edu}.}
\title{Testing Shape Restrictions of Discrete Distributions}
\author{
  \authornamecc\thanks{\authorafficc}
  \and \authornamedi\thanks{\authoraffidi}
  \and \authornamegt\thanks{\authoraffigt}
  \and \authornamerr\thanks{\authoraffirr}
}
\makeatletter   \hypersetup{
    pdftitle={\@title},
    pdfauthor={\authornamecc{}, \authornamedi{}, \authornamegt{} and \authornamerr}, 
    pdfsubject={Probability Distribution Testing},
    pdfkeywords={property testing} {distribution testing} {sublinear algorithms}
  }
\begin{document}

 \maketitle
 
\begin{abstract}
  \makeatletter{}We study the question of testing \emph{structured} properties (classes) of discrete distributions. Specifically, given 
sample access to an arbitrary distribution $\D$ over $[n]$ and a property $\property$, the goal is to distinguish 
between $\D\in\property$ and $\lp[1](\D,\property)>\eps$. 
We develop a general algorithm for this question, which applies to a large range of ``shape-constrained'' properties, 
including monotone, log-concave, $t$-modal, piecewise-polynomial, and Poisson Binomial distributions. Moreover, for all cases 
considered, our algorithm has near-optimal sample complexity with regard to the domain size and is computationally efficient.  
For most of these classes, we provide the first non-trivial tester in the literature. 
In addition, we also describe a generic method to prove lower bounds for this problem, and use it to show our upper bounds are nearly tight. 
Finally, we extend some of our techniques to tolerant testing, deriving nearly--tight upper and lower bounds for the corresponding questions.
 
\end{abstract}
\pagenumbering{gobble}\setcounter{page}{1}

\pagenumbering{arabic}
\section{Introduction}\label{sec:introduction}
\makeatletter{}Inferring information about the probability distribution that underlies a data sample is an essential question in Statistics, and one that has ramifications in every field of the natural sciences and quantitative research. In many situations, it is natural to assume that this data exhibits some simple structure because of known properties of the origin of the data, and in fact these assumptions are crucial in making the problem tractable. Such assumptions translate as constraints on the probability distribution -- e.g., it is supposed to be Gaussian, or to meet a smoothness or ``fat tail'' condition (see e.g.,~\cite{Mandelbrot:63:FatTail,Hougaard:86:StableDistribs,PhysRevLett:95}).

As a result, the problem of deciding whether a distribution possesses such a structural property has been widely investigated both in theory and practice, in the context of \emph{shape restricted inference}~\cite{BBBB:72,SS:01} and \emph{model selection}~\cite{MP:03}. Here, it is guaranteed or thought that the unknown distribution satisfies a shape constraint, such as having a monotone or log-concave probability density function~\cite{SN:99,BB:05,Wal:09,Diakonikolas:CRC}.
From a different perspective, a recent line of work in Theoretical Computer Science, originating from the \ignore{seminal} papers of Batu et al.~\cite{BFRSW:00,BFFKRW:01,GRexp:00} has also been tackling similar questions in the setting of property testing (see~\cite{Ron:08,Ron:10,Rubinfeld:12:Taming,Canonne:15:BlueData} for surveys on this field). This very active area has seen a spate of results and breakthroughs over the past decade, culminating in very efficient (both sample and time-wise) algorithms for a wide range of distribution testing problems~\cite{BDKR:05,GMV:06,Alon:2007,DDSV:13,CDVV:14,AD:15,DKN:15}. In many cases, this led to a tight characterization of the number of samples required for these tasks as well as the development of new tools and techniques, drawing connections to learning and information theory~\cite{ValiantValiant:10lb,VV:11:stoc,VV:14}.

In this paper, we focus on the following general property testing problem: given a class (property) of distributions \property and sample access to an \emph{arbitrary} distribution $\D$, one must distinguish between the case that \textsf{(a)} $\D\in\property$, versus \textsf{(b)} $\normone{\D-\D^\prime} >\eps$ for all $\D^\prime\in\property$ (i.e., $\D$ is either in the class, or far from it). While many of the previous works have focused on the testing of specific properties of distributions or obtained algorithms and lower bounds on a case-by-case basis, an emerging trend in distribution testing is to design general frameworks that can be applied to \emph{several} property testing problems~\cite{Valiant:11,VV:11:stoc, DKN:15, DKN:15:FOCS}. This direction, the testing analog of a similar movement in distribution learning~\cite{CDSS:13,CDSS:14:NIPS,CDSS:14,ADLS:15}, aims at abstracting the minimal assumptions that are shared by a large variety of problems, and giving algorithms that can be used for any of these problems. In this work, we make significant progress in this direction by providing a unified framework for the question of \ignore{\emph{testing membership in a class}} testing various properties of probability distributions. More specifically, we describe a generic technique to obtain upper bounds on the sample complexity of this question, which applies to a broad range of structured classes. Our technique yields sample near-optimal and computationally efficient testers for a wide range of distribution families. Conversely, we also develop a general approach to prove lower bounds on these sample complexities, and use it to derive tight or nearly tight bounds for many of these classes.

\paragraph{Related work.} Batu et al.~\cite{BKR:04} initiated the study of efficient property testers for monotonicity and obtained (nearly) matching upper and lower bounds for this problem; while~\cite{AD:15} later considered testing the class of Poisson Binomial Distributions, and settled the sample complexity of this problem (up to the precise dependence on $\eps$). Indyk, Levi, and Rubinfeld~\cite{ILR:12}, focusing on distributions that are piecewise constant on $t$ intervals (``$t$-histograms'') described a $\tilde{O}(\sqrt{tn}/\eps^5)$-sample algorithm for testing membership to this class. Another body of work by~\cite{BDKR:05},~\cite{BKR:04}, and~\cite{DDSV:13} shows how assumptions on the shape of the distributions can lead to significantly more efficient algorithms. They describe such improvements in the case of identity and closeness testing as well as for entropy estimation, under monotonicity or $k$-modality constraints. Specifically, Batu et al. show in~\cite{BKR:04} how to obtain a $O\big({\log^3 n/\eps^3}\big)$-sample tester for closeness in this setting, in stark contrast to the $\Omega\big({{n}^{2/3}}\big)$ general lower bound. Daskalakis et al.~\cite{DDSV:13} later gave ${O}(\sqrt{\log n})$ and ${O}({\log^{2/3} n})$-sample testing algorithms for testing respectively identity and closeness of monotone distributions, and obtained similar results for $k$-modal distributions. Finally, we briefly mention two related results, due respectively to~\cite{BDKR:05} and~\cite{DDS:12}. The first one states that for the task of getting a multiplicative \emph{estimate} of the entropy of a distribution, assuming monotonicity enables exponential savings in sample complexity -- $O\big({\log^6 n}\big)$, instead of $\bigOmega{n^c}$ for the general case. The second describes how to test if an unknown $k$-modal distribution is in fact monotone, using only $\bigO{k/\eps^2}$ samples. Note that the latter line of work differs from ours in that it \emph{presupposes} the distributions satisfy some structural property, and uses this knowledge to test something else about the distribution; while we are given \textit{a priori} arbitrary distributions, and must \emph{check} whether the structural property holds. Except for the properties of monotonicity and being a PBD, nothing was previously known on testing the shape restricted properties that we study. Independently and concurrently to this work, Acharya, Daskalakis, and Kamath obtained a sample near-optimal efficient algorithm for testing log-concavity.\footnotemark

\footnotetext{Following the communication of a preliminary version of this paper (February 2015), we were informed that~\cite{ADK:15} subsequently obtained near-optimal testers for some of the classes we consider. To the best of our knowledge, their work builds on ideas from~\cite{AD:15} and their techniques are orthogonal to ours.}

Moreover, for the specific problems of identity and closeness testing,\footnote{Recall that the identity testing problem asks, given the explicit description of a distribution $\D^\ast$ and sample access to an unknown distribution $\D$, to decide whether $\D$ is equal to $\D^\ast$ or far from it; while in closeness testing both distributions to compare are unknown.} recent results of~\cite{DKN:15,DKN:15:FOCS} describe a general algorithm which applies to a large range of shape or structural constraints, and yields optimal identity testers for classes of distributions that satisfy them. We observe that while the question they answer can be cast as a specialized instance of membership testing, our results are incomparable to theirs, both because of the distinction above (testing \emph{with} versus testing \emph{for} structure) and as the structural assumptions they rely on are fundamentally different from ours.

\subsection{Results and Techniques}

\noindent {\bf Upper Bounds.} A natural way to tackle our membership testing problem would be to first learn the unknown distribution $\D$ \emph{as if} it satisfied the property, before checking if the hypothesis obtained is indeed both close to the original distribution and to the property. Taking advantage of the purported structure, the first step could presumably be conducted with a small number of samples; things break down, however, in the second step. 
Indeed, most approximation results leading to the improved learning algorithms one would apply in the first stage only provide very weak guarantees, in the $\lp[1]$ sense. For this reason, they lack the robustness that would be required for the second part, where it becomes necessary to perform \emph{tolerant} testing between the hypothesis and $\D$ -- a task that would then entail a number of samples almost linear in the domain size. To overcome this difficulty, we need to move away from these global $\lp[1]$ closeness results and instead work with stronger requirements, this time in $\lp[2]$ norm. 

At the core of our approach is an idea of Batu et al.~\cite{BKR:04}, which show that monotone distributions can be well-approximated (in a certain technical sense) 
by piecewise constant densities on a suitable interval partition of the domain; and leverage this fact to reduce monotonicity testing to uniformity testing on each interval of this partition. 
While the argument of~\cite{BKR:04} is tailored specifically for the setting of monotonicity testing, we are able to abstract the key ingredients, and obtain a generic membership tester that applies to a wide range of distribution families. In more detail, we provide a testing algorithm which applies to any class of distributions which admits succinct approximate decompositions -- that is, each distribution in the class can be well-approximated (in a strong $\lp[2]$ sense) by piecewise constant densities on a small number of intervals (we hereafter refer to this approximation property, formally defined in~\autoref{def:struct:dec:split}, as \inlineref{label:struct:criterion}; and extend the notation to apply to any \emph{class} \class of distributions for which all $\D\in\class$ satisfy~\eqref{label:struct:criterion}). 
Crucially, the algorithm does not care about \emph{how} these decompositions can be obtained: 
for the purpose of testing these structural properties we only need to establish their \emph{existence}. Specific examples are given in the corollaries below.
Informally, our main algorithmic result, informally stated (see~\autoref{theo:main:testing:detailed} for a detailed formal statement), is as follows:

\begin{restatable}[Main Theorem]{theorem}{mainthmtestingalgo}\label{theo:main:testing}
There exists an algorithm \textsc{TestSplittable} which, given sampling access to an unknown distribution $\D$ over $[n]$ and parameter $\eps\in(0,1]$, can distinguish with probability $2/3$ between \textsf{(a)} $\D\in\property$ versus \textsf{(b)} $\lp[1](\D, \property) > \eps$, for \emph{any} property $\property$ that satisfies the above natural structural criterion \eqref{label:struct:criterion}. Moreover, for many such properties this algorithm is computationally efficient, and its sample complexity is optimal (up to logarithmic factors and the exact dependence on $\eps$).
\end{restatable}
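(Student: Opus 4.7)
The plan is to turn the succinctness criterion into an algorithmic template that reduces the task of testing membership in $\property$ to three cleaner subtasks: adaptively constructing an interval partition $\mathcal{I}$ of $[n]$ with few parts, verifying that $\D$ is close to its flattening on $\mathcal{I}$, and verifying that the resulting coarsening of $\D$ is consistent with some member of $\property$. The key leverage is that the structural hypothesis guarantees an $\ell_2$-strong (not merely $\lp[1]$-strong) approximation on each interval, and testing closeness to uniformity in $\ell_2$ can be done with a number of samples sublinear in the interval size. This is what lets the total sample complexity scale as $\tildeO{\sqrt{tn}}$ rather than $\bigOmega{n}$.

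The first step is partition construction. I would draw a batch of samples and run an adaptive refinement: start from the trivial partition and repeatedly split intervals that look too heavy or too non-flat, stopping with a partition $\mathcal{I}^*$ of size $\tildeO{t}$. A doubling/Birg\'e-style argument then shows that for every $\D \in \property$, the piecewise-constant flattening $\Phi_{\mathcal{I}^*}(\D)$ is $O(\eps)$-close to $\D$ in $\lp[1]$, while on each part of $\mathcal{I}^*$ the conditional distribution of $\D$ is $\ell_2$-close to uniform. The second step runs, in parallel across all parts and using a single pool of fresh samples, an $\ell_2$ uniformity tester on $\D$ conditioned on each part; the algorithm accepts this step iff every one of these tests accepts, with per-test thresholds union-bounded over the $|\mathcal{I}^*|$ parts. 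Together, steps 1 and 2 certify that $\D$ is $O(\eps)$-close in $\lp[1]$ to some piecewise-constant distribution supported on $\mathcal{I}^*$.

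The third step addresses the fact that closeness to a histogram on $\mathcal{I}^*$ is necessary but in general not sufficient for membership in $\property$. What remains is to test whether the coarse distribution $\bar{\D}$ induced by projecting samples onto the $\tildeO{t}$ cells of $\mathcal{I}^*$ is close in $\lp[1]$ to the coarsening $\property_{\mathcal{I}^*}$ of the class. Because the reduced domain has only $\tildeO{t}$ cells, this residual test can be carried out either by a learn-then-test scheme on the reduced domain or by an identity test against a suitable proper cover of $\property_{\mathcal{I}^*}$; the sample cost is controlled by $|\mathcal{I}^*|$, not by $n$. Completeness follows because a genuine $\D \in \property$ passes all three steps with high probability; soundness follows because if $\D$ is $\eps$-far from $\property$ in $\lp[1]$, then either $\D$ is not well-approximated by any histogram on $\mathcal{I}^*$ (detected in step 2) or the coarse distribution witnesses distance from $\property_{\mathcal{I}^*}$ (detected in step 3).

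The main obstacle is step 1: producing an adaptive partition that simultaneously has size $\tildeO{t}$ regardless of the input distribution, makes $\Phi_{\mathcal{I}^*}(\D)$ both $\lp[1]$-close to $\D$ and $\ell_2$-close to uniform on each part whenever $\D \in \property$, and can be computed within the overall sample budget. The existence of a good target partition is supplied by the succinctness hypothesis, but the algorithm does not know which partition that is; bridging this gap via a sample-driven procedure that effectively tracks the unknown target partition is the most delicate piece. Once this is in hand, the per-interval $\ell_2$ tester and the final reduced $\lp[1]$ check are essentially bookkeeping, with sample complexity and correctness bounds following from standard tail-probability and union-bound arguments.
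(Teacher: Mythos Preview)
Your high-level decomposition into (i) build a partition, (ii) certify $\ell_2$-flatness on each piece, (iii) check consistency of the resulting histogram with $\property$ matches the paper's architecture. However, there is a genuine gap in how you frame step~(i), and step~(iii) is more complicated than it needs to be.

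\textbf{The partition step.} You ask for an adaptive refinement that stops at size $\tildeO{t}$ \emph{regardless of the input distribution}, and you correctly identify this as the main obstacle. The paper's resolution is that you should \emph{not} try to guarantee bounded partition size universally: instead, the number of pieces exceeding the structural bound $L$ is itself a rejection criterion. Concretely, the paper's \textsc{TestSplittable} performs recursive dyadic bisection, using the $\ell_2$ closeness-to-uniform test (\autoref{lemma:estimate:l2:add}) as the splitting criterion on each interval; if more than $L$ splits are ever performed, the algorithm outputs \reject on the spot. Completeness holds because the splittability hypothesis guarantees a dyadic partition of size at most $L$ on which $\D$ is piecewise $\ell_2$-flat, so the recursion is guaranteed to terminate within $L$ splits whenever $\D\in\property$. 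Soundness holds because either the recursion blows up (and we reject), or it terminates with a partition $\mathcal{I}$ of size $\leq L$ on which $\D$ is certified $\ell_2$-close to its flattening. Note also that this merges your steps~(i) and~(ii): the $\ell_2$ tester is run \emph{during} the recursion to decide whether to split, not afterwards on a separately constructed partition.

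\textbf{The final step.} Your step~(iii) proposes to test the coarsened distribution against a ``coarsening $\property_{\mathcal{I}^*}$ of the class,'' either by learn-then-test on the reduced domain or via a cover. The paper's approach is simpler and requires no further samples: once the partition $\mathcal{I}$ is fixed, take $O(L/\eps^2)$ fresh samples to learn the flattening $\Phi(\D,\mathcal{I})$ to $\lp[1]$-accuracy $O(\eps)$, obtaining an explicit $L$-histogram $\tilde{\D}$; then run a purely \emph{offline} procedure $\estimdist$ that checks whether $\lp[1](\tilde{\D},\property)$ is small. This is a computation on the explicit description of $\tilde{\D}$, not a statistical test, and it is done against the original class $\property$, not against any coarsened version. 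The triangle inequality $\lp[1](\D,\property)\leq \normone{\D-\Phi(\D,\mathcal{I})}+\normone{\Phi(\D,\mathcal{I})-\tilde{\D}}+\lp[1](\tilde{\D},\property)$ then closes the argument.
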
 

\noindent We then instantiate this result to obtain ``out-of-the-box'' \emph{computationally efficient} testers for several classes of distributions, by showing that they satisfy the premise of our theorem (the definition of these classes is given in~\autoref{ssec:class:definitions}):

\begin{corollary}\label{coro:main:testing}
The algorithm \textsc{TestSplittable} can test the classes of monotone, unimodal, log-concave, concave, convex, and monotone hazard rate (MHR) distributions, with $\tildeO{\sqrt{n}/\eps^{7/2}}$ samples.
\end{corollary}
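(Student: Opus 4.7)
The strategy is to apply \autoref{theo:main:testing} to each of the six families, so the real content of the corollary is to verify that every class on the list satisfies the succinctness criterion~\eqref{label:struct:criterion}. Concretely, for each class \class one needs to show that for every $D\in\class$ and every $\eps$ there is a partition of $[n]$ into $\ell = \tildeO{1/\eps}$ intervals such that the flattening of $D$ on this partition approximates $D$ in the strong (roughly $\ell_2$-flavored) sense demanded by~\eqref{label:struct:criterion}; the claimed $\tildeO{\sqrt{n}/\eps^{7/2}}$ bound then follows immediately from the main theorem.

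The cornerstone is the classical Birg\'e decomposition for monotone distributions, which partitions $[n]$ into $\bigO{\log(n)/\eps}$ oblivious geometrically-growing intervals on which the flattening of $D$ is $\eps$-close to $D$ in $\ell_1$. My first step is to upgrade this to the stronger within-piece guarantee required here: on each Birg\'e interval $I$, monotonicity forces the oscillation of $D$ across $I$ to be small relative to the average density $D(I)/|I|$, and the geometric ratio between consecutive interval lengths is exactly what bounds the within-piece $\ell_2$ variation by the global $\ell_1$ error. With this sharpened Birg\'e lemma in hand, the monotone case is immediate. Unimodal distributions are handled by splitting the domain at the (unknown but existing) mode into two monotone pieces and applying the bound to each, merely doubling the interval count; log-concave, concave, and convex distributions on $[n]$ are all unimodal, so the same partition works for them.

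The MHR case is the one genuinely distinct argument, because the hazard rate (and therefore the density) can grow rapidly near the right endpoint. Here I would first peel off a suffix of total mass $\bigO{\eps}$, which can be absorbed directly into the approximation error, and then argue that on the remainder the boundedness of the hazard rate translates into a controlled multiplicative variation across each Birg\'e-style piece, so that $\tildeO{1/\eps}$ such pieces again suffice. The step I expect to be the most delicate is precisely this $\ell_1$-to-$\ell_2$ sharpening of Birg\'e's partition: one must check that the geometric piece lengths are short enough, relative to the local density, that the within-piece oscillation is controlled in the stronger norm, rather than just in the average $\ell_1$ sense in which Birg\'e is usually stated. Once succinctness with $\ell = \tildeO{1/\eps}$ is verified for each of the six families via the arguments above, the corollary follows directly from~\autoref{theo:main:testing}.
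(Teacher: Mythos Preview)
Your high-level strategy is correct and matches the paper: verify the structural criterion~\eqref{label:struct:criterion} (i.e., \autoref{def:struct:dec:split}) for each class, reducing unimodal, log-concave, concave, and convex to the monotone case via unimodality, and handling MHR separately. The reduction chain and the sample-complexity arithmetic are exactly as in the paper.

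The genuine gap is in how you establish the criterion for monotone distributions. The condition in \autoref{def:struct:dec:split} is \emph{not} an $\ell_2$-closeness-to-flattening condition; it demands that every piece be either light ($\D(I_j)\le \gamma/L$) or multiplicatively flat ($\max_{i\in I_j}\D(i)\le(1+\gamma)\min_{i\in I_j}\D(i)$). Birg\'e's oblivious geometric partition does not give this. Consider $\D(i)=2/n$ for $i\le n/2$ and $\D(i)=0$ otherwise: the Birg\'e interval straddling $n/2$ has length $\Theta(\gamma n)$, hence mass $\Theta(\gamma)\gg \gamma/L$, while its max/min ratio is infinite. So neither branch of the criterion holds. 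More generally, your claim that ``monotonicity forces the oscillation of $\D$ across $I$ to be small relative to the average density'' is false on intervals where $\D$ drops sharply; Birg\'e only controls the \emph{aggregate} $\ell_1$ error, not the per-piece multiplicative variation, and there is no $\ell_1$-to-$\ell_2$ ``upgrade'' of the kind you sketch.

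The paper instead follows the argument of~\cite{BKR:04} (\autoref{theo:structural:monotone}): a \emph{recursive bisection} that, at each node, stops when the piece is either light or has $\D(b)\le(1+\gamma)\D(a)$. Monotonicity then bounds the number of non-flat, non-light pieces at any level by $O(\log(nL/\gamma)/\gamma)$, and the tree has depth $\le\log n$, yielding $L=O(\log^2 n/\gamma)$. This construction is adaptive to $\D$ and, crucially, produces a \emph{dyadic} partition, which is what \autoref{theo:main:testing:detailed} actually requires (splittability, not mere decomposability). Your MHR sketch is closer in spirit to what the paper does (adapting~\cite{CDSS:13}), but since you feed the monotone argument into it, the same issue propagates.
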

\begin{corollary}\label{coro:main:testing:tmod}
The algorithm \textsc{TestSplittable} can test the class of $t$-modal distributions, with $\tildeO{\sqrt{tn}/\eps^{7/2}}$ samples. \end{corollary}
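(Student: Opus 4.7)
The plan is to show that the class $\classtmo$ of $t$-modal distributions satisfies the succinctness criterion \eqref{label:struct:criterion} with parameter $k = \tildeO{t/\eps}$, and then invoke \autoref{theo:main:testing}. Since the criterion only requires the \emph{existence} of a good piecewise-constant $\lp[2]$-approximation (not its explicit knowledge to the tester), it suffices to exhibit, for any $\D\in\classtmo$, an interval partition of $[n]$ with $\tildeO{t/\eps}$ pieces on which $\D$ is close in the required $\lp[2]$ sense to its flattening.

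First, I would recall that by the very definition of $t$-modality, the domain $[n]$ admits a partition into at most $t+1$ contiguous intervals $I_1,\dots,I_{t+1}$ on which $\D$ is monotone (alternately non-decreasing and non-increasing between consecutive modes). This gives a reduction from the $t$-modal case to the monotone case on each piece.

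Next, I would reuse the succinct $\lp[2]$-decomposition already established for monotone distributions in the proof of \autoref{coro:main:testing}: on each interval $I_j$, the restriction of $\D$ (suitably renormalized) is monotone and hence admits an approximating piecewise-constant decomposition on $\tildeO{1/\eps}$ sub-intervals of $I_j$ satisfying the \eqref{label:struct:criterion} condition with error budget $\eps/(t+1)$ (so that the errors sum to at most $\eps$). Concatenating these partitions across the $t+1$ monotone pieces yields a single interval partition of $[n]$ into $k = (t+1)\cdot\tildeO{1/\eps} = \tildeO{t/\eps}$ pieces which witnesses succinctness for $\D$ at accuracy $\eps$. Plugging $k = \tildeO{t/\eps}$ into the generic bound of \autoref{theo:main:testing} (whose sample complexity scales as $\tildeO{\sqrt{kn}/\poly(\eps)}$, giving $\tildeO{\sqrt{n/\eps}}$ for the monotone case) delivers the claimed sample complexity $\tildeO{\sqrt{tn}/\eps^{7/2}}$.

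The main subtlety, and the step I would spend most care on, is to verify that the $\lp[2]$-error is additive in the way I want when one glues the per-piece decompositions together: the squared-$\lp[2]$ mass of $\D$ on each $I_j$ adds up to $\normtwo{\D}^2$, so by allocating a fraction of the global error to each monotone piece proportional to its probability mass (rather than uniformly), one ensures that the concatenated decomposition satisfies the global $\lp[2]$ bound required by \eqref{label:struct:criterion}. Once this bookkeeping is in place, no new tester needs to be designed: \textsc{TestSplittable} applied with parameter $k = \tildeO{t/\eps}$ solves the problem, and its guarantees translate directly into the stated bound. Computational efficiency follows because \textsc{TestSplittable} itself is efficient whenever the underlying decomposition exists, which is a purely existential statement here.
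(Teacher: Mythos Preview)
Your high-level reduction is exactly the paper's: partition $[n]$ into at most $t+1$ intervals on which $\D$ is monotone, apply the monotone structural result on each piece, and concatenate. This is precisely how \autoref{theo:structural:tmodal} is obtained from \autoref{theo:structural:monotone} (via \autoref{theo:structural:unimodal}), and then \autoref{theo:main:testing:detailed} is invoked with $L=\tildeO{t/\eps}$ to get the stated sample complexity.

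However, two points in your write-up need correction. First, your description of the succinctness criterion is off: the formal requirement (\autoref{def:struct:dec:split}) is \emph{not} a global $\lp[2]$ bound on $\D$ versus its flattening, but a per-interval condition — each piece is either ``light'' ($\D(I_j)\leq\gamma/L$) or ``multiplicatively flat'' ($\max_{I_j}\D/\min_{I_j}\D\leq 1+\gamma$). Both conditions are trivially preserved when you pass from a conditional distribution $\D_{I}$ back to $\D$ (flatness is scale-invariant; lightness only improves since $\D(I_j)\leq\D_{I}(I_j)$), so there is no need for your proposed $\lp[2]$-error allocation or the ``proportional to probability mass'' bookkeeping — that entire paragraph is addressing a difficulty that does not exist. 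Second, you omit the other hypothesis of \autoref{theo:main:testing:detailed}: the algorithm also requires a distance-estimation procedure $\estimdist[\classtmo]$. The paper handles this (after \autoref{lemma:distance:uni}) via a $\poly(\ell,t)$-time combination of dynamic and linear programming; without it, \textsc{TestSplittable} cannot carry out its final offline check, so your claim that ``no new tester needs to be designed'' glosses over a genuine (if routine) ingredient.
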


\begin{corollary}\label{coro:main:testing:piecewise}
The algorithm \textsc{TestSplittable} can test the classes of $t$-histograms and $t$-piecewise degree-$d$ distributions, with $\tildeO{\sqrt{tn}/\eps^3}$ and $\tildeO{\sqrt{t(d+1)n}/\eps^{7/2} + t(d+1)/\eps^3}$ samples respectively. \end{corollary}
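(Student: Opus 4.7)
The plan is to invoke \autoref{theo:main:testing} twice, reducing each part of the corollary to verifying that the corresponding class admits a succinct $\lp[2]$-approximate interval decomposition in the sense of the structural criterion~\eqref{label:struct:criterion}.

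For the class \classhist{} of $t$-histograms, the verification is immediate: every $\D\in\classhist{}$ is by definition piecewise constant on some partition of $[n]$ into $t$ intervals, so \eqref{label:struct:criterion} holds with $k=t$ pieces and \emph{zero} approximation error, in any norm. Plugging $k=t$ into the sample complexity bound of \autoref{theo:main:testing} directly yields the claimed $\tildeO{\sqrt{tn}/\eps^3}$.

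For \classpoly{}, the core ingredient is the following structural lemma, which I would state and prove explicitly: any univariate degree-$d$ polynomial $p$ on an interval $I\subseteq[n]$ admits a partition of $I$ into $O(d+1)$ sub-intervals on each of which $p$ is close, in a pointwise $\lp[2]$ sense, to its mean on that sub-interval. I would obtain it by (i) splitting $I$ at the at most $d$ local extrema of $p$ to produce $O(d+1)$ monotone pieces, and then (ii) on each monotone piece performing a mass-equipartition into a small number of sub-intervals, so that the variation of $p$ on each sub-interval is controlled by the sub-interval's total mass. Applied to each of the $t$ polynomial pieces of any $\D\in\classpoly{}$, this gives a global decomposition into $k=O(t(d+1))$ intervals. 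Plugging $k=t(d+1)$ into \autoref{theo:main:testing} yields $\tildeO{\sqrt{t(d+1)n}/\eps^{7/2} + t(d+1)/\eps^3}$, as required.

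The hard part will be the $\lp[2]$ (rather than $\lp[1]$) control in step~(ii): the criterion penalizes the sum of squared deviations from the local average, so one must bound $\sum_{i\in I'}(\D(i)-\D(I')/|I'|)^2$ for every sub-interval $I'$ of the decomposition, not merely its $\lp[1]$ analogue. The mass-equipartition trick succeeds here because, on a monotone piece, the $\lp[2]$ variance is bounded by the pointwise gap between the endpoints times the sub-interval mass, and this quantity telescopes to a small total once consecutive sub-intervals are chosen of roughly equal mass $\eps^2/k$. A secondary subtlety worth flagging is the difference in $\eps$-exponents between the two stated bounds: for \classhist{} the decomposition is exact and \emph{oblivious of $\eps$}, sparing the $\eps^{-1/2}$ slack picked up by the polynomial approximation step in \classpoly{}, which accounts for the improvement from $\eps^{7/2}$ to $\eps^3$.
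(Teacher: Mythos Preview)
Your high-level strategy matches the paper's: reduce to monotone pieces by splitting at the at most $d$ extrema of each polynomial piece, then invoke a monotone decomposition on each. The histogram case is exactly as you describe, with $L=t$ pieces independent of $\eps$.

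However, there is a genuine gap in the polynomial case. You assert that the global decomposition has $k=O(t(d+1))$ pieces, then claim that plugging this into \autoref{theo:main:testing} gives the $\eps^{7/2}$ bound. These two statements are inconsistent: the detailed form of the theorem (\autoref{theo:main:testing:detailed}) gives sample complexity $\tildeO{\sqrt{nL}/\eps^3 + L/\eps^2}$, so an $\eps$-independent $L$ would yield $\eps^3$, not $\eps^{7/2}$. The $\eps^{7/2}$ exponent arises precisely because $L$ scales as $1/\eps$: the paper proves (\autoref{theo:structural:piecewise}) that $L=O\big(\frac{t(d+1)}{\gamma}\log^2 n\big)$ pieces are needed for $(\gamma,L)$-decomposability with $\gamma=\Theta(\eps)$, obtaining this by applying the monotone decomposition bound of \autoref{theo:structural:monotone} (which already costs $O(\log^2 n/\gamma)$ pieces) to each of the $t(d+1)$ monotone sub-pieces. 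Your ``mass-equipartition into a small number of sub-intervals'' does not achieve the multiplicative-flatness condition of \autoref{def:struct:dec:split} that the algorithm actually checks; on a monotone piece whose density varies by a large factor, equal-mass sub-intervals need not have $\max/\min\leq 1+\gamma$, so you cannot stop at $O(1)$ sub-intervals per piece. Your closing paragraph even concedes that the polynomial case picks up an $\eps^{-1/2}$ slack relative to histograms, which is exactly the symptom of $L$ depending on $\eps$---contradicting the $k=O(t(d+1))$ you just asserted.

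A second, smaller omission: \autoref{theo:main:testing:detailed} also requires a distance-estimation procedure $\estimdist[\classpoly]$ for the offline projection step, which the paper supplies separately (\autoref{lemma:distance:piecewise}). Your proposal does not mention this ingredient.
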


\begin{corollary}\label{coro:main:testing:pbd}
The algorithm \textsc{TestSplittable} can test the classes of Binomial and Poisson Binomial Distributions, with $\tildeO{{n}^{1/4}/\eps^{7/2}}$ samples.
\end{corollary}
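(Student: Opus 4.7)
The plan is to reduce the claim to \autoref{theo:main:testing} by showing that every Poisson Binomial Distribution satisfies the succinct--decomposition criterion \eqref{label:struct:criterion} with favorable parameters; since $\classbin \subseteq \classpbd$, the same tester automatically handles both. The key observation driving the improvement from $\sqrt{n}$ (the generic log-concave rate from \autoref{coro:main:testing}) down to $n^{1/4}$ is that PBDs concentrate on an \emph{effective support} of size $O(\sqrt{n})$, so that the $n$ appearing in the sample bound of \textsc{TestSplittable} can be replaced by this much smaller quantity.

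Concretely, I would proceed in three steps. First (tail truncation), let $\D$ be an arbitrary PBD on $\{0,\dots,n\}$ with mean $\mu = \expect{\D}$ and variance $\sigma^{2} = \var(\D) \le n/4$. A standard Bernstein/Chernoff estimate for sums of independent Bernoullis gives that $\D$ places mass at least $1-\eps/2$ on an interval $I^{\star}$ of length $n^{\star} = \bigO{\sqrt{n \log(1/\eps)}}$ centered at $\mu$. The remaining $\eps/2$ mass is absorbed as a single ``tail'' piece in the decomposition, which costs only $O(1)$ additional intervals and $\eps/2$ of the $\ell_2$ error budget. Second (interior decomposition), every PBD is log-concave, so I would invoke, inside $I^\star$, the very same Birg\'e--style piecewise--constant $\lp[2]$ approximation used to establish \eqref{label:struct:criterion} for the log-concave class in \autoref{coro:main:testing}: this produces a partition of $I^{\star}$ into $k = \tildeO{\poly(1/\eps)}$ sub-intervals on which $\D$ is $(\eps/2)$-close, in the strong $\lp[2]$ sense required by \eqref{label:struct:criterion}, to a piecewise-constant density. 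Third (apply the main theorem), these two steps together exhibit $\D$ as $\eps$-splittable into $k+O(1)=\tildeO{\poly(1/\eps)}$ pieces living in an effective domain of size $n^{\star} = \tildeO{\sqrt{n}}$, so the sample bound of \autoref{theo:main:testing}, which takes the form $\tildeO{\sqrt{k\cdot n^{\star}}/\eps^{7/2}}$ (as witnessed by the $\sqrt{tn}$ pattern of the other corollaries, with $n^{\star}$ replacing $n$ for concentrated classes), yields $\tildeO{\sqrt{\sqrt{n}}/\eps^{7/2}} = \tildeO{n^{1/4}/\eps^{7/2}}$.

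The main obstacle is the \emph{interior} step: one must upgrade the standard $\lp[1]$ log-concave approximation to an $\lp[2]$ decomposition of the strong, per-interval flavour \eqref{label:struct:criterion} actually requires, which is what powers \textsc{TestSplittable}'s analysis. A secondary subtlety is that the tester is oblivious to $\mu$ and $\sigma$, so the truncation to $I^\star$ cannot be performed algorithmically: it is a purely analytic device used only to bound the number of pieces and the effective support size in the \emph{existence} statement \eqref{label:struct:criterion}. Once both points are handled, the three ingredients combine cleanly and the corollary follows.
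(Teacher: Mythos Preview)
Your proposal contains a genuine gap. The sample complexity in \autoref{theo:main:testing:detailed} is $\bigO{\sqrt{nL}\log n/\eps^3}$ where $n$ is the \emph{domain size passed to the algorithm}, not the effective support of the unknown distribution. The recursive bisection in \textsc{TestSplittable} starts from the full interval $[1,n]$ and the number of samples $m$ is set upfront as a function of $\abs{I}=n$; nothing in the analysis allows you to replace $n$ by an effective support $n^\star$ merely because the class happens to be concentrated. Your sentence ``with $n^\star$ replacing $n$ for concentrated classes'' asserts a feature of the main theorem that simply is not there: the $\sqrt{tn}$ pattern in the other corollaries has $n$ equal to the full domain in every case. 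Consequently, a purely analytic truncation (exhibiting a decomposition of $[n]$ with most pieces inside $I^\star$ and a couple of light tail pieces outside) still yields only $\tildeO{\sqrt{n}/\eps^{7/2}}$, not $n^{1/4}$.

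The paper addresses exactly this issue in \autoref{sec:effectivesupport}: it introduces a \emph{modified} two-stage algorithm \textsc{TestEffectiveSplittable} (\autoref{algo:test:effective:support:splittable}) that first uses $O(1/\eps^2)$ samples and the DKW inequality to \emph{algorithmically} locate a small interval $I$ containing most of the mass, rejects if $\abs{I}$ exceeds the class's effective-support bound $M(n,\eps)$, and only then calls \textsc{TestSplittable} on the conditional distribution $\D_I$ over the reduced domain $I$. This requires additional machinery you did not mention: \autoref{lemma:conditioned:class} to relate $\lp[1](\D,\class)$ to $\lp[1](\D_I,\class^{\eps,I})$, and a distance-estimation procedure for the \emph{conditioned} class $\classpbd_n^{\eps,I}$ (\autoref{lemma:distance:pbd}, obtained via a proper $\eps$-cover of $\classpbd_n$). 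The effective-support reduction is therefore algorithmic, not analytic, and this is precisely the missing step in your argument.
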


\makeatletter{}\newcommand{\pb}[2]{\parbox[c][][c]{#1}{\strut#2\strut}}
  \begin{table}[ht]\centering\small
    \begin{adjustwidth}{-.75in}{-.5in}\centering
  \begin{tabular}{@{}|l|c|c|@{}}\hline
    { \bf Class }& {\bf Upperbound} & \bf Lowerbound\\\hline
     {Monotone}  & {$\tildeO{\frac{\sqrt{n}}{\eps^6}}$ \cite{BKR:04}, $\tildeO{\frac{\sqrt{n}}{\eps^{7/2}}}$ (\autoref{coro:main:testing})} 
                 & {$\bigOmega{\frac{\sqrt{n}}{\eps^2}}$ \cite{BKR:04}, $\bigOmega{\frac{\sqrt{n}}{\eps^2}}$ (\autoref{coro:lb:sqrtn})} \\\hline
     {Unimodal}  & {$\tildeO{\frac{\sqrt{n}}{\eps^{7/2}}}$ (\autoref{coro:main:testing})}
                 & {$\bigOmega{\frac{\sqrt{n}}{\eps^2}}$ (\autoref{coro:lb:sqrtn})} \\\hline
     {$t$-modal}  & {$\tildeO{\frac{\sqrt{{t}n}}{\eps^{7/2}}}$ (\autoref{coro:main:testing:tmod})}
                 & {$\bigOmega{\frac{\sqrt{n}}{\eps^2}}$ (\autoref{coro:lb:sqrtn})} \\\hline
     \pb{40mm}{Log-concave, concave, convex}  & {$\tildeO{\frac{\sqrt{n}}{\eps^{7/2}}}$ (\autoref{coro:main:testing})}
                 & {$\bigOmega{\frac{\sqrt{n}}{\eps^2}}$ (\autoref{coro:lb:sqrtn})} \\\hline
     \pb{40mm}{Monotone Hazard Rate (MHR)}  & {$\tildeO{\frac{\sqrt{n}}{\eps^{7/2}}}$ (\autoref{coro:main:testing})}
                 & {$\bigOmega{\frac{\sqrt{n}}{\eps^2}}$ (\autoref{coro:lb:sqrtn})} \\\hline
     \pb{40mm}{Binomial, Poisson Binomial (PBD)}  & \pb{60mm}{\centering $\tildeO{\frac{{n}^{1/4}}{\eps^2} + \frac{1}{\eps^6}}$ \cite{AD:15},\\ $\tildeO{\frac{{n}^{1/4}}{\eps^{7/2}}}$ (\autoref{coro:main:testing:pbd})}
                 & {$\bigOmega{\frac{{n}^{1/4}}{\eps^2}}$~(\cite{AD:15},~\autoref{coro:lb:pbd})} \\\hline
     \pb{40mm}{$t$-histograms}  & { $\tildeO{\frac{\sqrt{tn}}{\eps^5}}$~\cite{ILR:12}, $\tildeO{\frac{\sqrt{tn}}{\eps^3}}$ (\autoref{coro:main:testing:piecewise}) }
                 & {$\bigOmega{\sqrt{tn}}$ for $t\leq \frac{1}{\eps}$~\cite{ILR:12}, $\bigOmega{\frac{\sqrt{n}}{\eps^2}}$ (\autoref{coro:lb:sqrtn})}  \\\hline
     \pb{40mm}{$t$-piecewise degree-$d$}  & { $\tildeO{\frac{\sqrt{t(d+1)n}}{\eps^{7/2}} + \frac{t(d+1)}{\eps^3} }$ (\autoref{coro:main:testing:piecewise}) }
                 & {$\bigOmega{\frac{\sqrt{n}}{\eps^2}}$ (\autoref{coro:lb:sqrtn})}  \\\hline
     \pb{40mm}{$k$-SIIRV}  & {}
                 & {$\bigOmega{ {k}^{1/2}{n}^{1/4} }$ (\autoref{coro:lb:ksiirv})} \\\hline
  \end{tabular}
  \end{adjustwidth}
\caption{\label{fig:table:results} Summary of results.}
  \end{table}

We remark that the aforementioned sample upper bounds are information-theoretically near-optimal in the domain size $n$ (up to logarithmic factors). See~\autoref{fig:table:results} and the following subsection for the corresponding lower bounds. We did not attempt to optimize the dependence on the parameter $\eps$,
though a more careful analysis can lead to such improvements.

We stress that prior to our work, no non-trivial testing bound was known for most of these classes~--~specifically, our nearly-tight bounds for
$t$-modal with $t>1$, log-concave, concave, convex, MHR, and piecewise polynomial distributions are new. Moreover, although a few of our applications 
were known in the literature (the $\tildeO{\sqrt{n}/\eps^6}$ upper and $\bigOmega{\sqrt{n}/\eps^2}$ lower bounds on testing monotonicity can be found in~\cite{BKR:04}, while the $\Theta\big({n^{1/4}}\big)$ sample complexity of testing PBDs was recently given\footnotemark{} in~\cite{AD:15}, and the task of testing $t$-histograms is considered in~\cite{ILR:12}), the crux here is that we are able to derive them in a \emph{unified} way, by applying the same generic algorithm to all these different distribution families. 
We note that our upper bound for $t$-histograms (\autoref{coro:main:testing:piecewise}) also improves on the previous $\tildeO{\sqrt{tn}/\eps^5}$-sample tester, as long as $t=\tildeO{n^{1/3}/\eps^2}$. In addition to its generality, our framework yields much cleaner and conceptually simpler proofs of the upper and lower bounds from~\cite{AD:15}.
\footnotetext{For the sample complexity of testing monotonicity, \cite{BKR:04} originally states an $\tildeO{{\sqrt{n}}/{\eps^4}}$ upper bound, but the proof seems to only result in an $\tildeO{{\sqrt{n}}/{\eps^6}}$ bound. Regarding the class of PBDs,~\cite{AD:15} obtain an ${n^{1/4}}\cdot\tilde{O}\big({1/\eps^2}\big) + \tilde{O}\big({1/\eps^6}\big)$ sample complexity, to be compared with our $\tilde{O}\big({n^{1/4}/\eps^{7/2}}) + \bigO{\log^4 n/\eps^4 }$ upper bound; as well as an $\Omega\big({n^{1/4}/\eps^2}\big)$ lower bound.}

\paragraph*{Lower Bounds.} To complement our upper bounds, we give a generic framework for proving lower bounds against testing classes of distributions. In more detail, we describe how to {\em reduce} -- under a mild assumption on the property \class{} -- the problem of testing \emph{membership to $\class$} (``does $\D\in\class$?'') to testing \emph{identity to $\D^\ast$} (``does $\D=\D^\ast$?''), for any explicit distribution $\D^\ast$ in $\class$. While these two problems need not in general be related,\footnote{As a simple example, consider the class $\class$ of \emph{all} distributions, for which testing membership is trivial.} we show that our reduction-based approach applies to a large number of natural properties, and obtain lower bounds that nearly match our upper bounds for all of them. 
Moreover, this lets us derive a 
simple proof of the lower bound of~\cite{AD:15} on testing the class of PBDs.
The reader is referred to \autoref{theo:main:testing:lb} for the formal statement of our reduction-based lower bound theorem.
In this section, we state the concrete corollaries we obtain for specific structured distribution families:

\begin{restatable}{corollary}{corolbsqrtn}\label{coro:lb:sqrtn}
  Testing log-concavity, convexity, concavity, MHR, unimodality, $t$-modality, $t$-histograms, and $t$-piecewise degree-$d$ distributions each require $\bigOmega{{\sqrt{n}}/{\eps^2}}$ samples (the last three for $t = o(\sqrt{n})$ and $t(d+1) = o(\sqrt{n})$, respectively), for any $\eps\geq 1/n^{O(1)}$.
\end{restatable}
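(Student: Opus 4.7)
The plan is to invoke the reduction-based lower bound framework of \autoref{theo:main:testing:lb} and reduce identity testing against a carefully chosen explicit distribution $\D^\ast \in \class$ to the membership testing problem for each class $\class$ in the list. Concretely, for every class in the statement, the uniform distribution $\uniformOn{[n]}$ is a member: uniform is monotone (hence unimodal, $t$-modal, log-concave, and trivially concave/convex and MHR), and is a $1$-histogram (hence a $t$-histogram for any $t\geq 1$ and a $t$-piecewise degree-$d$ distribution for any $t(d+1)\geq 1$). Thus the natural choice is $\D^\ast = \uniformOn{[n]}$.

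Once the reduction is set up, the task reduces to establishing an $\bigOmega{\sqrt{n}/\eps^2}$ lower bound for uniformity testing, which is Paninski's classical result. To make the reduction valid, I need to check the ``mild assumption'' from the framework: namely that $\uniformOn{[n]}$ lies in $\class$ and that there is a family of distributions which are $\eps$-far (in $\lp[1]$) from the \emph{whole} class $\class$ but which form a hard instance against uniformity. The standard Paninski construction --- random $\pm 2\eps/n$ perturbations of $\uniformOn{[n]}$ paired on consecutive buckets --- gives distributions that, with high probability over the construction, are $\Theta(\eps)$-far from uniform in $\lp[1]$. The step that must be verified class by class is that these perturbed distributions are in fact $\Omega(\eps)$-far from \emph{every} member of $\class$, not just from uniform; this is exactly what the reduction framework captures, provided $\class$ satisfies the corresponding ``stability'' condition.

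For most of the listed classes, this verification is immediate because all bucket-paired perturbations of uniform lie themselves in $\class$ up to a small $\lp[1]$ correction only when the perturbation is monotone-compatible, so the far instances remain far from $\class$ as a whole; this is precisely where the technical hypothesis of \autoref{theo:main:testing:lb} kicks in. For $t$-histograms and $t$-piecewise degree-$d$ distributions, the constraints $t = o(\sqrt{n})$ and $t(d+1) = o(\sqrt{n})$ in the statement exactly ensure that the Paninski-type perturbations --- which use $\Theta(n)$ sign changes --- cannot be well-approximated by any member of the class, so they are $\Omega(\eps)$-far from $\class$.

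I expect the main obstacle to be this last verification: ensuring that the Paninski hard distributions are far not only from $\uniformOn{[n]}$ but from every distribution in the structured class. For the monotone-style classes (log-concave, MHR, unimodal, $t$-modal with $t=O(1)$, concave, convex) this follows because any member of $\class$ has $O(1)$ or $O(t)$ ``monotone pieces'', which cannot track $\Theta(n)$ alternating perturbations without incurring $\Omega(\eps)$ $\lp[1]$-error on a constant fraction of the buckets. For the piecewise-polynomial classes, the quantitative threshold $t(d+1) = o(\sqrt{n})$ captures the same phenomenon at the level of pieces. Once this is handled uniformly through the reduction framework, the $\bigOmega{\sqrt{n}/\eps^2}$ bound transfers from uniformity testing to membership testing in each of the listed classes, yielding the corollary.
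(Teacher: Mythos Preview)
You have correctly identified that the uniform distribution belongs to every class in the list and that Paninski's $\Omega(\sqrt{n}/\eps^2)$ lower bound for uniformity testing is the underlying hard problem. However, you have misread what \autoref{theo:main:testing:lb} actually requires, and as a consequence the proof you sketch is not the one the framework gives.

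\autoref{theo:main:testing:lb} does \emph{not} ask you to verify that the Paninski hard instances are $\Omega(\eps)$-far from every distribution in $\class$; there is no ``stability condition'' of this sort among its hypotheses. Its two hypotheses are (i) the existence of a semi-agnostic \emph{learner} $\Learner$ for $\class$ with sample complexity $q_L(n,\eps,\delta)$, and (ii) a lower bound $q_H$ for testing some subclass $\class_{\rm Hard}\subseteq\class$; the conclusion holds provided $q_L = o(q_H)$. The reduction works by first testing membership in $\class$, then agnostically learning $\D$ assuming it is close to $\class$, and finally checking offline whether the learned hypothesis is close to $\class_{\rm Hard}$. Nothing in this argument touches the internal structure of the hard instances for $\class_{\rm Hard}$.

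Accordingly, the ingredient you are missing --- and the one the paper actually invokes --- is the semi-agnostic learner for each class, supplied by \cite{CDSS:13,CDSS:14}. For log-concave, convex, concave, MHR, unimodal, and $t$-modal distributions these learners use $\poly(\log n,1/\eps)$ samples; for $t$-histograms and $t$-piecewise degree-$d$ distributions they use $\poly(t,d,1/\eps)$ samples. This is precisely why the constraints $t=o(\sqrt{n})$ and $t(d+1)=o(\sqrt{n})$ appear: they are exactly what is needed so that the learner's sample complexity is $o(\sqrt{n}/\eps^2)$, i.e., $q_L = o(q_H)$. (The condition $\eps\geq 1/n^{O(1)}$ plays the same role for the $\poly(1/\eps)$ terms.) Your explanation of these constraints --- that they ensure the Paninski perturbations cannot be approximated by members of the class --- is not how they enter the argument.

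The direct route you outline (arguing case by case that a random $\pm 2\eps/n$ perturbation of uniform is $\Omega(\eps)$-far from every member of $\class$) is a legitimate alternative strategy and would bypass the need for agnostic learners altogether, but it is not an application of \autoref{theo:main:testing:lb}, and you have not actually carried it out. If you wish to go that way you must prove those distance claims rigorously for each class; otherwise, replace the ``stability'' discussion with a citation of the appropriate agnostic learners and take $\class_{\rm Hard}=\{\uniform_n\}$, and the corollary follows immediately from the theorem.
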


\begin{restatable}{corollary}{corolbpbd}\label{coro:lb:pbd}
  Testing the classes of Binomial and Poisson Binomial Distributions each require $\bigOmega{{n^{1/4}}/{\eps^2}}$ samples, for any $\eps\geq 1/n^{O(1)}$.
\end{restatable}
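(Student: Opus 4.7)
My plan is to apply the generic reduction of \autoref{theo:main:testing:lb}, which---under a mild structural assumption on the class---reduces testing membership in a class $\class$ to identity testing against a fixed explicit distribution $\D^\ast \in \class$. I will instantiate this reduction with $\D^\ast = \operatorname{Bin}(n, 1/2)$, which lies in both $\classbin$ and $\classpbd$ and will therefore handle both parts of the corollary in a single argument.

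The second ingredient is a lower bound on identity testing against $\D^\ast$. The $\operatorname{Bin}(n, 1/2)$ distribution has mean $n/2$ and standard deviation $\tfrac{1}{2}\sqrt{n}$, so by standard Chernoff concentration all but an arbitrarily small fraction of its mass sits on a window of $\Theta(\sqrt{n})$ consecutive integers around $n/2$ on which the pmf is roughly flat, with each point carrying mass $\Theta(1/\sqrt{n})$. Consequently, the Valiant--Valiant characterization of the identity testing sample complexity (or equivalently, a direct Paninski-style reduction to uniformity testing on the $\Theta(\sqrt{n})$-sized effective support) yields a lower bound of $\bigOmega{n^{1/4}/\eps^2}$ for testing identity to $\D^\ast$.

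To conclude, I need to verify the mild assumption required by \autoref{theo:main:testing:lb} for both $\classbin$ and $\classpbd$. Because $\classbin \subset \classpbd$, it suffices to establish this for $\classpbd$: any ``no'' instance in the identity lower bound---a Paninski-type $\pm\Theta(\eps/\sqrt{n})$ perturbation of $\D^\ast$ supported on its effective window---must be shown to remain $\Omega(\eps)$-far from \emph{every} Poisson Binomial Distribution, not merely from $\D^\ast$ itself. I plan to argue this via the standard fact that any PBD is, up to $o(1)$ error in total variation, a discretized Gaussian determined by its mean and variance on its effective support. A competing PBD with mean/variance close to those of $\D^\ast$ is smooth on scale $\sqrt{n}$ and hence cannot track the sharply oscillating perturbation used in the lower bound construction, giving $\Omega(\eps)$ distance summed over $\Theta(\sqrt{n})$ coordinates; a PBD with substantially different mean or variance is already far from $\D^\ast$, and the $o(1)$-sized perturbation cannot close that gap.

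The main obstacle is precisely this last step---showing that Paninski-type perturbations of $\operatorname{Bin}(n,1/2)$ cannot be ``absorbed'' by any other PBD. I expect the cleanest route is to combine a Berry--Esseen / local central limit theorem approximation for PBDs with an explicit case split on whether a candidate PBD $\D'$ has mean and variance within $O(1)$ of those of $\D^\ast$: in the first case a smoothness-versus-oscillation argument on the effective window gives the required $\Omega(\eps)$ lower bound, in the second case $\dtv(\D',\D^\ast) = 1-o(1)$ already, which the perturbation cannot overcome. Feeding this verification into \autoref{theo:main:testing:lb} then transports the $\bigOmega{n^{1/4}/\eps^2}$ identity testing lower bound to the membership testing problem for both $\classbin$ and $\classpbd$, as claimed.
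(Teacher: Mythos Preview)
You have misread the hypotheses of \autoref{theo:main:testing:lb}. The theorem does \emph{not} ask you to show that the ``no'' instances from the identity-testing lower bound remain far from every distribution in $\class$; rather, it requires (i) a semi-agnostic learner for $\class$ with sample complexity $q_L$ that is $o(q_H)$, and (ii) a hard subclass $\class_{\rm Hard}\subseteq\class$ with a known testing lower bound $q_H$. The reduction is black-box: assuming a too-cheap tester for $\class$, one builds a tester for $\class_{\rm Hard}$ by first testing membership in $\class$, then agnostically learning $\D$ as though it were in $\class$, and finally checking offline whether the hypothesis is close to $\class_{\rm Hard}$. The agnostic learner is precisely what absorbs the slack between ``close to $\D^\ast$'' and ``close to $\class$''; you never have to argue directly that Paninski-type perturbations are far from every PBD.

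Accordingly, the paper's proof is a two-line instantiation: take $\class_{\rm Hard}=\{\binomial{n}{1/2}\}$ together with the $\bigOmega{n^{1/4}/\eps^2}$ identity-testing lower bound of~\cite{VV:14}, and plug in the $\tildeO{1/\eps^2}$-sample semi-agnostic learner for $\classpbd$ from~\cite{DDS:PBD:12}. Since $\tildeO{1/\eps^2}=\littleO{n^{1/4}/\eps^2}$ for $\eps\geq 1/n^{O(1)}$, \autoref{theo:main:testing:lb} applies. Your proposal omits the learner entirely and instead embarks on the Berry--Esseen / case-split argument, which is exactly the kind of class-specific work the framework is designed to avoid. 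That alternative route can in principle be made to work as a standalone proof (bypassing \autoref{theo:main:testing:lb} altogether), but it is strictly more effort, and as written your sketch of it is not complete: you would need uniform control over the local CLT error across all PBDs with nearby mean and variance, and a quantitative smoothness bound ruling out $\Theta(\eps/\sqrt{n})$-scale oscillations.
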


\begin{restatable}{corollary}{corolbksiirv}\label{coro:lb:ksiirv}
  There exist absolute constants $c>0$ and $\eps_0 > 0$ such that testing the class of $k$-SIIRV distributions requires $\Omega\big( k^{1/2}n^{1/4} \big)$ samples, for any $k=\littleO{n^c}$ and $\eps \leq \eps_0$.\nolinebreak
\end{restatable}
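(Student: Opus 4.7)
The plan is to invoke the reduction-based lower-bound framework (\autoref{theo:main:testing:lb}) with an appropriate hard distribution $\D^\ast \in \classksiirv[k]$. A natural candidate is the $n$-fold convolution of the uniform distribution on $\{0,1,\ldots,k-1\}$, which is manifestly a $k$-SIIRV. A local central limit theorem shows that $\D^\ast$ is close in $\lp[1]$ to a discretized Gaussian with standard deviation $\sigma = \Theta(k\sqrt{n})$; in particular $\norminf{\D^\ast} = \Theta(1/\sigma)$ and the effective support has size $\Theta(\sigma) = \Theta(k\sqrt n)$, which under the assumption $k = \littleO{n^c}$ with $c$ small enough fits comfortably within the ambient domain $\{0,1,\ldots,n(k-1)\}$.

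First I would invoke a standard identity-testing lower bound (e.g.\ the Valiant--Valiant $\lp[2/3]$ bound, or, more convenient here, a Paninski-style two-point construction adapted to $\D^\ast$) to conclude that distinguishing $\D^\ast$ from $\eps$-far alternatives requires $\bigOmega{\sqrt{\sigma}/\eps^2} = \bigOmega{k^{1/2} n^{1/4}/\eps^2}$ samples, provided $\eps$ is below some absolute constant $\eps_0$. Concretely, the hard alternatives $\{\D_z\}_{z \in \{-1,+1\}^m}$ are obtained by partitioning the effective support of $\D^\ast$ into $m = \Theta(\sigma)$ consecutive pairs of bins and, for each pair, shifting a mass of order $\eps/m$ from one bin to the other according to $z_i$. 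These $\D_z$ all satisfy $\dtv{\D_z}{\D^\ast} \ge \eps$, and the classical indistinguishability argument against uniformly random $z$ yields the identity-testing lower bound.

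Second, to upgrade this into a \emph{membership}-testing lower bound via \autoref{theo:main:testing:lb}, I need to verify that (with high probability over $z$) the perturbed distribution $\D_z$ is not merely $\eps$-far from $\D^\ast$ but also $\Omega(\eps)$-far from the entire class $\classksiirv[k]$. The point is that the perturbation is oscillatory at the finest possible scale (adjacent bins), while any $k$-SIIRV has characteristic function $\hat{\D}(\theta) = \prod_{i=1}^n \hat{X_i}(\theta)$ that is smooth and decays on a scale of $1/\sigma$; equivalently, by Fourier inversion, any $k$-SIIRV close to $\D^\ast$ must be smooth on the same scale as $\D^\ast$ and therefore cannot reproduce the $z$-dependent micro-oscillations in $\D_z$. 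Making this precise is the main obstacle: the cleanest way is probably a covering argument, showing that an $(\eps/4)$-cover of the subset of $\classksiirv[k]$ within $\lp[1]$-distance $O(\eps)$ from $\D^\ast$ has size $2^{\littleO{m}}$ (since $k$-SIIRVs are parameterized by $nk$ numbers and $nk = \littleO{\sigma} = \littleO{m}$ when $k = \littleO{n^c}$ for small enough $c$), so that a union bound over the cover shows the fraction of $z$ for which $\D_z$ falls within $\eps/4$ of some $k$-SIIRV is negligible.

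Combining these two steps with the reduction of \autoref{theo:main:testing:lb} yields the desired $\Omega(k^{1/2} n^{1/4})$ lower bound for $\eps \le \eps_0$. The main technical obstacle, as noted, is the rigidity / covering statement for $\classksiirv[k]$ near $\D^\ast$; the identity-testing portion and the CLT estimates for $\D^\ast$ are routine, and the dependence on $k=\littleO{n^c}$ enters exactly to guarantee that the parameter count $nk$ of the class is dominated by the effective support $\sigma \asymp k\sqrt n$, which is what powers the union bound.
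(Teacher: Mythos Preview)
Your proposal misreads what \autoref{theo:main:testing:lb} asks for. The two hypotheses of that theorem are (i) a semi-agnostic \emph{learner} for $\class$, and (ii) a lower bound for testing membership in some $\class_{\rm Hard}\subseteq\class$. When $\class_{\rm Hard}=\{\D^\ast\}$ is a singleton, (ii) is exactly an identity-testing lower bound for $\D^\ast$, and nothing more. The reduction never touches the specific hard instances $\D_z$ used to prove that lower bound; it works by showing that a hypothetical cheap tester for $\class$ composed with the agnostic learner would yield a cheap identity tester for $\D^\ast$. So your ``second step''~--~arguing that the perturbed $\D_z$ are far from all of $\classksiirv[k]$~--~is not required at all, while the ingredient you never mention, the $\poly(k,1/\eps)$-sample semi-agnostic learner for $k$-SIIRVs (supplied in the paper by~\cite{DDDOST:13}), is exactly what makes the framework go through.

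Separately, your covering heuristic does not work as stated: you claim $nk=\littleO{\sigma}$ with $\sigma=\Theta(k\sqrt n)$, which would require $n=\littleO{\sqrt n}$. The naive parameter count for $k$-SIIRVs is far too large for a union bound over a cover of size $2^{\littleO{m}}$.

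The paper's argument is much shorter: take $\D^\ast$ to be the same $n$-fold convolution you chose, observe it is log-concave with $\sigma=\Theta(k\sqrt n)$, and use the~\cite{VV:14} characterization to lower-bound the truncated $2/3$-norm by $\bigOmega{\sigma^{1/2}}=\bigOmega{k^{1/2}n^{1/4}}$, via the elementary fact that on an interval of length $\Theta(\sigma)$ around the mode every probability is $\Theta(1/\sigma)$. That gives the identity-testing lower bound directly; combined with the agnostic learner and \autoref{theo:main:testing:lb}, the corollary follows. The constraint $k=\littleO{n^c}$ is there only to ensure $q_L=\poly(k,1/\eps)=\littleO{k^{1/2}n^{1/4}}$, not for any covering purpose.
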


\paragraph*{Tolerant Testing.} 
Using our techniques, we also establish nearly--tight upper and lower bounds on tolerant testing\footnotetext{\emph{Tolerant testing} of a property \property is defined as follows: given $0 \leq \eps_1 < \eps_2 \leq 1$, one must distinguish between \textsf{(a)} $\lp[1](\D,\property) \leq \eps_1$ and \textsf{(b)} $\lp[1](\D,\property) \geq \eps_2$. This turns out to be, in general, a much harder task than that of ``regular'' testing (where we take $\eps_1=0$).} for shape restrictions. 
Similarly, our upper and lower bounds are matching as a function of the domain size.
More specifically, we give a simple generic upper bound approach (namely, a learning followed by tolerant testing algorithm).
Our tolerant testing lower bounds follow the same reduction-based approach as in the non-tolerant case. 
In more detail, our results are as follows (see~\autoref{sec:lowerbounds} and~\autoref{sec:toltesting:ub}):

\begin{restatable}{corollary}{coromaintoltestingmlogm}\label{coro:main:tol:testing:mlogm}
Tolerant testing of log-concavity, convexity, concavity, MHR, unimodality, and $t$-modality can be performed with $O\big( \frac{1}{(\eps_2-\eps_1)^2}\frac{n}{\log n} \big)$ samples, for $\eps_2 \geq C \eps_1$ (where $C>2$ is an absolute constant).
\end{restatable}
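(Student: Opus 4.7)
The plan is a two-stage ``learn-then-tolerantly-test'' procedure: first, semi-agnostically learn an explicit hypothesis $H$ lying inside the class $\class$; second, estimate $\normone{\D - H}$ via the $\bigO{n/\log n}$-sample tolerant $\lp[1]$-distance estimator of Valiant--Valiant. Since every $H \in \class$ satisfies $\normone{\D - H} \geq \lp[1](\D, \class)$, while a semi-agnostic learner guarantees $\normone{\D - H} = O(\lp[1](\D, \class))$ in the close case, comparing the returned estimate $\hat d$ against the threshold $(\eps_1+\eps_2)/2$ will distinguish the two cases as long as $\eps_2/\eps_1$ is larger than a suitable absolute constant.

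For Step~1, each of the listed shape classes admits an efficient semi-agnostic proper learner: given samples from an arbitrary $\D$, the algorithm returns $H \in \class$ such that $\normone{\D - H} \leq c_1 \lp[1](\D, \class) + \eps_1/4$ for some absolute constant $c_1$, using only $\poly(\log n, 1/\eps_1)$ samples. This can be derived directly from the structural criterion~\eqref{label:struct:criterion}: each distribution in $\class$ is $\lp[1]$-close to a piecewise constant density on a data-oblivious (Birg\'e-style) partition of size $k = \poly(\log n, 1/\eps_1)$, so it suffices to learn the ``flattened'' version of $\D$ on this partition in $\lp[1]$ (using $\tildeO{k/\eps_1^2}$ samples) and then output the closest member of $\class$ to the result via hypothesis selection over a small cover. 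Whenever $\lp[1](\D, \class) \leq \eps_1$, this yields $\normone{\D - H} = O(\eps_1)$.

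For Step~2, invoke the Valiant--Valiant tolerant estimator, which computes $\hat d$ with $|\hat d - \normone{\D - H}| \leq \delta$ from $\bigO{n/((\log n)\,\delta^2)}$ samples of $\D$. Setting $\delta = (\eps_2-\eps_1)/8$ gives total sample complexity $\bigO{n/((\log n)(\eps_2-\eps_1)^2)}$, which dominates the learning cost. Output \accept iff $\hat d \leq (\eps_1+\eps_2)/2$. In the yes-case, Step~1 gives $\normone{\D - H} \leq O(\eps_1)$, hence $\hat d \leq O(\eps_1) + \delta$; in the no-case, $H \in \class$ forces $\normone{\D - H} \geq \eps_2$, hence $\hat d \geq \eps_2 - \delta$. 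These are correctly separated by $(\eps_1+\eps_2)/2$ whenever $\eps_2 \geq C\eps_1$ for a sufficiently large absolute constant $C>2$.

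The main obstacle is instantiating the semi-agnostic learner of Step~1 with $\poly(\log n, 1/\eps_1)$ sample complexity uniformly across all the listed classes. Fortunately, for log-concavity, convexity, concavity, MHR, unimodality, and $t$-modality this follows from the same oblivious interval-decomposition used in our upper-bound framework, combined with proper hypothesis selection over the induced (small) family of piecewise-constant candidates; such proper learners are already present, or can be extracted with minor modifications, from prior structural results for these classes. Since the resulting learning complexity is negligible compared to the Valiant--Valiant step, the overall sample bound is $\bigO{n/((\log n)(\eps_2-\eps_1)^2)}$, as claimed.
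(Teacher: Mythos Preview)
Your high-level strategy---semi-agnostically learn a hypothesis, then invoke the Valiant--Valiant $\lp[1]$-distance estimator of~\cite{VV:11:focs}---is precisely the paper's approach (\autoref{theo:main:tol:testing:ub:almost}). The difference lies in how the learned hypothesis is used. You insist on a \emph{proper} learner ($H\in\class$), so that in the far case $\normone{\D-H}\geq \lp[1](\D,\class)>\eps_2$ holds for free and a single threshold on $\hat d$ suffices. The paper instead allows an improper hypothesis $\hat\D$, then separately computes (an estimate of) $\lp[1](\hat\D,\class)$ offline and thresholds on the \emph{sum} $\hat\Delta + \Delta$, relying on the triangle inequality $\normone{\D-\hat\D}+\lp[1](\hat\D,\class)\geq \lp[1](\D,\class)$ for soundness. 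Your variant is marginally more direct; the paper's is a bit more modular, reusing the (improper) learners of~\cite{CDSS:13,CDSS:14} together with the same $\estimdist$ primitives already built for the non-tolerant tester. Both yield the claimed sample bound and the same $\eps_2\geq C\eps_1$ requirement.

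One technical point in your Step~1 does not go through as written: the decompositions of \autoref{def:struct:dec:split} are \emph{not} data-oblivious. For unimodal, $t$-modal, log-concave, MHR, concave, and convex distributions the partition $\mathcal{I}(\gamma,\D)$ depends on $\D$ (e.g., on the location of the mode), so there is no single Birg\'e-style grid that works uniformly across the class, and ``learn the flattening on an oblivious partition, then project'' fails. The repair is easy: invoke directly the (improper) semi-agnostic learners of~\cite{CDSS:13,CDSS:14}, which have $\poly(\log n,1/\eps)$ sample complexity for all these classes, and then project the output onto $\class$ (via $\estimdist$ or an explicit cover) to obtain a proper hypothesis; this only inflates the agnostic constant, which is absorbed into $C$. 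With that correction your argument is complete.
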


\begin{restatable}{corollary}{coromaintoltestingpbd}\label{coro:main:tol:testing:pbd}
Tolerant testing of the classes of Binomial and Poisson Binomial Distributions can be performed with $O\big( \frac{1}{(\eps_2-\eps_1)^2}\frac{\sqrt{n\log({1}/{\eps_1})}}{\log n} \big)$ samples, for $\eps_2 \geq C \eps_1$ (where $C>2$ is an absolute constant).
\end{restatable}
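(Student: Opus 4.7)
The sample complexity to beat, $\sqrt{n\log(1/\eps_1)}/\log n$, points exactly to the Valiant--Valiant tolerant $\lp[1]$-distance estimator applied not on the full domain $[n+1]$ but on the ``$\eps_1$-effective support'' of a PBD, which by Chernoff/Hoeffding has size only $N=\bigO{\sqrt{n\log(1/\eps_1)}}$ (any $\D^\ast\in\classpbd$ puts $1-\eps_1$ of its mass on an interval of that length around its mean). The plan is therefore a learn-then-tolerant-test reduction: first produce an explicit PBD hypothesis $\hat{\D}$ that is close to $\D$ whenever $\D$ is close to $\classpbd$, then tolerantly compare $\D$ to $\hat{\D}$ restricted to the effective support of $\hat{\D}$.

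\textbf{Step 1: agnostic PBD learning.} First I would draw $\poly(1/\eps_1)$ samples and feed them to a known agnostic proper learner for $\classpbd$ (e.g., along the lines of Daskalakis--Diakonikolas--Servedio), obtaining an explicit PBD $\hat{\D}$ such that, with high constant probability,
\[
\|\D-\hat{\D}\|_1 \;\leq\; C_0\cdot\lp[1](\D,\classpbd) + \eps_1
\]
for some absolute constant $C_0$. This step costs no samples in $n$. In the yes case $\|\D-\hat{\D}\|_1\leq (C_0+1)\eps_1$; in the no case, since $\hat{\D}\in\classpbd$, $\|\D-\hat{\D}\|_1\geq \lp[1](\D,\classpbd)\geq\eps_2$.

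\textbf{Step 2: tolerant identity testing on the effective support.} From $\hat{\D}$'s explicit description, read off an interval $I$ of length $N=\bigO{\sqrt{n\log(1/\eps_1)}}$ containing $1-\eps_1$ of $\hat{\D}$'s mass. Use $\bigO{1/\eps_1^2}$ samples to estimate $\D(I^c)$ to $\pm\eps_1/10$, rejecting if it exceeds a small multiple of $\eps_1$. Then apply the Valiant--Valiant tolerant $\lp[1]$-estimator to $\restr{\D}{I}$ vs.\ $\restr{\hat{\D}}{I}$ on the domain $I$ of size $N$, which returns an estimate of $\|\restr{\D}{I}-\restr{\hat{\D}}{I}\|_1$ to additive accuracy $(\eps_2-\eps_1)/C'$ using
\[
O\!\left(\frac{N/\log N}{(\eps_2-\eps_1)^2}\right) \;=\; O\!\left(\frac{1}{(\eps_2-\eps_1)^2}\cdot\frac{\sqrt{n\log(1/\eps_1)}}{\log n}\right)
\]
samples. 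Combining this with the empirical estimate of $\D(I^c)$ and the negligible $\hat{\D}(I^c)=\bigO{\eps_1}$ yields $\|\D-\hat{\D}\|_1$ up to additive error $\bigO{\eps_1}+(\eps_2-\eps_1)/C'$. Choosing $C$ in the hypothesis $\eps_2\geq C\eps_1$ large enough absorbs the $\bigO{\eps_1}$ slack and separates the two cases.

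\textbf{Main obstacle.} The crux is Step 1: we need an \emph{agnostic} proper PBD learner with distance blow-up a fixed constant $C_0$ and sample complexity $\poly(1/\eps_1)$ independent of $n$, not merely a realizable learner that assumes $\D\in\classpbd$. Such learners are available from the PBD literature, but the constants propagating through the triangle inequalities above dictate exactly how large $C$ in $\eps_2\geq C\eps_1$ must be; checking that $C_0$, the Valiant--Valiant slack, and the mass-outside-$I$ error compose into a single $C>2$ threshold is the main bookkeeping burden.
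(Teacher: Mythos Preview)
Your proposal is correct and follows essentially the same learn-then-tolerant-test scheme as the paper, which derives the corollary by instantiating its general tolerant-testing theorem (\autoref{theo:main:tol:testing:ub:almost}) with the PBD agnostic learner of~\cite{DDS:PBD:12} and effective-support bound $M(n,\eps)=O(\sqrt{n\log(1/\eps)})$. The two implementations differ only in organization: the paper finds the effective support $I$ from an empirical estimate of $\D$ via the DKW inequality (rejecting if $|I|$ is too large), uses a not-necessarily-proper semi-agnostic learner, and therefore needs an additional offline step computing $\lp[1](\hat{\D},\classpbd)$ before thresholding on the sum; you instead exploit that a \emph{proper} agnostic PBD learner is available, read $I$ off from $\hat{\D}$, and then only need to threshold on the Valiant--Valiant estimate of $\|\D-\hat{\D}\|_1$. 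Your route is slightly more direct for PBDs specifically (no offline projection step), while the paper's route is a single generic theorem that applies uniformly to every class in~\autoref{coro:main:tol:testing:mlogm} and~\autoref{coro:main:tol:testing:pbd}. The remaining bookkeeping (relating $\|\D_I-\hat{\D}_I\|_1$ to $\|\D-\hat{\D}\|_1$ when both put $1-O(\eps_1)$ mass on $I$, and absorbing the $O(\eps_1)$ slack into the gap $\eps_2-\eps_1\geq(C-1)\eps_1$) is exactly what the paper handles via~\autoref{lemma:conditioned:distr:distances}, so your identified ``main obstacle'' is real but routine.
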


\begin{restatable}{corollary}{corolbtolnlogn}\label{coro:tol:lb:nlogn}
  Tolerant testing of log-concavity, convexity, concavity, MHR, unimodality, and $t$-modality each require $\bigOmega{\frac{1}{(\eps_2-\eps_1)}\frac{n}{\log n}}$ samples (the latter for $t = o(n)$).
\end{restatable}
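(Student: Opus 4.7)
The plan is to derive the lower bound by a reduction from tolerant identity testing against the uniform distribution $U$ on $[n]$, for which Valiant and Valiant establish the matching $\bigOmega{n/((\eps_2-\eps_1)\log n)}$ bound. This mirrors the reduction-based strategy of \autoref{theo:main:testing:lb} in the non-tolerant case, where one fixes a canonical $\D^\ast \in \class$ and reduces tolerant identity testing to $\D^\ast$ to tolerant membership testing in $\class$, now carried out with the two-sided distance thresholds $\eps_1 < \eps_2$.

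The key observation is that for each of the six classes listed here---log-concave, convex, concave, MHR, unimodal, and $t$-modal---the uniform distribution $U$ on $[n]$ is a member. This immediately gives $\lp[1](\D,\class) \le \lp[1](\D,U)$ for every $\D$, so any YES instance of the tolerant uniformity problem (with $\lp[1](\D,U)\le\eps_1$) is automatically a YES instance of tolerant membership testing (with $\lp[1](\D,\class)\le\eps_1$). The remaining task is to design the NO family so that $\lp[1](\D,U)\ge\eps_2$ forces $\lp[1](\D,\class) = \bigOmega{\eps_2}$. I would do this by taking VV's NO ensemble and arranging it to consist of distributions with rapid pairwise oscillations of magnitude $\Theta(\eps_2/n)$ over $\Theta(n)$ coordinates: such distributions exhibit $\Theta(n)$ sign changes, whereas any distribution in $\class$ changes direction at most $t-1$ times (and has no sign changes at all, apart from a single peak, for the other five classes). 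Hence any projection of $\D$ into $\class$ must flatten $\Omega(n)$ oscillations at total $\lp[1]$-cost $\bigOmega{\eps_2}$. For the $t$-modal case, the hypothesis $t=\littleO{n}$ is exactly what guarantees that the modal budget of $\class$ is dwarfed by the number of oscillations in the NO ensemble.

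The main obstacle is to verify that this oscillatory NO family still satisfies the moment-matching conditions that underpin VV's $n/\log n$ information-theoretic lower bound, so that the sample complexity $\bigOmega{n/((\eps_2-\eps_1)\log n)}$ is preserved. This reduces to analyzing the Poisson-ized statistics of a symmetric ensemble of $\pm$-perturbations over adjacent pairs of coordinates: the pairing decouples the low-order moments from the local NO structure, matching those of the YES ensemble up to the order needed for VV, while simultaneously guaranteeing $\omega(t)$ modes (equivalently, incompatibility with the relevant shape constraint) with probability $1-\littleO{1}$. Once this coupling is in place, the VV lower bound transfers verbatim to tolerant membership testing for each of the listed classes and yields the claimed bound.
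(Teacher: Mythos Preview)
Your proposal diverges from the paper's argument, and your reading of \autoref{theo:main:testing:lb} misses its essential ingredient. The paper does not prove the corollary by showing that the Valiant--Valiant NO instances happen to be far from each shape class. Instead it invokes \autoref{theo:main:testing:tol:lb} (the tolerant analogue of \autoref{theo:main:testing:lb}) with $\class_{\rm Hard}=\{\uniform\}$: given a hypothetical cheap tolerant tester for $\class$, one (i) tolerantly tests whether $\D$ is close to $\class$; if so, (ii) runs a \emph{semi-agnostic learner} for $\class$ (the $\poly(\log n,1/\eps)$-sample learners of \cite{CDSS:13,CDSS:14}) to obtain $\hat{\D}$ with $\normone{\D-\hat{\D}}\le c\cdot\lp[1](\D,\class)+\eps$; and (iii) checks offline whether $\hat{\D}$ is close to $\uniform$. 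Steps (ii)--(iii) are precisely what handle the case ``$\D$ far from $\uniform$ but possibly close to $\class$'' that you single out as the obstacle: the learner pins $\D$ down well enough that its distance to $\uniform$ can be computed directly, so no structural property of the NO ensemble is ever used. The Valiant--Valiant bound enters only as a black box.

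Your alternative---arranging the NO ensemble to have $\Theta(n)$ oscillations and then arguing that this forces $\lp[1](\D,\class)=\bigOmega{\eps_2}$---can in principle be made to work, but it is more delicate than your sketch suggests and is not what the paper does. A \emph{fixed} oscillatory instance (e.g., strict alternation of two values) is distinguishable from uniform with $O(1/\eps^2)$ samples via a parity-type statistic, so the $n/\log n$ bound collapses; you must keep the NO ensemble random enough for the moment-matching indistinguishability to survive while simultaneously guaranteeing many modes with high probability. One can do this (for instance by noting that a uniformly random labeling of the NO histogram has $\Theta(n)$ monotonicity violations with high probability, and that conditioning on this event removes only $o(1)$ mass from the ensemble), but carrying it out requires opening up the specific histogram used in \cite{ValiantValiant:10lb} and redoing a distance-to-$\class$ calculation for each class. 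The paper's agnostic-learner reduction sidesteps all of this and yields the corollary in one line.
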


\begin{restatable}{corollary}{corolbtolpbd}\label{coro:lb:tol:pbd}
  Tolerant testing of the classes of Binomial and Poisson Binomial Distributions each require $\bigOmega{\frac{1}{(\eps_2-\eps_1)}\frac{\sqrt{n}}{\log n}}$ samples.
\end{restatable}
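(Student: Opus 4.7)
The plan is to extend the reduction-based lower bound framework underlying Corollary~\ref{coro:lb:pbd} to the tolerant setting, and then compose it with a known tolerant identity-testing lower bound on a balanced distribution whose effective support has size $\Theta(\sqrt{n})$. I would take the reference distribution to be $\D^\ast = \binomial{n-1}{1/2} \in \classbin \subseteq \classpbd$, which places $1 - o(1)$ of its mass on a window $I \subset [n]$ of length $m = \Theta(\sqrt{n})$ centered at $n/2$ and whose pmf on $I$ is pointwise within a constant factor of $1/m$; up to constants and negligible tails, $\D^\ast$ thus behaves like a uniform distribution on $m$ points.

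The reduction then follows the template of the non-tolerant case. Given a test distribution $\D$ on $[m]$, I construct $\tilde\D$ over $[n]$ by placing a rescaled copy of $\D$ inside $I$ (carrying the same total mass as $\D^\ast$ does on $I$) and matching $\D^\ast$ outside $I$. Each sample from $\tilde\D$ can be simulated at unit cost from a sample of $\D$ together with an independent coin flip, so any tolerant $\classpbd$-membership tester applied to $\tilde\D$ yields a tolerant identity tester to $\D^\ast|_I$ on $\D$ with the same sample complexity. The required two-sided closure is: \textsf{(i)} $\dtv(\D,\D^\ast|_I) \leq \eps_1 \Longrightarrow \dtv(\tilde\D,\classpbd) \leq O(\eps_1)$, which is immediate since $\tilde\D$ is then $O(\eps_1)$-close to $\D^\ast \in \classpbd$; and \textsf{(ii)} $\dtv(\D,\D^\ast|_I) \geq \eps_2 \Longrightarrow \dtv(\tilde\D,\classpbd) \geq \Omega(\eps_2)$.

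With this reduction in hand, I would invoke the tolerant identity-testing lower bound of Valiant and Valiant~\cite{ValiantValiant:10lb,VV:11:stoc} applied to the balanced distribution $\D^\ast|_I$ on $m = \Theta(\sqrt{n})$ points: this yields a lower bound of $\bigOmega{\frac{1}{\eps_2-\eps_1}\cdot\frac{m}{\log m}} = \bigOmega{\frac{1}{\eps_2-\eps_1}\cdot\frac{\sqrt{n}}{\log n}}$ samples, transported verbatim through the reduction to tolerant $\classpbd$-testing and (since $\D^\ast \in \classbin$ and the same construction works) to tolerant $\classbin$-testing, as claimed.

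The main obstacle is the quantitative ``far'' direction~\textsf{(ii)}: one must show that $\tilde\D$ is far not only from $\D^\ast$ but from \emph{every} PBD. This rests on the rigidity property of PBDs already used both in the non-tolerant analysis of Corollary~\ref{coro:lb:pbd} and in~\cite{AD:15}: any PBD whose mean and variance roughly match those of $\binomial{n-1}{1/2}$ is itself pointwise constant-factor close to $\D^\ast$ on $I$ (via Berry-Esseen or a Poisson-approximation argument), so farness from $\D^\ast|_I$ transfers to farness from such a PBD on $I$; while any PBD whose moments deviate substantially from those of $\D^\ast$ fails to concentrate on $I$ and is therefore $\Omega(1)$-far from $\tilde\D$ by inspection of its tails. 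Formalizing this pigeonhole---so that $\dtv(\tilde\D,\class)$ is lower bounded by $\Omega(\dtv(\D,\D^\ast|_I))$ in both cases---is the technical crux of the proof, and follows closely the moment-matching argument already carried out for the non-tolerant lower bound.
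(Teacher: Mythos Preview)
Your approach is genuinely different from the paper's, and the difference is exactly at the point you flag as the crux. The paper does \emph{not} prove any rigidity statement of the form ``far from $\D^\ast|_I$ implies far from every PBD.'' Instead, it factors the argument through the general framework of \autoref{theo:main:testing:tol:lb}: take $\class_{\rm Hard}=\{\binomial{n}{1/2}\}$, plug in the $\tildeO{1/\eps^2}$-sample semi-agnostic PBD learner of~\cite{DDS:PBD:12}, and separately establish (\autoref{theo:samp:binomial:tolerant:lb}, proved in \autoref{app:binomial:tolerant:lb}) that tolerant \emph{identity} testing to $\binomial{n}{1/2}$ already requires $\bigOmega{\frac{1}{\eps}\frac{\sqrt{n}}{\log n}}$ samples. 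The agnostic learner is precisely what sidesteps your direction~\textsf{(ii)}: if $\D$ is close to \emph{some} PBD but far from $\binomial{n}{1/2}$, the learner recovers a good hypothesis and the offline check catches it. Your claim that a moment-matching rigidity argument was ``already carried out for the non-tolerant lower bound'' in \autoref{coro:lb:pbd} is incorrect---that corollary is likewise obtained from the agnostic-learner framework (\autoref{theo:main:testing:lb}), not from any structural pigeonhole over PBDs.

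This leaves a real gap in your proposal. Your direction~\textsf{(ii)} asks you to show that whenever $\dtv(\D,\D^\ast|_I)\geq\eps_2$, the lifted distribution $\tilde\D$ is $\Omega(\eps_2)$-far from \emph{every} PBD. The dichotomy you sketch (PBDs with matching moments are pointwise close to $\D^\ast$ via Berry--Esseen; PBDs with mismatched moments have the wrong tails) is plausible, but it is a substantive structural lemma that you have not proved and that the paper does not supply. In particular, ``pointwise constant-factor close on $I$'' does not fall out of Berry--Esseen (which gives Kolmogorov, not pointwise ratio, bounds), and you would need to control the transition regime where the moments are only moderately off. There is also a secondary issue: $\D^\ast|_I$ is not uniform but roughly Gaussian-shaped on $I$, so invoking the Valiant--Valiant tolerant lower bound ``applied to $\D^\ast|_I$'' requires either a direct argument that their construction works for such balanced distributions, or a further reduction from uniformity (which is exactly what the paper's \autoref{theo:samp:binomial:tolerant:lb} does). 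The paper's route avoids both difficulties; if you want to keep yours, you must actually prove the rigidity lemma rather than cite it.
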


\paragraph*{On the scope of our results.} We point out that our main theorem is likely to apply to many other classes of structured distributions, 
due to the mild structural assumptions it requires. However, we did not attempt here to be comprehensive; but rather to illustrate the generality of our approach. Moreover, for all properties considered in this paper the generic upper and lower bounds we derive through our methods turn out to be optimal up to at most polylogarithmic factors (with regard to the support size). The reader is referred to~\autoref{fig:table:results} for a summary of our results and related work.

\subsection{Organization of the Paper}

We start by giving the necessary background and definitions in~\autoref{sec:preliminaries}, before turning to our main result, the proof of~\autoref{theo:main:testing} (our general testing algorithm) in~\autoref{sec:algorithm}. In~\autoref{sec:structural}, we establish the necessary structural theorems for each classes of distributions considered, enabling us to derive the upper bounds of~\autoref{fig:table:results}.~\autoref{sec:effectivesupport} introduces a slight modification of our algorithm which yields stronger testing results for classes of distributions with small effective support, and use it to derive~\autoref{coro:main:testing:pbd}, our upper bound for Poisson Binomial distributions. Second,~\autoref{sec:lowerbounds} contains the details of our lower bound methodology, and of its applications to the classes of~\autoref{fig:table:results}. Finally,~\autoref{sec:lowerbounds:tol} is concerned with the extension of this methodology to \emph{tolerant} testing, of which~\autoref{sec:toltesting:ub} describes a generic upper bound counterpart.
 
\section{Notation and Preliminaries}\label{sec:preliminaries}
  \makeatletter{}\subsection{Definitions}\label{ssec:class:definitions}

We give here the formal descriptions of the classes of distributions involved in this work. Recall that a distribution $\D$ over $[n]$ is \emph{monotone} (non-increasing) if its probability mass function (pmf) satisfies $\D(1) \geq \D(2) \geq \dots \D(n)$. A natural generalization of the class $\classmon$ of monotone distributions is the set of $t$-modal distributions, i.e. distributions whose pmf can go ``up and down'' or ``down and up'' up to $t$ times:\footnote{Note that this slightly deviates from the Statistics literature, where only the peaks are counted as modes (so that what is usually referred to as a bimodal distribution is, according to our definition, $3$-modal).}
\begin{definition}[$t$-modal]\label{def:tmodal}
  Fix any distribution $\D$ over $[n]$, and integer $t$. $\D$ is said to have $t$ \emph{modes} if there exists a sequence $i_0 < \dots < i_{t+1}$ such 
  that either $(-1)^j \D(i_j) < (-1)^j \D(i_{j+1})$ for all $0\leq j \leq t$, or $(-1)^j \D(i_j) > (-1)^j \D(i_{j+1})$ for all $0\leq j \leq t$. We call $\D$ \emph{$t$-modal} if it has at most $t$ modes, and write $\classtmo$ for the class of all $t$-modal distributions (omitting the dependence on $n$). The particular case of $t=1$ corresponds to the set $\classuni$ of \emph{unimodal} distributions.
\end{definition}

\begin{definition}[Log-Concave]\label{def:logconcave}
  A distribution $\D$ over $[n]$ is said to be \emph{log-concave} if it satisfies the following conditions: \textsf{(i)} for any $1 \leq i < j < k \leq n$ such that $\D(i)\D(k) > 0$, $\D(j) > 0$; and \textsf{(ii)} for all $1 < k < n$, $\D(k)^2 \geq \D(k-1)\D(k+1)$. We write $\classlog$ for the class of all log-concave distributions (omitting the dependence on $n$).
\end{definition}

\begin{definition}[Concave and Convex]\label{def:concave}
  A distribution $\D$ over $[n]$ is said to be \emph{concave} if it satisfies the following conditions: \textsf{(i)} for any $1 \leq i < j < k \leq n$ such that $\D(i)\D(k) > 0$, $\D(j) > 0$; and \textsf{(ii)} for all $1 < k < n$ such that $\D(k - 1)\D(k + 1)>0$, $2\D(k) \geq \D(k - 1)+\D(k + 1)$; it is \emph{convex} if the reverse inequality holds in \textsf{(ii)}. We write $\classcve$ (resp. $\classcvx$) for the class of all concave (resp. convex) distributions (omitting the dependence on $n$).
\end{definition}
It is not hard to see that convex and concave distributions are unimodal; moreover, every concave distribution is also log-concave, i.e. $\classcve\subseteq\classlog$. Note that in both \autoref{def:logconcave} and \autoref{def:concave}, condition \textsf{(i)} is equivalent to enforcing that the distribution be supported on an interval.

\begin{definition}[Monotone Hazard Rate]\label{def:mhr}
  A distribution $\D$ over $[n]$ is said to have \emph{monotone hazard rate} (MHR) if its \emph{hazard rate} $H(i)\eqdef \frac{\D(i)}{\sum_{j=i}^{n} \D(j)}$ is a non-decreasing function. We write $\classmhr$ for the class of all MHR distributions (omitting the dependence on $n$).
\end{definition}
It is known that every log-concave distribution is both unimodal and MHR (see e.g. \cite[Proposition 10]{An:96}), and that monotone distributions are MHR. Two other classes of distributions have elicited significant interest in the context of density estimation, that of \emph{histograms} (piecewise constant) and \emph{piecewise polynomial densities}:
\begin{definition}[Piecewise Polynomials~\cite{CDSS:14}]\label{def:piecewise}
  A distribution $\D$ over $[n]$ is said to be a \emph{$t$-piecewise degree-$d$ distribution} if there is a partition of $[n]$ into $t$ disjoint intervals $I_1,\dots,I_t$ such that $\D(i) = p_j(i)$ for all $i \in I_j$, where each $p_1,\dots p_t$ is a univariate polynomial of degree at most $d$. We write $\classpoly$ for the class of all $t$-piecewise degree-$d$ distributions (omitting the dependence on $n$). (We note that {$t$-piecewise degree-$0$ distributions} are also commonly referred to as \emph{$t$-histograms}, and write $\classhist$ for $\classpoly[t,0]$.)
\end{definition}

Finally, we recall the definition of the two following classes, which both extend the family of Binomial distributions $\classbin_n$: the first, by removing the need for each of the independent Bernoulli summands to share the same bias parameter.
\begin{definition}\label{def:pbd}
A random variable $X$ is said to follow a \emph{Poisson Binomial Distribution} (with parameter $n\in\N$) if it can be written as $X=\sum_{k=1}^n X_k$, where $X_1\dots,X_n$ are independent, non-necessarily identically distributed Bernoulli random variables. We denote by $\classpbd_n$ the class of all such Poisson Binomial Distributions.
\end{definition}
\noindent It is not hard to show that Poisson Binomial Distributions are in particular log-concave. One can generalize even further, by allowing each random variable of the summation to be integer-valued:
\begin{definition}\label{def:siirv}
Fix any $k\geq 0$. We say a random variable $X$ is a \emph{$k$-Sum of Independent Integer Random Variables ($k$-SIIRV)} with parameter $n\in\N$ if it can be written as $X=\sum_{j=1}^n X_j$, where $X_1\dots,X_n$ are independent, non-necessarily identically distributed random variables taking value in $\{0,1,\dots,k-1\}$. We denote by $\classksiirv_n$ the class of all such $k$-SIIRVs.
\end{definition}

\subsection{Tools from previous work}

We first restate a result of Batu et al. relating closeness to uniformity in $\lp[2]$ and $\lp[1]$ norms to ``overall flatness'' of the probability mass function, and which will be one of the ingredients of the proof of \autoref{theo:main:testing}:

\begin{lemma}[{\cite{BFRSW:00,BFFKRW:01}}]\label{lemma:small:l2:close:uniform:l1}
Let $\D$ be a distribution on a domain $S$. \textsf{(a)} If $\max_{i\in S} \D(i) \leq (1+\eps)\min_{i\in S} \D(i)$, then $\normtwo{\D}^2 \leq (1+\eps^2)/\abs{S}$. \textsf{(b)} If $\normtwo{\D}^2 \leq (1+\eps^2)/\abs{S}$, then $\normone{\D-\uniform_{S}} \leq \eps$.
\end{lemma}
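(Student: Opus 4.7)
Both parts reduce to short computations based on the Pythagorean decomposition $\normtwo{\D}^2 = \normtwo{\D - \uniformOn{S}}^2 + \normtwo{\uniformOn{S}}^2$, which holds because the cross term $\ip{\D - \uniformOn{S}}{\uniformOn{S}}$ vanishes (the uniform distribution is constant on $S$ and $\D - \uniformOn{S}$ sums to zero). My plan is to exploit this identity in both directions.

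For part (a), the plan is to convert the multiplicative control $M \leq (1+\eps) m$ (where $m \eqdef \min_{i\in S} \D(i)$ and $M \eqdef \max_{i\in S} \D(i)$) into a uniform $\ell_\infty$ deviation of $\D$ from $\uniformOn{S}$. First I would note that $\sum_{i\in S}\D(i) = 1$ forces $m \leq 1/\abs{S} \leq M$, so that for every $i \in S$,
\[
\abs{\D(i) - 1/\abs{S}} \leq \max\mleft(M - 1/\abs{S},\; 1/\abs{S} - m\mright) \leq M - m \leq \eps m \leq \eps/\abs{S}.
\]
Summing the squares and adding $\normtwo{\uniformOn{S}}^2 = 1/\abs{S}$ via the Pythagorean decomposition yields $\normtwo{\D}^2 \leq \abs{S}\cdot(\eps/\abs{S})^2 + 1/\abs{S} = (1+\eps^2)/\abs{S}$, as required.

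For part (b), the plan is to go through the $\ell_2$ distance to $\uniformOn{S}$ and then apply Cauchy–Schwarz. Using the same Pythagorean identity in the other direction, the hypothesis $\normtwo{\D}^2 \leq (1+\eps^2)/\abs{S}$ becomes $\normtwo{\D - \uniformOn{S}}^2 \leq \eps^2/\abs{S}$. Cauchy–Schwarz on the support $S$ then gives
\[
\normone{\D - \uniformOn{S}} \leq \sqrt{\abs{S}}\cdot\normtwo{\D - \uniformOn{S}} \leq \sqrt{\abs{S}}\cdot \eps/\sqrt{\abs{S}} = \eps.
\]

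Neither direction poses a real technical obstacle; the only point worth flagging is that the naive bound $\normtwo{\D}^2 \leq \norminf{\D} \leq (1+\eps)/\abs{S}$ (which follows immediately from $M \leq (1+\eps)/\abs{S}$) is too weak for part (a)—it yields the wrong linear-in-$\eps$ dependence. The square in the target bound $(1+\eps^2)/\abs{S}$ appears only once one routes through the coordinatewise deviation $\abs{\D(i) - 1/\abs{S}} \leq \eps/\abs{S}$ and squares, which is precisely the role played by the Pythagorean decomposition.
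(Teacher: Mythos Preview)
Your proof is correct in both parts; the Pythagorean identity $\normtwo{\D}^2 = \normtwo{\D-\uniform_S}^2 + 1/\abs{S}$ together with the pointwise bound $\abs{\D(i)-1/\abs{S}}\leq \eps/\abs{S}$ for \textsf{(a)} and Cauchy--Schwarz for \textsf{(b)} is exactly the right route. Note that the paper does not actually supply its own proof of this lemma---it is quoted as a tool from~\cite{BFRSW:00,BFFKRW:01}---so there is no in-paper argument to compare against; your write-up would serve perfectly well as a self-contained proof.
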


\noindent To check condition \textsf{(b)} above we shall rely on the following, which one can derive from the techniques in \cite{DKN:15} and whose proof we defer to \autoref{app:l2:proof}:
\
\begin{restatable}[Adapted from {\cite[Theorem 11]{DKN:15}}]{lemma}{lemmaestimateltwoadd}\label{lemma:estimate:l2:add}
There exists an algorithm \textsc{Check-Small-$\lp[2]$} which, given parameters $\eps,\delta\in(0,1)$ and $c\cdot{\sqrt{\abs{I}}}/{\eps^2} \log(1/\delta)$ independent samples from a distribution $\D$ over $I$ (for some absolute constant $c>0$), outputs either \yes or \no, and satisfies the following.
  \begin{itemize}
    \item If $\normtwo{\D-\uniform_I} > {\eps}/{\sqrt{\abs{I}}}$, then the algorithm outputs \no with probability at least $1-\delta$;
    \item If $\normtwo{\D-\uniform_I} \leq {\eps}/{2\sqrt{\abs{I}}}$, then the algorithm outputs \yes with probability at least $1-\delta$.
  \end{itemize}
\end{restatable}

Finally, we will also rely on a classical result from Probability, the \emph{Dvoretzky--Kiefer--Wolfowitz} (DKW) inequality, restated below:
\begin{theorem}[\cite{DKW:56,Massart:90}]\label{theo:dkw:ineq}
Let $\D$ be a distribution over $[n]$. Given $m$ independent samples $x_1,\dots ,x_m$ from $\D$, define the empirical distribution $\hat{\D}$ as follows:
\[
\hat{\D}(i) \eqdef \frac{\abs{ \setOfSuchThat{j\in[m]}{x_j=i} } }{m}, \quad i\in[n].
\]
Then, for all $\eps > 0$, $\probaOf{ \kolmogorov{\D}{\hat{\D}} > \eps } \leq 2e^{-2m\eps^2}$, where $\kolmogorov{\cdot}{\cdot}$ denotes the Kolmogorov distance (i.e., the $\lp[\infty]$ distance between cumulative distribution functions).
\end{theorem}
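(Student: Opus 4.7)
The plan is to establish the Dvoretzky--Kiefer--Wolfowitz inequality, a classical concentration bound for the uniform deviation between an empirical CDF and the true CDF. Since the paper uses this only as a black-box tool (invoked to control the Kolmogorov distance between $\D$ and $\hat{\D}$), my approach would follow the well-known proofs of Dvoretzky--Kiefer--Wolfowitz and Massart rather than inventing anything new.

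First, I would reduce to the uniform distribution on $[0,1]$. By applying the probability integral transform to each sample, the uniform empirical process has the same joint law of $(F_m(t) - F(t))_t$ statistics as the original process. Since we are on $[n]$, an equivalent reduction is to observe that the Kolmogorov distance only depends on the CDF at the $n$ jump points, so the supremum over $x$ becomes a maximum over a finite set. For any single threshold $t$, the count $m \hat{F}_m(t) = \sum_{j=1}^{m} \indic{x_j \leq t}$ is $\binomial{m}{F(t)}$, so Hoeffding's inequality immediately yields the pointwise bound $\probaOf{|F_m(t) - F(t)| > \eps} \leq 2 e^{-2m\eps^2}$.

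The hard part is extending this pointwise estimate to the supremum \emph{without} incurring a union-bound factor of $n$ (which would be catastrophic for the way we want to use the inequality, where $m$ must be polynomial in $n$ only through $\eps^{-2}$). To get the sharp constant $2$ in the exponent, I would follow Massart's martingale argument: treat the one-sided process $\sup_t (F_m(t) - F(t))$ using an exponential martingale in the rank variable, apply an optional-stopping / truncation argument calibrated to the deviation level $\eps$, and combine with reflection identities for the uniform empirical process (equivalently, ballot-type combinatorial identities that go back to the original DKW argument). The two-sided bound then follows from symmetry, giving the factor of $2$ in front.

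The main obstacle is exactly this last step: the sharp constant $2$ in the exponent took the field roughly thirty years to achieve after Dvoretzky--Kiefer--Wolfowitz's original proof, which only gave some unspecified universal constant. For the present paper any fixed constant would suffice, since the inequality is invoked only to argue that $\tildeO{1/\eps^2}$ samples bring the empirical CDF within Kolmogorov distance $\eps$ with high probability; so in practice I would simply cite \cite{DKW:56,Massart:90} rather than reproduce the technical martingale computation.
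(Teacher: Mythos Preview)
The paper does not prove this theorem at all: it is stated in the preliminaries (Section~\ref{sec:preliminaries}) as a classical tool from probability and simply cited to \cite{DKW:56,Massart:90}. Your final conclusion---to cite the result rather than reproduce Massart's argument---is therefore exactly what the paper does, and your sketch of the underlying proof is accurate but unnecessary here.
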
 
\noindent In particular, this implies that $\bigO{1/\eps^2}$ samples suffice to learn a distribution up to $\eps$ in Kolmogorov distance.
 
\section{The General Algorithm}\label{sec:algorithm}
  \makeatletter{}In this section, we obtain our main result, restated below:
\mainthmtestingalgo*

\paragraph*{Intuition.} Before diving into the proof of this theorem, we first provide a high-level description of the argument. The algorithm proceeds in 3 stages: the first, the \emph{decomposition step}, attempts to recursively construct a partition of the domain in a small number of intervals, with a very strong guarantee. If the decomposition succeeds, then the unknown distribution $\D$ will be close (in $\lp[1]$ distance) to its ``flattening'' on the partition; while if it fails (too many intervals have to be created), this serves as evidence that $\D$ does not belong to the class and we can reject. The second stage, the \emph{approximation step}, then learns this flattening of the distribution -- which can be done with few samples since by construction we do not have many intervals. The last stage is purely computational, the \emph{projection step}: where we verify that the flattening we have learned is indeed close to the class \class. If all three stages succeed, then by the triangle inequality it must be the case that $\D$ is close to \class; and by the structural assumption on the class, if $\D\in\class$ then it will admit succinct enough partitions, and all three stages will go through.\medskip

\noindent Turning to the proof, we start by defining formally the ``structural criterion'' we shall rely on, before describing the algorithm at the heart of our result in~\autoref{ssec:main:algorithm}. (We note that a modification of this algorithm will be described in~\autoref{sec:effectivesupport}, and will allow us to derive~\autoref{coro:main:testing:pbd}.)

\begin{definition}[Decompositions]\label{def:struct:dec:split}
Let $\gamma > 0$ and $L=L(\gamma,n)\geq 1$.  A class of distributions \class on $[n]$ is said to be \emph{$(\gamma,L)$-decomposable} if for every $\D\in\class$ there exists $\ell \leq L$ and a partition $\mathcal{I}(\gamma,\D)=(I_1,\dots,I_\ell)$ of \new{the interval $[1,n]$} such that, for all $j\in[\ell]$, one of the following holds:
\begin{enumerate}[(i)]
  \item\label{def:struct:item:light}  $\D(I_j) \leq \frac{\gamma}{L}$; or 
  \item\label{def:struct:item:flat} $\displaystyle \max_{i \in I_j} \D(i)\leq  (1+\gamma)\cdot \min_{i \in I_j} \D(i)$.
\end{enumerate}
Further, if $\mathcal{I}(\gamma,\D)$ is \emph{dyadic} (i.e., each $I_k$ is of the form $[j\cdot 2^i+1,(j+1)\cdot 2^i]$ for some integers $i,j$, corresponding to the leaves of a recursive bisection of $[n]$), then \class is said to be \emph{$(\gamma,L)$-splittable}.
\end{definition}

\begin{lemma}\label{lemma:decomposable:splittable}
If \class is $(\gamma,L)$-decomposable, then it is $(\gamma, \bigO{L\log n})$-splittable.
\end{lemma}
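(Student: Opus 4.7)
The plan is to build a dyadic refinement of the given $(\gamma,L)$-decomposition in two stages: first by replacing each interval by its canonical dyadic decomposition, then by further bisecting any light sub-pieces whose mass still exceeds the tightened threshold.

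Fix $\D\in\class$ and let $(I_1,\ldots,I_\ell)$ with $\ell\leq L$ be its promised decomposition. For each $I_j$, let $\mathcal{J}_j$ be its canonical dyadic decomposition, i.e., the coarsest partition of $I_j$ into maximal dyadic sub-intervals; it is a standard fact that $|\mathcal{J}_j|\leq 2\lceil\log_2 n\rceil$. Let $\mathcal{J}_0=\bigcup_j \mathcal{J}_j$, which is a dyadic partition of $[1,n]$ with $|\mathcal{J}_0|\leq 2L\lceil\log_2 n\rceil=O(L\log n)$ parts. Each $J\in\mathcal{J}_0$ is contained in a unique $I_j$: if $I_j$ satisfies \textsf{(ii)}, then since $\max_{i\in J}\D(i)\leq\max_{i\in I_j}\D(i)$ and $\min_{i\in J}\D(i)\geq\min_{i\in I_j}\D(i)$, the piece $J$ inherits flatness and also satisfies \textsf{(ii)}. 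If $I_j$ satisfies \textsf{(i)}, then $\D(J)\leq\D(I_j)\leq\gamma/L$, which is larger than the target threshold $\gamma/L'$ for $L' = \Theta(L\log n)$, and so \textsf{(i)} at the new threshold may fail.

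The main obstacle is to handle these ``heavy'' light pieces, i.e.\ those $J\subseteq I_j$ (for light $I_j$) with $\D(J)\in(\gamma/L',\gamma/L]$. For each such $J$ I would perform a further top-down bisection, declaring a branch a leaf as soon as the current sub-piece either becomes flat (fulfilling \textsf{(ii)}) or has mass at most $\gamma/L'$ (fulfilling \textsf{(i)}). The extra leaves are controlled by a disjoint-mass argument: within one light $I_j$, the not-yet-stopped sub-intervals at each bisection level are disjoint dyadic intervals, each of mass exceeding $\gamma/L'$ and collectively carrying mass at most $\D(I_j)\leq\gamma/L$, so their per-level count is at most $L'/L$. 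Summing across the $O(\log n)$ bisection levels and across the at most $L$ light intervals (and exploiting the global amortization that the total mass inside all light intervals is $\sum_{j \text{ light}}\D(I_j)\leq L\cdot\gamma/L=\gamma$) yields the desired $O(L\log n)$ bound on the total size of the refined dyadic partition, after absorbing constants into the choice of $L'$.

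Finally, by construction every leaf of the final partition satisfies \textsf{(i)} at threshold $\gamma/L'$ or \textsf{(ii)}, witnessing that \class is $(\gamma, L')$-splittable with $L' = O(L\log n)$, which is the claim.
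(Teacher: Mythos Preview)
Your first stage is exactly the paper's argument: any interval of $[1,n]$ admits a dyadic decomposition into $O(\log n)$ pieces, so refining each $I_j$ this way produces a dyadic partition $\mathcal{J}_0$ of size $O(L\log n)$, and flatness \textsf{(ii)} is clearly inherited by sub-intervals. The paper in fact stops right there, simply asserting that both conditions of the definition are ``closed under taking subsets''; it never performs your second stage.

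You add the second stage because you correctly observe a subtlety the paper glosses over: condition \textsf{(i)} is \emph{not} automatically inherited once the parameter changes, since a dyadic piece $J$ coming from a light $I_j$ only satisfies $\D(J)\leq\gamma/L$, whereas $(\gamma,L')$-splittability with $L'=\Theta(L\log n)$ demands the stricter bound $\D(J)\leq\gamma/L'$. However, your fix does not close the gap. Your own bookkeeping gives at most $L'/L$ not-yet-stopped pieces per level within each light $I_j$; summing over the at most $L$ light intervals and over the $O(\log n)$ bisection levels yields $O\big(L\cdot (L'/L)\cdot\log n\big)=O(L'\log n)$ additional splits, hence $O(L'\log n)=O(L\log^2 n)$ leaves in total rather than $O(L\log n)$. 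The dependence is circular: the stopping threshold $\gamma/L'$ is determined by $L'$, yet the number of leaves your process produces scales like $L'\log n$, so no choice of $L'=\Theta(L\log n)$ makes the final count fit under $L'$. The phrase ``after absorbing constants into the choice of $L'$'' hides precisely this extra $\log n$ factor, which is not a constant.
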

\begin{proof}
We will begin by proving a claim that for every partition $\mathcal{I}=\{I_1,I_2,...I_L\}$ of the interval $[1,n]$ into $L$ intervals, there exists a refinement of that partition which consists of at most $L\cdot \log n$ dyadic intervals. So, it suffices to prove that every interval $[a,b]\subseteq [1,n]$, can be partitioned in at most $\bigO{\log n}$ dyadic intervals. Indeed, let $\ell$ be the largest integer such that $2^\ell\leq \frac{b-a}{2}$ and let $m$ be the smallest integer such that $m\cdot 2^\ell\geq a$. If follows that $m\cdot 2^\ell\leq a+\frac{b-a}{2}=\frac{a+b}{2}$ and $(m+1)\cdot 2^\ell\leq b$. So, the interval $I=[m\cdot 2^\ell+1,(m+1)\cdot 2^\ell]$ is fully contained in $[a,b]$ and has size at least $\frac{b-a}{4}$. 

We will also use the fact that, for every $\ell^\prime\leq \ell$,
\begin{equation}\label{eq:lem:dec:split:smaller:powers}
m\cdot 2^\ell=m\cdot 2^{\ell-\ell^\prime}\cdot 2^{\ell^\prime}=m^\prime \cdot 2^{\ell^\prime}
\end{equation} 

Now consider the following procedure: Starting from right (resp. left) side of the interval $I$, we add the largest interval which is adjacent to it and fully contained in $[a,b]$ and recurse until we cover the whole interval $[(m+1)\cdot 2^\ell+1,b]$ (resp. $[a,m\cdot 2^\ell]$). Clearly, at the end of this procedure, the whole interval $[a,b]$ is covered by dyadic intervals. It remains to show that the procedure takes $\bigO{\log n}$ steps. Indeed, using~\autoref{eq:lem:dec:split:smaller:powers}, we can see that at least half of the remaining left or right interval is covered in each step (except maybe for the first 2 steps where it is at least a quarter). Thus, the procedure will take at most $2\log n +2=\bigO{\log n}$ steps in total. From the above, we can see that each of the $L$ intervals of the partition $\mathcal{I}$ can be covered with $\bigO{\log n}$ dyadic intervals, which completes the proof of the claim. 

In order to complete the proof of the lemma, notice that the two conditions in~\autoref{def:struct:dec:split} are closed under taking subsets. 
\end{proof}

\subsection{The algorithm}\label{ssec:main:algorithm}

\autoref{theo:main:testing}, and with it~\autoref{coro:main:testing} and~\autoref{coro:main:testing:tmod} will follow from the theorem below, combined with the structural theorems from~\autoref{sec:structural}:
\begin{theorem}\label{theo:main:testing:detailed}
  Let $\class$ be a class of distributions over $[n]$ for which the following holds. 
    \begin{enumerate}
      \item $\class$ is $(\gamma,L(\gamma,n))$-splittable;
      \item there exists a procedure $\estimdist$ which, given as input a parameter $\alpha\in(0,1)$ and the explicit description of a distribution $\D$ over $[n]$, 
      returns \yes if the distance $\lp[1](\D,\class)$ to $\class$ is at most $\alpha/10$, and \no if $\lp[1](\D,\class) \geq 9\alpha/10$ (and either \yes or \no otherwise).
    \end{enumerate}
  Then, the algorithm \textsc{TestSplittable} (\autoref{algo:test:splittable}) is a $\bigO{ \max\mleft({\sqrt{nL}}\log n/{\eps^3}, L/{\eps^2}\mright) }$-sample tester for $\class$, for $L=L(\eps,n)$. (Moreover, if $\estimdist$ is computationally efficient, then so is \textsc{TestSplittable}.)
\end{theorem}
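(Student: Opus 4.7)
The plan is to instantiate the three-stage template sketched just above the theorem statement: an adaptive dyadic \emph{decomposition} step, a \emph{flattening} step, and a final \emph{projection} step that invokes the given $\estimdist$ procedure. I would fix $\gamma = c\eps$ for a small absolute constant $c>0$ and $L = L(\gamma,n)$, and draw once and for all an initial batch of $m = \bigTheta{\sqrt{nL}\log n/\eps^3}$ samples from $\D$, to be reused across the decomposition and flattening steps.

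For the decomposition, I would maintain a rooted binary tree whose nodes correspond to dyadic subintervals of $[n]$, built top-down starting at $[1,n]$. At each node $I$, use only the samples already drawn that happen to fall in $I$ in order to decide whether $I$ satisfies clause (i) (light, $\D(I)\leq \gamma/L$) or clause (ii) (flat, max/min ratio at most $1+\gamma$) of~\autoref{def:struct:dec:split}. Lightness would be checked by an empirical-mass Chernoff bound against threshold $\gamma/L$; the proxy for clause (ii) is an $\lp[2]$-closeness test, which by~\autoref{lemma:small:l2:close:uniform:l1} is both implied by clause (ii) and implies the $\lp[1]$-flatness we ultimately need, and which is implemented by \textsc{Check-Small-$\lp[2]$} from~\autoref{lemma:estimate:l2:add} against threshold $\bigTheta{\eps}/\sqrt{|I|}$. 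If $I$ passes either test it becomes a leaf; otherwise we recurse on its two dyadic halves. The algorithm aborts and rejects immediately if the number of leaves ever exceeds $L$ or the recursion depth exceeds $\log n$. The crucial budgeting observation is that Cauchy--Schwarz gives $\sum_{\text{leaves }I}\sqrt{|I|}\leq \sqrt{L n}$, while the conditional-sample requirement $m\cdot \D(I)\gtrsim \sqrt{|I|}/\eps^2$ demanded by \textsc{Check-Small-$\lp[2]$} is met on non-light intervals (where $\D(I)\gtrsim \gamma/L$) by the $m$ chosen above; a union bound over the $\bigO{L\log n}$ nodes visited, with per-node failure probability $1/\poly(L\log n)$, controls all the error events simultaneously.

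When the decomposition returns a partition $\mathcal{I}=(I_1,\dots,I_k)$ with $k\leq L$, the flattening step estimates the leaf probabilities $\D(I_j)$ to additive $\lp[1]$ accuracy $\eps/20$ over the $k$-element alphabet of intervals (using \autoref{theo:dkw:ineq} or a direct empirical estimate) at an extra cost of $\bigO{L/\eps^2}$ samples, and forms the distribution $\Phi^\ast$ placing each estimated mass uniformly inside the corresponding leaf. The projection step then returns $\estimdist(\eps, \Phi^\ast)$. For completeness, when $\D\in\class$ the splittability hypothesis supplies a partition witnessing~\autoref{def:struct:dec:split}; since clauses (i) and (ii) are closed under taking subsets, every leaf the adaptive recursion outputs still satisfies one of them, and with high probability passes its test. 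Combining~\autoref{lemma:small:l2:close:uniform:l1} on flat leaves with the bound of $\gamma$ on the total light mass yields $\normone{\D-\Phi^\ast}\leq \eps/10$, so $\estimdist$ is fed a distribution within $\eps/10$ of $\class$ and accepts. For soundness, if the decomposition step does not itself reject, then the triangle inequality gives $\lp[1](\Phi^\ast,\class)\geq \eps - \eps/10 > 9\eps/10$, forcing $\estimdist$ to reject. Efficiency of \textsc{TestSplittable} reduces to that of $\estimdist$, since all remaining work is polynomial in $m$ and the tree size.

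The step I expect to be the main obstacle is the decomposition: one must argue that the adaptive dyadic recursion, driven by \textsc{Check-Small-$\lp[2]$} and the mass-Chernoff test, always terminates with a valid partition of size at most $L$ whenever $\D\in\class$. This requires in particular that clauses (i) and (ii) of~\autoref{def:struct:dec:split} be closed under subintervals (so that refining the splittable partition cannot ruin it) and that the conditional sample size $m\cdot\D(I)$ available at each non-light ancestor remain large enough for the $\lp[2]$-tester to operate at the $\eps$-regime required. The slight slack between $m\cdot\D(I)$ and $\sqrt{|I|}/\eps^2$ on intervals of intermediate mass is precisely what forces the $1/\eps^3$ factor, rather than the $1/\eps^2$ one might naively hope for.
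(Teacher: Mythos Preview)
Your overall architecture matches the paper's exactly (dyadic decomposition driven by \textsc{Check-Small-$\lp[2]$}, then learn the flattening, then call $\estimdist$), but there is a real gap in the decomposition step.

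Your lightness test is only ``empirical mass below $\gamma/L$''. You then assert that on every non-light node $I$ (so $\D(I)\gtrsim \gamma/L\approx \eps/L$) the conditional sample size $m\cdot\D(I)$ exceeds the $\sqrt{|I|}/\eps^2$ that \textsc{Check-Small-$\lp[2]$} needs. This is false: with $m=\Theta(\sqrt{nL}\log n/\eps^3)$ and $\D(I)\approx \eps/L$ one gets $m\cdot\D(I)\approx \sqrt{n/L}\,\log n/\eps^2$, which is \emph{smaller} than $\sqrt{|I|}/\eps^2$ whenever $|I|\gtrsim (n/L)\log^2 n$. Such nodes certainly occur near the top of the recursion for any $L$ growing with $1/\eps$. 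On them the $\lp[2]$-tester has no guarantee, so it may spuriously output \yes, creating a leaf that is neither light nor $\lp[2]$-flat; this breaks the bound $\normone{\D-\Phi^\ast}\le \eps/10$ that both your completeness and soundness arguments rely on. The Cauchy--Schwarz bound $\sum_{\text{leaves}}\sqrt{|I|}\le\sqrt{Ln}$ is the right tool, but it cannot deliver the \emph{pointwise} inequality $m\cdot\D(I)\ge\sqrt{|I|}/\eps^2$; in the paper it is used for a different purpose. The fix is to declare a node a leaf also when $m_I< c\sqrt{|I|}\log(1/\delta)/\eps^2$ (i.e., the paper takes the lightness threshold to be $\max(\kappa m,\; c\sqrt{|I|}\log(1/\delta)/\eps^2)$). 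Then \textsc{Check-Small-$\lp[2]$} is only ever invoked with enough samples, and the Cauchy--Schwarz/Jensen bound is applied to control $\sum_{\text{light}}\D(I)$, since each light leaf now satisfies $\D(I)\le\max(\kappa,\,O(\sqrt{|I|}/(m\eps^2)))$.

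A secondary issue: your completeness argument (``clauses (i) and (ii) are closed under subsets, so every leaf the recursion outputs satisfies one of them'') is backwards. The recursion's leaves are \emph{ancestors} of the target partition's intervals (the recursion stops at or before reaching them), not subsets. The correct argument is that each interval $I_j$ of the witnessing dyadic partition, once reached, passes its empirical test (either because $\D(I_j)\le\gamma/L$ forces $m_{I_j}$ below threshold, or because the max/min ratio $\le 1+\gamma$ forces $\normtwo{\D_{I_j}-\uniform_{I_j}}$ small via~\autoref{lemma:small:l2:close:uniform:l1}), so the recursion never splits past the target partition and hence produces at most $L$ leaves. The bound $\normone{\D-\Phi(\D,\mathcal{I})}\le\eps/20$ must be argued unconditionally (for any $\D$, assuming the decomposition did not reject), from the empirical guarantees of the tests on the actual leaves---not from $\D\in\class$.
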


\begin{algorithm}[ht]
  \algblock[block]{Start}{Start}
  \algblockdefx[]{Start}{End}    [1]{\textsc{#1}}    [1][]{\textsc{#1}}
  \begin{algorithmic}[1]
    \Require Domain $I$ (interval), sample access to $\D$ over $I$; subroutine $\estimdist$
    \renewcommand{\algorithmicrequire}{\textbf{Input:}}
    \Require Parameters $\eps$ and function $L_\class(\cdot,\cdot)$.
    \Start{Setting Up}
      \State Define $\gamma\eqdef \frac{\eps}{80}$, $L\eqdef L_\class(\gamma,\abs{I})$, $\kappa\eqdef\frac{\eps}{160L}$, $\delta\eqdef\frac{1}{10L}$; and $c>0$ be as in~\autoref{lemma:estimate:l2:add}. 
      \State Set $m\eqdef C\cdot\max\left( \frac{1}{\kappa}, \frac{\sqrt{L\abs{I}}}{\eps^3} \right)\cdot\log \abs{I} = \tildeO{ \frac{\sqrt{L\abs{I}}}{\eps^3} + \frac{L}{\eps} }$ \Comment{{\small $C$ is an absolute constant.}}
    \State\label{step:get:samples} Obtain a sequence $\textbf{s}$ of $m$ independent  samples from $\D$.
    \;\;\Comment{{\small  For any $J\subseteq I$, let $m_{J}$ be the number of samples falling in $J$. }}
    \End
    \Start{Decomposition}
      \While{ $m_I \geq \max\left( c\cdot\frac{\sqrt{\abs{I}}}{\eps^2}\log\frac{1}{\delta}, \kappa m \right)$ and at most $L$ splits have been performed } \label{step:recursion} 
        \State Run \textsc{Check-Small-$\lp[2]$} (from~\autoref{lemma:estimate:l2:add}) with parameters $\frac{\eps}{40}$ and $\delta$, using the samples of $\textbf{s}$ belonging to $I$.
        \If{ \textsc{Check-Small-$\lp[2]$} outputs \no }
          \State Bisect $I$, and recurse on both halves (using the same samples).
        \EndIf
      \EndWhile
      \If{more than $L$ splits have been performed} \label{step:check:succinctness}
        \State \Return \reject
      \Else
        \State Let $\mathcal{I}\eqdef (I_1,\dots,I_\ell)$ be the partition of $[n]$ from the leaves of the recursion. \Comment{{\small $\ell \leq L$.}}
      \EndIf
    \End
    \Start{Approximation}
      \State\label{step:learn:flattening} Learn the flattening $\Phi(\D,\mathcal{I})$ of $\D$ to $\lp[1]$ error $\frac{\eps}{20}$ (with probability $1/10$), using $\bigO{\ell/\eps^2}$ new samples. Let $\tilde{\D}$ be the resulting hypothesis. \Comment{{\small $\tilde{\D}$ is a $\ell$-histogram.}}
    \End
    \Start{Offline Check}
      \State \Return \accept if and only if $\estimdist(\eps, \tilde{\D})$ returns \yes. \Comment{{\small No sample needed.}}
    \End
  \end{algorithmic}
  \caption{\label{algo:test:splittable}\sc TestSplittable}
\end{algorithm}

\subsection{Proof of~\autoref{theo:main:testing:detailed}}
We now give the proof of our main result (\autoref{theo:main:testing:detailed}), first analyzing the sample complexity of~\autoref{algo:test:splittable} before arguing its correctness. For the latter, we will need the following simple lemma from~\cite{ILR:12}, restated below:
\begin{fact}[{\cite[Fact 1]{ILR:12}}]\label{fact:ilr12}
Let $\D$ be a distribution over $[n]$, and $\delta \in (0,1]$. Given $m\geq C\cdot\frac{\log\frac{n}{\delta}}{\eta}$ independent samples from $\D$ (for some absolute constant $C>0$), with probability at least $1-\delta$ we have that, for every interval $I\subseteq[n]$:
\begin{enumerate}[(i)]
  \item\label{fact:ilr12:1} if $\D(I) \geq \frac{\eta}{4}$, then $\frac{\D(I)}{2} \leq \frac{m_I}{m} \leq \frac{3\D(I)}{2}$;
  \item\label{fact:ilr12:2} if $\frac{m_I}{m} \geq \frac{\eta}{2}$, then $\D(I) > \frac{\eta}{4}$;
  \item\label{fact:ilr12:3} if $\frac{m_I}{m} < \frac{\eta}{2}$, then $\D(I) < \eta$;
\end{enumerate}
where $m_I \eqdef \abs{ \setOfSuchThat{j\in[m]}{x_j \in I } }$ is the number of the samples falling into $I$.
\end{fact}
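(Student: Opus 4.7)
The plan is to prove all three statements via a single union bound over all intervals, where for each fixed interval we apply a standard multiplicative Chernoff bound to the Binomial random variable $m_I\sim\binomial{m}{\D(I)}$. Since an interval $I\subseteq[n]$ is determined by its two endpoints, there are at most $\binom{n+1}{2}\leq n^2$ distinct intervals; this is exactly where the $\log n$ factor in the sample complexity will come from.

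\textbf{Step 1 (handling part (i)).} Fix $I$ with $\D(I)\geq\eta/4$, so $\mu_I\eqdef m\D(I)\geq m\eta/4$. I would apply the two-sided multiplicative Chernoff bound with deviation $\tau=1/2$:
\[
  \proba\!\left[\,\abs{m_I-\mu_I}>\tfrac{1}{2}\mu_I\,\right]\;\leq\; 2\exp\!\left(-\mu_I/12\right)\;\leq\; 2\exp(-m\eta/48).
\]
Choosing $m\geq C\log(n/\delta)/\eta$ with $C$ a sufficiently large absolute constant (say $C\geq 200$) makes this at most $\delta/(3n^2)$.

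\textbf{Step 2 (handling part (ii), by contrapositive).} Fix $I$ with $\D(I)\leq\eta/4$; I want to argue that $m_I<m\eta/2$ with high probability. If $\D(I)=0$ this is automatic since $m_I=0$; otherwise $\mu_I\leq m\eta/4$, so writing $m\eta/2=(1+\tau)\mu_I$ we have $\tau\geq 1$. The Chernoff upper tail $\proba[m_I\geq(1+\tau)\mu_I]\leq\exp(-\mu_I\tau^2/(2+\tau))$ together with the elementary inequality $\tau^2/(2+\tau)\geq\tau/3$ for $\tau\geq 1$ yields
\[
  \proba\!\left[\,m_I\geq m\eta/2\,\right]\;\leq\;\exp\!\left(-\tfrac{1}{3}(m\eta/2-\mu_I)\right)\;\leq\;\exp(-m\eta/12),
\]
again at most $\delta/(3n^2)$ for $C$ chosen as above.

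\textbf{Step 3 (handling part (iii), by contrapositive).} Fix $I$ with $\D(I)\geq\eta$, so $\mu_I\geq m\eta$. I want $m_I\geq m\eta/2$, and since $m\eta/2\leq\mu_I/2$, the lower-tail multiplicative Chernoff $\proba[m_I<\mu_I/2]\leq\exp(-\mu_I/8)\leq\exp(-m\eta/8)$ suffices; this is again at most $\delta/(3n^2)$ under the same choice of $m$.

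\textbf{Step 4 (union bound).} Finally I would union-bound the three bad events above over the $\leq n^2$ intervals of $[n]$, which yields a total failure probability of at most $3n^2\cdot\delta/(3n^2)=\delta$. On the complementary event, every interval satisfies all three required conclusions simultaneously. The only mild subtlety is verifying the elementary inequality $\tau^2/(2+\tau)\geq\tau/3$ for $\tau\geq 1$ in Step~2 and checking that the degenerate case $\D(I)=0$ poses no issue; there is no real obstacle, as this is a routine Chernoff-plus-union-bound argument whose only quantitative cost is the $\log n$ factor from the $n^2$ intervals.
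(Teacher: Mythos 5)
Your proof is correct: the paper imports this statement verbatim from~\cite{ILR:12} without proof, and your argument (a multiplicative Chernoff bound on $m_I\sim\binomial{m}{\D(I)}$ for each of the three regimes, followed by a union bound over the at most $n^2$ intervals, which is the source of the $\log n$ factor) is exactly the standard argument behind that cited fact. The constants and the elementary inequality $\tau^2/(2+\tau)\geq\tau/3$ for $\tau\geq 1$ all check out, so no gap remains.
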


\subsection{Sample complexity.}
The sample complexity is immediate, and comes from Steps~\ref{step:get:samples} and \ref{step:learn:flattening}. The total number of samples is
\[
  m+\bigO{\frac{\ell}{\eps^2}} = \bigO{ \frac{\sqrt{\abs{I}\cdot L}}{\eps^3} \log \abs{I} + \frac{L}{\eps}\log \abs{I} + \frac{L}{\eps^2} }
   = \bigO{ \frac{\sqrt{\abs{I}\cdot L}}{\eps^3}\log \abs{I} + \frac{L}{\eps^2} }\;.
\]
\subsection{Correctness.}
Say an interval $I$ considered during the execution of the ``Decomposition'' step is \emph{heavy} if $m_I$ is big enough on Step~\ref{step:recursion}, and \emph{light} otherwise; and let $\mathscr{H}$ and $\mathscr{L}$ denote the sets of heavy and light intervals respectively. By choice of $m$ and a union bound over all $\abs{I}^2$ possible intervals, we can assume on one hand that with probability at least $9/10$ the guarantees of~\autoref{fact:ilr12} hold simultaneously for all intervals considered. We hereafter condition on this event.\medskip 

We first argue that if the algorithm does not reject in Step~\ref{step:check:succinctness}, then with probability at least $9/10$ we have $\normone{\D-\Phi(\D,\mathcal{I})} \leq \eps/20$. Indeed, we can write
\begin{align*}
  \normone{\D-\Phi(\D,\mathcal{I})} &= 
            \sum_{k\colon I_k \in \mathscr{L} } \D(I_k)\cdot \normone{\D_{I_k} - \uniform_{I_k}} 
          + \sum_{k\colon I_k \in \mathscr{H}} \D(I_k)\cdot \normone{\D_{I_k} - \uniform_{I_k}} \\
          &\leq 2\sum_{k\colon I_k \in \mathscr{L} } \D(I_k) + \sum_{k\colon I_k \in \mathscr{H} } \D(I_k)\cdot \normone{\D_{I_k} - \uniform_{I_k}}\;.
\end{align*}
Let us bound the two terms separately.
    \begin{itemize}
      \item If $I^\prime \in \mathscr{H}$, then by our choice of threshold we can apply~\autoref{lemma:estimate:l2:add} with $\delta=\frac{1}{10L}$; conditioning on all of the (at most $L$) events happening, which overall fails with probability at most $1/10$ by a union bound, we get
        \[
            \normtwo{\D_{I^\prime}}^2 = \normtwo{\D_{I^\prime}-\uniform_{I^\prime}}^2 + \frac{1}{\abs{I^\prime}} 
            \leq \left( 1+\frac{\eps^2}{1600} \right) \frac{1}{\abs{I^\prime}}
        \]
        as \textsc{Check-Small-$\lp[2]$} returned \yes; and by~\autoref{lemma:small:l2:close:uniform:l1} this implies $\normone{\D_{I^\prime} - \uniform_{I^\prime}}\leq \eps/40$.
      \item 
      If $I^\prime \in \mathscr{L}$, then we claim that 
      $\D(I^\prime) \leq \max( \kappa, 2c\cdot\frac{\sqrt{\abs{I^\prime}}}{m\eps^2}\log\frac{1}{\delta} )$. Clearly, this is true if $\D(I^\prime) \leq \kappa$, so it only remains to show that $\D(I^\prime) \leq 2c\cdot\frac{\sqrt{\abs{I^\prime}}}{m\eps^2}\log\frac{1}{\delta}$. But this follows from~\autoref{fact:ilr12} \ref{fact:ilr12:1}, as if we had 
$\D(I^\prime) > 2c\cdot\frac{\sqrt{\abs{I^\prime}}}{m\eps^2}\log\frac{1}{\delta}$ then $m_{I^\prime}$ would have been big enough, and $I^\prime\notin \mathscr{L}$. Overall,
\[
\sum_{I^\prime \in \mathscr{L} } \D(I^\prime) 
\leq \sum_{I^\prime \in \mathscr{L} } \left( \kappa + 2c\cdot\frac{\sqrt{\abs{I^\prime}}}{m\eps^2}\log\frac{1}{\delta} \right)
\leq L\kappa + 2\sum_{I^\prime \in \mathscr{L} } c\cdot\frac{\sqrt{\abs{I^\prime}}}{m\eps^2}\log\frac{1}{\delta}
\leq \frac{\eps}{160}\left(1+ \sum_{I^\prime \in \mathscr{L} } \sqrt{\frac{\abs{I^\prime}}{\abs{I}L}}\right)
\leq \frac{\eps}{80}
\]
for a sufficiently big choice of constant $C>0$ in the definition of $m$; where we first used that $\abs{\mathscr{L}} \leq L$, and then that $\sum_{I^\prime \in \mathscr{L} } \sqrt{\frac{\abs{I^\prime}}{\abs{I}}}\leq \sqrt{L}$ by Jensen's inequality.

    \end{itemize}
Putting it together, this yields
\begin{align*}
  \normone{\D-\Phi(\D,\mathcal{I})} 
  &\leq 2\cdot \frac{\eps}{80} + \frac{\eps}{40} \sum_{ I^\prime \in \mathscr{H} } \D(I_k) \leq \eps/40+\eps/40 = \eps/20.
\end{align*}

\begin{description}
  \item[Soundness.] By contrapositive, we argue that if the test returns \accept, then (with probability at least $2/3$) $\D$ is $\eps$-close to \class. Indeed, conditioning on $\tilde{\D}$ being $\eps/20$-close to $\Phi(\D,\mathcal{I})$, we get by the triangle inequality that 
  \begin{align*}
      \normone{\D-\class} &\leq \normone{\D-\Phi(\D,\mathcal{I})} + \normone{\Phi(\D,\mathcal{I}) - \tilde{\D}} + \dist{\tilde{\D}}{\class} \\
      &\leq \frac{\eps}{20} + \frac{\eps}{20}+\frac{9\eps}{10} = \eps.
  \end{align*}
  Overall, this happens except with probability at most $1/10+1/10+1/10 < 1/3$.
  
  \item[Completeness.] Assume $\D\in\class$. Then the choice of of $\gamma$ and $L$ ensures the existence of a good dyadic partition $\mathcal{I}(\gamma,\D)$ in the sense of~\autoref{def:struct:dec:split}. For any $I$ in this partition for which~\ref{def:struct:item:light} holds ($\D(I) \leq \frac{\gamma}{L} < \frac{\kappa}{2}$), $I$ will have $\frac{m_I}{m} < \kappa$ and be kept as a ``light leaf'' (this by contrapositive of~\autoref{fact:ilr12} \ref{fact:ilr12:2}). For the other ones, \ref{def:struct:item:flat} holds: let $I$ be one of these (at most $L$) intervals.
    \begin{itemize}
      \item If $m_I$ is too small on  Step~\ref{step:recursion}, then $I$ is kept as ``light leaf.''
      \item Otherwise, then by our choice of constants we can use~\autoref{lemma:small:l2:close:uniform:l1} and apply~\autoref{lemma:estimate:l2:add} with $\delta=\frac{1}{10L}$; conditioning on all of the (at most $L$) events happening, which overall fails with probability at most $1/10$ by a union bound, \textsc{Check-Small-$\lp[2]$} will output \yes, as
        \[
            \normtwo{\D_I-\uniform_I}^2 = \normtwo{\D_I}^2 - \frac{1}{\abs{I}} \leq \left( 1+\frac{\eps^2}{6400} \right) \frac{1}{\abs{I}} - \frac{1}{\abs{I}}
             = \frac{\eps^2}{6400\abs{I}}
        \]
        and $I$ is kept as ``flat leaf.''
    \end{itemize}
  Therefore, as $\mathcal{I}(\gamma,\D)$ is dyadic the \textsc{Decomposition} stage is guaranteed to stop within at most $L$ splits (in the worst case, it goes on until $\mathcal{I}(\gamma,\D)$ is considered, at which point it succeeds).\footnotemark{} Thus Step~\ref{step:check:succinctness} passes, and the algorithm reaches the \textsc{Approximation} stage. By the foregoing discussion, this implies $\Phi(\D,\mathcal{I})$ is $\eps/20$-close to $\D$ (and hence to \class); $\tilde{\D}$ is then (except with probability at most $1/10$) $(\frac{\eps}{20}+\frac{\eps}{20}=\frac{\eps}{10})$-close to \class, and the algorithm returns \accept.
\end{description}
\footnotetext{In more detail, we want to argue that if $\D$ is in the class, then a decomposition with
at most $L$ pieces is found by the algorithm. Since there \emph{is} a dyadic
decomposition with at most $L$ pieces (namely, $\mathcal{I}(\gamma,\D)=(I_1,\dots,I_t)$), it suffices to
argue that the algorithm will never split one of the $I_j$'s (as every
single $I_j$ will eventually be considered by the recursive binary splitting,
unless the algorithm stopped recursing in this ``path'' before  even considering $I_j$,
which is even better). But this is the case by the above argument, which ensures each such $I_j$ will
be recognized as satisfying one of the two conditions for ``good
decomposition'' (being either close to uniform in $\lp[2]$, or having very little mass).}
 
\section{Structural Theorems}\label{sec:structural}
  \makeatletter{}In this section, we show that a wide range of  natural distribution families 
are succinctly decomposable, and provide efficient projection algorithms
for each class. 

\subsection{Existence of Structural Decompositions} \label{ssec:struct-existence}

\begin{theorem}[Monotonicity]\label{theo:structural:monotone}
For all $\gamma > 0$, the class $\classmon$ of monotone distributions on $[n]$ is $(\gamma,L)$-splittable for $L \eqdef \bigO{\frac{\log^2 n}{\gamma}}$.
\end{theorem}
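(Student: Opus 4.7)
My plan is to prove first a non-dyadic version of the claim --- namely, that every $\D \in \classmon$ admits a partition of $[n]$ into $\bigO{\log n/\gamma}$ intervals, each of which is either ``flat'' or ``light'' in the sense of \autoref{def:struct:dec:split} (with the eventual light threshold $\gamma/L$ for $L = \bigO{\log^2 n/\gamma}$) --- and then to invoke \autoref{lemma:decomposable:splittable} as a black box to obtain the claimed dyadic splittability.

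The decomposition I would use is a classical Birg\'e-style greedy construction. Given a non-increasing $\D \in \classmon$, set $a_1 \eqdef 1$ and, inductively, let $b_k$ be the largest index in $[a_k, n]$ with $\D(b_k) \geq \D(a_k)/(1+\gamma)$, and set $a_{k+1} \eqdef b_k + 1$. Since $\D$ is non-increasing, each interval $I_k \eqdef [a_k, b_k]$ automatically satisfies the flat condition, as $\max_{I_k}\D = \D(a_k) \leq (1+\gamma)\D(b_k) = (1+\gamma)\min_{I_k}\D$. The maximality of $b_k$ further forces $\D(a_{k+1}) < \D(a_k)/(1+\gamma)$, so the sequence of left endpoint values decays geometrically: $\D(a_k) < (1+\gamma)^{-(k-1)}$.

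I would terminate the procedure as soon as $\D(a_k) \leq \gamma/(nL)$ for $L \eqdef \Theta(\log^2 n/\gamma)$, at which point the remaining suffix $[a_k, n]$ is a single light interval of mass at most $n \cdot \D(a_k) \leq \gamma/L$. The geometric decay combined with $\log(1+\gamma) = \bigTheta{\gamma}$ for $\gamma \in (0,1]$ bounds the number of iterations by $\bigO{\log(nL/\gamma)/\gamma} = \bigO{\log n/\gamma}$ (in the standard regime where $\gamma$ is at least inverse-polynomial in $n$). This gives a decomposition of $\D$ into $\bigO{\log n/\gamma}$ pieces, each either flat or of mass at most $\gamma/L$. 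Applying \autoref{lemma:decomposable:splittable} refines it into a dyadic partition of size $L = \bigO{\log^2 n/\gamma}$, and since both the flat and light conditions are closed under restriction to sub-intervals (monotonicity is inherited, and the sub-mass can only decrease), the refinement remains a valid splittable partition.

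The only real care required is the mild circular dependency between $L$ and the stopping threshold $\gamma/(nL)$, which is closed by fixing a sufficiently large absolute constant in front of $\log^2 n/\gamma$; a minor edge case --- $\D$ having a zero entry --- is handled by truncating to $\supp{\D}$. I do not foresee any serious obstacle, since the heart of the argument is the geometric decay of the endpoint values $\D(a_k)$ in the greedy partition, a standard observation in the Birg\'e approximation literature for monotone distributions.
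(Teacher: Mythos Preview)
Your proposal is correct and arrives at the same $\bigO{\log^2 n/\gamma}$ bound, but the route differs from the paper's. The paper does not first build a non-dyadic partition and then invoke \autoref{lemma:decomposable:splittable}; instead it constructs the dyadic partition \emph{directly} via recursive bisection: starting from $[1,n]$, it repeatedly halves any interval that is neither light nor flat, and then bounds the number of internal nodes of the resulting binary tree. The depth is at most $\log n$, and at each level the same geometric-decay argument you use (the left-endpoint values drop by a factor $(1+\gamma)$ across consecutive ``bad'' intervals) bounds the number of intervals still being split by $\bigO{\log(nL/\gamma)/\gamma}$; multiplying gives $\bigO{\log^2 n/\gamma}$.

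Your Birg\'e-style greedy sweep is cleaner and more modular --- it isolates the $\bigO{\log n/\gamma}$ non-dyadic bound as a standalone fact and pushes the extra $\log n$ factor entirely onto the generic \autoref{lemma:decomposable:splittable}. The paper's recursive-bisection proof mirrors what the testing algorithm actually does (\autoref{algo:test:splittable} performs exactly this bisection), so it doubles as a correctness argument for the algorithm's decomposition stage. Your handling of the circular dependency between $L$ and the stopping threshold $\gamma/(nL)$, and your explicit remark that the argument requires $\gamma$ to be at least inverse-polynomial in $n$, are both points the paper glosses over (it absorbs them into a ``$(1+o(1))$'' factor), so in that respect your write-up is more careful.
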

\noindent Note that this proof can already be found in \cite[Theorem 10]{BKR:04}, interwoven with the analysis of their algorithm. For the sake of being self-contained, we reproduce the structural part of their argument, removing its algorithmic aspects:
\begin{proofof}{\autoref{theo:structural:monotone}}
We define the $\mathcal{I}$ recursively as follows: $\mathcal{I}^{(0)}=([1,n])$, and for $j \geq 0$ the partition $\mathcal{I}^{(j+1)}$ is obtained from $\mathcal{I}^{(j)}=(I_1^{(j)},\dots,I_{\ell_j}^{(j)})$ by going over the $I^{(j)}_i=[a^{(j)}_i, b^{(j)}_i]$ in order, and:
\begin{enumerate}[(a)]
  \item\label{theo:struct:proof:item:light} if $\D(I^{(j)}_i)\leq \frac{\gamma}{L}$, then $I^{(j)}_i$ is added as element of $\mathcal{I}^{(j+1)}$ (``marked as leaf'');
  \item\label{theo:struct:proof:item:flat} else, if $\D(b^{(j)}_i) \leq (1+\gamma)\D(a^{(j)}_i)$, then $I^{(j)}_i$ is added as element of $\mathcal{I}^{(j+1)}$ (``marked as leaf'');
  \item otherwise, bisect $I^{(j)}$ in $I^{(j)}_{\rm L}$, $I^{(j)}_{\rm R}$ (with $\abs{I^{(j)}_{\rm L}}=\clg{\abs{I^{(j)}}/2}$) and add both $I^{(j)}_{\rm L}$ and $I^{(j)}_{\rm R}$ as elements of $\mathcal{I}^{(j+1)}$.
\end{enumerate}
and repeat until convergence (that is, whenever the last item is not applied for any of the intervals). Clearly, this process is well-defined, and will eventually terminate (as $(\ell_j)_j$ is a non-decreasing sequence of natural numbers, upper bounded by $n$). Let $\mathcal{I}=(I_1,\dots,I_{\ell})$ (with $I_i=[a_i,a_{i+1})$) be its outcome, so that the $I_i$'s are consecutive intervals all satisfying either \ref{theo:struct:proof:item:light} or \ref{theo:struct:proof:item:flat}. As \ref{theo:struct:proof:item:flat} clearly implies \ref{def:struct:item:flat}, we only need to show that $\ell \leq L$; for this purpose, we shall leverage as in \cite{BKR:04} the fact that $\D$ is monotone to bound the number of recursion steps.

The recursion above defines a complete binary tree (with the leaves being the intervals satisfying \ref{theo:struct:proof:item:light} or \ref{theo:struct:proof:item:flat}, and the internal nodes the other ones). Let $t$ be the number of recursion steps the process goes through before converging to $\mathcal{I}$ (height of the tree); as mentioned above, we have $t\leq \log n$ (as we start with an interval of size $n$, and the length is halved at each step.). Observe further that if at any point an interval $I^{(j)}_i=[a^{(j)}_i, b^{(j)}_i]$ has $\D(a^{(j)}_i) \leq \frac{\gamma}{nL}$, then it immediately (as well as all the $I^{(j)}_k$'s for $k\geq i$ by monotonicity) satisfies \ref{theo:struct:proof:item:light} and is no longer split (``becomes a leaf''). So at any $j \leq t$, the number of intervals $i_j$ for which neither \ref{theo:struct:proof:item:light} nor \ref{theo:struct:proof:item:flat} holds must satisfy
\[
  1 \geq \D(a^{(j)}_1) > (1+\gamma)\D(a^{(j)}_2) > (1+\gamma)^2\D(a^{(j)}_3) > \dots > (1+\gamma)^{i_j-1}\D(a^{(j)}_{i_j}) \geq (1+\gamma)^{i_j-1}\frac{\gamma}{nL}
\]
where $a_k$ denotes the beginning of the $k$-th interval (again we use monotonicity to argue that the extrema were reached at the ends of each interval), so that $i_j \leq 1+\frac{\log\frac{nL}{\gamma}}{\log(1+\gamma)}$. In particular, the total number of internal nodes is then 
\[
\sum_{i=1}^t i_j \leq t\cdot\left(1+\frac{\log\frac{nL}{\gamma}}{\log(1+\gamma)}\right) = (1+\littleO{1})\frac{\log^2 n}{\log(1+\gamma)} \leq L\;.\]
This implies the same bound on the number of leaves $\ell$.
\end{proofof}

\begin{corollary}[Unimodality]\label{theo:structural:unimodal}
For all $\gamma > 0$, the class $\classuni$ of unimodal distributions on $[n]$ is $(\gamma,L)$-decomposable for $L \eqdef \bigO{\frac{\log^2 n}{\gamma}}$.
\end{corollary}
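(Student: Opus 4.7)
The plan is to reduce directly to the monotone case already handled by~\autoref{theo:structural:monotone}. Any unimodal distribution $\D$ on $[n]$ has, by definition, a mode $m^\ast\in[n]$ such that $\D$ is monotone non-decreasing on $[1,m^\ast]$ and monotone non-increasing on $[m^\ast,n]$. So first I would locate $m^\ast$ and split $[n]$ into the two sub-intervals $J_L=[1,m^\ast]$ and $J_R=[m^\ast+1,n]$; the restrictions of $\D$ to these sub-intervals are (sub-)distributions that are monotone in opposite directions.

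On $J_R$, I would run the recursive bisection procedure from the proof of~\autoref{theo:structural:monotone} directly. A careful inspection of that proof shows that the only properties of $\D$ it actually uses are \emph{(a)} the monotonicity of $\D$ on the interval under consideration (used to place the extrema at the endpoints and to conclude that once $\D(a_i^{(j)})\leq \gamma/(nL)$ the whole tail is ``light''), and \emph{(b)} the trivial bound $\D(a_1^{(j)})\leq 1$. Both properties are preserved if $\D$ is merely a monotone non-increasing sub-distribution (total mass $\leq 1$), so the same argument yields a partition of $J_R$ into at most $L_R=\bigO{\log^2 n/\gamma}$ intervals, each satisfying condition~\ref{def:struct:item:light} or~\ref{def:struct:item:flat} of~\autoref{def:struct:dec:split}. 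For $J_L$, where $\D$ is non-decreasing, I would apply the mirror image of the same procedure: bisect, and compare $\D$ at the right endpoint to $\D$ at the left endpoint (rather than the other way around). This is literally the monotone argument after reflecting the coordinate, and it produces a partition of $J_L$ into at most $L_L=\bigO{\log^2 n/\gamma}$ intervals of the required form.

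Concatenating the two partitions yields a partition of $[n]$ into at most $L_L+L_R=\bigO{\log^2 n/\gamma}$ intervals, each of which satisfies either~\ref{def:struct:item:light} or~\ref{def:struct:item:flat}, which is precisely the $(\gamma,L)$-decomposability conclusion. Note that this partition is \emph{not} guaranteed to be dyadic, since the cut point $m^\ast$ can be arbitrary; this is the reason the corollary claims decomposability rather than splittability. If needed, one then invokes~\autoref{lemma:decomposable:splittable} to upgrade to $(\gamma,\bigO{L\log n})$-splittability, paying only an additional $\log n$ factor.

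The main (and quite mild) point to verify is that the proof of~\autoref{theo:structural:monotone} genuinely goes through for monotone \emph{sub}-distributions and in the non-decreasing direction. This amounts to checking that the chain of inequalities $1\geq \D(a_1^{(j)}) > (1+\gamma)\D(a_2^{(j)}) > \dots$ still holds when $\D$ has mass less than $1$ and is read from the ``large end'' of the interval instead of the ``small end'': since the argument is scale- and orientation-symmetric, this is routine, and I do not anticipate any real difficulty.
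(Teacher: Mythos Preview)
Your proposal is correct and follows essentially the same approach as the paper: split at the mode into two intervals on which $\D$ is monotone (in opposite directions), apply the monotone structural result of~\autoref{theo:structural:monotone} to each, and concatenate, noting the resulting partition need not be dyadic. The only cosmetic difference is that the paper applies~\autoref{theo:structural:monotone} to the \emph{conditional} distributions $\D_I,\D_J$ (genuine probability distributions, so no sub-distribution check is needed), whereas you work with the restrictions and verify the argument still goes through---both are fine.
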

\begin{proof}
For any $\D\in\classuni$, $[n]$ can be partitioned in two intervals $I$, $J$ such that $\D_{I}$, $\D_J$ are either monotone non-increasing or non-decreasing. Applying \autoref{theo:structural:monotone} to $\D_{I}$ and $\D_J$ and taking the union of both partitions yields a (no longer necessarily dyadic) partition of $[n]$.
\end{proof}
\noindent The same argument yields an analogue statement for $t$-modal distributions:
\begin{corollary}[$t$-modality]\label{theo:structural:tmodal}
For any $t\geq 1$ and all $\gamma > 0$, the class $\classtmo$ of $t$-modal distributions on $[n]$ is $(\gamma,L)$-decomposable for $L \eqdef \bigO{\frac{t\log^2 n}{\gamma}}$.
\end{corollary}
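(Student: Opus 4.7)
The plan is to reduce the $t$-modal case to the monotone case by splitting the domain at the modes, in direct analogy with (and generalizing) the proof of \autoref{theo:structural:unimodal}. First, from \autoref{def:tmodal}, if $\D \in \classtmo$ then there is a sequence $1 = j_0 \le j_1 < \dots < j_s < j_{s+1} = n$ with $s \le t$ such that on each sub-interval $J_r \eqdef [j_r, j_{r+1}]$ the restriction $\D_{J_r}$ is either non-increasing or non-decreasing. This yields a partition of $[n]$ into at most $t+1$ intervals on each of which $\D$ is monotone.

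Next, I would apply \autoref{theo:structural:monotone} separately to each $\D_{J_r}$. The theorem is stated for non-increasing distributions, but the non-decreasing case is handled by the obvious reversal $i \mapsto \abs{J_r} - i + 1$ (the hypotheses \ref{def:struct:item:light} and \ref{def:struct:item:flat} of \autoref{def:struct:dec:split} are symmetric under this reversal). This produces, for each $r$, a partition $\mathcal{I}_r$ of $J_r$ into at most $\bigO{\log^2 \abs{J_r}/\gamma} = \bigO{\log^2 n/\gamma}$ sub-intervals, each satisfying one of the two conditions of \autoref{def:struct:dec:split}. Concatenating the $\mathcal{I}_r$ in order yields a partition $\mathcal{I}$ of $[n]$ whose pieces each satisfy \ref{def:struct:item:light} or \ref{def:struct:item:flat}, with total size $\sum_{r=0}^{s} \abs{\mathcal{I}_r} \le (t+1) \cdot \bigO{\log^2 n/\gamma} = \bigO{t \log^2 n/\gamma} = L$.

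Two small points deserve care. First, conditions \ref{def:struct:item:light} and \ref{def:struct:item:flat} are closed under taking sub-intervals, but \ref{def:struct:item:light} is stated in terms of the global mass bound $\gamma/L$, so one must ensure that the $L$ used when invoking \autoref{theo:structural:monotone} on each $\D_{J_r}$ is at least the final $L$ for $\D$; this is immediate since we are free to invoke the monotone case with the larger target $L = \bigO{t \log^2 n/\gamma}$ (the bound on the recursion depth in \autoref{theo:structural:monotone} is logarithmic in $L$, so inflating $L$ by a factor $t$ only affects constants). Second, since each $\D_{J_r}$ is considered as a sub-distribution of $\D$ with the global mass threshold, no re-scaling is needed.

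There is no genuine obstacle here; the only mild subtlety is the bookkeeping about the mass threshold $\gamma/L$ across the $t+1$ sub-problems, which is handled by working with the final value of $L$ from the start. As in \autoref{theo:structural:unimodal}, the resulting partition is \emph{not} dyadic in general, which is why the statement claims only decomposability; splittability with an extra $\log n$ factor would then follow from \autoref{lemma:decomposable:splittable} if needed.
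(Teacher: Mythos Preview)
Your proposal is correct and follows essentially the same approach as the paper, which simply remarks that the unimodal argument (partition into monotone pieces, apply \autoref{theo:structural:monotone} to each, concatenate) extends verbatim to $t+1$ pieces. Your additional bookkeeping about the threshold $\gamma/L$ and the reversal for the non-decreasing case is sound and more careful than the paper's one-line justification.
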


\begin{corollary}[Log-concavity, concavity and convexity]\label{theo:structural:logconcave}
For all $\gamma > 0$, the classes $\classlog$, $\classcve$ and $\classcvx$ of log-concave, concave and convex distributions on $[n]$ are $(\gamma,L)$-decomposable for $L \eqdef \bigO{\frac{\log^2 n}{\gamma}}$.
\end{corollary}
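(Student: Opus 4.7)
The plan is to reduce all three classes to the already-established monotone and unimodal decomposition results (Theorem~\ref{theo:structural:monotone} and Corollary~\ref{theo:structural:unimodal}), since each of these shape classes can be partitioned into a constant number of monotone or unimodal pieces.

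First I would dispatch log-concavity and concavity. The paper already notes that every log-concave distribution is unimodal (see the remark following Definition~\ref{def:mhr}), and that $\classcve\subseteq\classlog$. Therefore $\classcve\cup\classlog\subseteq\classuni$, and the desired decomposition bound for these two classes follows immediately from Corollary~\ref{theo:structural:unimodal}: any $\D\in\classlog$ (respectively $\classcve$) admits a partition into at most $L=\bigO{\log^2 n/\gamma}$ intervals, each of which either satisfies condition \ref{def:struct:item:light} (low mass) or condition \ref{def:struct:item:flat} (approximate flatness) of Definition~\ref{def:struct:dec:split}.

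Next I would handle the convex case, which is not a subclass of the unimodal distributions but only one step away. By Definition~\ref{def:concave}, a convex $\D$ is supported on an interval and satisfies $2\D(k)\leq \D(k-1)+\D(k+1)$ whenever $\D(k-1)\D(k+1)>0$, which says that the consecutive differences $\D(k+1)-\D(k)$ are nondecreasing. Hence on the support, $\D$ is either monotone or first nonincreasing and then nondecreasing: splitting at the point where the difference sequence first becomes nonnegative writes $\supp(\D)=I_{\rm L}\cup I_{\rm R}$ with $\D_{I_{\rm L}}$ monotone nonincreasing and $\D_{I_{\rm R}}$ monotone nondecreasing. Applying Theorem~\ref{theo:structural:monotone} separately to each of these two monotone restrictions (with parameter $\gamma$) yields two partitions each of size $\bigO{\log^2 n /\gamma}$ satisfying the decomposition conditions; together with the (at most one) interval outside $\supp(\D)$, which trivially satisfies \ref{def:struct:item:light}, their union is a partition of $[n]$ of size $L=\bigO{\log^2 n/\gamma}$ as required.

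I do not anticipate any serious obstacle: the only thing to verify carefully is the elementary claim that a discrete convex sequence has at most one ``valley,'' i.e., that the splitting point into two monotone pieces is well-defined, which is a direct consequence of the nondecreasingness of the difference sequence. The entire argument is a short reduction and produces decompositions (not necessarily dyadic, since unimodality decompositions are not dyadic in general), matching the statement of the corollary.
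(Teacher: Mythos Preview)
Your proposal is correct and follows essentially the same strategy as the paper, which simply invokes \autoref{theo:structural:unimodal} for all three classes after noting they are unimodal. The one minor discrepancy is your claim that $\classcvx$ is ``not a subclass of the unimodal distributions'': under the paper's Definition~\ref{def:tmodal} of $t$-modality (which counts both peaks and valleys as modes, so that a ``valley-shaped'' distribution is $1$-modal), convex distributions \emph{are} unimodal --- the paper states this explicitly right after Definition~\ref{def:concave}. Thus your separate two-piece argument for the convex case is unnecessary (it just unfolds the proof of \autoref{theo:structural:unimodal}), though it is of course still valid and yields the same bound.
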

\begin{proof}
This is directly implied by \autoref{theo:structural:unimodal}, recalling that log-concave, concave and convex distributions are unimodal.
\end{proof}

\begin{restatable}[Monotone Hazard Rate]{theorem}{theostructuralmhr}\label{theo:structural:mhr}
For all $\gamma > 0$, the class $\classmhr$ of MHR distributions on $[n]$ is $(\gamma,L)$-decomposable for $L \eqdef \bigO{\frac{\log n}{\gamma}}$.
\end{restatable}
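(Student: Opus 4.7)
The plan is to exhibit, for any MHR distribution $\D$, a partition of $[n]$ into $L = \bigO{\log n / \gamma}$ intervals, each satisfying condition~(i) or~(ii) of~\autoref{def:struct:dec:split}. The strategy is to trim two ``light'' regions---one where the hazard rate $H$ is small, one where the survival function $S(i) \eqdef \sum_{j \geq i} \D(j)$ is small---as single intervals of type~(i), and to greedily decompose the remaining ``bulk'' into intervals of type~(ii), controlling the number of pieces via the total variation of $\log \D$ using the identity $\D(i) = H(i) S(i)$.

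For the trimming, let $\ell$ be the smallest index with $H(\ell) > \gamma/(nL)$ and $r$ the largest with $S(r) > \gamma/L$; both exist provided $\gamma/L < 1$, since $H$ is non-decreasing with $\sum \D = 1$ and $S(1) = 1$. Using $\D(i) = H(i) S(i) \leq H(i)$, the prefix $[1, \ell-1]$ has total mass at most $(\ell-1)\gamma/(nL) \leq \gamma/L$, and the suffix $[r+1, n]$ has mass exactly $S(r+1) \leq \gamma/L$; both satisfy~(i). On the bulk $[\ell, r]$, the greedy rule sets $a_0 = \ell$ and takes $a_{k+1}$ to be the largest index such that $\max_{[a_k, a_{k+1}]} \D \leq (1+\gamma)\min_{[a_k, a_{k+1}]} \D$, so each resulting piece satisfies~(ii) by construction.

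The number of bulk pieces is bounded via the identity
\[
\log \frac{\D(i+1)}{\D(i)} = \log \frac{H(i+1)}{H(i)} + \log(1 - H(i)),
\]
whose two terms have opposite signs by MHR (first non-negative, second non-positive). Taking absolute values, summing, and telescoping yields
\[
\sum_{i=\ell}^{r-1} \abs{\log \frac{\D(i+1)}{\D(i)}} \leq \log \frac{H(r)}{H(\ell)} + \log \frac{S(\ell)}{S(r)} \leq \log \frac{nL}{\gamma} + \log \frac{L}{\gamma} = \bigO{\log(nL/\gamma)}.
\]
Each non-terminal greedy piece must consume at least $\log(1+\gamma) \geq \gamma/2$ of this total variation (otherwise it could be extended without violating~(ii)), so the number of bulk pieces is $\bigO{\log(nL/\gamma)/\gamma}$. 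Adding the two trimmed intervals gives a total count that is self-consistent with $L = \bigO{\log n/\gamma}$ whenever $\gamma \geq 1/\poly(n)$, since then $\log(nL/\gamma) = \bigO{\log n}$. The main obstacle is precisely this total-variation bookkeeping: the crucial point is that MHR forces the two contributions to $\log(\D(i+1)/\D(i))$ to have opposite signs, which is what makes their absolute-value sum telescope to the clean expression above rather than blow up.
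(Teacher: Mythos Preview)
Your proof is correct and takes a genuinely different route from the paper's. The paper follows the algorithmic construction of~\cite{CDSS:13}: it peels off two right-intervals and a left-prefix, then greedily partitions the remainder and bounds the number of pieces by splitting them into those where $\D$ increases versus decreases across the piece, invoking a product-telescoping lemma specific to MHR (their analogue of~\cite[Lemma~5.2]{CDSS:13}) to control each set separately. Your approach is more unified: you exploit the identity $\D(i)=H(i)S(i)$ to write $\log(\D(i+1)/\D(i))$ as a sum of a non-negative term $\log(H(i+1)/H(i))$ and a non-positive term $\log(1-H(i))$, so that the total variation of $\log \D$ on the bulk is dominated by $\log(H(r)/H(\ell))+\log(S(\ell)/S(r))$, both of which telescope cleanly. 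This makes transparent \emph{why} MHR yields $O(\log n/\gamma)$ pieces rather than the $O(\log^2 n/\gamma)$ one gets for general unimodal distributions: the sign structure forced by MHR prevents the increments of $\log\D$ from cancelling and then reaccumulating.

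Two small remarks. First, your greedy accounting is slightly imprecise: what exceeds $\log(1+\gamma)$ is the total variation on $[a_k,b_k{+}1]$, i.e., the piece together with the next boundary step, not the piece alone; but since these contributions are disjoint across $k$, the bound on the number of pieces follows just the same. Second, the self-referential choice of $L$ in your thresholds (and the resulting restriction $\gamma\geq 1/\poly(n)$) is not a defect relative to the paper: their proof also produces $O\big(\tfrac{1}{\gamma}\log\tfrac{n}{\gamma}\big)$ pieces, which matches $O(\log n/\gamma)$ only in the same regime, and that regime is all that is needed for the applications.
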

\begin{proof}
This follows from adapting the proof of \cite{CDSS:13}, which establishes that every MHR distribution can be approximated in $\lp[1]$ distance by a $\bigO{\log(n/\eps)/\eps}$-histogram. For completeness, we reproduce their argument, suitably modified to our purposes, in \autoref{app:structural:proofs}.
\end{proof}

\begin{theorem}[Piecewise Polynomials]\label{theo:structural:piecewise}
For all $\gamma > 0$, $t,d\geq 0$, the class $\classpoly[t,d]$ of $t$-piecewise degree-$d$ distributions on $[n]$ is $(\gamma,L)$-decomposable for $L \eqdef \bigO{\frac{t(d+1)}{\gamma}\log^2 n}$. (Moreover, for the class of $t$-histograms $\classhist$ ($d=0$) one can take $L = t$.)
\end{theorem}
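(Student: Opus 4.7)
The plan is to reduce to the monotone case (\autoref{theo:structural:monotone}) by exploiting the fact that a univariate polynomial of degree at most $d$ is monotone on at most $d+1$ intervals. I will handle the two claims separately.

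For the $t$-histogram case ($d=0$), a distribution in $\classhist$ is by definition constant on each of the $t$ defining intervals $I_1,\dots,I_t$. On each such interval, $\max_{i\in I_j}\D(i)=\min_{i\in I_j}\D(i)$, so condition~\ref{def:struct:item:flat} of~\autoref{def:struct:dec:split} is satisfied for any $\gamma\geq 0$, and $L=t$ suffices without any further splitting.

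For general $d$, begin with the defining partition of $[n]$ into $t$ intervals $J_1,\dots,J_t$ on each of which $\D$ agrees with a polynomial $p_j$ of degree at most $d$. Since the discrete difference $p_j(i+1)-p_j(i)$ is a polynomial of degree at most $d-1$ in $i$, it changes sign at most $d-1$ times, so $p_j$ is monotone on at most $d+1$ sub-intervals of $J_j$. Refining each $J_j$ accordingly yields a partition of $[n]$ into at most $m\leq t(d+1)$ sub-intervals $K_1,\dots,K_m$ on each of which $\D$ is monotone (either non-increasing or non-decreasing). Now apply the recursive bisection argument of~\autoref{theo:structural:monotone} independently on each $K_\ell$, aiming for a global budget $L\eqdef \bigO{t(d+1)\log^2 n/\gamma}$ to be plugged into the light-interval threshold $\gamma/L$. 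That argument only uses that $\D$ restricted to the interval under consideration is monotone and that the interval has length at most $n$, both of which hold for each $K_\ell$; its bound of $\bigO{\log^2 n/\gamma}$ sub-intervals per monotone piece (with $\bigO{\log(nL/\gamma)/\log(1+\gamma)}$ non-leaf nodes per recursion level, over depth $\bigO{\log n}$) therefore applies verbatim. Concatenating the resulting sub-partitions across the $m\leq t(d+1)$ monotone pieces gives a decomposition of $[n]$ with at most $L$ parts, each satisfying~\ref{def:struct:item:light} or~\ref{def:struct:item:flat} of~\autoref{def:struct:dec:split}.

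The only (mild) obstacle is consistent bookkeeping: the global $L$ must be fixed upfront so that the ``light'' threshold $\gamma/L$ invoked inside each monotone sub-application agrees with the final claimed bound, which is why I introduce $L$ as a target from the start rather than summing local bounds post hoc. The resulting partition is not in general dyadic---hence the statement says ``decomposable'' rather than ``splittable''---but~\autoref{lemma:decomposable:splittable} would convert it into a splittable decomposition at a further $\log n$ factor if the stronger conclusion were ever needed.
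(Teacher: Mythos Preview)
Your proposal is correct and follows essentially the same approach as the paper's proof: subdivide each of the $t$ pieces into at most $d+1$ monotone sub-intervals, then invoke the argument of \autoref{theo:structural:monotone} on each. Your version is in fact more careful than the paper's (which just says ``an argument almost identical to that of~\autoref{theo:structural:monotone}''), in that you explicitly address fixing the global $L$ upfront so the light-interval threshold $\gamma/L$ is consistent across pieces, and you note the decomposable-vs.-splittable distinction.
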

\begin{proof}
The last part of the statement is obvious, so we focus on the first claim. Observing that each of the $t$ pieces of a distribution $\D\in\classpoly[t,d]$ can be subdivided in at most $d+1$ intervals on which $\D$ is monotone (being degree-$d$ polynomial on each such pieces), we obtain a partition of $[n]$ into at most $t(d+1)$ intervals. $\D$ being monotone on each of them, we can apply an argument almost identical to that of~\autoref{theo:structural:monotone} to argue that each interval can be further split into $O(\log^2 n/\gamma)$ subintervals, yielding a good decomposition with $O( t(d+1)\log^2 n/\gamma )$ pieces.
\end{proof}

\subsection{Projection Step: computing the distances} 
This section contains details of the distance estimation procedures for these classes, required in the last stage of \autoref{algo:test:splittable}. (Note that some of these results are phrased in terms of distance approximation, as estimating the distance $\lp[1](\D,\class)$ to sufficient accuracy in particular yields an algorithm for this stage.)

We focus in this section on achieving the sample complexities stated in~\autoref{coro:main:testing}, \autoref{coro:main:testing:tmod}, and~\autoref{coro:main:testing:piecewise}. While almost all the distance estimation procedures we give in this section are efficient, running in time polynomial in all the parameters or even with only a polylogarithmic dependence on $n$, there are two exceptions -- namely, the procedures for monotone hazard rate (\autoref{lemma:distance:mhr}) and log-concave (\autoref{lemma:distance:log}) distributions. We \emph{do} describe computationally efficient procedures for these two cases as well in~\autoref{ssec:efficient:logconcave:mhr}, at a modest additive cost in the sample complexity.

\begin{lemma}[Monotonicity {\cite[Lemma 8]{BKR:04}}]\label{lemma:distance:mon}
There exists a procedure $\estimdist[\classmon]$ that, on input $n$ as well as the full (succinct) specification of a $\ell$-histogram $\D$ on $[n]$, computes the (exact) distance $\lp[1](\D,\classmon)$ in time $\poly(\ell)$.
\end{lemma}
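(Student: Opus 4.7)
The plan is to reduce the computation of $\lp[1](\D,\classmon)$ to solving a linear program of size $\poly(\ell)$ on rational data. Write $\D$ as an $\ell$-histogram with constant value $v_j$ on interval $I_j$ of length $w_j$, for $j=1,\dots,\ell$.

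The first step is a structural lemma: there exists an optimal $\D^\ast\in\classmon$ minimizing $\normone{\D-\D^\ast}$ that is itself a histogram on at most $2\ell$ pieces, with breakpoints lying in an explicit candidate set of $\bigO{\ell}$ locations (namely, the boundaries of the $I_j$ and, within each $I_j$, at most one additional cut). I would establish this by a local exchange argument on each piece: since $\restr{\D^\ast}{I_j}$ is monotone non-increasing and $\D$ is constant on $I_j$, replacing $\restr{\D^\ast}{I_j}$ by a two-step function (a constant $a$ on an initial sub-interval of $I_j$, then a constant $b\le a$ on the remainder) can be chosen so as to preserve monotonicity across the endpoints of $I_j$, to preserve the mass $\int_{I_j}\D^\ast$, and to not increase $\int_{I_j}\abs{\D-\D^\ast}$. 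The key point is that on $I_j$, where $\D$ is flat, the $\lp[1]$ cost $\int_{I_j}\abs{c-v_j}$ is a convex piecewise-linear function of the constant value $c$, so trimming the monotone graph to a two-step shape can only help.

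Once this structural reduction is in place, I would parametrize candidate $\D^\ast$'s by $\bigO{\ell}$ level values $c_1 \ge c_2 \ge \cdots \ge c_{2\ell} \ge 0$ together with their piece lengths, enforce $c_i \ge c_{i+1}$, $c_i \ge 0$, and $\sum_i c_i \cdot (\text{length})_i = 1$ as linear constraints, and linearize each $\abs{c_i - v_{j(i)}}$ in the objective by the standard auxiliary-variable trick. This yields a linear program of size $\poly(\ell)$; since piece lengths appear bilinearly with the $c_i$'s, I would handle this by fixing each of the finitely many combinatorial cut patterns identified above (there are $\poly(\ell)$ of them) and solving one LP per pattern. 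Exact polynomial-time LP solving on rational data (via the ellipsoid method with Khachiyan-style rounding, or exact simplex) then returns $\lp[1](\D,\classmon)$ exactly, in total time $\poly(\ell)$.

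The main obstacle will be the structural lemma: one must pin down the $\bigO{\ell}$ candidate cut locations and verify that the exchange argument simultaneously preserves monotonicity across interval boundaries, preserves the total mass constraint, and does not increase the $\lp[1]$ cost. Once the search is reduced to $\poly(\ell)$ discrete configurations each parametrized by $\bigO{\ell}$ continuous variables, the remainder is a routine appeal to rational LP solvability.
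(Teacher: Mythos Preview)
The paper does not prove this lemma; it cites \cite[Lemma 8]{BKR:04} directly, and from the surrounding text (which refers to ``the underlying linear program'') the cited argument is LP-based, as is yours.

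Your approach, however, has a genuine gap. You reduce to at most two constant levels per interval $I_j$, with an internal cut whose position is a variable. This makes the mass constraint and the objective bilinear in the cut positions and the levels, which you propose to resolve by enumerating ``combinatorial cut patterns,'' asserting there are only $\poly(\ell)$ of them. But you never actually identify those $\bigO{\ell}$ candidate cut locations: your structural lemma only asserts ``at most one additional cut'' inside each $I_j$ without pinning down where, and a priori that cut can fall at any of the $|I_j|$ integer points. Enumerating all tuples of cuts then gives $\prod_j(|I_j|+1)$ patterns, which is not $\poly(\ell)$.

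The fix is that your structural lemma is weaker than what actually holds: one level per piece already suffices. Given any monotone non-increasing $\D^\ast$, replace it on each $I_j$ by its average value there. This preserves total mass; it remains monotone non-increasing because the average on $I_j$ is at least $\min_{I_j}\D^\ast \geq \max_{I_{j+1}}\D^\ast$, which is at least the average on $I_{j+1}$; and it can only decrease the $\lp[1]$ cost since $\sum_{i\in I_j}|\D^\ast(i)-v_j| \geq \bigl|\sum_{i\in I_j}(\D^\ast(i)-v_j)\bigr|$. Hence the optimum is attained by an $\ell$-histogram on the \emph{same} partition as $\D$, and a single LP in $\ell$ level variables $c_1\geq\cdots\geq c_\ell\geq 0$ with the constraint $\sum_j w_j c_j = 1$ and (linearized) objective $\sum_j w_j|c_j-v_j|$ computes $\lp[1](\D,\classmon)$ exactly in $\poly(\ell)$ time. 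No enumeration is needed.
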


A straightforward modification of the algorithm above (e.g., by adapting the underlying linear program to take as input the location $m\in[\ell]$ of the mode of the distribution; then trying all $\ell$ possibilities, running the subroutine $\ell$ times and picking the minimum value) results in a similar claim for unimodal distributions:
\begin{lemma}[Unimodality]\label{lemma:distance:uni}
There exists a procedure $\estimdist[\classuni]$ that, on input $n$ as well as the full (succinct) specification of a $\ell$-histogram $\D$ on $[n]$, computes the (exact) distance $\lp[1](\D,\classuni)$ in time $\poly(\ell)$.
\end{lemma}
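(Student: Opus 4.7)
The plan is to reduce the unimodal case to $O(\ell)$ calls to an LP very similar to the one underlying $\estimdist[\classmon]$ in \autoref{lemma:distance:mon}. Recall that a distribution $\D'$ on $[n]$ is unimodal if and only if there exists a position $m \in [n]$ such that $\D'$ is non-decreasing on $[1,m]$ and non-increasing on $[m,n]$. Let $I_1,\dots,I_\ell$ denote the $\ell$ pieces of the input histogram $\D$. My first step is to argue that, although a priori $m$ could be anywhere in $[n]$, it suffices to consider only $O(\ell)$ candidate mode positions, namely the endpoints of the pieces $I_k$. Indeed, fix any unimodal $\D'$ whose mode lies strictly inside some $I_k$; since $\D$ is constant on $I_k$, sliding the mode along $I_k$ (while keeping the remaining shape of $\D'$ fixed) affects $\lp[1](\D,\D')$ in a piecewise-linear way, so the minimum is attained at one of the two endpoints of $I_k$. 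This reduces the problem to computing, for each of the $\ell+1$ candidate breakpoints $m$, the quantity $\lp[1](\D,\classuni^{(m)})$, where $\classuni^{(m)}$ is the set of distributions on $[n]$ that are non-decreasing on $[1,m]$ and non-increasing on $[m+1,n]$.

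For each such $m$, I compute $\lp[1](\D,\classuni^{(m)})$ by a direct adaptation of the monotonicity LP: introduce variables $q_1,\dots,q_\ell$ for the values of the approximating distribution on the pieces $I_1,\dots,I_\ell$, impose the constraints $q_1 \leq \dots \leq q_m \geq q_{m+1} \geq \dots \geq q_\ell$ together with $q_k \geq 0$ and $\sum_k |I_k|\,q_k = 1$, and minimize the objective $\sum_k |I_k|\,|q_k - \D_k|$ linearized via the standard auxiliary-variable trick. This is essentially two copies of the monotonicity LP (one for the non-decreasing part, one for the non-increasing part) sharing the single normalization constraint at the mode; hence it has $O(\ell)$ variables and constraints and can be solved exactly in $\poly(\ell)$ time by the same linear programming methods that justify \autoref{lemma:distance:mon}. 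The procedure then returns $\min_m \lp[1](\D,\classuni^{(m)})$, which equals $\lp[1](\D,\classuni)$ by the reduction above, and the total running time is $\ell\cdot\poly(\ell) = \poly(\ell)$.

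The only nontrivial point is the first one: showing that restricting to breakpoint mode positions does not lose optimality. Once that is established, the rest is a direct adaptation of \autoref{lemma:distance:mon}. One could alternatively bypass the endpoint-argument entirely by observing that the LP for each fixed $m$ also allows the optimum to achieve equality at the mode piece, so that ``guessing the piece containing the mode'' is enough---any finer placement of the mode within that piece is then implicitly optimized by the LP's monotonicity constraints on each side. Either way, the main (and only) obstacle is this combinatorial reduction from a continuum of mode locations to $O(\ell)$ of them; the LP itself is a routine modification.
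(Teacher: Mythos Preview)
Your proposal is correct and follows essentially the same approach as the paper: guess the mode location among the $O(\ell)$ pieces, solve the correspondingly modified monotonicity LP for each guess, and return the minimum. The paper's own proof is in fact even terser than yours---it simply states that one adapts the LP from \autoref{lemma:distance:mon} to take the mode index $m\in[\ell]$ as input and tries all $\ell$ possibilities---so your write-up actually fills in more detail (in particular the justification for why $O(\ell)$ mode candidates suffice) than the original.
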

A similar result can easily be obtained for the class of $t$-modal distributions as well, with a $\poly(\ell,t)$-time algorithm based on a combination of dynamic and linear programming. 
 Analogous statements hold for the classes of concave and convex distributions $\classcvx, \classcve$, also based on linear programming (specifically, on running $\bigO{n^2}$ different linear programs -- one for each possible support $[a,b]\subseteq[n]$ -- and taking the minimum over them).

\begin{lemma}[MHR]\label{lemma:distance:mhr}
There exists a (non-efficient) procedure $\estimdist[\classmhr]$ that, on input $n$, $\eps$, as well as the full specification of a distribution $\D$ on $[n]$, distinguishes between $\lp[1](\D,\classmhr) \leq \eps$ and $\lp[1](\D,\classmhr)>2\eps$ in time~$2^{\tilde{O}_\eps(n)}$.
\end{lemma}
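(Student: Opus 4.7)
The statement allows time $2^{\tilde{O}_\eps(n)}$, so the plan is to exhaustively search a carefully constructed $(\eps/2)$-net of $\classmhr$ in $\lp[1]$ and output \yes iff some net element is within $3\eps/2$ of the given $\D$. An MHR distribution $\D^*$ is uniquely determined by its hazard rate sequence $H^*(i)=\D^*(i)/\sum_{j\geq i}\D^*(j)$, which is non-decreasing in $[0,1]$ with $H^*(n)=1$, and conversely every such sequence yields a valid MHR distribution via $\D^H(i) = H(i)\prod_{j<i}(1-H(j))$. I will build the net by discretizing this hazard rate parameterization, rather than trying to round probabilities directly (which would not in general preserve MHR).

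Concretely, fix $\eta = \eps/(Cn^2)$ for a sufficiently large constant $C$ and let $\mathcal{N}$ be the collection of non-decreasing $n$-tuples $H \in \{0,\eta,2\eta,\dots,1\}^n$ with $H(n)=1$. A stars-and-bars count gives $|\mathcal{N}| \leq \binom{n+1/\eta}{n} \leq (O(n/\eps))^n = 2^{\tilde{O}_\eps(n)}$, and each $H \in \mathcal{N}$ defines a net element $\D^H \in \classmhr$. For any $\D^* \in \classmhr$ with hazard rates $H^*$, rounding each coordinate \emph{down} to the nearest multiple of $\eta$ preserves monotonicity and yields some $\hat H \in \mathcal{N}$ (manually set $\hat H(n)=1$). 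Writing $p_j = 1-H^*(j)$ and $\hat p_j = p_j + \delta_j$ with $\delta_j \in [0,\eta]$, expanding the product gives $0 \leq \prod_{j<i}\hat p_j - \prod_{j<i} p_j \leq (1+\eta)^{n}-1 = O(n\eta)$ once $n\eta = o(1)$; combining with $|\hat H(i) - H^*(i)| \leq \eta$ yields $|\D^*(i) - \D^{\hat H}(i)| = O(n\eta)$ pointwise, hence $\lp[1](\D^*,\D^{\hat H}) = O(n^2\eta) \leq \eps/2$ for $C$ large enough. Thus $\{\D^H : H \in \mathcal{N}\}$ is an $(\eps/2)$-net of $\classmhr$ that lies entirely inside $\classmhr$.

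The procedure $\estimdist[\classmhr]$ then enumerates $\mathcal{N}$, computes $\D^H$ and $\lp[1](\D, \D^H)$ in $\poly(n)$ time per element, and returns \yes iff the minimum distance is at most $3\eps/2$. If $\lp[1](\D,\classmhr)\leq \eps$, pick $\D^*\in\classmhr$ achieving this and its net-neighbor $\D^{\hat H}$; the triangle inequality gives $\lp[1](\D,\D^{\hat H}) \leq \eps+\eps/2 = 3\eps/2$, so the algorithm accepts. Conversely, if $\lp[1](\D,\classmhr)>2\eps$, every net element (being MHR) is at distance strictly greater than $2\eps > 3\eps/2$ from $\D$, so the algorithm rejects. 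The total running time is $|\mathcal{N}| \cdot \poly(n) = 2^{\tilde{O}_\eps(n)}$.

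The one non-routine step is the covering bound: small $L_\infty$ perturbations of $H$ could a priori compound through the product $\prod(1-H(j))$ into large $\lp[1]$ perturbations of $\D$, especially in the tail of the distribution. The expansion above shows that as long as $n\eta = o(1)$, this compounding contributes only a multiplicative $e^{O(n\eta)} = O(1)$ factor, so taking $\eta = \Theta(\eps/n^2)$ comfortably secures both the target net size $2^{\tilde{O}_\eps(n)}$ and the target rounding error $\leq \eps/2$. A sharper analysis (e.g.\ exploiting the structural decomposition of \autoref{theo:structural:mhr} to first approximate $\D^*$ by an $\bigO{\log n /\eps}$-piece histogram and only then quantize) could shave constants, but is unnecessary for the stated complexity.
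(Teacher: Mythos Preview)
Your proof is correct, but it follows a genuinely different route from the paper's. The paper builds a single generic $\eps$-cover of \emph{all} distributions on $[n]$ (by quantizing each probability to a multiple of $\eps/n$), and for each cover element $S$ checks two things: that $\normone{S-\D}\leq\eps$, and that some $P\in\classmhr$ is pointwise $\eps/n$-close to $S$. The second check is done by a linear program in the variables $x_i=\ln P(i)$, which makes the MHR (and, in the companion lemma, log-concave) constraints linear while pointwise closeness becomes the box constraint $\ln(S(i)-\eps/n)\leq x_i\leq\ln(S(i)+\eps/n)$.

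You instead construct an $(\eps/2)$-net that already sits inside $\classmhr$, by quantizing the hazard-rate sequence rather than the probabilities. This is cleaner for this specific class: because $\D\leftrightarrow H$ is a bijection and rounding a non-decreasing sequence down preserves monotonicity, you never need to test class membership at all; the final check is just an $\lp[1]$ distance computation. The paper's approach, on the other hand, is more reusable: the same cover and the same ``take logs, write an LP'' trick handle $\classmhr$ and $\classlog$ uniformly, which is why the two lemmas share a proof there. Both arguments give the same $(n/\eps)^{O(n)}=2^{\tilde O_\eps(n)}$ bound.

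One small point worth tightening: you assert $H^*(n)=1$, but this only holds when $n$ is in the support of $\D^*$. If the support of $\D^*$ ends at some $b<n$, the hazard rate is undefined past $b$; the fix is simply to extend $H^*$ by $1$ on $\{b+1,\dots,n\}$, which still reproduces $\D^*$ via your product formula and keeps the sequence non-decreasing. With that convention your covering argument goes through unchanged.
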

\begin{lemma}[Log-concavity]\label{lemma:distance:log}
There exists a (non-efficient) procedure $\estimdist[\classlog]$ that, on input $n$, $\eps$, as well as the full specification of a distribution $\D$ on $[n]$, distinguishes between $\lp[1](\D,\classlog) \leq \eps$ and $\lp[1](\D,\classlog)>2\eps$ in time~$2^{\tilde{O}_\eps(n)}$.
\end{lemma}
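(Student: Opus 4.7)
My plan is a brute-force search: enumerate a sufficiently fine discretization of the simplex over $[n]$, keep only the grid points that are themselves log-concave distributions, and output \yes iff some retained point lies within $\lp[1]$-distance $3\eps/2$ of $\D$. The time budget $2^{\tilde{O}_{\eps}(n)}$ comfortably allows grids of additive precision $\eps/\poly(n)$, and the threshold $3\eps/2$ sits safely between $\eps$ and $2\eps$, yielding the required gap. Note that the convex-optimization route does not apply: the feasible set $\classlog\cap\Delta([n])$ is not convex in $p$ (although it is log-convex), which is what forces a non-efficient procedure.

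\textbf{Construction.} I would set $\eta \eqdef \eps/n^{C}$ for a large enough absolute constant $C$, and let $\mathcal{G}$ be the set of vectors in $[0,1]^n$ whose coordinates are nonnegative integer multiples of $\eta$ and sum to $1$; then $|\mathcal{G}|\leq (1/\eta+1)^n = 2^{\tilde{O}_{\eps}(n)}$. For each $p\in \mathcal{G}$ one checks in $O(n)$ time whether (i) the set $\{i : p(i)>0\}$ is an interval and (ii) $p(k)^2 \geq p(k-1)p(k+1)$ at every interior support point $k$; the conjunction is precisely membership in $\classlog$. For each valid $p$ compute $\lp[1](\D,p)=\sum_i |\D(i)-p(i)|$ in $O(n)$ time, and return \yes iff the minimum over valid $p$ is at most $3\eps/2$. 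The total running time is $|\mathcal{G}|\cdot O(n)=2^{\tilde{O}_{\eps}(n)}$.

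\textbf{Correctness and the main obstacle.} Soundness is immediate: any $p\in \mathcal{G}\cap \classlog$ witnesses $\lp[1](\D,\classlog)\leq \lp[1](\D,p)\leq 3\eps/2<2\eps$. The delicate direction is completeness: given $q\in \classlog$ with $\lp[1](\D,q)\leq \eps$, I must exhibit some $p\in \mathcal{G}\cap\classlog$ with $\lp[1](p,q)\leq \eps/2$, and naive coordinate-wise rounding of $q$ to multiples of $\eta$ need not preserve log-concavity nor the unit-sum constraint. The key step is a perturb-then-round construction: work on the log-transform $u(i)\eqdef \log q(i)$ on the support of $q$, where $u$ is concave; first add a tiny strictly concave perturbation $-\alpha(i-\mu)^2$ (with $\alpha$ polynomially small in $1/n$ and $\mu$ near the mode) to create a strict margin $2\alpha$ in every concavity inequality, then round $e^u$ coordinate-wise down to $\eta \Z$ so that the margin absorbs the multiplicative rounding slack, and finally deposit the residual mass (of size at most $n\eta$) at the rounded mode, which preserves log-concavity. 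Log-concavity itself guarantees that values below $\eta$ only occur at the tails of the support, so zeroing them out cannot break the ``support is an interval'' condition. Choosing $C$ large enough makes the cumulative $\lp[1]$ loss from negligible-tail truncation, perturbation, rounding, and renormalization stay below $\eps/2$. This balancing of several small error sources against the $\eps/2$ slack---in particular making the quadratic safety margin large enough to dominate the rounding error at every coordinate---is the main obstacle, but becomes routine once the perturb-then-round scheme is fixed.
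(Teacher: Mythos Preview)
Your approach differs from the paper's and takes on more work than necessary. The paper also enumerates a grid $\mathcal{S}$ of precision $\eps/n$, but does \emph{not} restrict to log-concave grid points. Instead, for each $S\in\mathcal{S}$ with $\normone{S-\D}\leq\eps$ it checks, via a linear program, whether some $P\in\classlog$ is \emph{pointwise} $(\eps/n)$-close to $S$: with variables $x_i\eqdef\ln P(i)$, both the log-concavity constraints $2x_k\geq x_{k-1}+x_{k+1}$ and the box constraints $\ln(S(i)-\eps/n)\leq x_i\leq\ln(S(i)+\eps/n)$ are linear, so feasibility is decidable in $\poly(n)$ time per grid point. This completely bypasses the need to round a log-concave $q$ to a log-concave grid point, which is precisely the obstacle you spend your effort on. What your route would buy, if completed, is an explicit log-concave $\eps$-net inside $\classlog$; the paper's LP trick trades that structural statement for a much shorter argument.

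Within your scheme, one step is asserted but not automatic: depositing the residual $r\leq n\eta$ at the mode $m$ strengthens the constraint at $m$ but \emph{weakens} those at $m\pm 1$---after the deposit you need $p(m-1)^2\geq p(m-2)\,(p(m)+r)$, so the margin you manufactured at $m-1$ must absorb an extra $p(m-2)\cdot r$. This can be pushed through with a further relation like $\alpha\gg n^2\eta$ (using $q'(m-2)\leq q'(m-1)^2/q'(m)$ from log-concavity and $q'(m)\geq 1/n$), but that chain of inequalities is the real content of the step and is absent from your sketch. A cleaner fix inside your own framework: drop the sum-to-$1$ requirement from $\mathcal{G}$ and renormalize only at the very end, since multiplying by a scalar preserves log-concavity; then perturb-then-round need only output a log-concave \emph{vector}, and the residual-deposit step disappears.
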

\begin{proof}[\autoref{lemma:distance:mhr} and \autoref{lemma:distance:log}]
We here give a naive algorithm for these two problems, based on an exhaustive search over a (huge) $\eps$-cover $\mathcal{S}$ of distributions over $[n]$. Essentially, $\mathcal{S}$ contains all possible distributions whose probabilities $p_1,\dots,p_n$ are of the form $j\eps/n$, for $j\in\{0,\dots,n/\eps\}$ (so that $\abs{\mathcal{S}} = \bigO{(n/\eps)^{n}}$). It is not hard to see that this indeed defines an \eps-cover of the set of all distributions, and moreover that it can be computed in time $\poly(\abs{\mathcal{S}})$. To approximate the distance from an explicit distribution $\D$ to the class $\class$ (either $\classmhr$ or $\classlog$), it is enough to go over every element $S$ of $\mathcal{S}$, checking (this time, efficiently) if $\normone{S-\D}\leq \eps$ and if there is a distribution $P\in\class$ close to $S$ (this time, pointwise, that is $\abs{P(i)-S(i)} \leq \eps/n$ for all $i$) -- which also implies $\normone{S-P}\leq \eps$ and thus $\normone{P-\D}\leq 2\eps$. The test for pointwise closeness can be done by checking feasibility of a linear program with variables corresponding to the logarithm of probabilities, i.e. $x_i \equiv \ln P(i)$. Indeed, this formulation allows to rephrase the log-concave and MHR constraints as linear constraints, and pointwise approximation is simply enforcing that $\ln(S(i)-\eps/n) \leq x_i \leq \ln(S(i)+\eps/n)$ for all $i$. At the end of this enumeration, the procedure accepts if and only if for some $S$ both $\normone{S-\D}\leq \eps$ and the corresponding linear program was feasible.
\end{proof}

\begin{lemma}[Piecewise Polynomials]\label{lemma:distance:piecewise}
There exists a procedure $\estimdist[\classpoly]$ that, on input $n$ as well as the full specification of an $\ell$-histogram $\D$ on $[n]$, computes an approximation $\Delta$ of the distance $\lp[1](\D,\classpoly)$ such that $\lp[1](\D,\classpoly) \leq \Delta \leq 3\lp[1](\D,\classpoly)+\eps$, and runs in time $\bigO{n^3}\cdot\poly(\ell,t,d,\frac{1}{\eps})$.

Moreover, for the special case of $t$-histograms ($d=0$) there exists a procedure $\estimdist[\classhist]$, which, given inputs as above, computes an approximation $\Delta$ of the distance $\lp[1](\D,\classhist)$ such that  $\lp[1](\D,\classhist) \leq \Delta \leq 4\lp[1](\D,\classhist)+\eps$, and runs in time $\poly(\ell,t,\frac{1}{\eps})$, independent of $n$.
\end{lemma}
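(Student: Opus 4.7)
The plan is to use dynamic programming over partitions of $[n]$ into $t$ consecutive intervals, combined with a per-piece $\lp[1]$-best-fit subroutine. For the $t$-histogram case ($d=0$), I would first observe that placing a breakpoint strictly inside a constant piece $J_k$ of the input $\ell$-histogram $\D$ is never beneficial: merging the two sub-pieces of that breakpoint into one with value $\D(J_k)$ strictly reduces the $\lp[1]$ error while also reducing the piece count. Hence without loss of generality the optimal $t$-histogram breakpoints lie among the $\ell-1$ breakpoints of $\D$, so only $O(\ell^2)$ candidate sub-intervals need to be examined. On each such candidate $I$, the best single-constant approximation in $\lp[1]$ is the weighted median of $\D$'s values on $I$, computable in $O(\ell)$ time. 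A standard DP with state (last breakpoint, pieces used so far) then finds the best such partition in $\poly(\ell,t)$ time, independent of $n$.

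For the general polynomial case ($d>0$), the breakpoints need not align with $\D$'s pieces, so I would use all $n-1$ positions as candidates, yielding $O(n^2)$ candidate sub-intervals. For each candidate $I$, computing the best degree-$d$ polynomial approximation of $\D$ restricted to $I$ in $\lp[1]$ reduces to a linear program in $d+1$ polynomial coefficient variables together with absolute-value slack constraints (one per sub-$J_k$ constant piece of $\D$ inside $I$) and non-negativity constraints at the integer points of $I$; this LP is solvable to additive $\eps/\poly$ precision in $\poly(\ell,d,1/\eps)$ time. Precomputing all $O(n^2)$ per-piece errors and running the same DP produces a candidate $t$-piecewise degree-$d$ function $f$ in total time $O(n^3)\cdot\poly(\ell,t,d,1/\eps)$, as claimed.

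The algorithm outputs $\Delta = \|\D - Q\|_1$, where $Q \in \classpoly$ is obtained from $f$ by a projection step; this makes $\Delta \ge \lp[1](\D,\classpoly)$ immediate. For the upper bound $\Delta \le 3\lp[1](\D,\classpoly)+\eps$ (respectively $4\lp[1](\D,\classhist)+\eps$ in the histogram case), the point is that the DP+LP minimizes over the strictly larger class $\mathcal{F}_{t,d}$ of unconstrained $t$-piecewise degree-$d$ functions, so $\|f-\D\|_1 \le \lp[1](\D,\classpoly)+\eps$. Projecting $f$ back into $\classpoly$ via (i) clipping negative values and (ii) renormalizing to total mass $1$ inflates the $\lp[1]$ error by at most a constant factor, by the triangle inequality using $\D\ge 0$ and $\|\D\|_1=1$.

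The main obstacle is step (i) of the projection for $d>0$: clipping a polynomial to its non-negative part on an interval generally introduces extra pieces and destroys the $t$-piecewise structure. I would circumvent this by enforcing non-negativity at every integer point of each piece directly inside the LP, exploiting the fact that $\D$ and every element of $\classpoly$ are \emph{discrete} distributions on $[n]$, so only integer-point values contribute to $\lp[1]$ distance; the returned piece-polynomials are then already non-negative at the relevant points, and only a single global renormalization is needed, which contributes the remaining factor in the approximation guarantee.
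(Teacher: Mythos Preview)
Your approach diverges from the paper's in both halves, and each has a gap.

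\textbf{Histogram case.} The breakpoint-alignment argument is not correct as written. The two pieces of the candidate $t$-histogram $H$ adjacent to a breakpoint $p$ inside $J_k$ may well extend beyond $J_k$, so ``merging them into one with the value of $\D$ on $J_k$'' changes $H$ outside $J_k$ and can increase the error there. The conclusion you want (some optimal $t$-histogram has all its breakpoints among those of $\D$) does hold, but via a different argument: with the values of $H$ held fixed, the $\lp[1]$ error is affine in each breakpoint position as long as that breakpoint stays inside a single piece of $\D$, so it can be slid to an endpoint of $J_k$ without loss. The paper takes a different route altogether: it uses the flattening $\Phi(\D,\mathcal{I})$ as a $2$-approximation to the best histogram on any fixed partition $\mathcal{I}$, and then restricts candidate breakpoints to the endpoints of an $(\eps/4t)$-granular decomposition of $\D$ (Claim~\ref{claim:granularity:piecewise:projection}) with $O(t/\eps)$ pieces, incurring another factor~$2$; this is where the stated factor~$4$ comes from.

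\textbf{Polynomial case.} Your per-piece LP is mis-sized. On an interval $J_k\cap I$ where $\D$ equals $d_k$, the degree-$d$ approximant $p$ is not constant, so the contribution $\sum_{i\in J_k\cap I}|p(i)-d_k|$ cannot be captured by a single absolute-value slack; you need one slack per integer point of $I$, i.e.\ up to $|I|$ of them. You also explicitly impose nonnegativity at every integer point of $I$, another $O(|I|)$ constraints. The LP therefore has $\Theta(|I|)$ variables and constraints, and your claim that it is solvable in $\poly(\ell,d,1/\eps)$ time is false; precomputing all $O(n^2)$ per-piece costs then no longer fits inside $O(n^3)\cdot\poly(\ell,t,d,1/\eps)$. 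The paper sidesteps this entirely by invoking, on each candidate interval, \textsc{ProjectSinglePoly} (Theorem~\ref{theo:degreed:poly-projection}, adapted from~\cite{CDSS:14}), which in time $\poly(\ell,d,1/\eps)$ returns a degree-$d$ polynomial that is already a valid pdf and is within $3\lp[1](p,\classpoly[d])+O(\eta)$ of the input. The factor~$3$ in the lemma is inherited from this per-piece guarantee, not from any post-hoc clipping-and-renormalizing step.
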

\begin{proof}
We begin with $\estimdist[\classhist]$. Fix any distribution $\D$ on $[n]$. Given any explicit partition of $[n]$ into intervals $\mathcal{I}=(I_1,\dots,I_t)$, one can easily show that $\normone{\D - \Phi(\D,\mathcal{I})} \leq 2\opt_{\mathcal{I}}$, where $\opt_{\mathcal{I}}$ is the optimal distance of $\D$ to any histogram on $\mathcal{I}$. To get a $2$-approximation of $\lp[1](\D,\classhist)$, it thus suffices to find the minimum, over all possible partitionings $\mathcal{I}$ of $[n]$ into $t$ intervals, of the quantity $\normone{\D - \Phi(\D,\mathcal{I})}$ (which itself can be computed in time $T=O(\min(t\ell,n))$). By a simple dynamic programming approach, this can be performed in time $\bigO{t n^2 \cdot T}$. The quadratic dependence on $n$, which follows from allowing the endpoints of the $t$ intervals to be at any point of the domain, is however far from optimal and can be reduced to $(t/\eps)^2$, as we show below.

For $\eta > 0$, define an \emph{$\eta$-granular decomposition} of a distribution $\D$ over $[n]$ to be a partition of $[n]$ into $s=\bigO{1/\eta}$ intervals $J_1,\dots,J_s$ such that each interval $J_i$ is either a singleton or satisfies $\D(J_i) \leq \eta$. (Note that if $\D$ is a known $\ell$-histogram, one can compute an $\eta$-granular decomposition of $\D$ in time $\bigO{\ell/\eta}$ in a greedy fashion.)

\begin{claim}\label{claim:granularity:piecewise:projection}
Let $\D$ be a distribution over $[n]$, and $\mathcal{J} = (J_1,\dots,J_s)$ be an $\eta$-granular decomposition of $\D$ (with $s\geq t$). Then, there exists a partition of $[n]$ into $t$ intervals $\mathcal{I}=(I_1,\dots,I_t)$ and a $t$-histogram $H$ on $\mathcal{I}$ such that $\normone{\D - H} \leq 2\lp[1](\D,\classhist[t])+2t\eta$, and $\mathcal{I}$ is a coarsening of $\mathcal{J}$.
\end{claim}
Before proving it, we describe how this will enable us to get the desired time complexity for $\estimdist[\classhist]$. Phrased differently, the claim above allows us to run our dynamic program using the $\bigO{1/\eta}$ endpoints of the $\bigO{1/\eta}$ instead of the $n$ points of the domain, paying only an additive error $O(t\eta)$. Setting $\eta=\frac{\eps}{4t}$, the guarantee for $\estimdist[\classhist]$ follows.

\begin{proofof}{\autoref{claim:granularity:piecewise:projection}} Let $\mathcal{J} =(J_1,\dots,J_s)$ be an $\eta$-granular decomposition of $\D$, and $H^\ast\in\classhist[t]$ be a histogram achieving $\opt=\lp[1](\D,\classhist[t])$. Denote further by $\mathcal{I^\ast} = (I^\ast_1,\dots,I^\ast_t)$ the partition of $[n]$ corresponding to $H^\ast$. Consider now the $r\leq t$ endpoints of the $I^\ast_i$'s that do not fall on one of the endpoints of the $J_i$'s: let $J_{i_1},\dots,J_{i_r}$ be the respective intervals in which they fall (in particular, these cannot be singleton intervals), and $S=\cup_{j=1}^r J_{i_j}$ their union. By definition of $\eta$-granularity, $\D(S) \leq t\eta$, and it follows that $H^\ast(S)\leq t\eta + \frac{1}{2}\opt$. We define $H$ from $H^\ast$ in two stages: first, we obtain a (sub)distribution $H^\prime$ by modifying $H^\ast$ on $S$, setting for each $x\in J_{i_j}$ the value of $H$ to be the minimum value (among the two options) that $H^\ast$ takes on $J_{i_j}$. $H^\prime$ is thus a $t$-histogram, and the endpoints of its intervals are endpoints of $\mathcal{J}$ as wished; but it may not sum to one. However, by construction we have that $H^\prime([n]) \geq 1-H^\ast(S) \geq 1-t\eta - \frac{1}{2}\opt$. Using this, we can finally define our $t$-histogram distribution $H$ as the renormalization of $H^\prime$. It is easy to check that $H$ is a valid $t$-histogram on a coarsening of $\mathcal{J}$, and
\[
    \normone{\D-H} \leq \normone{\D-H^\prime} + (1-H^\prime([n])) \leq \normone{\D-H^\ast} + \normone{H^\ast-H^\prime} + t\eta + \frac{1}{2}\opt \leq 2\opt  + 2t\eta
\]
as stated.
\end{proofof}

Turning now to $\estimdist[\classpoly]$, we apply the same initial dynamic programming approach, which will result on a running time of $\bigO{n^2t\cdot T}$, where $T$ is the time required to estimate (to sufficient accuracy) the distance of a given (sub)distribution over an interval $I$ onto the space $\classpoly[d]$ of degree-$d$ polynomials. Specifically, we will invoke the following result, adapted from~\cite{CDSS:14} to our setting:
\begin{theorem} \label{theo:degreed:poly-projection}
Let $p$ be a $\ell$-histogram over $[-1,1)$. There is an algorithm $\textsc{ProjectSinglePoly}(d,\eta)$
which runs in time $\poly(\ell, d+1,1/\eta)$, and outputs a degree-$d$ polynomial $q$ which defines a pdf over $[-1,1) $
such that $\normone{p-q} \leq 3 \lp[1](p,\classpoly[d]) + O(\eta)$.
\end{theorem}
The proof of this modification of~\cite[Theorem 9]{CDSS:14} is deferred to~\autoref{app:structural:projection:proofs}. Applying it as a blackbox with $\eta$ set to $\bigO{\eps/t}$ and noting that computing the $\lp[1]$ distance to our explicit distribution on a given interval of the degree-$d$ polynomial returned incurs an additional $\bigO{n}$ factor, we obtain the claimed guarantee and running time.
\todonoteinline{Todo: Piecewise Polynomials projection without an $n^2$ or $n^3$ dependence.}
\end{proof}

\subsubsection{Computationally Efficient Procedures for Log-concave and MHR Distributions}\label{ssec:efficient:logconcave:mhr}

We now describe how to obtain \emph{efficient} testing for the classes \classlog and \classmhr{} -- that is, how to obtain polynomial-time distance estimation procedures for these two classes, unlike the ones described in the previous section. At a very high-level, the idea is in both case to write down a linear program on variables related \emph{logarithmically} to the probabilities we are searching, as enforcing the log-concave and MHR constraints on these new variables can be done linearly. The catch now becomes the $\lp[1]$ objective function (and, to a lesser extent, the fact that the probabilities must sum to one), now highly non-linear.

The first insight is to leverage the structure of log-concave (resp. monotone hazard rate) distributions to express this objective as slightly stronger constraints, specifically pointwise $(1\pm\eps)$-multiplicative closeness, much easier to enforce in our ``logarithmic formulation.'' Even so, doing this naively fails, essentially because of a too weak distance guarantee between our explicit histogram $\hat{\D}$ and the unknown distribution we are trying to find: in the completeness case, we are only promised $\eps$-closeness in $\lp[1]$, while we would also require good additive pointwise closeness of the order $\eps^2$ or~$\eps^3$.

The second insight is thus to observe that we ``almost'' have this for free: indeed, if we do not reject in the first stage of the testing algorithm, we do obtain an explicit $k$-histogram $\hat{\D}$ with the guarantee that $\D$ is $\eps$-close to the distribution $P$ to test. However, we \emph{also} implicitly have another distribution $\hat{\D}^\prime$ that is $\sqrt{\eps/k}$-close to $P$ \emph{in Kolmogorov distance}: as in the recursive descent we take enough samples to use the DKW inequality (\autoref{theo:dkw:ineq}) with this parameter, i.e. an additive overhead of $\bigO{k/\eps}$ samples (on top of the $\tilde{O}(\sqrt{kn}/\eps^{7/2})$). If we are willing to increase this overhead by just a small amount, that is to take $\tildeO{ \max(k/\eps, 1/\eps^4) }$, we can guarantee that $\hat{\D}^\prime$ be also $\tildeO{\eps^2}$-close to $P$ in Kolmogorov distance.\smallskip

\noindent Combining these ideas yield the following distance estimation lemmas:
\begin{restatable}[Monotone Hazard Rate]{lemma}{lemmaefficientdistancemhr}\label{lemma:distance:mhr:eff}
  There exists a procedure $\estimdist[\classmhr]^\ast$ that, on input $n$ as well as the full specification of a $k$-histogram distribution $\D$ on $[n]$ and of a $\ell$-histogram distribution $\D^{\prime}$ on $[n]$, runs in time $\poly(n,1/\eps)$, and satisfies the following.
  \begin{itemize}
    \item If there is $P\in\classmhr$ such that $\normone{\D-P} \leq \eps$  and $\kolmogorov{\D^\prime}{P} \leq \eps^3$, then the procedure returns~\yes;
    \item If $\lp[1](\D,\classmhr) > 100\eps$, then the procedure returns \no.
  \end{itemize}
\end{restatable}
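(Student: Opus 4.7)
The plan is to encode the question as a polynomial-size feasibility linear program in log-coordinates, exploiting the fact that the MHR constraint is linear after this change of variables. Introduce variables $y_1,\dots,y_{n+1}$ intended to represent $y_i = \log S_P(i)$, where $S_P(i) = \sum_{j\geq i} P(j)$ is the complementary CDF of the candidate MHR distribution. Since $H(i) = 1 - S_P(i+1)/S_P(i)$, the MHR condition $H(i) \leq H(i+1)$ is equivalent to $S_P(i+1)/S_P(i)$ being non-increasing in $i$, i.e.\ to $y_{i+1}-y_i$ being non-increasing, i.e.\ to concavity $2y_{i+1} \geq y_i + y_{i+2}$ of $y$. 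Impose as well the boundary and monotonicity conditions $y_1 = 0$, $y_i \geq y_{i+1}$, and $y_{n+1} \leq -M$ for a sufficiently large constant $M = \Theta(\log(n/\eps))$; these are all linear.

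The hypothesis $\kolmogorov{\D'}{P} \leq \eps^3$ is equivalent to $|e^{y_i} - \hat S(i)| \leq \eps^3$ for all $i$, where $\hat S(i) := \sum_{j\geq i}\D'(j)$ is computable from the input. For indices with $\hat S(i) > 2\eps^3$ this becomes a simple box constraint $\log(\hat S(i)-\eps^3) \leq y_i \leq \log(\hat S(i)+\eps^3)$ on a single $y_i$; for indices with $\hat S(i) \leq 2\eps^3$ (the tail region) we impose only $y_i \leq \log(3\eps^3)$. Both are linear and easy to add to the LP. The key quantitative point is that these constraints already pin down each $y_i$ to an interval of multiplicative width $1+O(\eps^3/\hat S(i))$, and in particular yield additive pointwise control $|P(i) - \D'(i)| \leq 2\eps^3$ on the PMF via $P(i) = e^{y_i}-e^{y_{i+1}}$.

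The main obstacle is to encode the approximate $\lp[1]$ closeness to $\D$ linearly, since $P(i)=e^{y_i}-e^{y_{i+1}}$ is not linear in $y$. Here I would exploit the $k$-histogram structure of $\D$: write $\D$ with pieces $J_1,\dots,J_k$ on which $\D \equiv v_{J_r}$, and impose, for each piece $J_r = [a_r,b_r]$, a constraint of the form $|(e^{y_{a_r}} - e^{y_{b_r+1}}) - v_{J_r}|J_r|| \leq \eps_r$ with piecewise slacks $\eps_r$ summing to $O(\eps)$. Using the Kolmogorov box-bounds on $y_{a_r}$ and $y_{b_r+1}$ from the previous step, each such constraint linearizes to an inequality in $y$ (replacing $e^{y}$ by its first-order Taylor expansion around $\hat S$, whose error is quadratic in $\eps^3$ and hence negligible). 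The intra-piece $\lp[1]$ deviation of $P$ from $v_{J_r}$ is then controlled by the pointwise bound $|P(i)-\D'(i)| \leq 2\eps^3$ together with $\|\D-\D'\|_1$, which is itself bounded via the triangle inequality using the LP's Kolmogorov constraints and the $\lp[1]$ constraint on $\D$: summing over pieces yields $\|P-\D\|_1 \leq O(\eps)$.

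Given these ingredients, feasibility is decidable in $\poly(n,1/\eps)$ time by standard LP solvers. \textbf{Completeness:} if some $P\in\classmhr$ satisfies the hypotheses, setting $y_i := \log S_P(i)$ is a feasible LP point, so the procedure returns \yes. \textbf{Soundness:} conversely, any feasible $y^\ast$ defines via $P^\ast(i) := e^{y^\ast_i}-e^{y^\ast_{i+1}}$ a distribution which is MHR (by concavity), Kolmogorov-close to $\D'$, and (by the piecewise constraints) $\lp[1]$-close to $\D$ within $O(\eps)$; in particular if $\lp[1](\D,\classmhr) > 100\eps$ no such $P^\ast$ can exist, so the LP is infeasible and the procedure returns \no. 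The most delicate part is the bookkeeping in the third step: the calibration of $\eps^3$ in the Kolmogorov hypothesis is precisely what makes the Taylor linearization valid while leaving enough room to absorb all pieces' slacks into the target $O(\eps)$ budget.
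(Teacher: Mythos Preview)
Your log-parameterization via the survival function $y_i=\log S_P(i)$ is correct (concavity of $y$ is equivalent to MHR) and is essentially the integrated form of the paper's choice $\log(1-h_i)$; the Kolmogorov box constraints are also fine. The gap is entirely in how you try to enforce $\lp[1]$ closeness, and it breaks the \emph{soundness} direction.

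Your argument that any feasible $P^\ast$ satisfies $\normone{P^\ast-\D}=O(\eps)$ rests on two claims: the per-piece constraints $|P^\ast(J_r)-\D(J_r)|\le\eps_r$ with $\sum_r\eps_r=O(\eps)$, and the pointwise bound $|P^\ast(i)-\D'(i)|\le 2\eps^3$ together with a bound on $\normone{\D-\D'}$. Neither is enough. The per-piece constraints only match total mass on each piece and say nothing about the $\lp[1]$ deviation \emph{within} a piece. The pointwise bound, summed over $[n]$, gives $2n\eps^3$, which is useless once $n\gg 1/\eps^2$; and $\normone{\D-\D'}$ is not bounded by any hypothesis or LP constraint --- the lemma's soundness clause places no assumption on $\D'$ whatsoever. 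So a feasible solution could produce an MHR $P^\ast$ that is Kolmogorov-close to $\D'$ and matches $\D$ piecewise, yet is $\lp[1]$-far from $\D$; you would then wrongly return \yes.

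The paper repairs this in two moves. First, it merges $\D$ and $\D'$ into a single histogram via a preliminary LP (minimize $\lp[1]$ distance to $\D$ subject to Kolmogorov closeness to $\D'$), so only one reference distribution remains. Second, and this is the idea your plan lacks, it uses \emph{per-point} slack variables $\eps_i$ with $\sum_i\eps_i\le\eps$ and splits the non-tail domain into ``light'' points ($\D(i)\le\eps^2$) and ``heavy'' points ($\D(i)>\eps^2$). On light points the hazard rate is small, so $-\log(1-h_i)\approx h_i$ linearizes accurately and one can impose $|P(i)-\D(i)|\lesssim\eps_i$ linearly; on heavy points there are at most $1/\eps^2$ of them, so even the crude Kolmogorov bound $|P(i)-\D(i)|\le 2\eps^3$ sums to $O(\eps)$. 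This light/heavy dichotomy is exactly what converts per-point additive control into a genuine $\lp[1]$ guarantee, and it has no analogue in your per-piece scheme.
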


\begin{restatable}[Log-concavity]{lemma}{lemmaefficientdistancelog}\label{lemma:distance:log:eff}
  There exists a procedure $\estimdist[\classlog]^\ast$ that, on input $n$ as well as the full specifications of a $k$-histogram distribution $\D$ on $[n]$ and a $\ell$-histogram distribution $\D^\prime$ on $[n]$, runs in time $\poly(n,k,\ell,1/\eps)$, and satisfies the following.
  \begin{itemize}
    \item If there is $P\in\classlog$ such that $\normone{\D-P}\leq \eps$ \emph{and} $\kolmogorov{\D^\prime}{P}\leq \frac{\eps^2}{\log^2(1/\eps)}$, then the procedure returns~\yes;
    \item If $\lp[1](\D,\classlog) \geq 100\eps$, then the procedure returns \no.
  \end{itemize}
\end{restatable}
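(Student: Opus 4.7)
}

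The plan is to reduce the question to deciding feasibility of a polynomial-size linear program whose variables are the logarithms $x_i \eqdef \log Q(i)$ of an unknown candidate $Q \in \classlog$. Under this substitution, log-concavity $Q(i)^2 \geq Q(i-1)Q(i+1)$ becomes the linear inequality $2x_i \geq x_{i-1} + x_{i+1}$; the whole difficulty is therefore to re-cast the approximation requirement ``$Q$ is $\bigO{\eps}$-close to $\D$ in $\lp[1]$'' as a collection of linear constraints on the $x_i$'s. Exact pointwise multiplicative closeness of $Q$ to $\D$ is far too strong (we are only promised $\lp[1]$ closeness in the completeness case), so the $\D^\prime$ input is used as a bridge: the Kolmogorov hypothesis $\kolmogorov{\D^\prime}{P} \leq \eta \eqdef \eps^2/\log^2(1/\eps)$ together with log-concavity of $P$ will let us build an explicit pointwise multiplicative surrogate for $P$.

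Concretely, I would first construct, directly from $\D^\prime$, a data-driven smoothed estimator $\tilde{P}$ on $[n]$ of the form $\tilde{P}(i) \eqdef \D^\prime(W_i)/|W_i|$, where $W_i$ is a window centered at $i$ whose width $w_i$ is tuned adaptively. The window has to balance two competing error sources: on the one hand, the Kolmogorov hypothesis gives $|\D^\prime(W_i) - P(W_i)| \leq 2\eta$, which forces $w_i$ to be taken sufficiently large relative to $\eta/P(i)$; on the other hand, concavity of $\log P$ implies that within a window of size $w_i$ around $i$, $P$ fluctuates by at most a multiplicative factor $\exp(\bigO{w_i s_i})$, where $s_i$ is the local slope of $\log P$, which pushes $w_i$ to be small. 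I would show that for every $i$ in the effective support of $P$ (say, $\{i : P(i) \geq \eps/(n\log^2(1/\eps))\}$) there is a choice of $w_i$ making both errors at most $\eps/\log(1/\eps)$ multiplicatively, and that such a $w_i$ can be identified from $\D^\prime$ alone by a doubling/self-consistency search over a geometric grid of candidate widths. The net output is an estimator $\tilde{P}$, computable in $\poly(n,\ell,1/\eps)$ time, with $(1-c\eps)P(i) \leq \tilde{P}(i) \leq (1+c\eps)P(i)$ on the effective support and $\tilde{P}(i) = \bigO{\eps/n}$ outside.

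With $\tilde{P}$ in hand, the linear program is: enumerate over the $\bigO{n^2}$ candidate supports $[a,b]\subseteq[n]$ (log-concave distributions have interval supports); for each $[a,b]$, introduce variables $x_i$ for $i\in[a,b]$ and impose (i) the log-concavity inequalities $2x_i \geq x_{i-1}+x_{i+1}$ for $a<i<b$; (ii) the pointwise box constraints $\log((1-c\eps)\tilde{P}(i)) \leq x_i \leq \log((1+c\eps)\tilde{P}(i))$ on indices $i\in[a,b]$ where $\tilde{P}(i)$ exceeds the truncation threshold; and (iii) the tail bound $x_i \leq \log(\eps/n)$ elsewhere (both outside $[a,b]$ and at small-$\tilde{P}$ indices inside it). Return \yes iff at least one of these LPs is feasible. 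For completeness, taking $[a,b]$ to be the support of the promised $P$ and $x_i=\log P(i)$ yields a feasible point. For soundness, any feasible $x$ produces $Q(i) \eqdef e^{x_i}$ that is log-concave and satisfies $\normone{Q-\tilde{P}} = \bigO{\eps}$ (from the box constraints and the $\bigO{\eps/n}$ tail budget), while $\normone{\tilde{P}-\D}=\bigO{\eps}$ follows by combining $\tilde{P} \approx_{\text{ptw}} P$ with $\normone{\D-P}\leq \eps$; the triangle inequality gives $\normone{Q-\D} \ll 100\eps$, contradicting $\lp[1](\D,\classlog) > 100\eps$, so the LP must be infeasible for every $[a,b]$ and the algorithm returns \no. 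The hardest step, and the one driving the $\log^2(1/\eps)$ denominator in the Kolmogorov hypothesis, is the adaptive windowing: proving that a single $w_i$ can be found from $\D^\prime$ alone that is simultaneously small enough to beat log-concave fluctuation and large enough to beat Kolmogorov noise, uniformly over all scales of $P(i)$, is where most of the technical work lies.
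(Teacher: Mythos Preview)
Your high-level strategy---working with $x_i=\log Q(i)$ so that log-concavity becomes linear, and using the Kolmogorov guarantee on $\D'$ to manufacture multiplicative pointwise control---matches the paper's. But the proposal has a real gap in the soundness direction, and the fluctuation-control mechanism differs from the paper's in a way worth noting.

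\textbf{Soundness is circular.} Your LP anchors $Q$ to the estimator $\tilde P$, which is built entirely from $\D'$. In the soundness case the lemma promises nothing about $\D'$: it says that whenever $\lp[1](\D,\classlog)\geq 100\eps$ the procedure must return \no, for \emph{any} $\D'$. Your line ``$\normone{\tilde P-\D}=\bigO{\eps}$ follows by combining $\tilde P\approx_{\mathrm{ptw}} P$ with $\normone{\D-P}\leq\eps$'' invokes the completeness witness $P$ inside the soundness argument. Concretely, if an adversary hands you a $\D'$ that is itself log-concave and far from $\D$, your $\tilde P$ will track $\D'$, the LP will be feasible (take $Q=\D'$), and you will incorrectly output \yes. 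The paper closes this hole by first running a small preliminary LP that merges $\D$ and $\D'$ into a single histogram $\D''$ minimizing $\normone{\D''-\D}$ subject to $\kolmogorov{\D''}{\D'}\leq 2\alpha$; this $\D''$ is unconditionally $\bigO{\eps}$-close to $\D$ in $\lp[1]$ (or one rejects immediately), and all downstream constraints are written against $\D''$, so LP feasibility directly yields $\normone{Q-\D''}=\bigO{\eps}$ without appealing to any $P$.

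\textbf{Fluctuation control is done differently.} Bounding the variation of $P$ over a window by $\exp(\bigO{w_i s_i})$ requires estimating the local slope $s_i$ of $\log P$, and your ``doubling/self-consistency search'' for $w_i$ is left as the acknowledged hard step. The paper bypasses slopes entirely via a structural fact for log-concave distributions (\cite[Lemma~4.1]{CDSS:13}): if $P$ is log-concave and monotone on $\{1,\dots,b\}$ with $P([1,a{-}1])=\sigma$ and $P([a,b])=\tau$, then $P(b)/P(a)\leq 1+\tau/\sigma$. This lets one partition by \emph{probability mass} rather than by window width: set aside a prefix of mass $\approx\eps/10$, then greedily cut the remainder into intervals of mass $\approx\eps^2/10$; each such interval automatically has $\max P/\min P\leq 1+\bigO{\eps}$, and the Kolmogorov bound $\alpha=\eps^2/\log^2(1/\eps)$ pins down its weight to relative error $\bigO{1/\log^2(1/\eps)}$. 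Heavy points (those with $\D(i)\geq\eps^2/\log(1/\eps)$, at most $\tildeO{1/\eps^2}$ of them) are handled separately as singletons in the LP, with additive slack variables $\eps_i$ satisfying $\sum_i\eps_i\leq\eps$. The paper also enumerates over the $\bigO{n}$ candidate modes (to know which side is increasing) rather than the $\bigO{n^2}$ candidate supports. The net effect is that your adaptive per-point windowing is replaced by one fixed data-driven partition, and no slope estimation is needed.
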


The proofs of these two lemmas are quite technical and deferred to~\autoref{app:structural:projection:proofs}. With these in hand, a simple modification of our main algorithm (specifically, setting $m = \tilde{O}( \max({\sqrt{\abs{I}}}/{\eps^3} L, {L^2}/{\eps^2}, {1}/{\eps^c} ) )$ for $c$ either $4$ or $6$ instead of $\tilde{O}( \max ( {\sqrt{\abs{I}}}/{\eps^3} L, {L^2}/{\eps^2} ) )$, to get the desired Kolmogorov distance guarantee; and providing the empirical histogram defined by these $m$ samples along to the distance estimation procedure) suffices to obtain the following counterpart to~\autoref{coro:main:testing}:
\begin{corollary}\label{coro:main:testing:eff}
The algorithm \textsc{TestSplittable}, after this modification, can \emph{efficiently} test the classes of log-concave and monotone hazard rate (MHR) distributions, with respectively $\tilde{O}\big({\sqrt{n}/\eps^{7/2} + 1/\eps^4}\big)$ and $\tilde{O}\big({\sqrt{n}/\eps^{7/2} + 1/\eps^6}\big)$ samples.
\end{corollary}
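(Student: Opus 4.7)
The plan is to re-run the analysis of Theorem~\ref{theo:main:testing:detailed} almost verbatim, swapping in the efficient projection procedures $\estimdist[\classlog]^\ast$ and $\estimdist[\classmhr]^\ast$ (Lemmas~\ref{lemma:distance:log:eff} and~\ref{lemma:distance:mhr:eff}) in place of the exponential-time routines of Lemmas~\ref{lemma:distance:log} and~\ref{lemma:distance:mhr}. Because those efficient procedures demand an additional Kolmogorov-distance hypothesis, the algorithm must also supply an empirical distribution $\hat{\D}^\prime$ that is close to $\D$ in that metric; the idea is simply to reuse the samples already drawn by the modified $m$, certifying the Kolmogorov bound via the DKW inequality (Theorem~\ref{theo:dkw:ineq}).

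For the sample complexity, I would first invoke the structural results of Section~\ref{sec:structural}: Corollary~\ref{theo:structural:logconcave} and Theorem~\ref{theo:structural:mhr}, combined with Lemma~\ref{lemma:decomposable:splittable}, give that both \classlog and \classmhr are $(\gamma, \tildeO{1/\gamma})$-splittable. Setting $\gamma = \Theta(\eps)$ yields $L = \tildeO{1/\eps}$, so the baseline cost of the decomposition and approximation stages from Theorem~\ref{theo:main:testing:detailed} is $\tildeO{\sqrt{nL}/\eps^{3} + L/\eps^{2}} = \tildeO{\sqrt{n}/\eps^{7/2}}$. To additionally guarantee $\kolmogorov{\hat{\D}^\prime}{\D} \leq \eta$ with probability $\geq 9/10$, DKW requires $\bigO{1/\eta^{2}}$ samples: plugging in $\eta = \Theta(\eps^{3})$ for \classmhr and $\eta = \tildeTheta{\eps^{2}}$ for \classlog produces exactly the additive $\tildeO{1/\eps^{6}}$ and $\tildeO{1/\eps^{4}}$ overheads claimed, which is why the modification of $m$ in the statement takes a max with $1/\eps^{c}$ for the appropriate $c$.

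For correctness, I would fix $\eps^\prime \eqdef \eps/C$ for a sufficiently large absolute constant $C$ and pass $\eps^\prime$ into $\estimdist[\class]^\ast$. In the completeness case, the unchanged analysis of Theorem~\ref{theo:main:testing:detailed} already shows that, conditioned on high-probability events, the flattening $\tilde{\D}$ satisfies $\normone{\tilde{\D}-\D} \leq \eps/10$; together with the DKW guarantee from Step~1 on $\hat{\D}^\prime$, the choice $P \eqdef \D \in \class$ witnesses the \yes-promise of the corresponding efficient lemma (both the $\lp[1]$ and Kolmogorov premises hold for $\eps^\prime$), and the procedure accepts. In the soundness case, the triangle inequality yields $\lp[1](\tilde{\D},\class) \geq \lp[1](\D,\class) - \normone{\D-\tilde{\D}} > \eps - \eps/10 > 100 \eps^\prime$ for $C$ large enough, activating the \no-branch of the lemma.

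The main obstacle, such as it is, is purely the joint calibration of constants: $\eps^\prime$ must be small enough that the algorithm's $\bigO{\eps}$ guarantee on $\normone{\tilde{\D}-\D}$ lies strictly inside the $\eps^\prime$ \yes-window of the efficient projection lemma, yet large enough that the $100\eps^\prime$ rejection threshold still falls below the $\eps$-farness promise on $\D$. Both windows involve only universal multiplicative constants, so this is tight but routine. Once the constants are fixed, no new structural or algorithmic ingredient is needed beyond Lemmas~\ref{lemma:distance:log:eff} and~\ref{lemma:distance:mhr:eff}, DKW, and the splittability bounds established in Section~\ref{ssec:struct-existence}.
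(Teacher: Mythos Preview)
Your approach is essentially the paper's: augment the sample size $m$ by a $\tilde{O}(1/\eps^c)$ term so that DKW furnishes an empirical $\hat{\D}^\prime$ meeting the Kolmogorov premise of Lemma~\ref{lemma:distance:log:eff} (with $\eta=\tilde{\Theta}(\eps^2)$, hence $c=4$) or Lemma~\ref{lemma:distance:mhr:eff} (with $\eta=\Theta(\eps^3)$, hence $c=6$), then feed $(\tilde{\D},\hat{\D}^\prime)$ to the efficient projection routine in place of the brute-force one. The sample-complexity bookkeeping via Corollary~\ref{theo:structural:logconcave}, Theorem~\ref{theo:structural:mhr}, and Lemma~\ref{lemma:decomposable:splittable} is correct.

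There is one small but genuine gap in your constant calibration. You keep the decomposition/approximation stages running at accuracy $\eps$ (so $\normone{\tilde{\D}-\D}\leq \eps/10$) and pass only $\eps'=\eps/C$ to the projection step. For completeness you then need $\eps/10 \leq \eps'$, forcing $C\leq 10$; for soundness you need $9\eps/10 > 100\eps'$, forcing $C>1000/9$. These are incompatible, so no single $C$ works as written. The fix is to also run the decomposition and approximation stages at a proportionally smaller accuracy (say $\eps/200$), which tightens $\normone{\tilde{\D}-\D}$ enough to fit both windows; this only changes constants and leaves the $\tilde{O}(\cdot)$ sample bounds intact. Once that adjustment is made, your argument goes through and matches the paper's.
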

 
\section{Going Further: Reducing the Support Size}\label{sec:effectivesupport}
  \makeatletter{}The general approach we have been following so far gives, out-of-the-box, an efficient testing algorithm with sample complexity $\tildeO{\sqrt{n}}$ for a large range of properties. However, this sample complexity can for some classes \property be brought down a lot more, by taking advantage in a preprocessing step of good concentration guarantees of distributions in \property.\medskip

\noindent As a motivating example, consider the class of Poisson Binomial Distributions (PBD). It is well-known (see e.g.~\cite[Section 2]{KG:71}) that PBDs are unimodal, and more specifically that $\classpbd_n\subseteq\classlog\subseteq\classuni$. Therefore, using our generic framework we can test Poisson Binomial Distributions with $\tildeO{\sqrt{n}}$ samples. This is, however, far from optimal: as shown in~\cite{AD:15}, a sample complexity of $\bigTheta{n^{1/4}}$ is both necessary and sufficient. The reason our general algorithm ends up making quadratically too many queries can be explained as follows. PBDs are tightly concentrated around their expectation, so that they ``morally'' live on a support of size $m=\bigO{\sqrt{n}}$. Yet, instead of testing them on this very small support, in the above we still consider the entire range $[n]$, and thus end up paying a dependence $\sqrt{n}$ -- instead of $\sqrt{m}$.

If we could use that observation to first reduce the domain to the \emph{effective support} of the distribution, then we could call our testing algorithm on this reduced domain of size $\bigO{\sqrt{n}}$. In the rest of this section, we formalize and develop this idea, and in~\autoref{ssec:testing:pbds} will obtain as a direct application a $\tildeO{n^{1/4}}$-query testing algorithm for $\classpbd_n$.

\begin{definition}\label{def:effective:support}
Given $\eps > 0$, the \emph{\eps-effective support} of a distribution $\D$ is the smallest interval $I$ such that $\D(I) \geq 1-\eps$.
\end{definition}

The last definition we shall require is of the \emph{conditioned distributions} of a class \class:
\begin{definition}
For any class of distributions \class over $[n]$, define the set of \emph{conditioned distributions of \class} (with respect to $\eps > 0$ and interval $I\subseteq[n]$) as $\class^{\eps,I}\eqdef \setOfSuchThat{\D_I}{\D\in\class, \D(I) \geq 1-\eps}$.
\end{definition}

Finally, we will require the following simple result:
\begin{lemma}\label{lemma:conditioned:class}
Let $\D$ be a distribution over $[n]$, and $I\subseteq[n]$ an interval such that $\D(I) \geq 1- \frac{\eps}{10}$. Then,
\begin{itemize}
  \item If $\D\in\class$, then $\D_I\in\class^{\frac{\eps}{10},I}$;
  \item If $\lp[1](\D,\class) > \eps$, then $\lp[1](\D_I, \class^{\frac{\eps}{10},I}) > \frac{7\eps}{10}$.
\end{itemize}
\end{lemma}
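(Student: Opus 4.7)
}
The first bullet is essentially definitional: if $\D\in\class$ and $\D(I)\geq 1-\eps/10$, then the conditional distribution $\D_I$ is by construction an element of $\class^{\eps/10,I}$ (taking the witness to be $\D$ itself). No work is needed beyond unpacking the definition of $\class^{\eps/10,I}$.

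For the second bullet I will argue by contrapositive. Assume that $\lp[1](\D_I,\class^{\eps/10,I})\leq 7\eps/10$. By definition of the distance to a set, for every $\eta>0$ there exists $P\in\class$ with $P(I)\geq 1-\eps/10$ such that $\normone{\D_I - P_I}\leq 7\eps/10+\eta$. The goal is to show that this $P$ satisfies $\normone{\D-P}\leq \eps+O(\eta)$, which, letting $\eta\to 0$, contradicts $\lp[1](\D,\class)>\eps$.

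The key step is a clean decomposition of $\D-P$ into its restrictions to $I$ and to $I^c$. Writing $\D|_I = \D(I)\cdot \D_I$ and $P|_I=P(I)\cdot P_I$ (as nonnegative measures on $I$), the triangle inequality gives
\begin{equation*}
  \normone{(\D-P)|_I} \;\leq\; \D(I)\,\normone{\D_I - P_I} \;+\; \abs{\D(I)-P(I)}\cdot\normone{P_I}\;.
\end{equation*}
On the complement, $\normone{(\D-P)|_{I^c}}\leq \D(I^c)+P(I^c)\leq \eps/5$ by the triangle inequality applied to the two nonnegative measures $\D|_{I^c}$ and $P|_{I^c}$ of total mass at most $\eps/10$ each. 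Since both $\D(I)$ and $P(I)$ lie in $[1-\eps/10,1]$, we also have $\abs{\D(I)-P(I)}\leq \eps/10$. Plugging everything in (and using $\D(I)\leq 1$, $\normone{P_I}=1$) yields
\begin{equation*}
  \normone{\D-P} \;\leq\; \left(\tfrac{7\eps}{10}+\eta\right) + \tfrac{\eps}{10} + \tfrac{\eps}{5} \;=\; \eps + \eta\;,
\end{equation*}
and letting $\eta\to 0$ contradicts the assumption $\lp[1](\D,\class)>\eps$.

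The only mild subtlety is keeping track of the constants so that the three contributions ($7\eps/10$ from the conditional distance, $\eps/10$ from the mass mismatch on $I$, and $2\cdot\eps/10$ from the tails on $I^c$) sum to exactly $\eps$; the choice of the threshold $\eps/10$ in the hypothesis on $\D(I)$ and in the definition of $\class^{\eps/10,I}$ is tuned precisely so that this bookkeeping closes. No deeper ingredient is required.
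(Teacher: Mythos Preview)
Your proof is correct and follows essentially the same approach as the paper. The only cosmetic difference is packaging: the paper argues directly, fixing an arbitrary $P\in\class$ with $P(I)\geq 1-\eps/10$ and lower-bounding $\normone{\D_I-P_I}$ via the identity $\D_I(i)-P_I(i)=\tfrac{1}{\D(I)}\big(\D(i)-P(i)+P(i)(1-\D(I)/P(I))\big)$, whereas you run the contrapositive and upper-bound $\normone{\D-P}$ via the rearranged identity $\D(i)-P(i)=\D(I)(\D_I(i)-P_I(i))+(\D(I)-P(I))P_I(i)$; the three error terms ($7\eps/10$, $\eps/10$, $2\cdot\eps/10$) and the final constant match exactly.
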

\begin{proof}
The first item is obvious. As for the second, let $P\in\class$ be any distribution with $P(I)\geq 1-\frac{\eps}{10}$. By assumption, $\normone{\D - P} > \eps$: but we have, writing $\alpha=1/10$,
  \begin{align*}
    \normone{\D_I - P_I} &=\sum_{i\in I}\abs{ \frac{\D(i)}{\D(I)} - \frac{P(i)}{P(I)} }
    = \frac{1}{\D(I)}\sum_{i\in I}\abs{ \D(i) - P(i) + P(i)\mleft(1- \frac{\D(I)}{P(I)}\mright) } \\
    &\geq \frac{1}{\D(I)}\big( \sum_{i\in I}\abs{ \D(i) - P(i) } - \abs{1- \frac{\D(I)}{P(I)} } \sum_{i\in I} P(i) \big)\\
    &= \frac{1}{\D(I)}\big( \sum_{i\in I}\abs{ \D(i) - P(i) } - \abs{P(I)- \D(I) } \big)
    \geq \frac{1}{\D(I)}\big( \sum_{i\in I}\abs{ \D(i) - P(i) } - \alpha\eps \big)\\
   &\geq \frac{1}{\D(I)}\big( \normone{\D-P} - \sum_{i\notin I} \abs{ \D(i) - P(i) } - \alpha\eps \big)
   \geq \frac{1}{\D(I)}\big( \normone{\D-P} - 3\alpha\eps \big) \\
   &> (1- 3\alpha)\eps  
   = \frac{7}{10}\eps.
  \end{align*}
\end{proof}

We now proceed to state and prove our result -- namely, efficient testing of \emph{structured} classes of distributions with nice \emph{concentration properties}.
\begin{theorem}\label{theo:main:testing:effective:support}
  Let $\class$ be a class of distributions over $[n]$ for which the following holds.
    \begin{enumerate}
      \item there is a function $M(\cdot,\cdot)$ such that each $\D\in\class$ has \eps-effective support of size at most $M(n,\eps)$;
      \item for every $\eps \in [0,1]$ and interval $I\subseteq[n]$, $\class^{\eps,I}$ is $(\gamma,L)$-splittable;
      \item there exists an efficient procedure $\estimdist[\class^{\eps,I}]$ which, given as input the explicit description of a distribution $\D$ over $[n]$ and interval $I\subseteq[n]$, computes the distance $\lp[1](\D_I,\class^{\eps,I})$.
    \end{enumerate}
    Then, the algorithm \textsc{TestEffectiveSplittable} (\autoref{algo:test:effective:support:splittable}) is a $\bigO{ \max\mleft(\frac{1}{\eps^3} \sqrt{m\ell} \log m, \frac{\ell^2}{\eps^2}\mright) }$-sample tester for $\class$, where $m=M(n,\frac{\eps}{60})$ and $\ell=L(\frac{\eps}{1200}, m)$.
\end{theorem}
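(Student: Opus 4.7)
The plan is to reduce membership testing for $\class$ on the full domain $[n]$ to membership testing for the family of conditional distributions $\class^{\eps/10, I}$ on a much smaller interval $I$ of size at most $m$, and then invoke \autoref{theo:main:testing:detailed} on this reduced problem. The algorithm \textsc{TestEffectiveSplittable} will proceed in three stages: (i) effective support identification, (ii) verification that the discovered interval is short enough, (iii) conditional membership testing on the reduced domain.

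First I would use the DKW inequality (\autoref{theo:dkw:ineq}) to learn an empirical CDF $\hat{F}$ of $\D$ from $O(1/\eps^2)$ samples to Kolmogorov distance at most $\eps/100$. From $\hat{F}$ one can in $\poly(n)$ time search for the shortest interval $\hat{I}$ with empirical mass at least $1 - \eps/30$. If no such interval of length $\leq m = M(n,\eps/60)$ exists, output \reject: indeed, in the completeness case $\D\in\class$, the true $(\eps/60)$-effective support has length $\leq m$, so by the DKW guarantee the empirical mass on that interval is at least $1 - \eps/60 - \eps/100 \geq 1 - \eps/30$, contradicting the failure of the search. Conversely, when an $\hat{I}$ of length $\leq m$ is found, the DKW guarantee implies $\D(\hat{I}) \geq 1 - \eps/10$, putting us in the setting of \autoref{lemma:conditioned:class}.

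Next, I would simulate sample access to $\D_{\hat{I}}$ by drawing samples from $\D$ and discarding those outside $\hat{I}$; since $\D(\hat{I}) \geq 1/2$ this incurs only a constant factor in the sample budget. I then invoke \textsc{TestSplittable} on $\D_{\hat{I}}$ with domain $\hat{I}$ of size at most $m$, splittability bound $\ell = L(\eps/1200, m)$, distance parameter of order $\eps$, and projection procedure $\estimdist[\class^{\eps/10, \hat{I}}]$ furnished by hypothesis (3). By \autoref{theo:main:testing:detailed}, this costs $\bigO{\max(\sqrt{m\ell}\log m/\eps^3,\; \ell/\eps^2)}$ samples from $\D_{\hat{I}}$, which dominates the $O(1/\eps^2)$ overhead from Step 1 and remains within the stated budget (the $\ell^2/\eps^2$ term in the statement absorbs the conditioning and DKW overheads).

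Correctness then follows from \autoref{lemma:conditioned:class}: in the completeness case $\D\in\class$ we get $\D_{\hat{I}}\in\class^{\eps/10,\hat{I}}$, so the inner tester accepts; in the soundness case $\lp[1](\D,\class)>\eps$, we get $\lp[1](\D_{\hat{I}},\class^{\eps/10,\hat{I}}) > 7\eps/10$, so the inner tester rejects. The main obstacle I anticipate is bookkeeping at the interface: I need to choose the tolerance parameters in Step 1 (currently $\eps/30$, $\eps/60$, $\eps/100$) so that the interval $\hat{I}$ produced is simultaneously \emph{long enough} that some element of $\class$ has $(\eps/60)$-effective support inside it (ensuring completeness) and \emph{short enough} that it has length $\leq m$ (ensuring the sample bound holds); and then verify that the $7\eps/10$ lower bound from \autoref{lemma:conditioned:class} survives the $9/10$ versus $1/10$ slack built into the inner $\estimdist$ procedure. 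These are routine but must be tracked carefully to make the constants in \textsc{TestSplittable} line up with those here.
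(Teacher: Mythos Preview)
Your proposal is correct and follows essentially the same three-stage strategy as the paper (DKW-based effective-support localization, rejection if the interval is too long, then invoke \textsc{TestSplittable} on the conditional distribution via \autoref{lemma:conditioned:class}); the only cosmetic difference is that the paper trims prefix and suffix tails rather than searching for a shortest heavy interval. One small bookkeeping caveat you already anticipated: Kolmogorov error $\eps/100$ controls \emph{interval} masses only up to $2\cdot\eps/100=\eps/50$, so your inequality $1-\eps/60-\eps/100\geq 1-\eps/30$ fails as written (it becomes $1-\eps/60-\eps/50$, which is slightly below $1-\eps/30$); just tighten the DKW accuracy or loosen the $\eps/30$ threshold and the rest goes through exactly as you describe.
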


\begin{algorithm}[H]
  \algblock[block]{Start}{Start}
  \algblockdefx[]{Start}{End}    [1]{\textsc{#1}}    [1][]{\textsc{#1}}
  \begin{algorithmic}[1]
    \Require Domain $\domain$ (interval of size $n$), sample access to $\D$ over $\domain$; subroutine $\estimdist[\class^{\eps,I}]$
    \renewcommand{\algorithmicrequire}{\textbf{Input:}}
    \Require Parameters $\eps\in(0,1]$, function $L(\cdot,\cdot)$, and upper bound function $M(\cdot,\cdot)$ for the effective support of the class \class.
    \State Set $m\eqdef \bigO{1/\eps^2}$, $\tau \eqdef M(n,\frac{\eps}{60})$.     \Start{Effective Support}
      \State\label{step:effesupp:get:samples} Compute $\hat{\D}$, an empirical estimate of $\D$, by drawing $m$ independent  samples from $\D$.
      \State\label{step:effesupp:int:J} Let $J$ be the largest interval of the form $\{1,\dots,j\}$ such that $\hat{\D}(J) \leq \frac{\eps}{30}$.
      \State\label{step:effesupp:int:K} Let $K$ be the largest interval of the form $\{k,\dots,n\}$ such that $\hat{\D}(K) \leq \frac{\eps}{30}$.
      \State\label{step:effesupp:test:support} Set $I\gets [n]\setminus(J\cup K)$.
      \If{ $\abs{I} > \tau$ } \Return \reject \EndIf
    \End
    \Start{Testing}
      \State\label{step:effesupp:main:test:call} Call $\textsc{TestSplittable}$ with $I$ (providing simulated access to $\D_I$ by rejection sampling, returning \fail if the number of samples $q$ from $\D_I$ required by the subroutine is not obtained after $\bigO{q}$ samples from $\D$), $\estimdist[\class^{\eps,I}]$, parameters $\eps^\prime\eqdef \frac{7\eps}{10}$ and $L(\cdot,\cdot)$.
      \State\label{step:effesupp:main:test} \Return \accept if $\textsc{TestSplittable}$ accepts, \reject otherwise.
    \End
  \end{algorithmic}
  \caption{\label{algo:test:effective:support:splittable}\sc TestEffectiveSplittable}
\end{algorithm}

\subsection{Proof of~\autoref{theo:main:testing:effective:support}}
By the choice of $m$ and the DKW inequality, with probability at least $23/24$ the estimate $\hat{\D}$ satisfies $\kolmogorov{\D}{\hat{\D}} \leq \frac{\eps}{60}$. Conditioning on that from now on, we get that $\D(I) \geq \hat{\D}(I) - \frac{\eps}{30} \geq 1-\frac{\eps}{10}$. Furthermore, denoting by $j$ and $k$ the two inner endpoints of $J$ and $K$ in Steps~\ref{step:effesupp:int:J} and \ref{step:effesupp:int:K}, we have $\D(J\cup\{j+1\}) \geq \hat{\D}(J\cup\{j+1\}) -  \frac{\eps}{60}  > \frac{\eps}{60}$ (similarly for $\D(K\cup\{k-1\})$), so that $I$ has size at most $\sigma+1$, where $\sigma$ is the $\frac{\eps}{60}$-effective support size of $\D$. 

Finally, note that since $\D(I) = \bigOmega{1}$ by our conditioning, the simulation of samples by rejection sampling will succeed with probability at least $23/24$ and the algorithm will not output \fail.

\paragraph{Sample complexity.}
The sample complexity is the sum of the $\bigO{1/\eps^2}$ in Step~\ref{step:effesupp:get:samples} and the $\bigO{q}$ in Step~\ref{step:effesupp:main:test:call}. From~\autoref{theo:main:testing} and the choice of $I$, this latter quantity is $\bigO{ \max\mleft(\frac{1}{\eps^3} \sqrt{M\ell} \log M, \frac{\ell^2}{\eps^2}\mright) }$ where $M = M(n,\frac{\eps}{60})$ and $\ell=L(\frac{\eps}{1200}, M(n,\frac{\eps}{60}))$.

\paragraph{Correctness.} If $\D\in\class$, then by the setting of $\tau$ (set to be an upper bound on the $\frac{\eps}{60}$-effective support size of any distribution in \class) the algorithm will go beyond Step~\ref{step:effesupp:test:support}. The call to $\textsc{TestSplittable}$ will then end up in the algorithm returning \accept in Step~\ref{step:effesupp:main:test}, with probability at least $2/3$ by~\autoref{lemma:conditioned:class},~\autoref{theo:main:testing} and our choice of parameters.

Similarly, if $\D$ is \eps-far from \class, then either its effective support is too large (and then the test on Step~\ref{step:effesupp:test:support} fails), or the main tester will detect that its conditional distribution on $I$ is $\frac{7\eps}{10}$-far from $\class$ and output \reject in Step~\ref{step:effesupp:main:test}.

Overall, in either case the algorithm is correct except with probability at most $1/24+1/24+1/3=5/12$ (by a union bound). Repeating constantly many times and outputting the majority vote brings the probability of failure down to $1/3$. \qed

\subsection{Application: Testing Poisson Binomial Distributions}\label{ssec:testing:pbds}

In this section, we illustrate the use of our generic two-stage approach to test the class of Poisson Binomial Distributions. Specifically, we prove the following result:
\begin{corollary}\label{coro:main:testing:effective:support:pbd}
The class of Poisson Binomial Distributions can be tested with $\tildeO{{n}^{1/4}/\eps^{7/2}} + \bigO{\log^4 n/\eps^4 }$ samples, using~\autoref{algo:test:effective:support:splittable}.
\end{corollary}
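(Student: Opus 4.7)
The plan is to apply Theorem~\ref{theo:main:testing:effective:support} (\textsc{TestEffectiveSplittable}) to the class $\class = \classpbd_n$. To do so, I need to verify the three hypotheses of the theorem and then plug in the resulting parameters. The main work is showing that PBDs, when conditioned on an interval containing almost all their mass, inherit a splittability statement quantitatively strong enough to beat the generic $\sqrt{n}$ bound.

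\textbf{Step 1: Effective support.} Recall that any PBD $X=\sum_{k=1}^n X_k$ is a sum of independent $\{0,1\}$-valued random variables, so by Hoeffding's inequality, $\probaOf{\abs{X-\expect{X}}>t}\leq 2 e^{-2t^2/n}$. Choosing $t = \sqrt{(n/2)\log(2/\eps)}$ gives that the $\eps$-effective support of any PBD has size at most $M(n,\eps) \eqdef \bigO{\sqrt{n\log(1/\eps)}}$.

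\textbf{Step 2: Splittability of the conditioned class.} Since every PBD is log-concave (see e.g.~\cite{KG:71}), for any interval $I\subseteq[n]$ and any $P\in\classpbd_n$ with $P(I)\geq 1-\eps$ the renormalized restriction $P_I$ is itself a log-concave distribution supported on $I$. Hence $\classpbd_n^{\eps,I}\subseteq \classlog_{\abs{I}}$, and applying Corollary~\ref{theo:structural:logconcave} on the domain $I$ of size $\abs{I}\leq m$ shows that $\classpbd_n^{\eps,I}$ is $(\gamma, L)$-splittable with $L(\gamma, m)=\bigO{\log^2 m/\gamma}$.

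\textbf{Step 3: Projection procedure.} We need a routine $\estimdist[\classpbd_n^{\eps,I}]$ computing (or sufficiently approximating) $\lp[1](\tilde{\D}_I,\classpbd_n^{\eps,I})$ on input an explicit $\ell$-histogram $\tilde{\D}$. Since this step does not consume samples, a brute-force search over a sufficiently fine $(\eps/n)$-grid of the $n$ Bernoulli parameters $(p_1,\dots,p_n)$ defining each candidate PBD, evaluating its pmf on $I$ and comparing to $\tilde{\D}_I$, already suffices for the sample-complexity claim. (One can also invoke a proper PBD learner or the cover construction of~\cite{AD:15} to obtain a polynomial-time variant, but this is not needed for the stated bound.)

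\textbf{Step 4: Plug in.} With $m=M(n,\eps/60)=\tildeO{\sqrt{n}}$ and $\ell=L(\eps/1200,m)=\bigO{\log^2 m/\eps}=\tildeO{1/\eps}$, Theorem~\ref{theo:main:testing:effective:support} gives sample complexity
\[
  \bigO{\max\mleft(\tfrac{1}{\eps^3}\sqrt{m\ell}\log m,\ \tfrac{\ell^2}{\eps^2}\mright)}
  = \tildeO{\tfrac{1}{\eps^3}\cdot n^{1/4}\cdot \tfrac{1}{\sqrt{\eps}}} + \bigO{\tfrac{\log^4 n}{\eps^4}}
  = \tildeO{\tfrac{n^{1/4}}{\eps^{7/2}}} + \bigO{\tfrac{\log^4 n}{\eps^4}},
\]
as claimed. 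The main conceptual point is Step~2: the reduction to a domain of size $\tildeO{\sqrt{n}}$ combined with the log-concavity-based splittability of the conditioned class is exactly what converts the generic $\tildeO{\sqrt{n}}$ bound into the optimal $\tildeO{n^{1/4}}$ dependence. I expect the trickiest bookkeeping to be checking that restricting to the empirical effective support $I$ (rather than to the true effective support of the hidden PBD) does not spoil either log-concavity or the $P(I)\geq 1-\eps$ hypothesis, but this is handled cleanly by the analysis of Theorem~\ref{theo:main:testing:effective:support} via Lemma~\ref{lemma:conditioned:class} and the DKW bound used in Step~\ref{step:effesupp:get:samples} of \textsc{TestEffectiveSplittable}.
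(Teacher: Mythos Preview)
Your proposal is correct and follows essentially the same route as the paper. The only cosmetic differences are that the paper invokes unimodality (\autoref{theo:structural:unimodal}) rather than log-concavity for the splittability of the conditioned class (both yield $L(\gamma,m)=\bigO{\log^2 m/\gamma}$), and the paper handles Step~3 via an explicit proper cover of $\classpbd_n$ (\autoref{lemma:distance:pbd}, using \cite{DKS:15}) to obtain a \emph{computationally efficient} projection step, whereas you settle for a brute-force search since only sample complexity is at stake; neither difference affects the argument or the final bound.
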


This is a direct consequence of~\autoref{theo:main:testing:effective:support} and the lemmas below. The first one states that, indeed, PBDs have small effective support:
\begin{fact}\label{fact:pbd:effective:support}
For any $\eps > 0$, a PBD has \eps-effective support of size $\bigO{\sqrt{n\log(1/\eps)}}$.
\end{fact}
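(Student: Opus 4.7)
The plan is a direct concentration argument, exploiting the fact that a Poisson Binomial Distribution is a sum of $n$ independent bounded random variables and hence sharply concentrated around its mean on a scale of $\sqrt{n}$.

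First, I would set up the notation: write $X = \sum_{k=1}^{n} X_k$ with $X_k \sim \bernoulli{p_k}$ independent, and let $\mu \eqdef \expect{X} = \sum_{k=1}^{n} p_k$. Since each $X_k$ is supported on $\{0,1\}$, Hoeffding's inequality yields, for every $t>0$,
\[
\probaOf{\abs{X-\mu} \geq t} \;\leq\; 2\exp\mleft(-\frac{2t^2}{n}\mright).
\]
Setting $t \eqdef \sqrt{(n/2)\ln(2/\eps)} = \bigO{\sqrt{n\log(1/\eps)}}$ makes the right-hand side at most $\eps$.

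Next, define the interval $I \eqdef [\max(1,\lceil \mu - t\rceil),\min(n,\lfloor \mu + t\rfloor)] \subseteq [n]$. By the tail bound above, $\D(I) \geq 1-\eps$, so $I$ contains the \eps-effective support of $\D$ (up to taking the smallest such interval, which is even shorter). The length of $I$ is at most $2t + 1 = \bigO{\sqrt{n\log(1/\eps)}}$, which is the desired bound.

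There is no real obstacle here: the only subtlety is that the PBD takes values in $\{0,1,\dots,n\}$ rather than an arbitrary subset of $[n]$, but this only affects the statement by an additive constant; intersecting with $[n]$ can only shrink $I$, so the size bound is preserved. The argument carries through verbatim under whichever convention the paper uses for embedding $\classpbd_n$ into the domain $[n]$.
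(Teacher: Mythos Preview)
Your proof is correct and essentially identical to the paper's: both apply the additive Chernoff/Hoeffding bound $\probaOf{\abs{X-\mu}\geq t}\leq 2e^{-2t^2/n}$ and choose $t=\sqrt{(n/2)\ln(2/\eps)}$ to obtain an interval of length $\bigO{\sqrt{n\log(1/\eps)}}$ containing mass at least $1-\eps$. There is nothing to add.
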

\begin{proof}
By an additive Chernoff Bound, any random variable $X$ following a Poisson Binomial Distribution has $\probaOf{\abs{X-\shortexpect X} > \gamma n} \leq 2e^{-2\gamma^2n}$. Taking $\gamma\eqdef\sqrt{\frac{1}{2n}\ln\frac{2}{\eps}}$, we get that $\probaOf{X\in I} \geq 1-\eps$, where $I\eqdef [\shortexpect X-\sqrt{\frac{1}{2}\ln\frac{2}{\eps}}, \shortexpect X+\sqrt{\frac{1}{2}\ln\frac{2}{\eps}}]$.
\end{proof}

It is clear that if $\D\in\classpbd_n$ (and therefore is unimodal), then for any interval $I\subseteq[n]$ the conditional distribution $\D_I$ is still unimodal, and thus the class of \emph{conditioned PBDs} $\classpbd_n^{\eps,I}\eqdef \setOfSuchThat{\D_I}{\D\in\classpbd_n, \D(I) \geq 1-\eps}$ falls under~\autoref{theo:structural:unimodal}. The last piece we need to apply our generic testing framework is the existence of an algorithm to compute the distance between an (explicit) distribution and the class of conditioned PBDs. This is provided by our next lemma:
\begin{claim}\label{lemma:distance:pbd}
There exists a procedure $\estimdist[\classpbd_n^{\eps,I}]$ that, on input $n$ and $\eps, \in [0,1]$, $I\subseteq[n]$ as well as the full specification of a distribution $\D$ on $[n]$, computes a value $\tau$ such that $\tau \in [1\pm 2\eps] \cdot \lp[1](\D,\classpbd_n^{\eps,I}) \pm \frac{\eps}{100}$, in time $n^2 \left( {1/\eps} \right)^{\bigO{\log{1/\eps}}}$.
\end{claim}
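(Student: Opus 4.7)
}
The plan is to exploit the classical cover theorem for Poisson Binomial Distributions due to Daskalakis and Papadimitriou, which asserts that $\classpbd_n$ admits an (efficiently constructible) $\eta$-cover $\mathcal{C}_\eta$ in total variation distance of size $\abs{\mathcal{C}_\eta} = n^2 \cdot (1/\eta)^{\bigO{\log(1/\eta)}}$. Setting $\eta \eqdef c\eps^2$ for a sufficiently small absolute constant $c>0$, the procedure will enumerate $\mathcal{C}_\eta$, keep only those $Q\in\mathcal{C}_\eta$ with $Q(I) \geq 1-\eps-\eta$, condition each of these on $I$ in $\bigO{\abs{I}}$ time, compute $\normone{\D - Q_I}$ (another $\bigO{n}$ per element), and return the minimum value $\tau$ obtained. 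The total running time is clearly $n^2 (1/\eps)^{\bigO{\log(1/\eps)}}$ as required.

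For correctness, I would argue the two directions separately. First (upper bound on $\tau$): let $P^\ast \in \classpbd_n$ with $P^\ast(I)\geq 1-\eps$ realize $\opt \eqdef \lp[1](\D, \classpbd_n^{\eps,I}) = \normone{\D - P^\ast_I}$, and let $Q \in \mathcal{C}_\eta$ satisfy $\normone{Q - P^\ast}\leq \eta$. Then $Q(I) \geq P^\ast(I) - \eta \geq 1-\eps-\eta$, so $Q$ is among those considered by the enumeration. A short computation analogous to the one in the proof of~\autoref{lemma:conditioned:class} (renormalizing $Q_I$ versus $P^\ast_I$ introduces a factor of $1/P^\ast(I) \leq 1/(1-\eps) \leq 1+2\eps$) yields $\normone{Q_I - P^\ast_I} \leq (1+2\eps)(\normone{Q - P^\ast} + |P^\ast(I) - Q(I)|)/P^\ast(I) = \bigO{\eta/(1-\eps)}$, and hence by the triangle inequality $\tau \leq \normone{\D - Q_I} \leq \opt + \bigO{\eta} \leq \opt + \eps/100$ for a suitable choice of $c$.

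Second (lower bound on $\tau$): any $Q\in\mathcal{C}_\eta$ kept by the procedure satisfies $Q(I)\geq 1-\eps-\eta \geq 1 - 2\eps$, so $Q_I$ is the conditioning of a genuine PBD on $I$ with total mass at least $1-2\eps$ in $I$; this places $Q_I \in \classpbd_n^{2\eps,I}$. Since $\classpbd_n^{\eps,I}\subseteq \classpbd_n^{2\eps,I}$, and conversely every element $R_I$ of $\classpbd_n^{2\eps,I}$ can be rewritten (up to a multiplicative normalization factor of $(1-2\eps)/(1-\eps) \geq 1 - 2\eps$) as the conditioning on $I$ of a PBD with $\eps$-mass outside $I$, we get $\lp[1](\D,\classpbd_n^{2\eps,I}) \geq (1-2\eps)\opt - \bigO{\eps^2}$. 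Therefore $\tau \geq (1-2\eps)\opt - \eps/100$ (after absorbing lower-order terms into the additive slack), giving the claimed $\tau \in [1\pm 2\eps]\cdot\opt \pm \eps/100$.

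The main technical obstacle I foresee is the careful bookkeeping when passing between the conditioned distribution $Q_I$ and the unconditioned cover element $Q$: the $1/Q(I)$ renormalization is what produces the multiplicative factor $(1\pm 2\eps)$ rather than a purely additive error, and one must verify that the thresholds $1-\eps-\eta$ (for keeping $Q$) and the slack $\eta$ are tuned so that neither the optimal $P^\ast$ is missed nor a spurious $Q$ with $Q(I)$ much below $1-\eps$ is admitted. Everything else (the DP cover theorem, the $\lp[1]$ distance computation between two explicit distributions in $\bigO{n}$ time) is off-the-shelf.
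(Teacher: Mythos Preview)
Your approach is essentially the paper's: enumerate an $\eta$-cover of $\classpbd_n$, keep only those $Q$ with $Q(I)$ large enough, and return the minimum of $\normone{\D - Q_I}$. The paper invokes the slightly sharper cover of Diakonikolas--Kane--Stewart (size $n\cdot(1/\gamma)^{O(\log 1/\gamma)}$, with the $n$ factor removable via the effective-support observation) and sets $\gamma=\eps/250$ rather than $\eta=c\eps^2$, but these are cosmetic differences. Your upper bound on $\tau$ is exactly the paper's triangle-inequality argument.

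There is, however, a genuine gap in your lower-bound paragraph. You assert that ``every element $R_I$ of $\classpbd_n^{2\eps,I}$ can be rewritten (up to a multiplicative normalization factor\dots) as the conditioning on $I$ of a PBD with $\eps$-mass outside $I$.'' This is not justified and is not in general true: given a PBD $R$ with $R(I)=1-2\eps$, there is no reason why some other PBD $R'$ with $R'(I)\geq 1-\eps$ should satisfy $R'_I = R_I$ (or be multiplicatively close to it pointwise). The conditional $R_I$ depends on the full shape of $R$ on $I$, and you cannot simply ``push mass into $I$'' while staying in $\classpbd_n$ and preserving that shape. What you can actually conclude is only that the minimizer $Q^\ast$ in your enumeration is itself a PBD with $Q^\ast(I)\geq 1-\eps-\eta$, so $Q^\ast_I\in\classpbd_n^{\eps+\eta,I}$ and hence $\tau \geq \lp[1](\D,\classpbd_n^{\eps+\eta,I})$---a lower bound with respect to the slightly enlarged class. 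The paper is also somewhat elliptical on this point, stating the conclusion as ``within an additive $O(\eps)$'' without spelling out the comparison between $\classpbd_n^{\eps,I}$ and $\classpbd_n^{\eps+\gamma/2,I}$; for the purposes of \autoref{algo:test:splittable} this weaker guarantee suffices, but your argument as written makes a claim that does not hold.

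A minor accounting issue: computing the probability vector of each cover element $Q$ from its parameters takes $O(n^2)$ by dynamic programming (the paper states this explicitly), not $O(n)$; with the DP cover of size $n^2(1/\eta)^{O(\log 1/\eta)}$ your running time would be $n^4(1/\eps)^{O(\log 1/\eps)}$. The paper reaches $n^2(1/\eps)^{O(\log 1/\eps)}$ by observing that the leading $n$ in the cover size can be dropped once the effective support is known.
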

\begin{proof}
The goal is to find a $\gamma = \Theta(\eps)$-approximation of the minimum value of $\sum_{i\in I}\abs{\frac{P(i)}{P(I)} - \frac{\D(i)}{\D(I)}}$, subject to $P(I)=\sum_{i\in I} P(i) \geq 1-\eps$ and $P\in\classpbd_n$. We first note that, given the parameters $n \in \N$ and $p_1,\dots,p_n\in[0,1]$ of a PBD $P$, the vector of $(n+1)$ probabilities $P(0),\dots,P(n)$ can be obtained in time $\bigO{n^2}$ by dynamic programming. 
Therefore, computing the $\lp[1]$ distance between $\D$ and any PBD with known parameters can be done efficiently. To conclude, we invoke a result of Diakonikolas, Kane, and Stewart, that guarantees the existence of a succinct (proper) cover of $\classpbd_n$:
\begin{theorem}[{\cite[Theorem 14]{DKS:15}} (rephrased)]
  For all $n, \gamma >0$, there exists a set $\mathcal{S}_{\gamma} \subseteq \classpbd_n$ such that:
  \begin{enumerate}[(i)]
  \item $\mathcal{S}_{\gamma}$ is a $\gamma$-cover of $\classpbd_n$; that is, for all $\D \in \classpbd_n$ there exists some $\D^\prime \in \mathcal{S}_{\gamma}$ such that $\normone{\D-\D^\prime} \leq \gamma$
  \item {$\abs{\mathcal{S}_{\gamma}} \leq n \left({1/\gamma}\right)^{\bigO{ \log{1/\gamma}}}$}
  \item $\mathcal{S}_{\gamma}$ can be computed in time {$n\left( {1/\gamma} \right)^{\bigO{\log{1/\gamma}}}$}
  \end{enumerate}
and each $\D\in\mathcal{S}_{\gamma}$  is explicitly described by its set of parameters.
\end{theorem}
\noindent We further observe that the factor $n$ in both the size of the cover and running time can be easily removed in our case, as we know a good approximation of the support size of the candidate PBDs. (That is, we only need to enumerate over a subset of the cover of~\cite{DKS:15}, that of the PBDs with effective support compatible with our distribution $\D$.)
\iffalse
Therefore, computing the $\lp[1]$ distance between $\D$ and any PBD with known parameters can be done efficiently. To conclude, we invoke a result of Daskalakis and Papadimitriou, that guarantees the existence of a succinct (proper) cover of $\classpbd_n$:
\begin{theorem}[{\cite[Theorem 1]{DP:13}}]
  For all $n, \gamma >0$, there exists a set $\mathcal{S}_{\gamma} \subseteq \classpbd_n$ such that:
  \begin{enumerate}[(i)]
  \item $\mathcal{S}_{\gamma}$ is a $\gamma$-cover of $\classpbd_n$ in $\lp[1]$; that is, for all $\D \in \classpbd_n$, there exists some $\D^\prime \in \mathcal{S}_{\gamma}$ such that $\normone{\D-\D^\prime} \leq \gamma$
  \item {$\abs{\mathcal{S}_{\gamma}} \le n^2 + n \cdot \left({1/\gamma}\right)^{\bigO{ \log^2{1/\gamma}}}$}
  \item $\mathcal{S}_{\gamma}$ can be computed in time {$\bigO{ n^2 \log n} + \bigO{n \log n} \cdot \left( {1/\gamma} \right)^{\bigO{\log{1/\gamma}}}$}
  \end{enumerate}
and each $\D\in\mathcal{S}_{\gamma}$  is explicitly described by its set of parameters.
\end{theorem}
\fi

Set $\gamma\eqdef\frac{\eps}{250}$. Fix $P\in\classpbd_n$ such that $P(I) \geq 1-\eps$, and $Q\in \mathcal{S}_{\gamma}$ such that $\normone{P-Q}\leq \gamma$. In particular, it is easy to see via the correspondence between $\lp[1]$ and total variation distance that $\abs{P(I)-Q(I)} \leq \gamma/2$.
By a calculation analogue as in~\autoref{lemma:conditioned:class}, we have 
\begin{align*}
  \normone{P_I-Q_I} &= \sum_{i\in I}\abs{\frac{P(i)}{P(I)} - \frac{Q(i)}{Q(I)}} 
  = \sum_{i\in I}\abs{\frac{P(i)}{P(I)} - \frac{Q(i)}{P(I)} + Q(i)\left( \frac{1}{P(I)} - \frac{1}{Q(I)} \right) } \\
  &= \sum_{i\in I}\abs{\frac{P(i)}{P(I)} - \frac{Q(i)}{P(I)} } \pm \sum_{i\in I} Q(i)\abs{ \frac{1}{P(I)} - \frac{1}{Q(I)} }
  = \frac{1}{P(I)}\left( \sum_{i\in I}\abs{ P(i) - Q(i) } \pm \abs{P(I)-Q(I)}\right) \\
  &= \frac{1}{P(I)}\left( \sum_{i\in I}\abs{ P(i) - Q(i) } \pm \frac{\gamma}{2}\right) = \frac{1}{P(I)}\left( \normone{ P - Q } \pm \frac{5\gamma}{2}\right) \\
  &\in [ \normone{ P - Q } - {5\gamma}/{2}, (1+2\eps)\left( \normone{ P - Q } + {5\gamma}/{2} \right) ]
  \end{align*}
where we used the fact that $\sum_{i\notin I}\abs{ P(i) - Q(i) } = 2\left(\sum_{i\notin I\colon P(i) > Q(i)} (P(i)-Q(i))\right) + Q(I)-P(I) \in [-2\gamma,2\gamma]$.
By the triangle inequality, this implies that the minimum of $\normone{P_I-\D_I}$ over the distributions $P$ of $\mathcal{S}_\eps$ with $P(I)\geq 1-(\eps+\gamma/2)$ will be within an additive $\bigO{\eps}$ of $\lp[1](\D,\classpbd_n^{\eps,I})$. The fact that the former can be done in time $\poly(n) \cdot \left( {1/\eps} \right)^{\bigO{\log^2{1/\eps}}}$ concludes the proof.
\end{proof}
As previously mentioned, this approximation guarantee for $\lp[1](\D,\classpbd_n^{\eps,I})$ is sufficient for the purpose of~\autoref{algo:test:splittable}.

\begin{proofof}{\autoref{coro:main:testing:effective:support:pbd}}
Combining the above, we invoke~\autoref{theo:main:testing:effective:support} with $M(n,\eps)=O( \sqrt{n\log(1/\eps)} )$ (\autoref{fact:pbd:effective:support}) and $L(m,\gamma)=O\big( \frac{\log^2 m}{\gamma} \big)$ (\autoref{theo:structural:unimodal}). This yields the claimed sample complexity; finally, the efficiency is a direct consequence of~\autoref{lemma:distance:pbd}.
\end{proofof}
 
\section{Lower Bounds}\label{sec:lowerbounds}
  \makeatletter{}\subsection{Reduction-based Lower Bound Approach} \label{ssec:lb-red}

We now turn to proving converses to our positive results -- namely, that many of the upper bounds we obtain cannot be significantly improved upon. As in our algorithmic approach, we describe for this purpose a \emph{generic framework} for obtaining lower bounds.

In order to state our results, we will require the usual definition of \emph{agnostic learning}. Recall that an algorithm is said to be a \emph{semi-agnostic learner} for a class \class if it satisfies the following. Given sample access to an arbitrary distribution $\D$ and parameter $\eps$, it outputs a hypothesis $\hat{\D}$ which (with high probability) does ``almost as well as it gets'':
\[
	\normone{\D - \hat{\D}} \leq c\cdot\opt_{\class,\D} + \bigO{\eps}
\]
where $\opt_{\class,\D}\eqdef \inf_{\D^\prime\in\class} \lp[1](\D^\prime,\D)$, and $c\geq 1$ is some absolute constant (if $c=1$, the learner is said to be agnostic).

\paragraph*{High-level idea.} The motivation for our result is the observation of~\cite{BKR:04} that ``monotonicity is at least as hard as uniformity.'' Unfortunately, their specific argument does not generalize easily to other classes of distributions, making it impossible to extend it readily. The starting point of our approach is to observe that while uniformity testing is hard in general, it becomes very easy \emph{under the promise that the distribution is monotone, or even only close to monotone} (namely, $\bigO{1/\eps^2}$ samples suffice). This can give an alternate proof of the lower bound for monotonicity testing, via a different reduction: first, test if  the unknown distribution is monotone; if it is, test whether it is uniform, now assuming closeness to monotone.

More generally, this idea applies to any class \class which \textsf{(a)} contains the uniform distribution, and \textsf{(b)} for which we have a $\littleO{\sqrt{n}}$-sample agnostic learner $\Learner$, as follows.
Assuming we have a tester $\Tester$ for $\class$ with sample complexity $\littleO{\sqrt{n}}$, define a uniformity tester as below.
\begin{itemize}
  \item test if $\D\in\class$ using $\Tester$; if not, reject (as $\uniform\in\class$, $\D$ cannot be uniform);
  \item otherwise, agnostically learn $\D$ with $\Learner$ (since $\D$ is close to $\class$), and obtain hypothesis $\hat{\D}$;
  \item check offline if $\hat{D}$ is close to uniform.
\end{itemize} 
By assumption, $\Tester$ and $\Learner$ each use $\littleO{\sqrt{n}}$ samples, so does the whole process; but this contradicts the lower bound of~\cite{BFRSW:00,Paninski:08} on uniformity testing. Hence, $\Tester$ must use $\bigOmega{\sqrt{n}}$ samples.\medskip

This ``testing-by-narrowing'' reduction argument can be further extended to other properties than to uniformity, as we show below:
\begin{theorem}\label{theo:main:testing:lb}
Let \class be a class of distributions over $[n]$ for which the following holds:
\begin{enumerate}[(i)]
  \item there exists a semi-agnostic learner $\Learner$ for $\class$, with sample complexity $q_L(n,\eps, \delta)$ and ``agnostic constant''~$c$;
  \item there exists a subclass $\class_{\rm Hard}\subseteq\class$ such that testing $\class_{\rm Hard}$ requires $q_H(n,\eps)$ samples.
\end{enumerate}
Suppose further that $q_L(n,\eps, 1/10)=\littleO{q_H(n,\eps)}$. Then, any tester for $\class$ must use $\bigOmega{q_H(n,\eps)}$ samples.
\end{theorem}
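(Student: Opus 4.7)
The plan is to argue by contrapositive: assume for contradiction that there exists a tester $\Tester$ for $\class$ with sample complexity $q_T(n,\eps) = \littleO{q_H(n,\eps)}$, and build from $\Tester$ and $\Learner$ a tester for $\class_{\rm Hard}$ of sample complexity $\littleO{q_H(n,\eps)}$, contradicting assumption~(ii). This follows the ``testing-by-narrowing'' template sketched in the paragraphs preceding the theorem: first detect whether $\D$ is even close to the (larger) class $\class$; if so, use $\Learner$ to obtain an explicit hypothesis $\hat{\D}$ approximating $\D$; finally perform a purely offline (sample-free) check of whether $\hat{\D}$ is close to $\class_{\rm Hard}$.

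Concretely, given samples from an unknown $\D$ and a parameter $\eps$ for testing $\class_{\rm Hard}$, I would set $\eps_T = \eps' = \eps/(10(c+1))$ (chosen so that $c\eps_T + O(\eps') \leq \eps/2$) and proceed as follows. First, run $\Tester$ on $\D$ with distance parameter $\eps_T$ and confidence $19/20$, rejecting immediately if it rejects. Otherwise, invoke $\Learner$ with accuracy $\eps'$ and confidence $19/20$ to produce an explicit hypothesis $\hat{\D}$. Finally compute $\lp[1](\hat{\D}, \class_{\rm Hard})$ from the explicit description of $\hat{\D}$ (by brute force if necessary, since no samples are needed), and accept iff this distance is at most $\eps/2$. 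The total sample complexity is $q_T(n, \eps_T) + q_L(n, \eps', 1/10)$, which remains $\littleO{q_H(n,\eps)}$ because $\eps_T = \Theta(\eps)$ and both $q_T$ and $q_L$ are $\littleO{q_H(n,\eps)}$ at their natural scale; this is the contradiction I seek.

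The completeness and soundness arguments are direct. If $\D\in\class_{\rm Hard}\subseteq\class$, then $\Tester$ accepts, and $\opt_{\class,\D}=0$ gives $\normone{\hat{\D}-\D} \leq O(\eps')$, so $\lp[1](\hat{\D}, \class_{\rm Hard}) \leq O(\eps') < \eps/2$ and the offline check accepts. If instead $\lp[1](\D, \class_{\rm Hard}) > \eps$, then either $\D$ is $\eps_T$-far from $\class$ (in which case $\Tester$ rejects), or $\opt_{\class,\D} \leq \eps_T$, so that $\Learner$ returns $\hat{\D}$ with $\normone{\hat{\D} - \D} \leq c\eps_T + O(\eps') \leq \eps/2$; the triangle inequality then gives $\lp[1](\hat{\D}, \class_{\rm Hard}) \geq \eps/2$ and we reject offline. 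A union bound over the three subroutines (each failing with probability at most $1/20$) keeps the total error below $1/3$.

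The main subtlety I anticipate is purely notational and concerns the scaling of $\eps$. The assumption $q_L(n,\eps,1/10) = \littleO{q_H(n,\eps)}$ is stated at the natural scale $\eps$, but the reduction invokes $\Learner$ and $\Tester$ at scale $\eps' = \eps_T = \Theta(\eps)$; for all classes considered in this paper both $q_H$ and $q_L$ depend polynomially on $1/\eps$, so scaling $\eps$ by a fixed constant factor preserves the $\littleO{\cdot}$ relation. A similar remark holds for $\Tester$, where a constant scaling of $\eps$ only changes the sample complexity by a constant factor. The offline computability of the third step is immaterial, since the lower bound is information-theoretic and the reduction is free to use unbounded computation on the explicit hypothesis $\hat{\D}$.
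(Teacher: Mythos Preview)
Your proposal is correct and follows essentially the same approach as the paper: both build a tester for $\class_{\rm Hard}$ by first calling the $\class$-tester, then the semi-agnostic learner, then doing an offline distance check; the only differences are cosmetic (you set $\eps_T=\eps'=\eps/(10(c+1))$ and argue soundness by a direct dichotomy on whether $\lp[1](\D,\class)>\eps_T$, whereas the paper takes $\eps'=\eps/3$, calls $\Tester$ with parameter $\eps'/c$, and argues soundness by contrapositive). Your closing remark on the $\Theta(\eps)$ rescaling and on the offline step being sample-free matches implicit assumptions in the paper's argument.
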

\begin{proof}
The above theorem relies on the reduction outlined above, which we rigorously detail here. Assuming $\class$, $\class_{\rm Hard}$, $\Learner$ as above (with semi-agnostic constant $c \geq 1$), and a tester $\Tester$ for \class with sample complexity $q_T(n,\eps)$, we define a tester $\Tester_{\rm Hard}$ for $\class_{\rm Hard}$. On input $\eps\in(0,1]$ and given sample access to a distribution $\D$ on $[n]$, $\Tester_{\rm Hard}$ acts as follows:
\begin{itemize}
  \item call $\Tester$ with parameters $n$, $\frac{\eps^\prime}{c}$ (where $\eps^\prime\eqdef\frac{\eps}{3}$) and failure probability $1/6$, to $\frac{\eps^\prime}{c}$-test if $\D\in\class$. If not, reject.
  \item otherwise, agnostically learn a hypothesis $\hat{\D}$ for $\D$, with $\Learner$ called with parameters $n$, $\eps^\prime$ and failure probability $1/6$;
  \item check offline if $\hat{\D}$ is $\eps^\prime$-close to $\class_{\rm Hard}$, accept if and only if this is the case.
\end{itemize} 
We condition on both calls (to $\Tester$ and $\Learner$) to be successful, which overall happens with probability at least $2/3$ by a union bound. The completeness is immediate: if $\D\in\class_{\rm Hard}\subseteq\class$, $\Tester$ accepts, and the hypothesis $\hat{\D}$ satisfies $\normone{\hat{\D}-\D} \leq \eps^\prime$. Therefore, $\lp[1](\hat{\D},\class_{\rm Hard}) \leq \eps^\prime$, and $\Tester_{\rm Hard}$ accepts.

For the soundness, we proceed by contrapositive. Suppose $\Tester_{\rm Hard}$ accepts; it means that each step was successful. In particular, $\lp[1](\hat{\D},\class)\leq {\eps^\prime}/{c}$; so that the hypothesis outputted by the agnostic learner satisfies $\normone{\hat{\D}-\D} \leq c\cdot\opt+\eps^\prime\leq 2\eps^\prime$. In turn, since the last step passed and by a triangle inequality  we get, as claimed, $\lp[1](\D, \class_{\rm Hard}) \leq 2\eps^\prime + \lp[1](\hat{\D},\class_{\rm Hard}) \leq 3\eps^\prime = \eps$.

Observing that the overall sample complexity is $q_T(n,\frac{\eps^\prime}{c})+q_L(n,\eps^\prime, \frac{1}{10}) = q_T(n,\frac{\eps^\prime}{c})+\littleO{q_H(n,\eps^\prime)}$ concludes the proof.
\end{proof}

Taking $\class_{\rm Hard}$ to be the singleton consisting of the uniform distribution, and from the semi-agnostic learners of~\cite{CDSS:13,CDSS:14} (each with sample complexity either $\poly(1/\eps)$ or $\poly(\log n,1/\eps)$), we obtain the following:\footnote{Specifically, these lower bounds hold as long as $\eps=\bigOmega{1/n^\alpha}$ for some absolute constant $\alpha > 0$ (so that the sample complexity of the agnostic learner is indeed negligible in front of $\sqrt{n}/\eps^2$).}
\corolbsqrtn*
Similarly, we can use another result of~\cite{DDS:PBD:12} which shows how to agnostically learn Poisson Binomial Distributions with $\tildeO{1/\eps^2}$ samples.\footnote{Note the quasi-quadratic dependence on $\eps$ of the learner, which allows us to get $\eps$ into our lower bound for $n\gg \poly\log(1/\eps)$.} Taking $\class_{\rm Hard}$ to be the single $\binomial{n}{1/2}$ distribution (along with the testing lower bound of~\cite{VV:14}), this yields the following:
\corolbpbd*

\noindent Finally, we derive a lower bound on testing $k$-SIIRVs from the agnostic learner of~\cite{DDDOST:13} (which has sample complexity $\poly(k,1/\eps)$ samples, independent~of~$n$):
\corolbksiirv*
\begin{proof}[\autoref{coro:lb:ksiirv}]
To prove this result, it is enough by \autoref{theo:main:testing:lb} to exhibit a particular $k$-SIIRV $S$ such that testing identity to $S$ requires this many samples. Moreover, from~\cite{VV:14} this last part amounts to proving that the (truncated) 2/3-norm $\norm{S^{-\max}_{-\eps_0}}_{2/3}$ of $S$ is $\bigOmega{k^{1/2}n^{1/4}}$ (for some small $\eps_0>0$). Our hard instance $S$ will be defined as follows: it is defined as the distribution of $X_1+\dots+X_n$, where the $X_i$'s are independent integer random variables uniform on $\{0,\dots,k-1\}$ (in particular, for $k=2$ we get a $\binomial{n}{1/2}$ distribution). It is straightforward to verify that $\shortexpect{S} = \frac{n(k-1)}{2}$ and $\sigma^2\eqdef\var S = \frac{(k^2-1)n}{12} = \bigTheta{k^2 n}$; moreover, $S$ is log-concave (as the convolution of $n$ uniform distributions). From this last point, we get that \textsf{(i)} the maximum probability of $S$, attained at its mode, is $\norminf{S}=\bigTheta{1/\sigma}$; and \textsf{(ii)} for every $j$ in an interval $I$ of length $2\sigma$ centered at this mode, $S(j) \geq \bigOmega{\norminf{S}}$. Putting this together, we get that the 2/3-norm (and similarly the truncated 2/3-norm) of $S$ is lower bounded by
\[
    \Big(\sum_{j\in I} S(j)^{2/3}\Big)^{3/2} \geq \mleft(2\sigma\cdot \bigOmega{1/\sigma}^{2/3}\mright)^{3/2} =\bigOmega{\sigma^{1/2}} = \bigOmega{k^{1/2}n^{1/4}}
\]
which concludes the proof.
\end{proof}

\subsection{Tolerant Testing}\label{sec:lowerbounds:tol}

This lower bound framework from the previous section carries to \emph{tolerant} testing as well, resulting in this analogue to~\autoref{theo:main:testing:lb}:
\begin{theorem}\label{theo:main:testing:tol:lb}
Let \class be a class of distributions over $[n]$ for which the following holds:
\begin{enumerate}[(i)]
  \item there exists a semi-agnostic learner $\Learner$ for $\class$, with sample complexity $q_L(n,\eps, \delta)$ and ``agnostic constant''~$c$;
  \item there exists a subclass $\class_{\rm Hard}\subseteq\class$ such that \emph{tolerant} testing $\class_{\rm Hard}$ requires $q_H(n,\eps_1,\eps_2)$ samples for some parameters $\eps_2 > (4c+1)\eps_1$.
\end{enumerate}
Suppose further that $q_L(n,\eps_2-\eps_1, 1/10)=\littleO{q_H(n,\eps_1,\eps_2)}$. Then, any \emph{tolerant} tester for $\class$ must use $\bigOmega{q_H(n,\eps_1,\eps_2)}$ samples (for some explicit parameters $\eps^\prime_1,\eps^\prime_2$).
\end{theorem}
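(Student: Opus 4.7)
The plan is to follow the same reduction template as \autoref{theo:main:testing:lb}, but with tolerant testers on both sides. Given a hypothetical tolerant tester $\Tester$ for \class at parameters $(\eps_1^\prime,\eps_2^\prime)\eqdef(\eps_1,\,3\eps_1)$ with sample complexity $q_T$, I will build a tolerant $(\eps_1,\eps_2)$-tester $\Tester_{\rm Hard}$ for $\class_{\rm Hard}$, proceeding in three stages on input access to $\D$: \textsf{(1)} run $\Tester$ at those parameters with failure probability $1/6$, rejecting outright if it rejects; \textsf{(2)} otherwise, invoke the semi-agnostic learner $\Learner$ at accuracy $\alpha$ (a small enough constant fraction of $\eps_2-(4c+1)\eps_1$, to be fixed below) and failure probability $1/6$, obtaining a hypothesis $\hat{\D}$; \textsf{(3)} compute the distance $\lp[1](\hat{\D},\class_{\rm Hard})$ offline and accept iff it is at most the threshold $\theta\eqdef\tfrac{1}{2}((c+1)\eps_1+(\eps_2-3c\eps_1))$.

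For correctness, I condition on both $\Tester$ and $\Learner$ succeeding (which holds with probability $\geq 2/3$ by a union bound). In the completeness case $\lp[1](\D,\class_{\rm Hard})\leq\eps_1$, in particular $\lp[1](\D,\class)\leq\eps_1=\eps_1^\prime$, so $\Tester$ accepts, and the semi-agnostic guarantee yields $\normone{\hat{\D}-\D}\leq c\eps_1+\bigO{\alpha}$; the triangle inequality then gives $\lp[1](\hat{\D},\class_{\rm Hard})\leq(c+1)\eps_1+\bigO{\alpha}<\theta$, so Stage \textsf{(3)} accepts. In the soundness case $\lp[1](\D,\class_{\rm Hard})\geq\eps_2$, either $\lp[1](\D,\class)\geq\eps_2^\prime=3\eps_1$, in which case $\Tester$ rejects in Stage \textsf{(1)}, or $\lp[1](\D,\class)<3\eps_1$, in which case the learner produces $\hat{\D}$ with $\normone{\hat{\D}-\D}\leq3c\eps_1+\bigO{\alpha}$, and the triangle inequality gives $\lp[1](\hat{\D},\class_{\rm Hard})\geq\eps_2-3c\eps_1-\bigO{\alpha}>\theta$, forcing rejection in Stage \textsf{(3)}.

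The assumption $\eps_2>(4c+1)\eps_1$ is exactly what makes the gap between the two regimes, $(\eps_2-3c\eps_1)-(c+1)\eps_1=\eps_2-(4c+1)\eps_1$, positive, so a small enough choice of $\alpha$ keeps $\theta$ strictly between them. Since $\alpha=\Theta(\eps_2-\eps_1)$ up to constants depending only on $c$, the learning stage uses $q_L(n,\Theta(\eps_2-\eps_1),\tfrac{1}{10})=\littleO{q_H(n,\eps_1,\eps_2)}$ samples by hypothesis, so the whole reduction uses $q_T(n,\eps_1,3\eps_1)+\littleO{q_H(n,\eps_1,\eps_2)}$; combined with the lower bound from (ii) on $\class_{\rm Hard}$, this forces $q_T(n,\eps_1,3\eps_1)=\bigOmega{q_H(n,\eps_1,\eps_2)}$, establishing the claim with explicit parameters $(\eps_1^\prime,\eps_2^\prime)=(\eps_1,3\eps_1)$. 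The main obstacle is purely quantitative: carefully tracking the additive $\bigO{\alpha}$ slack from the semi-agnostic learner through each triangle inequality and checking that taking $\alpha$ a small enough constant multiple of $\eps_2-(4c+1)\eps_1$ (with the constant depending only on $c$) absorbs it comfortably. A conceptual subtlety worth noting is that $\Tester$'s behavior is \emph{unconstrained} when $\lp[1](\D,\class)$ lies in the intermediate region $(\eps_1^\prime,\eps_2^\prime)$, but this is harmless: the offline check in Stage \textsf{(3)} delivers the correct verdict regardless of what $\Tester$ output, because the semi-agnostic guarantee on $\hat{\D}$ holds unconditionally on $\Tester$'s decision.
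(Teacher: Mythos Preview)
Your three-stage reduction and the logical structure of completeness/soundness are exactly those of the paper. The gap is purely quantitative, in the claim that ``$\alpha=\Theta(\eps_2-\eps_1)$ up to constants depending only on $c$.'' Your own analysis forces $\alpha$ to be at most a constant fraction of $\eps_2-(4c+1)\eps_1$ (this is the Stage~\textsf{(3)} gap you identified), and the ratio $\tfrac{\eps_2-(4c+1)\eps_1}{\eps_2-\eps_1}$ is \emph{not} bounded below by any function of $c$ alone: as $\eps_2\to(4c+1)\eps_1^+$ it tends to~$0$. Hence the hypothesis $q_L(n,\eps_2-\eps_1,\tfrac{1}{10})=\littleO{q_H}$ does not let you conclude $q_L(n,\alpha,\tfrac{1}{10})=\littleO{q_H}$, and the sample-complexity accounting breaks.

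The paper sidesteps this by choosing the inner tester's thresholds as constant fractions of $\eps_2-\eps_1$ rather than as multiples of $\eps_1$: it calls $\Tester$ at parameters $\bigl(\tfrac{\eps_2-\eps_1}{4c},\tfrac{\eps_2-\eps_1}{2c}\bigr)$ and the learner at accuracy $\eps'=\tfrac{\eps_2-\eps_1}{16}$, with offline threshold $\tau=\tfrac{6\eps_2+10\eps_1}{16}$. With this choice the learner accuracy is automatically a fixed constant fraction of $\eps_2-\eps_1$, so the hypothesis on $q_L$ applies directly. The role of the assumption $\eps_2>(4c+1)\eps_1$ then shifts: it is used only to guarantee that in the completeness case $\lp[1](\D,\class)\leq\eps_1\leq\tfrac{\eps_2-\eps_1}{4c}$, i.e.\ that $\D$ lands inside $\Tester$'s accept region; the Stage~\textsf{(3)} gap is handled by the choice of $\tau$ and needs only $\eps_1<\eps_2$. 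Reworking your argument with these parameters (or any choice where $\eps_1',\eps_2',\alpha$ are all $\Theta_c(\eps_2-\eps_1)$) fixes the issue.
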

\begin{proof}
The argument follows the same ideas as for \autoref{theo:main:testing:lb}, up to the details of the parameters. Assuming $\class$, $\class_{\rm Hard}$, $\Learner$ as above (with semi-agnostic constant $c \geq 1$), and a tolerant tester $\Tester$ for \class with sample complexity $q(n,\eps_1,\eps_2)$, we define a tolerant tester $\Tester_{\rm Hard}$ for $\class_{\rm Hard}$. On input $0 < \eps_1 < \eps_2 \leq 1$ with $\eps_2 > (4c+1)\eps_1$, and given sample access to a distribution $\D$ on $[n]$, $\Tester_{\rm Hard}$ acts as follows. After setting $\eps^\prime_1\eqdef\frac{\eps_2-\eps_1}{4}$, $\eps^\prime_2\eqdef\frac{\eps_2-\eps_1}{2}$, $\eps^\prime \eqdef \frac{\eps_2-\eps_1}{16}$ and $\tau\eqdef \frac{6\eps_2+10\eps_1}{16}$,
\begin{itemize}
  \item call $\Tester$ with parameters $n$, $\frac{\eps^\prime_1}{c}$, $\frac{\eps^\prime_2}{c}$ and failure probability $1/6$, to tolerantly test if $\D\in\class$. If $\lp[1](\D,\class) > \eps^\prime_2/c$, reject.
  \item otherwise, agnostically learn a hypothesis $\hat{\D}$ for $\D$, with $\Learner$ called with parameters $n$, $\eps^\prime$ and failure probability $1/6$;
  \item check offline if $\hat{\D}$ is $\tau$-close to $\class_{\rm Hard}$, accept if and only if this is the case.
\end{itemize} 
We condition on both calls (to $\Tester$ and $\Learner$) to be successful, which overall happens with probability at least $2/3$ by a union bound. We first argue completeness: 
assume $\lp[1](\D,\class_{\rm Hard}) \leq \eps_1$. This implies $\lp[1](\D,\class) \leq \eps_1$, so that $\Tester$ accepts as $\eps_1 \leq \eps^\prime_1/c$ (which is the case because $\eps_2 > (4c+1)\eps_1$). Thus, the hypothesis $\hat{\D}$ satisfies $\normone{\hat{\D}-\D} \leq c\cdot \eps^\prime_1/c + \eps^\prime = \eps^\prime_1 + \eps^\prime$. Therefore, 
  $\lp[1](\hat{\D},\class_{\rm Hard}) \leq \normone{\hat{\D}-\D} + \lp[1](\D,\class_{\rm Hard}) \leq \eps^\prime_1 + \eps^\prime + \eps_1 < \tau$, and $\Tester_{\rm Hard}$ accepts.

For the soundness, we again proceed by contrapositive. Suppose $\Tester_{\rm Hard}$ accepts; it means that each step was successful. In particular, $\lp[1](\hat{\D},\class)\leq {\eps^\prime_2}/{c}$; so that the hypothesis outputted by the agnostic learner satisfies $\normone{\hat{\D}-\D} \leq c\cdot\opt+\eps^\prime\leq \eps^\prime_2 + \eps^\prime$. In turn, since the last step passed and by a triangle inequality  we get, as claimed, 
$\lp[1](\D, \class_{\rm Hard}) \leq \eps^\prime_2 + \eps^\prime + \lp[1](\hat{\D},\class_{\rm Hard}) \leq \eps^\prime_2 + \eps^\prime + \tau < \eps_2$.

Observing that the overall sample complexity is $q_T(n,\frac{\eps^\prime_1}{c},\frac{\eps^\prime_2}{c})+q_L(n,\eps^\prime, \frac{1}{10}) = q_T(n,\frac{\eps^\prime}{c})+\littleO{q_H(n,\eps^\prime)}$ concludes the proof.
\end{proof}

As before, we instantiate the general theorem to obtain specific lower bounds for tolerant testing of the classes we covered in this paper. That is, taking $\class_{\rm Hard}$ to be the singleton consisting of the uniform distribution (combined with the tolerant testing lower bound of~\cite{ValiantValiant:10lb}), and again from the semi-agnostic learners of~\cite{CDSS:13,CDSS:14} (each with sample complexity either $\poly(1/\eps)$ or $\poly(\log n,1/\eps)$), we obtain the following:
\corolbtolnlogn*
Similarly, we again turn to the class of Poisson Binomial Distributions, for which we can invoke as before the $\tildeO{1/\eps^2}$-sample agnostic learner of~\cite{DDS:PBD:12}. As before, we would like to choose for $\class_{\rm Hard}$ the single $\binomial{n}{1/2}$ distribution; however, as no tolerant testing lower bound for this distribution exists~--~to the best of our knowledge~--~in the literature, we first need to establish the lower bound we will rely upon:

\begin{restatable}{theorem}{binomialtolerantlbtheorem}\label{theo:samp:binomial:tolerant:lb}
    There exists an absolute constant $\eps_0>0$ such that the following holds. Any algorithm which, given sampling access to an unknown distribution $\D$ on $\domain$ and parameter $\eps \in (0,\eps_0)$, distinguishes with probability at least $2/3$ between \textsf{(i)} $\normone{\D-\binomial{n}{1/2}} \leq \eps$ and \textsf{(ii)} $\normone{\D-\binomial{n}{1/2}} \geq 100\eps$ must use $\bigOmega{\frac{1}{\eps}\frac{\sqrt{n}}{\log n}}$ samples.
\end{restatable}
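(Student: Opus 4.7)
The plan is to prove this bound by reducing tolerant uniformity testing on a domain of size $m=\Theta(\sqrt{n})$ to tolerant identity testing against $\binomial{n}{1/2}$, invoking the Valiant--Valiant tolerant uniformity lower bound~\cite{ValiantValiant:10lb} of $\bigOmega{\frac{1}{\eps}\frac{m}{\log m}}$ samples (for a constant multiplicative gap between $\eps_1$ and $\eps_2$). Substituting $m = \Theta(\sqrt n)$ yields exactly $\bigOmega{\frac{1}{\eps}\frac{\sqrt n}{\log n}}$.

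First I would fix $m = c\sqrt n$ for an absolute constant $c>0$ chosen small enough that $1/m$ exceeds the mode mass $\binomial{n}{1/2}(\lfloor n/2\rfloor) = (1+\littleO{1})\sqrt{2/(\pi n)}$ by a factor of at least $20$. Using the CDF of $\binomial{n}{1/2}$, I construct a ``quantile partition'' $\mu_1,\dots,\mu_m$ by pulling back the equal-mass partition of $[0,1]$ into the $m$ pieces $((i-1)/m,\,i/m]$; each $\mu_i$ is a sub-measure on $[n]$ with $\sum_k\mu_i(k)=1/m$ exactly and $\sum_i\mu_i=\binomial{n}{1/2}$. Achieving exact equality forces fractional weights at the (at most two) integer endpoints of each $\mu_i$, so that consecutive $\mu_i$, $\mu_{i+1}$ may share a single boundary integer, whose total mass is bounded by $\binomial{n}{1/2}(\lfloor n/2\rfloor) < 1/(20m)$.

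Given any $D'$ on $[m]$, define $D = \Phi(D')$ on $[n]$ by $D(k) = m\sum_i D'(i)\,\mu_i(k)$: this is a Markov kernel with $\Phi(U_m) = \binomial{n}{1/2}$, and samples from $D$ can be simulated one-for-one from samples of $D'$ by drawing $i\sim D'$ and then $k$ from the (explicitly computable) distribution $m\mu_i(\cdot)$. The contraction $\normone{D-\binomial{n}{1/2}}\leq\normone{D'-U_m}$ is immediate, as Markov kernels are $\lp[1]$-contractions. For the reverse direction, at any integer $k$ interior to a single $\mu_i$'s support one has $|D(k)-\binomial{n}{1/2}(k)|= m\mu_i(k)|D'(i)-1/m|$; summing over such interior integers of $\mu_i$ yields $(1-m\beta_i)|D'(i)-1/m|$, where $\beta_i\leq 2\binomial{n}{1/2}(\lfloor n/2\rfloor)$ is the fractional boundary mass of $\mu_i$. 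Since $m\beta_i\leq 1/10$ by the choice of $c$, summing over $i$ gives
\[
\normone{D-\binomial{n}{1/2}}\;\geq\;\tfrac{9}{10}\,\normone{D'-U_m}.
\]
Hence $\normone{D'-U_m}\leq\eps$ implies $\normone{D-\binomial{n}{1/2}}\leq\eps$, while $\normone{D'-U_m}\geq\tfrac{1000}{9}\eps$ implies $\normone{D-\binomial{n}{1/2}}\geq 100\eps$. A $q(n,\eps)$-sample tolerant binomial tester thus yields a $q(n,\eps)$-sample tolerant uniformity tester on $[m]$ with constant gap, and Valiant--Valiant gives $q(n,\eps)=\bigOmega{\frac{1}{\eps}\frac{\sqrt n}{\log n}}$ for all $\eps$ below some absolute constant $\eps_0$.

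\textbf{Main obstacle.} The subtle point is the bookkeeping around the fractional boundary weights. Requiring $\sum_k \mu_i(k)=1/m$ \emph{exactly}---so that $\Phi(U_m)$ equals $\binomial{n}{1/2}$ on the nose, rather than up to an additive error that would dominate for very small $\eps$---forces a single integer to belong to two adjacent $\mu_i$'s. Both the sampling oracle and the $\lp[1]$ lower bound must respect this overlap, which is why the constant $c$ is calibrated so that $m\,\binomial{n}{1/2}(\lfloor n/2\rfloor)$ is a small fixed constant, keeping the aggregate boundary leakage well below the factor-$100$ slack in the hypothesis.
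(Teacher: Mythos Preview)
Your proof is correct and follows the same high-level strategy as the paper---reduce tolerant uniformity testing on a domain of size $\Theta(\sqrt{n})$ to tolerant identity testing against $\binomial{n}{1/2}$---but the embedding you use is genuinely different and in some ways cleaner.

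The paper identifies $[n]$ with a central window $I_{N,c}=\{N/2-c\sqrt{N},\dots,N/2+c\sqrt{N}\}$ of $[N]$ (where $N=\Theta(n^2)$) and sends a distribution $\D$ on $[n]$ to the mixture that puts $\D$ on $I_{N,c}$ and matches $B_N$ outside. Because $B_N$ restricted to $I_{N,c}$ is only \emph{approximately} uniform (with $\ell_1$ error $\Theta(c^2)$ after normalizing), the reduction carries an additive slop; this forces the paper to first establish the bound for a single constant threshold $\eps_0$ (\autoref{coro:samp:binomial:tolerant}) and then recover the $1/\eps$ dependence by a separate mixing argument (``standard techniques''). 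Your quantile-partition map $\Phi$ satisfies $\Phi(\uniform_m)=\binomial{n}{1/2}$ \emph{exactly}, so the reduction is bi-Lipschitz with constants $1$ and $9/10$ for every $\eps$ simultaneously; the $1/\eps$ scaling is then inherited directly from the form of the Valiant--Valiant uniformity bound you invoke. In effect, you push the scaling step into the uniformity lower bound rather than doing it after the Binomial reduction. The price is the boundary-point bookkeeping you flag (absent from the paper's approach, where a single integer belongs to only one side of the map), but as you note this is handled by choosing $c$ so that $m\cdot\binomial{n}{1/2}(\lfloor n/2\rfloor)\le 1/20$.

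One small remark: the specific form of the Valiant--Valiant bound you cite, $\bigOmega{\frac{1}{\eps}\frac{m}{\log m}}$ for a \emph{constant multiplicative} gap, is slightly stronger than the version stated verbatim in~\cite{ValiantValiant:10lb} (which is phrased for the gap $\phi$ versus $\tfrac12-\phi$). It does follow from that statement by the same mixing-with-$\uniform_m$ trick the paper calls ``standard techniques,'' so nothing is missing---but you may want to make that derivation explicit.
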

The proof relies on a reduction from tolerant testing of \emph{uniformity}, drawing on a result of Valiant and Valiant~\cite{ValiantValiant:10lb}; for the sake of conciseness, the details are deferred to~\autoref{app:binomial:tolerant:lb}. With~\autoref{theo:samp:binomial:tolerant:lb} in hand, we can apply~\autoref{theo:main:testing:tol:lb} to obtain the desired lower bound:
\corolbtolpbd*

\noindent We observe that both~\autoref{coro:tol:lb:nlogn} and~\autoref{coro:lb:tol:pbd} are tight (with regard to the dependence on $n$), as proven in the next section (\autoref{sec:toltesting:ub}).
 
\section{A Generic Tolerant Testing Upper Bound}\label{sec:toltesting:ub}
  \makeatletter{}To conclude this work, we address the question of tolerant testing of distribution classes. In the same spirit as before, we focus on describing a generic approach to obtain such bounds, in a clean conceptual manner. The most general statement of the result we prove in this section is stated below, which we then instantiate to match the lower bounds from~\autoref{sec:lowerbounds:tol}:

\begin{theorem}\label{theo:main:tol:testing:ub:almost}
Let \class be a class of distributions over $[n]$ for which the following holds: 
  \begin{enumerate}[(i)]
    \item there exists a semi-agnostic learner $\Learner$ for $\class$, with sample complexity $q_L(n,\eps, \delta)$ and ``agnostic constant''~$c$;
    \item for any $\eta\in[0,1]$, every distribution in \class has $\eta$-effective support of size at most $M(n,\eta)$.
  \end{enumerate}
Then, there exists an algorithm that, for any fixed $\kappa > 1$ and on input $\eps_1,\eps_2 \in (0,1)$ such that $\eps_2 \geq C \eps_1$, has the following guarantee (where $C > 2$ depends on $c$ and $\kappa$ only). The algorithm takes $\bigO{\frac{1}{(\eps_2-\eps_1)^2}\frac{m}{\log m}} + q_L(n,\frac{\eps_2-\eps_1}{\kappa}, \frac{1}{10})$ samples (where $m=M(n,\eps_1)$), and with probability at least $2/3$ distinguishes between \textsf{(a)} $\lp[1](\D,\class) \leq \eps_1$ and \textsf{(b)} $\lp[1](\D,\class) >~\eps_2$. (Moreover, one can take $C=(1+(5c+6)\frac{\kappa}{\kappa-1})$.)
\end{theorem}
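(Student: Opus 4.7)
The algorithm follows a ``learn-then-tolerantly-test'' strategy that combines the semi-agnostic learner $\Learner$ with a tolerant $\lp[1]$ distance estimator run on the effective support. Let $\eta \eqdef (\eps_2-\eps_1)/\kappa$ and $m \eqdef M(n,\eps_1)$. First, draw $\bigO{1/\eta^2}$ samples from $\D$ and use the DKW inequality (\autoref{theo:dkw:ineq}) to estimate its CDF to Kolmogorov accuracy $\eta$; from this, find the smallest interval $I$ capturing empirical mass at least $1-\eps_1-\eta$. If $|I| > m$, reject. In the yes case, any $P\in\class$ with $\normone{\D-P}\leq\eps_1$ has an $\eps_1$-effective support of size at most $m$, and by $\lp[1]$-closeness this interval retains mass at least $1-2\eps_1$ under $\D$, so such an $I$ exists with high probability.

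Second, run $\Learner$ with parameters $(n,\eta,1/10)$ to produce an explicit hypothesis $\hat{\D}$ satisfying $\normone{\D-\hat{\D}} \leq c\cdot\lp[1](\D,\class) + \eta$, at sample cost $q_L(n,\eta,1/10)$; then compute $d_1 \eqdef \lp[1](\hat{\D},\class)$ offline (no sample cost, since the theorem makes no computational claim). Third, invoke the tolerant $\lp[1]$ distance estimator of Valiant--Valiant applied to the restrictions of $\D$ and $\hat{\D}$ to $I$, producing an estimate $\hat{d}$ of $\normone{\D-\hat{\D}}$. Since $|I| \leq m$, this step costs $\bigO{m/(\eta^2\log m)}$ samples: we simulate access to $\D_I$ by rejection sampling, then combine the estimator's output with the empirically estimated value of $\D(I^c)$ and the explicitly known value of $\hat{\D}(I^c)$, so that $\hat{d}$ approximates $\normone{\D-\hat{\D}}$ to within additive $\bigO{\eta}$. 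Accept iff $d_1 + \hat{d} < \tau$ for a suitable threshold $\tau$.

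Correctness follows from two applications of the triangle inequality. In the yes case, the semi-agnostic guarantee gives $\normone{\D-\hat{\D}} \leq c\eps_1 + \eta$, hence $d_1 \leq \normone{\D-\hat{\D}} + \eps_1 \leq (c+1)\eps_1 + \eta$, yielding $d_1 + \hat{d} \leq (2c+1)\eps_1 + \bigO{\eta}$. In the no case, $d_1 + \hat{d} \geq d_1 + \normone{\D-\hat{\D}} - \bigO{\eta} \geq \lp[1](\D,\class) - \bigO{\eta} > \eps_2 - \bigO{\eta}$. Substituting $\eta=(\eps_2-\eps_1)/\kappa$ and requiring the two intervals to be separated gives the explicit condition $\eps_2\geq C\eps_1$ with $C=1+(5c+6)\kappa/(\kappa-1)$. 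Summing the sample costs of the three stages yields the announced bound $\bigO{m/((\eps_2-\eps_1)^2\log m)}+q_L(n,(\eps_2-\eps_1)/\kappa,1/10)$. The main obstacle is the careful bookkeeping that converts the Valiant--Valiant estimate (naturally a distance on the restricted domain $I$) into a controlled estimate of the full $\lp[1]$ distance $\normone{\D-\hat{\D}}$ on $[n]$, tracking all compounding additive errors so that they collectively fit within the separation gap; in particular, one must argue that both $\D$ and $\hat{\D}$ place at most $\bigO{\eps_1 + \eta}$ mass outside $I$, so that the contribution of $I^c$ to $\normone{\D-\hat{\D}}$ is absorbed into the error budget.
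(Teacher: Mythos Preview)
Your approach is essentially the same as the paper's: identify an effective-support interval $I$ via DKW, learn $\hat{\D}$ semi-agnostically, compute $\lp[1](\hat{\D},\class)$ offline, call the Valiant--Valiant tolerant estimator on the restrictions to $I$, and threshold the sum. The paper carries this out with explicit parameters and a short lemma (Fact~\ref{lemma:conditioned:distr:distances}) relating $\normone{\D_I-\hat{\D}_I}$ to $\normone{\D-\hat{\D}}$.

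There is, however, one genuine gap. You assert that ``both $\D$ and $\hat{\D}$ place at most $O(\eps_1+\eta)$ mass outside $I$,'' and you use this to claim $\hat d$ approximates $\normone{\D-\hat{\D}}$ to within $O(\eta)$. This is fine in the yes case (since $\normone{\D-\hat{\D}}\leq c\eps_1+\eta$ forces $|\hat{\D}(I^c)-\D(I^c)|$ to be small), but it is \emph{not} guaranteed in the no case: when $\lp[1](\D,\class)>\eps_2$, the semi-agnostic bound $\normone{\D-\hat{\D}}\leq c\cdot\opt+\eta$ may be vacuous, and nothing in your algorithm prevents $\hat{\D}$ from putting most of its mass outside $I$. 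In that regime the relationship between $\normone{\D_I-\hat{\D}_I}$ and $\sum_{i\in I}|\D(i)-\hat{\D}(i)|$ is not controlled by $\D(I^c)$ and $\hat{\D}(I^c)$ alone (there is no formula recovering the latter from the former and the two masses), so your $\hat d$ need not be within $O(\eta)$ of $\normone{\D-\hat{\D}}$, and your no-case inequality $d_1+\hat d\geq \lp[1](\D,\class)-O(\eta)$ is unjustified.

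The paper closes exactly this gap with one extra line: after learning $\hat{\D}$, it explicitly rejects if $\hat{\D}(I)<1-\tau$ for a suitable $\tau=O(\eps_1+\eta)$. Conditioned on passing this check, both tails are small and the sandwich
\[
\normone{\D-\hat{\D}}-2\bigl(\D(I^c)+\hat{\D}(I^c)\bigr)\ \leq\ \normone{\D_I-\hat{\D}_I}\ \leq\ \tfrac{3}{2}\,\frac{\normone{\D-\hat{\D}}}{\D(I)}
\]
does the rest. With this single added step your argument goes through; the remaining discrepancies (e.g.\ the threshold $1-\eps_1-\eta$ versus $1-\tfrac{3}{2}\eps_1-\eta$ in the support-finding step) are cosmetic.
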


\coromaintoltestingmlogm*

Applying now the theorem with $M(n,\eps)=\sqrt{n\log(1/\eps)}$ (as per~\autoref{coro:main:testing:effective:support:pbd}), we obtain an improved upper bound for Binomial and Poisson Binomial distributions: 
\coromaintoltestingpbd*

\paragraph*{High-level idea.} Somewhat similar to the lower bound framework developed in~\autoref{sec:lowerbounds}, the gist of the approach is to reduce the problem of tolerant testing membership of $\D$ to the \emph{class} \class to that of tolerant testing identity to a known \emph{distribution} -- namely, the distribution $\hat{\D}$ obtained after trying to agnostically learn $\D$. Intuitively, an agnostic learner for \class should result in a good enough hypothesis $\hat{\D}$ (i.e., $\hat{\D}$ close enough to both $\D$ and $\class$) when $\D$ is $\eps_1$-close to \class; but output a $\hat{\D}$ that is significantly far from either $\D$ or $\class$ when $\D$ is $\eps_2$-far from \class~--~sufficiently for us to be able to tell.
Besides the many technical details one has to control for the parameters to work out, one key element is the use of a tolerant testing algorithm for closeness of two distributions due to~\cite{VV:11:focs}, whose (tight) sample complexity scales as $n/\log n$ for a domain of size $n$. In order to get the right dependence on the effective support (required in particular for~\autoref{coro:main:tol:testing:pbd}), we have to perform a first test to identify the effective support of the distribution and check its size, in order to only call this tolerant closeness testing algorithm on this much smaller subset. (This additional preprocessing step itself has to be carefully done, and comes at the price of a slightly worse constant $C=C(c,\kappa)$ in the statement of the theorem.)

\subsection{Proof of~\autoref{theo:main:tol:testing:ub:almost}}

As described in the preceding section, the algorithm will rely on the ability to perform tolerant testing of equivalence between two unknown distributions (over some known domain of size $m$). This is ensured by an algorithm of Valiant and Valiant, restated below:
\begin{theorem}[{\cite[Theorem 3 and 4]{VV:11:focs}}]\label{theo:samp:closeness:tolerant}
    There exists an algorithm $\mathcal{E}$ which, given sampling access to two unknown distributions $\D_1,\D_2$ over $[m]$, satisfies the following. On input $\eps\in(0,1]$, it takes $O( \frac{1}{\eps^2}\frac{m}{\log m} )$ samples from $\D_1$ and $\D_2$, and outputs a value $\Delta$ such that $\dabs{\normone{\D_1-\D_2}-\Delta} \leq \eps$ with probability $1-1/\poly(m)$. (Furthermore, $\mathcal{E}$ runs in time $\poly(m)$.)
\end{theorem}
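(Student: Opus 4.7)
The plan is to follow the linear-estimator framework of Valiant and Valiant. First I would Poissonize the sample sizes: draw $\operatorname{Poi}(k)$ samples from each distribution for $k=\Theta\!\left(\frac{1}{\eps^{2}}\frac{m}{\log m}\right)$, so that the multiplicities of distinct domain elements become mutually independent. From the two samples, compute the joint \emph{fingerprint} $\{F_{i,j}\}_{i,j\geq 0}$, where $F_{i,j}$ counts how many domain elements were observed exactly $i$ times in the first sample and exactly $j$ times in the second. Since $\normone{\D_1-\D_2}$ is a symmetric functional of the pair-histogram of $(\D_1,\D_2)$, it suffices to build an estimator whose value is invariant under relabeling of $[m]$, and hence can be written as a function of the fingerprint alone.

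Next I would look for a \emph{linear} estimator $\Delta \eqdef \sum_{i,j} z_{i,j}\, F_{i,j}$ with coefficients $z_{i,j}\in\R$ to be determined. By linearity of expectation, the bias condition
\[
  \bigl|\shortexpect[\Delta] - \normone{\D_1-\D_2}\bigr| \leq \eps/2
\]
reduces, uniformly over all pairs $(a,b)\in[0,1]^2$, to the pointwise requirement
\[
  \Bigl|\,|a-b| - \sum_{i,j\geq 0} z_{i,j}\, e^{-ka}\tfrac{(ka)^i}{i!}\, e^{-kb}\tfrac{(kb)^j}{j!}\,\Bigr| \leq \tfrac{\eps}{2}(a+b) + o(1/m).
\]
For pairs $(a,b)$ in the ``high-probability'' regime $a+b = \Omega((\log m)/k)$, the Poisson concentration around $(ka,kb)$ is tight enough that the naive plug-in coefficients $z_{i,j} = |i-j|/k$ already meet the bias bound. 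The work is in the ``low-probability'' regime $a+b = O((\log m)/k)$, where one instead approximates $|a-b|$ by a bivariate polynomial in $(ka,kb)$ of degree $d = O(\log m)$, using a Chebyshev-type construction; the coefficients of this polynomial, re-expanded in the Poisson basis, furnish the $z_{i,j}$ in that regime.

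Finally I would bound the variance: since $F_{i,j}$ are sums of independent Bernoullis across domain elements (by Poissonization), $\operatorname{Var}[\Delta] \leq \sum_{i,j} z_{i,j}^{2}\, \shortexpect[F_{i,j}]$, which can be controlled provided the Chebyshev coefficients do not blow up. A Chebyshev concentration inequality then yields $|\Delta - \normone{\D_1-\D_2}| \leq \eps$ with probability $1-1/\poly(m)$, as claimed. The main obstacle is the low-probability regime: showing that $|a-b|$ on $[0, O((\log m)/k)]^{2}$ admits a degree-$O(\log m)$ polynomial approximation whose coefficients, when converted into the $z_{i,j}$, give variance $O(\eps^2)$ under $k = \Theta(m/(\eps^{2}\log m))$ samples. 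This Chebyshev/LP balancing act is precisely what forces the $m/\log m$ scaling and is the reason the stronger-than-$\sqrt m$ rate is achievable; assembling it into an explicit algorithm $\mathcal{E}$ (which can be phrased as solving a single linear program on the observed fingerprints) completes the proof.
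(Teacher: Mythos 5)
You should first note what the paper actually does with this statement: it is not proved there at all. \autoref{theo:samp:closeness:tolerant} is imported verbatim from Valiant and Valiant (FOCS 2011, their Theorems 3 and 4) and is used purely as a black box in the tolerant-testing upper bound of \autoref{sec:toltesting:ub}. So there is no internal proof to compare your argument against; what you have written is an attempt to re-derive the Valiant--Valiant result itself, which is a substantial piece of prior work rather than a lemma of this paper.

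As a reconstruction of the Valiant--Valiant argument, your skeleton (Poissonization, two-dimensional fingerprints $F_{i,j}$, a linear estimator $\sum_{i,j} z_{i,j}F_{i,j}$, plug-in coefficients in the high-probability regime, and a Chebyshev-type bivariate approximation of $|a-b|$ in the low-probability regime) is the right shape, but two steps you wave at are genuine gaps. First, the low-probability regime is exactly the hard part, and your accounting does not close as stated: $|a-b|$ is non-smooth on the diagonal, and a degree-$O(\log m)$ polynomial approximation on a box of side $O((\log m)/k)$ has pointwise error of order $1/k$, so summing the bias over up to $m$ domain elements gives $\Theta(m/k)=\Theta(\eps^2\log m)$, which is \emph{larger} than the allowed $\eps$ for constant $\eps$; making the bias and the coefficient sizes (hence the variance) simultaneously small at $k=\Theta\bigl(m/(\eps^2\log m)\bigr)$ requires the careful weighting/earthmoving analysis of Valiant--Valiant, which you assert rather than supply. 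Second, the concentration claim is off: a variance bound of $O(\eps^2)$ together with Chebyshev's inequality yields only constant success probability, not the stated $1-1/\poly(m)$; boosting by a median trick would cost an extra $\Theta(\log m)$ factor and destroy the $m/\log m$ rate, so one needs the stronger bounded-differences (McDiarmid-type) concentration that Valiant and Valiant establish for their fingerprint-based estimator. In short, your proposal is a reasonable summary of how the cited theorem is proved elsewhere, but it neither matches the paper (which simply cites the result) nor constitutes a complete proof on its own.
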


\noindent For the proof, we will also need this fact, similar to~\autoref{lemma:conditioned:class}, which relates the distance of two distributions to that of their conditional distributions on a subset of the domain:
\begin{fact}\label{lemma:conditioned:distr:distances}
Let $\D$ and $P$ be distributions over $[n]$, and $I\subseteq[n]$ an interval such that $\D(I) \geq 1- \alpha$ and $P(I) \geq 1- \beta$. Then,
\begin{itemize}
  \item $\normone{\D_I - P_I} \leq \frac{3}{2}\frac{\normone{\D - P}}{\D(I)} \leq 3\normone{\D - P}$ (the last inequality for $\alpha \leq \frac{1}{2}$); and 
  \item $\normone{\D_I - P_I} \geq \frac{}{}\normone{\D - P} - 2(\alpha+\beta)$.
\end{itemize}
\end{fact}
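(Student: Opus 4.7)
Both inequalities are purely computational manipulations comparing $\D_I(i)=\D(i)/\D(I)$ with $P_I(i)=P(i)/P(I)$, together with two observations: that $\abs{\D(I)-P(I)}\leq \totalvardist{\D}{P}=\frac{1}{2}\normone{\D-P}$, and that $\abs{\D(I)-P(I)}\leq \alpha+\beta$ (by the triangle inequality, since both $\D(I)$ and $P(I)$ are within $\alpha$ (resp.\ $\beta$) of $1$). The template is essentially the one already used in \autoref{lemma:conditioned:class} and in the proof of \autoref{lemma:distance:pbd}, so I would reuse that exact add-and-subtract trick.

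For the upper bound $\normone{\D_I-P_I}\leq \frac{3}{2}\normone{\D-P}/\D(I)$, I would write, for each $i\in I$,
\[
\frac{\D(i)}{\D(I)}-\frac{P(i)}{P(I)} \;=\; \frac{1}{\D(I)}\Bigl(\D(i)-P(i)+P(i)\bigl(1-\tfrac{\D(I)}{P(I)}\bigr)\Bigr),
\]
apply the triangle inequality, sum over $i\in I$ (so that the $P(i)$-weighted term contributes $\abs{1-\D(I)/P(I)}\cdot P(I)=\abs{P(I)-\D(I)}$), and then bound $\abs{P(I)-\D(I)}\leq \frac{1}{2}\normone{\D-P}$ and $\sum_{i\in I}\abs{\D(i)-P(i)}\leq\normone{\D-P}$. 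This yields the factor $\frac{1}{2}+1=\frac{3}{2}$ in front of $\normone{\D-P}/\D(I)$. The second step of that inequality ($\leq 3\normone{\D-P}$ when $\alpha\leq\frac{1}{2}$) is immediate from $\D(I)\geq 1-\alpha\geq\frac{1}{2}$.

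For the lower bound $\normone{\D_I-P_I}\geq \normone{\D-P}-2(\alpha+\beta)$, I would go the other direction: split $\normone{\D-P}$ as $\sum_{i\in I}\abs{\D(i)-P(i)}+\sum_{i\notin I}\abs{\D(i)-P(i)}$, bound the tail trivially by $\D(I^c)+P(I^c)\leq \alpha+\beta$, and for the part on $I$ use the decomposition
\[
\D(i)-P(i)\;=\;\D(I)\bigl(\D_I(i)-P_I(i)\bigr)+\bigl(\D(I)-P(I)\bigr)P_I(i),
\]
which after summing yields $\sum_{i\in I}\abs{\D(i)-P(i)}\leq \D(I)\normone{\D_I-P_I}+\abs{\D(I)-P(I)}\leq \normone{\D_I-P_I}+(\alpha+\beta)$. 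Combining the two contributions gives $\normone{\D-P}\leq \normone{\D_I-P_I}+2(\alpha+\beta)$, which is the desired inequality after rearrangement.

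\textbf{Main obstacle.} There is no real obstacle here: this is a routine two-sided comparison of unnormalized and conditional $\lp[1]$-distances. The only mild subtlety is to be careful with which of the two available bounds on $\abs{\D(I)-P(I)}$ to apply in each direction (the variation-distance bound $\frac{1}{2}\normone{\D-P}$ for the upper inequality, and the crude $\alpha+\beta$ bound for the lower one), so that the resulting constants match the stated $\frac{3}{2}$ and $2$ exactly.
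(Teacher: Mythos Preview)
Your proposal is correct and essentially matches the paper's proof. The upper bound argument is identical; for the lower bound, the paper instead starts from $\normone{\D_I-P_I}$ and applies the reverse triangle inequality on the same decomposition (obtaining $\frac{1}{\D(I)}(\normone{\D-P}-2(\alpha+\beta))$ before dropping the $1/\D(I)\geq 1$ factor), whereas you upper-bound $\normone{\D-P}$ directly---but this is the same add-and-subtract trick organized in the opposite direction, and your version is arguably cleaner since it sidesteps the sign issue in the paper's final step.
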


\iftrue
\begin{proof}
To establish the first item, write:
  \begin{align*}
      \normone{\D_I - P_I} &=\sum_{i\in I}\abs{ \frac{\D(i)}{\D(I)} - \frac{P(i)}{P(I)} }
      = \frac{1}{\D(I)}\sum_{i\in I}\abs{ \D(i) - P(i) + P(i)\big(1- \frac{\D(I)}{P(I)}\big) } \\
      &\leq \frac{1}{\D(I)}\big( \sum_{i\in I}\abs{ \D(i) - P(i) } + \abs{1- \frac{\D(I)}{P(I)} } \sum_{i\in I} P(i) \big)\\
      &= \frac{1}{\D(I)}\big( \sum_{i\in I}\abs{ \D(i) - P(i) } + \abs{P(I)- \D(I) } \big)
      \leq \frac{1}{\D(I)}\big( \sum_{i\in I}\abs{ \D(i) - P(i) } + \frac{1}{2}\normone{\D-P} \big)\\
     &\leq \frac{1}{\D(I)} \cdot \frac{3}{2}\normone{\D-P} 
  \end{align*}
  where we used the fact that $\abs{P(I)- \D(I) }\leq \totalvardist{\D}{P} = \frac{1}{2}\normone{\D-P}$.
 Turning now to the second item, we have:
  \begin{align*}
      \normone{\D_I - P_I} &= \frac{1}{\D(I)}\sum_{i\in I}\abs{ \D(i) - P(i) + P(i)\mleft(1- \frac{\D(I)}{P(I)}\mright) } 
      \geq \frac{1}{\D(I)}\big( \sum_{i\in I}\abs{ \D(i) - P(i) } - \abs{1- \frac{\D(I)}{P(I)} } \sum_{i\in I} P(i) \big)\\
      &= \frac{1}{\D(I)}\big( \sum_{i\in I}\abs{ \D(i) - P(i) } - \abs{ P(I)- \D(I) } \big)
      \geq \frac{1}{\D(I)}\big( \sum_{i\in I}\abs{ \D(i) - P(i) } - (\alpha+\beta) \big)\\
     &\geq \frac{1}{\D(I)}\big( \normone{\D-P} - \sum_{i\notin I} \abs{ \D(i) - P(i) } - (\alpha+\beta) \big)
     \geq \frac{1}{\D(I)}\big( \normone{\D-P} - 2(\alpha+\beta) \big) \\
     &\geq \normone{\D-P} - 2(\alpha+\beta).  
  \end{align*}
\end{proof}
\fi

\noindent With these two ingredients, we are in position to establish our theorem:
\begin{proofof}{\autoref{theo:main:tol:testing:ub:almost}}
The algorithm proceeds as follows, where we set $\eps\eqdef\frac{\eps_2-\eps_1}{17\kappa}$, $\theta\eqdef\eps_2 - ((6+c)\eps_1+11\eps)$, and $\tau \eqdef2 \frac{(3+c)\eps_1+5\eps}{2}$:
\begin{enumerate}[(1)]
  \item\label{algo:effectivesupport:toltest:step:0} using $O(\frac{1}{\eps^2})$ samples, get (with probability at least $1-1/10$, by~\autoref{theo:dkw:ineq}) a distribution $\tilde{\D}$ $\frac{\eps}{2}$-close to $\D$ in Kolmogorov distance; and let $I\subseteq[n]$ be the smallest interval such that $\tilde{\D}(I) > 1-\frac{3}{2}\eps_1-\eps$. Output \reject if $\abs{I} > M(n,\eps_1)$.
  \item\label{algo:effectivesupport:toltest:step:1} invoke $\Learner$ on $\D$ with parameters $\eps$ and failure probability $\frac{1}{10}$, to obtain a hypothesis $\hat{\D}$;
  \item\label{algo:effectivesupport:toltest:step:2}  call $\mathcal{E}$ (from~\autoref{theo:samp:closeness:tolerant}) on $\D_I$, $\hat{\D}_I$ with parameter $\frac{\eps}{6}$ to get an estimate $\hat{\Delta}$ of $\normone{\D_I-\hat{\D}_I}$;
  \item\label{algo:effectivesupport:toltest:step:2.5} output \reject if $\hat{\D}(I) < 1-\tau$; 
  \item\label{algo:effectivesupport:toltest:step:3}  compute ``offline'' (an estimate accurate within $\eps$ of) $\lp[1](\hat{\D},\class)$, denoted $\Delta$;
  \item\label{algo:effectivesupport:toltest:step:4}  output \reject is $\Delta+\hat{\Delta} > \theta$, and output \accept otherwise.
\end{enumerate}
The claimed sample complexity is immediate from Steps~\ref{algo:effectivesupport:toltest:step:1} and~\ref{algo:effectivesupport:toltest:step:2}, along with~\autoref{theo:samp:closeness:tolerant}. Turning to correctness, we condition on both subroutines meeting their guarantee (i.e., $\normone{\D-\hat{\D}} \leq c\cdot\opt+\eps$ and $\normone{\D-\hat{\D}} \in [\hat{\Delta} - \eps,\hat{\Delta} + \eps]$), which happens with probability at least $8/10-1/\poly(n)\geq 3/4$ by a union bound. 
\begin{itemize}
  \item Soundness: If $\lp[1](\D,\class) \leq \eps_1$, then $\D$ is $\eps_1$-close to some $P\in\class$, for which there exists an interval $J\subseteq[n]$ of size at most $M(n,\eps_1)$ such that $P(J) \geq 1-\eps_1$. It follows that $\D(J) \geq 1-\frac{3}{2}\eps_1$ (since $\abs{\D(J)-P(J)} \leq \frac{\eps_1}{2}$) and $\tilde{\D}(J) \geq 1-\frac{3}{2}\eps_1-2\cdot\frac{\eps}{2}\eps$; establishing existence of a good interval $I$ to be found (and Step~\ref{algo:effectivesupport:toltest:step:0} does not end with \reject). Additionally, $\normone{\D-\hat{\D}} \leq c\cdot\eps_1+\eps$ and by the triangle inequality this implies $\lp[1](\hat{\D},\class) \leq (1+c)\eps_1+\eps$.
  
  Moreover, as $\D(I) \geq \tilde{\D}(I) - 2\cdot\frac{\eps}{2} \geq 1-\frac{3}{2}\eps_1-2\eps$ and $\abs{\hat{\D}(I) - \D(I)} \leq \frac{1}{2}\normone{\D-\hat{\D}}$, we do have
  \[
    \hat{\D}(I) \geq 1-\frac{3}{2}\eps_1-2\eps - \frac{c\eps_1}{2}-\frac{\eps}{2} = 1-\tau
  \]
  and the algorithm does not reject in Step~\ref{algo:effectivesupport:toltest:step:2.5}. 
  To conclude, one has by~\autoref{lemma:conditioned:distr:distances} that
  \[
      \normone{\D_I-\hat{\D}_I} \leq \frac{3}{2}\frac{\normone{\D-\hat{\D}}}{\D(I)} \leq \frac{3}{2}\frac{(c\eps_1+\eps)}{1-\frac{3}{2}\eps_1-2\eps} \leq 3(c\eps_1+\eps) \tag{for $\eps_1 < 1/4$, as $\eps < 1/17$}
  \]
  Therefore, $\Delta+\hat{\Delta} \leq \lp[1](\hat{\D},\class)+\eps + \normone{\D_I-\hat{\D}_I}+\eps \leq (4c+1)\eps_1+6\eps \leq \eps_2 - ((6+c)\eps_1+11\eps) = \theta$ (the last inequality by the assumption on $\eps_2,\eps_1$), and the tester accepts.
  \item Completeness: If $\lp[1](\D,\class) > \eps_2$, then we must have $\normone{\D-\hat{\D}} + \lp[1](\hat{\D},\class) > \eps_2$. If the algorithm does not already reject in Step~\ref{algo:effectivesupport:toltest:step:2.5}, then $\hat{\D}(I) \geq 1-\tau$. But, by~\autoref{lemma:conditioned:distr:distances},
  \begin{align*}
      \normone{\D_I-\hat{\D}_I} &\geq \normone{ \D - \hat{\D} } - 2(\D(I^c) + \hat{\D}(I^c)) \geq \normone{\D_I-\hat{\D}_I} - 2\Big(\frac{3}{2}\eps_1 + 2\eps+\tau\Big) \\
      &= \normone{ \D - \hat{\D} } - ((6+c)\eps_1+9\eps) 
  \end{align*}
  we then have $\normone{\D_I-\hat{\D}_I} + \lp[1](\hat{\D},\class) > \eps_2  - ((6+c)\eps_1+9\eps)$. 
  This implies $\Delta+\hat{\Delta} > \eps_2 - ((6+c)\eps_1+9\eps) -2\eps = \eps_2 - ((6+c)\eps_1+11\eps)  = \theta$, and the tester rejects.
\end{itemize}
Finally, the testing algorithm defined above is computationally efficient as long as both the learning algorithm (Step~\ref{algo:effectivesupport:toltest:step:1}) and the estimation procedure (Step~\ref{algo:effectivesupport:toltest:step:3}) are.

\end{proofof}

\addcontentsline{toc}{section}{References}
\bibliographystyle{alpha}
\bibliography{testing-from-structural} 
\appendix
\makeatletter{}\section{Proof of \autoref{lemma:estimate:l2:add}}\label{app:l2:proof}

We now give the proof of~\autoref{lemma:estimate:l2:add}, restated below:
\lemmaestimateltwoadd*
\begin{proof}
To do so, we first describe an algorithm that distinguishes between $\normtwo{\D-\uniform}^2 \geq \eps^2/{n}$ and $\normtwo{\D-\uniform}^2 < \eps^2/(2n)$ with probability at least $2/3$, using $C\cdot\frac{\sqrt{n}}{\eps^2}$ samples. Boosting the success probability to $1-\delta$ at the price of a multiplicative $\log\frac{1}{\delta}$ factor can then be achieved by standard techniques.

Similarly as in the proof of Theorem 11 (whose algorithm we use, but with a threshold $\tau\eqdef \frac{3}{4}\frac{m^2\eps^2}{n}$ instead of $\frac{4m}{\sqrt{n}}$), define the quantities
\[
Z_k \eqdef \left(X_k-\frac{m}{n}\right)^2 - X_k,\qquad k\in[n]
\]
and $Z\eqdef\sum_{k=1}^n Z_k$, where the $X_k$'s (and thus the $Z_k$'s) are independent by Poissonization, and $X_k\sim\poisson{m \D(k)}$. It is not hard to see that $\shortexpect Z_k = \Delta_k^2$, where $\Delta_k\eqdef (\frac{1}{n}-\D(k))$, so that $\shortexpect Z = m^2\normtwo{\D-\uniform}^2$. Furthermore, we also get
\[
  \var Z_k = 2m^2\left(\frac{1}{n}-\Delta_k\right)^2 + 4m^3\left(\frac{1}{n}-\Delta_k\right)\Delta_k
\]
so that
\begin{equation}\label{eq:estimate:l2:add:variance}
  \var Z = 2m^2\left( \sum_{k=1}^n \Delta_k^2 + \frac{1}{n} -2 m\sum_{k=1}^n \Delta_k^3\right)
\end{equation}
(after expanding and since $\sum_{k=1}^n \Delta_k = 0$).

\paragraph{Soundness.} \textit{\emph{Almost} straight from \cite{DKN:15}, but the threshold has changed.}
Assume $\Delta^2\eqdef\normtwo{\D-\uniform}^2 \geq \eps^2/n$; we will show that $\probaOf{ Z < \tau } \leq 1/3$. By Chebyshev's inequality, it is sufficient to show that
$\tau \leq \shortexpect Z - \sqrt{3}\sqrt{\var Z}$, as \[
    \probaOf{ \shortexpect Z - Z > \sqrt{3}\sqrt{\var Z} } \leq 1/3\;.
\]
As $\tau < \frac{3}{4}\shortexpect Z$, arguing that $\sqrt{3}\sqrt{\var Z} \leq \frac{1}{4}\shortexpect Z$ is enough, i.e. that $48\var Z \leq (\shortexpect Z)^2$. From \eqref{eq:estimate:l2:add:variance}, this is equivalent to showing
\[
    \Delta^2 + \frac{1}{n} -2 m\sum_{k=1}^n \Delta_k^3 \leq \frac{m^2\Delta^4}{96}\;.
\]
We bound the LHS term by term.
\begin{itemize}
  \item As $\Delta^2 \geq \frac{\eps^2}{n}$, we get $m^2\Delta^2 \geq \frac{C^2}{\eps^2}$, and thus $\frac{m^2\Delta^4}{288} \geq \frac{C^2}{288\eps^2}\Delta^2 \geq \Delta^2$ (as $C \geq 17$ and $\eps \leq 1$).
  \item Similarly, $\frac{m^2\Delta^4}{288} \geq \frac{C^2}{288\eps^2}\cdot \frac{\eps^2}{n} \geq \frac{1}{n}$. 
  \item Finally, recalling that\footnotemark
  \[
    \sum_{k=1}^n \abs{\Delta_k}^3 \leq \left( \sum_{k=1}^n \abs{\Delta_k}^2 \right)^{3/2} = \Delta^3
  \]
  we get that $\abs{2m\sum_{k=1}^n \abs{\Delta_k}^3} \leq 2m \Delta^3 = \frac{m^2 \Delta^4}{288} \cdot \frac{2\cdot 288}{m\Delta} \leq \frac{m^2 \Delta^4}{288}$, using the fact that
  $\frac{m\Delta}{2\cdot 288} \geq \frac{C}{576\eps} \geq 1$ (by choice of $C \geq 576$).
\end{itemize}
Overall, the LHS is at most $3\cdot \frac{m^2 \Delta^4}{288} = \frac{m^2 \Delta^4}{96}$, as claimed.

\footnotetext{
For any sequence $x=(x_1,\dots,x_n)\in\R^n$, $p > 0 \mapsto \norm{x}_p$ is non-increasing. In particular, for $0 < p \leq q <\infty$,
\[
\left(\sum_i \abs{x_i}^q\right)^{1/q} = \norm{x}_q \leq \norm{x}_p = \left(\sum_i \abs{x_i}^p\right)^{1/p}\;.
\] To see why, one can easily prove that if $\norm{x}_p = 1$, then $\norm{x}_q^q \leq 1$ (bounding each term $\abs{x_i}^q \leq \abs{x_i}^p$), and therefore $\norm{x}_q \leq 1 = \norm{x}_p$. Next, for the general case, apply this to $y = x/\norm{x}_p$, which has unit $\lp[p]$ norm, and conclude by homogeneity of the norm.
}

\paragraph{Completeness.} Assume $\Delta^2=\normtwo{\D-\uniform}^2 < \eps^2/(4n)$. We need to show that $\probaOf{ Z \geq \tau } \leq 1/3$. Chebyshev's inequality implies
\[
    \probaOf{ Z - \shortexpect Z > \sqrt{3}\sqrt{\var Z} } \leq 1/3
\]
and therefore it is sufficient to show that 
\[
    \tau \geq \shortexpect Z + \sqrt{3}\sqrt{\var Z}
\]
Recalling the expressions of $\shortexpect Z$ and $\var Z$ from \eqref{eq:estimate:l2:add:variance}, this is tantamount to showing
\[
    \frac{3}{4}\frac{m^2\eps^2}{n} \geq m^2\Delta^2 + \sqrt{6}m\sqrt{\Delta^2 + \frac{1}{n} -2m \sum_{k=1}^n \Delta_k^3}
\]
or equivalently
\[
    \frac{3}{4} \frac{m}{\sqrt{n}}\eps^2 \geq m \sqrt{n} \Delta^2 + \sqrt{6} \sqrt{1 + n\Delta^2 -2nm \sum_{k=1}^n \Delta_k^3}\;.
\]
Since $\sqrt{1 + n\Delta^2 -2 n m  \sum_{k=1}^n \Delta_k^3} \leq \sqrt{1 + n\Delta^2} \leq \sqrt{1 + \eps^2/4} \leq \sqrt{5/4}$, we get that the second term is at most $\sqrt{30/4} < 3$. All that remains is to show that $m\sqrt{n}\Delta^2 \geq 3m\frac{\eps^2}{4\sqrt{n}}-3$. But as $\Delta^2 < \eps^2/(4n)$, $m\sqrt{n}\Delta^2 \leq m\frac{\eps^2}{4\sqrt{n}}$; and our choice of $m \geq C\cdot\frac{\sqrt{n}}{\eps^2}$ for some absolute constant $C \geq 6$  ensures this holds.
\end{proof}
 
\makeatletter{}\section{Proof of~\autoref{theo:structural:mhr}}\label{app:structural:proofs}

In this section, we prove our structural result for MHR distributions, ~\autoref{theo:structural:mhr}:
\theostructuralmhr*
\begin{proof} We reproduce and adapt the argument of~\cite[Section 5.1]{CDSS:13} to meet our definition of decomposability (which, albeit related, is incomparable to theirs). First, we modify the algorithm at the core of their constructive proof, in~\autoref{algo:mhr}: note that the only two changes are in Steps~\ref{algo:mhr:step:iiprime} and \ref{algo:mhr:step:iprimeprime}, where we use parameters respectively $\frac{\gamma}{n}$ and $\frac{\gamma}{n^2}$.
\begin{algorithm}
  \begin{algorithmic}[1]
    \Require explicit description of MHR distribution $\D$ over $[n]$; accuracy parameter $\gamma > 0$
    \State Set $J\gets [n]$ and $\mathcal{Q}\gets \emptyset$.
    \State Let $I\gets\textsc{Right-Interval}(\D,J,\frac{\gamma}{n})$ and $I^\prime\gets\textsc{Right-Interval}(\D,J\setminus I,\frac{\gamma}{n})$. Set $J\gets J\setminus (I\cup I^\prime)$. \label{algo:mhr:step:iiprime}
    \State Set $i\in J$ to be the smallest integer such that $\D(i)\geq \frac{\gamma}{n^2}$. 
            If no such $i$ exists, let $I^{\prime\prime}\gets J$ and go to Step~\ref{algo:mhr:laststep}. 
            Otherwise, let $I^{\prime\prime}\gets \{1,\dots,i-1\}$ and $J\gets J\setminus I^{\prime\prime}$.
            \label{algo:mhr:step:iprimeprime}
    \While{ $J\neq\emptyset$ }\label{algo:mhr:whileloop}
      \State Let $j\in J$ bet the smallest integer such that $\D(j) \notin [\frac{1}{1+\gamma}, 1+\gamma]\D(i)$.
            If no such $j$ exists, let $I^{\prime\prime\prime}\gets J$; otherwise let $I^{\prime\prime\prime}\gets \{i,\dots, j-1\}$.
      \State Add $I^{\prime\prime\prime}$ to $\mathcal{Q}$ and set $J\gets J\setminus I^{\prime\prime\prime}$.
      \State Let $i\gets j$.
    \EndWhile
    \State Return $\mathcal{Q}\cup\{I,I^{\prime},I^{\prime\prime}\}$\label{algo:mhr:laststep}
  \end{algorithmic}
  \caption{\label{algo:mhr} $\textsc{Decompose-MHR}^\prime(\D,\gamma)$}.
\end{algorithm}
Following the structure of their proof, we write $\mathcal{Q}=\{I_1,\dots,I_{\abs{\mathcal{Q}}}\}$ with $I_i=[a_i,b_i]$, and define $\mathcal{Q}^\prime=\setOfSuchThat{ I_i\in \mathcal{Q} }{ \D(a_i) > \D(a_{i+1})}$, $\mathcal{Q}^{\prime\prime}=\setOfSuchThat{ I_i\in \mathcal{Q} }{ \D(a_i) \leq \D(a_{i+1})}$.

\noindent We immediately obtain the analogues of their Lemmas~5.2 and~5.3:
\begin{lemma}\label{lemma:lemma52:cdss:13}
We have $\prod_{I_i\in\mathcal{Q}^\prime} \frac{\D(a_i)}{\D(a_{i+1})} \leq \frac{n}{\gamma}$.
\end{lemma}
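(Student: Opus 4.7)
The strategy is to exploit the MHR structure by rewriting the pointwise values in terms of the survival function and then telescoping. Let $S(i) \eqdef \sum_{j=i}^{n} \D(j)$ be the survival function of $\D$, and recall that by~\autoref{def:mhr} the hazard rate $H(i) = \D(i)/S(i)$ is non-decreasing. In particular, $\D(i) = H(i)\cdot S(i)$, so for any two points $a < a'$ in the support one has
\[
\frac{\D(a)}{\D(a')} \;=\; \frac{H(a)}{H(a')}\cdot\frac{S(a)}{S(a')} \;\leq\; \frac{S(a)}{S(a')},
\]
since the first factor is at most $1$ by monotonicity of $H$.

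Applying this with $a = a_i$ and $a' = a_{i+1}$ for each $I_i \in \mathcal{Q}^\prime$, the desired product is bounded by
\[
\prod_{I_i \in \mathcal{Q}^\prime} \frac{\D(a_i)}{\D(a_{i+1})} \;\leq\; \prod_{I_i \in \mathcal{Q}^\prime} \frac{S(a_i)}{S(a_{i+1})}.
\]
Since $S$ is non-increasing, every factor $S(a_i)/S(a_{i+1})$ (for any consecutive pair) is at least $1$, so enlarging the index set from $\mathcal{Q}^\prime$ to all of $\mathcal{Q}$ only increases the product. The resulting product telescopes:
\[
\prod_{I_i \in \mathcal{Q}^\prime} \frac{S(a_i)}{S(a_{i+1})} \;\leq\; \prod_{i=1}^{\abs{\mathcal{Q}}} \frac{S(a_i)}{S(a_{i+1})} \;=\; \frac{S(a_1)}{S(a_{\abs{\mathcal{Q}}+1})},
\]
where we set $a_{\abs{\mathcal{Q}}+1} \eqdef b_{\abs{\mathcal{Q}}}+1$, i.e.\ the first index immediately to the right of the last interval of $\mathcal{Q}$.

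It remains to bound this telescoped ratio. The numerator satisfies $S(a_1) \leq 1$ trivially. For the denominator, observe that by construction (Step~\ref{algo:mhr:step:iiprime} of~\autoref{algo:mhr}) the two rightmost intervals $I, I^\prime$ were carved out of $[n]$ before the while loop began, and each such interval has $\D$-mass at least $\gamma/n$ by the specification of $\textsc{Right-Interval}$. Hence the points lying in $I \cup I^\prime$ all lie to the right of $a_{\abs{\mathcal{Q}}+1}$, giving
\[
S(a_{\abs{\mathcal{Q}}+1}) \;\geq\; \D(I) + \D(I^\prime) \;\geq\; \frac{\gamma}{n}.
\]
Combining these bounds yields the claimed inequality $\prod_{I_i \in \mathcal{Q}^\prime} \D(a_i)/\D(a_{i+1}) \leq n/\gamma$.

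The only delicate point is the lower bound on $S(a_{\abs{\mathcal{Q}}+1})$, which hinges on the guarantees of $\textsc{Right-Interval}$ inherited from~\cite{CDSS:13} (namely, that each call returns an interval of mass at least $\gamma/n$, subject to enough mass remaining). Everything else is a direct consequence of the MHR property and a telescoping argument.
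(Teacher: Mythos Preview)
Your overall strategy --- bounding each ratio $\D(a_i)/\D(a_{i+1})$ by the survival-function ratio $S(a_i)/S(a_{i+1})$ via the MHR property, then telescoping over all of $\mathcal{Q}$ --- is correct and is essentially the argument from~\cite{CDSS:13} (the paper itself does not reprove the lemma but simply cites it as the analogue of their Lemma~5.2).

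There is, however, one inaccuracy in your final step. You assert that ``each such interval has $\D$-mass at least $\gamma/n$ by the specification of $\textsc{Right-Interval}$,'' but this is not what $\textsc{Right-Interval}$ guarantees: as the paper itself notes later in the same proof, a non-singleton interval returned by $\textsc{Right-Interval}(\cdot,\cdot,\gamma/n)$ has mass \emph{at most} $\gamma/n$, not at least. What \emph{is} true --- and is precisely what the paper invokes in the proof sketch of the subsequent lemma --- is that $\D(I \cup I^\prime) \geq \gamma/n$: if $I$ is a singleton of mass exceeding $\gamma/n$ this is immediate; otherwise $I$ is the maximal suffix of $[n]$ with mass at most $\gamma/n$, so adjoining even the single point immediately to its left (which $I^\prime$ necessarily contains) already pushes the total mass above $\gamma/n$. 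This weaker statement is all you need for $S(a_{\abs{\mathcal{Q}}+1}) \geq \gamma/n$, so your argument goes through once the justification is corrected.
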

\begin{lemma}
Step~\ref{algo:mhr:whileloop} of~\autoref{algo:mhr} adds at most $\bigO{\frac{1}{\eps}\log\frac{n}{\eps}}$ intervals to $\mathcal{Q}$.
\end{lemma}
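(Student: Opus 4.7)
The plan is to partition $\mathcal{Q}$ into the two sets $\mathcal{Q}'$ (``down-steps'' with $\D(a_i) > \D(a_{i+1})$) and $\mathcal{Q}''$ (``up-steps'' with $\D(a_i) \leq \D(a_{i+1})$), and bound the size of each separately. Both bounds come from combining (i) the termination condition of the while loop, which guarantees that each ratio $\D(a_{i+1})/\D(a_i)$ lies outside $[1/(1+\gamma),\,1+\gamma]$; and (ii) a product-of-ratios estimate, controlling how much ``room'' the pmf has to oscillate.

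For $\mathcal{Q}'$, each transition satisfies $\D(a_i)/\D(a_{i+1}) \geq 1+\gamma$ by the loop's stopping rule. Plugging this into the MHR-driven product bound provided by \autoref{lemma:lemma52:cdss:13}, we get $(1+\gamma)^{|\mathcal{Q}'|} \leq \prod_{I_i \in \mathcal{Q}'} \D(a_i)/\D(a_{i+1}) \leq n/\gamma$. Taking logs and using $\log(1+\gamma) = \Theta(\gamma)$, this yields $|\mathcal{Q}'| = O\bigl(\frac{1}{\gamma}\log\frac{n}{\gamma}\bigr)$.

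For $\mathcal{Q}''$, I would use a telescoping argument combined with the preprocessing in Steps~\ref{algo:mhr:step:iiprime}--\ref{algo:mhr:step:iprimeprime}. Telescoping across all intervals gives $\prod_{i=1}^{|\mathcal{Q}|} \D(a_{i+1})/\D(a_i) = \D(a_{|\mathcal{Q}|+1})/\D(a_1)$. The preprocessing ensures $\D(a_1) \geq \gamma/n^2$ (since $I''$ absorbs every point with pmf smaller than this threshold), while trivially $\D(a_{|\mathcal{Q}|+1}) \leq 1$, so the total product is at most $n^2/\gamma$. Combining with \autoref{lemma:lemma52:cdss:13} rewritten as $\prod_{\mathcal{Q}'} \D(a_{i+1})/\D(a_i) \geq \gamma/n$, we isolate $\prod_{\mathcal{Q}''} \D(a_{i+1})/\D(a_i) \leq (n^2/\gamma)/(\gamma/n) = n^3/\gamma^2$. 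Each factor in this product is $\geq 1+\gamma$, so the same logarithm-taking step gives $|\mathcal{Q}''| = O\bigl(\frac{1}{\gamma}\log\frac{n}{\gamma}\bigr)$.

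Adding the two bounds yields the claim. The subtle point I expect to need care with is the telescoping: the ``boundary'' values $\D(a_1)$ and $\D(a_{|\mathcal{Q}|+1})$ must be two-sidedly controlled, and this is exactly what the preprocessing (stripping off $I$, $I'$, $I''$) in the algorithm is designed to guarantee. Without those first three intervals being peeled off, the ratio could blow up or the initial pmf could be $0$, and neither piece of the argument would go through.
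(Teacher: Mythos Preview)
Your proposal is correct and follows essentially the same approach as the paper: split $\mathcal{Q}$ into $\mathcal{Q}'$ and $\mathcal{Q}''$ and bound each via a product-of-ratios argument, with the $\mathcal{Q}''$ bound obtained by telescoping, combining \autoref{lemma:lemma52:cdss:13} with the preprocessing guarantee $\D(a_1)\geq \gamma/n^2$. The one small difference is in how you handle $\mathcal{Q}'$: you plug the loop condition $\D(a_i)/\D(a_{i+1}) > 1+\gamma$ directly into \autoref{lemma:lemma52:cdss:13} to get $(1+\gamma)^{|\mathcal{Q}'|}\leq n/\gamma$, whereas the paper instead observes that $\D(I\cup I')\geq \gamma/n$ and defers to the MHR-based argument of \cite[Lemma~5.3]{CDSS:13} to reach the same inequality. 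Your route is arguably more self-contained, since it uses only the already-stated \autoref{lemma:lemma52:cdss:13}. (Incidentally, your $n^3/\gamma^2$ is the correct constant for the $\mathcal{Q}''$ product; the paper's $n^3/\gamma$ appears to be a typo, inconsequential for the asymptotics.)
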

\begin{proof}[Sketch]
This derives from observing that now $\D(I\cup I^\prime) \geq \gamma/n$, which as in~\cite[Lemma~5.3]{CDSS:13} in turn implies 
\[
  1 \geq \frac{\gamma}{n}(1+\gamma)^{\abs{\mathcal{Q}^{\prime}}-1}
\]
so that $\abs{\mathcal{Q}^{\prime}} = \bigO{\frac{1}{\eps}\log\frac{n}{\eps}}$.

Again following their argument, we also get 
\[
  \frac{\D(a_{\abs{\mathcal{Q}}+1})}{\D(a_1)} = \prod_{I_i\in\mathcal{Q}^{\prime\prime}} \frac{\D(a_{i+1})}{\D(a_i)}\cdot \prod_{I_i\in\mathcal{Q}^\prime} \frac{\D(a_{i+1})}{\D(a_i)}
\]
by combining~\autoref{lemma:lemma52:cdss:13} with the fact that $\D(a_{\abs{\mathcal{Q}}+1} \leq 1$ and that by construction $\D(a_i) \geq \gamma/n^2$, we get 
\[
    \prod_{I_i\in\mathcal{Q}^{\prime\prime}} \frac{\D(a_{i+1})}{\D(a_i)} \leq \frac{n}{\gamma} \cdot \frac{n^2}{\gamma} = \frac{n^3}{\gamma}\ .
\] 
But since each term in the product is at least $(1+\gamma)$ (by construction of $\mathcal{Q}$ and the definition of $\mathcal{Q}^{\prime\prime}$), this leads to
\[
  (1+\gamma)^{\abs{ \mathcal{Q}^{\prime\prime} }} \leq \frac{n^3}{\gamma}
\]
and thus $\abs{ \mathcal{Q}^{\prime\prime} } = \bigO{\frac{1}{\eps}\log\frac{n}{\eps}}$ as well.
\end{proof}

It remains to show that $\mathcal{Q}\cup\{I,I^{\prime},I^{\prime\prime}\}$ is indeed a good decomposition of $[n]$ for $\D$, as per~\autoref{def:struct:dec:split}. Since by construction every interval in $\mathcal{Q}$ satisfies~\autoref{def:struct:item:flat}, we only are left with the case of $I$, $I^{\prime}$ and $I^{\prime\prime}$. For the first two, as they were returned by $\textsc{Right-Interval}$ either \textsf{(a)} they are singletons, in which case~\autoref{def:struct:item:flat} trivially holds; or \textsf{(b)} they have at least two elements, in which case they have probability mass at most $\frac{\gamma}{n}$ (by the choice of parameters for $\textsc{Right-Interval}$) and thus~\autoref{def:struct:item:light} is satisfied. Finally, it is immediate to see that by construction $\D(I^{\prime\prime}) \leq n\cdot \gamma/n^2 = \gamma/n$, and~\autoref{def:struct:item:light} holds in this case as well.
\end{proof}
 
\makeatletter{}\section{Proofs from~\autoref{sec:structural}}\label{app:structural:projection:proofs}

This section contains the proofs omitted from~\autoref{sec:structural}, namely the distance estimation procedures for $t$-piecewise degree-$d$ (\autoref{theo:degreed:poly-projection}), monotone hazard rate (\autoref{lemma:distance:mhr:eff}), and log-concave distributions (\autoref{lemma:distance:log:eff}).

\subsection{Proof of~\autoref{theo:degreed:poly-projection}}\label{app:structural:projection:proofs:degreed}

In this section, we prove the following:
\begin{theorem} \label{thm:project:single:poly}
Let $p$ be a $\ell$-histogram over $[-1,1)$. There is an algorithm $\textsc{ProjectSinglePoly}(d,\eps)$
which runs in time $\poly(\ell, d+1,1/\eps)$, and outputs a degree-$d$ polynomial $q$ which defines a pdf over $[-1,1) $
such that $\normone{p-q} \leq 3 \lp[1](p,\classpoly[d]) + O(\eps)$.
\end{theorem}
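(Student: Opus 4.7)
The plan is to follow the linear-programming approach of~\cite{CDSS:14} for polynomial fitting, specialized to yield a single degree-$d$ polynomial pdf approximating the $\ell$-histogram $p$. The pivot of the argument is a structural reduction from $\lp[1]$ distance to the so-called $\mathcal{A}_k$ distance (maximum mass discrepancy over unions of at most $k$ disjoint intervals), combined with a polynomial-sized LP encoding of the latter.

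First I would establish the following sign-change reduction: if $p$ is an $\ell$-histogram and $q$ is a degree-$d$ polynomial distribution, then $p-q$ is piecewise polynomial with at most $\ell$ pieces of degree $\leq d$, hence has at most $k = \bigO{\ell(d+1)}$ sign changes on $[-1,1)$. Since $\int p = \int q = 1$, this yields
$\normone{p-q} = 2 \,\normone{p-q}_{\mathcal{A}_k}$,
where $\normone{f}_{\mathcal{A}_k} \eqdef \sup_S \abs{\int_S f}$ ranges over unions $S$ of at most $k$ intervals. Up to a factor of $2$, optimizing $\lp[1]$ distance to $\classpoly[d]$ is therefore equivalent to optimizing $\mathcal{A}_k$ distance.

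Next I would parametrize $q$ by its $d+1$ Chebyshev coefficients $(a_0,\dots,a_d)$ and discretize $[-1,1)$ via a grid $G$ of size $N = \poly(\ell,d,1/\eps)$, chosen fine enough that (i) snapping interval endpoints in the $\mathcal{A}_k$ objective to $G$ costs at most $O(\eps)$, and (ii) enforcing $q \geq 0$ only at grid points (combined with a Markov-type Lipschitz bound on polynomials of bounded degree and coefficient norm) ensures pointwise nonnegativity up to $O(\eps)$. The $\mathcal{A}_k$-objective is not directly linear in $(a_0,\dots,a_d)$, since it is a maximum over the $\leq 2k$ interval endpoints and the $k$ associated signs; however, once both are fixed, $\sum_j \sigma_j(p-q)(J_j)$ is linear in the coefficients. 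I would therefore enumerate the $\binom{N}{2k}\cdot 2^k = \poly(\ell,d,1/\eps)$ choices of endpoints-and-signs, solve a small LP for each (minimizing the fixed-sign surrogate subject to $\int q=1$ and $q\geq 0$ on $G$), and return the best resulting $\hat q$.

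The main obstacle will be a careful accounting of the composite error budget to certify the constant $3$ in the statement. Explicitly, if $q^* \in \classpoly[d]$ attains $\opt \eqdef \lp[1](p,\classpoly[d])$, then $\normone{p-q^*}_{\mathcal{A}_k}\leq \frac{1}{2}\opt$, and the enumeration LP produces $\hat q$ with $\normone{p-\hat q}_{\mathcal{A}_k}\leq \normone{p-q^*}_{\mathcal{A}_k}+O(\eps)$; applying the sign-change identity to $p-\hat q$ gives a pristine $\opt + O(\eps)$ bound. The factor $3$ then provides slack to absorb (a) the relaxation of $q\geq 0$ from pointwise to grid, (b) the snapping of endpoints to $G$, and (c) the fact that the enumerated sign pattern may not coincide exactly with the true sign pattern of $p-\hat q$, each of which may cost a small multiplicative overhead on the $\mathcal{A}_k$ value before one converts back to $\lp[1]$. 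Verifying that these losses compose to a multiplicative $3$ (rather than $2$) plus an additive $O(\eps)$, while keeping the runtime $\poly(\ell,d+1,1/\eps)$, will be the technically delicate step; once established, the theorem follows.
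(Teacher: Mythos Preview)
Your enumeration step does not run in polynomial time. You set $k = \bigO{\ell(d+1)}$ from the sign-change count of $p-q$ and then propose to enumerate $\binom{N}{2k}\cdot 2^k$ configurations of endpoints and signs on a grid of size $N = \poly(\ell,d,1/\eps)$. But $\binom{N}{2k}$ is of order $N^{\Theta(k)}$, which with $k$ growing linearly in $\ell(d+1)$ is exponential in $\ell$ and $d$, not polynomial. The equality you wrote, $\binom{N}{2k}\cdot 2^k = \poly(\ell,d,1/\eps)$, is simply false.

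The paper avoids this blowup by never reducing $\normone{p-q}$ to an $\mathcal{A}_k$ distance with $k$ depending on $\ell$. Instead it (i) fixes a $(p,\Theta(\eps/(d+1)))$-uniform partition into $z=\Theta((d+1)/\eps)$ pieces, (ii) solves a \emph{single} LP in the Chebyshev coefficients of the CDF $F$ together with per-piece slack variables $w_\ell,y_\ell$ whose $\ell_1$ mass $\tau$ lower-bounds $\opt$, and (iii) uses the $\mathcal{A}_{d+1}$ identity only between two degree-$d$ polynomials, the optimal $r$ and the LP output $f$, where the sign-change count is at most $d$ regardless of $\ell$. A uniform-partition lemma converts the LP constraints into an $\mathcal{A}_{d+1}$ bound on $(p+w)-f$, and triangle inequalities then give $\normone{r-f}=2\norm{r-f}_{\mathcal{A}_{d+1}}\le 2\opt+O(\eps)$, hence $\normone{p-f}\le 3\opt+O(\eps)$. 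The dependence on $\ell$ enters only through evaluating $p$ on the partition, so the LP has $\poly(d,1/\eps)$ variables and constraints and the overall running time is $\poly(\ell,d+1,1/\eps)$. Your approach would need either a different argument that keeps the number of relevant sign changes independent of $\ell$, or a way to optimize the $\mathcal{A}_k$ objective without enumeration.
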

As mentioned in~\autoref{sec:structural}, the proof of this statement is a rather straightforward adaption of the proof of~\cite[Theorem 9]{CDSS:14}, with two differences: first, in our setting there is no uncertainty nor probabilistic argument due to sampling, as we are provided with an explicit description of the histogram $p$. Second, Chan et al. require some ``well-behavedness'' assumption on the distribution $p$ (for technical reasons essentially due to the sampling access), that we remove here. Besides these two points, the proof is almost identical to theirs, and we only reproduce (our modification of) it here for the sake of completeness.
(Any error introduced in the process, however, is solely our responsibility.)
\begin{proof}
\newcommand{\rr}{{r}} \newcommand{\pp}{{p}} Some preliminary definitions will be helpful:
\begin{definition}[Uniform partition]
  Let $p$ be a subdistribution on an interval $I \subseteq [-1,1)$.
  A partition $\mathcal{I} = \{I_1, \dots, I_\ell\}$ of $I$ is
  \emph{$(p,\eta)$-uniform} if $p(I_j) \leq \eta$ for all $1\leq j\leq \ell$.
\end{definition}

We will also use the following notation:  For this subsection, let $I = {[-1,1)}$ ({$I$ will denote a subinterval of $[-1,1)$ when the results
are applied in the next subsection}).
We write $\|f\|^{(I)}_{1}$ to denote $\int_{I} |f(x)| dx$,
and we write $\dtv^{(I)}(p,q)$ to denote $\normone{p-q}^{(I)}/2$.
We write $\opt^{(I)}_{1,d}$ to denote the {infimum of the} 
distance $\normone{p-g}^{(I)}$ between $p$ and any degree-$d$
subdistribution $g$ on $I$ that satisfies $g(I) = p(I)$.

The key step of {\sc ProjectSinglePoly} is 
Step~\ref{step:projectsinglepoly:find} where it calls the {\sc FindSinglePoly} procedure.
In this procedure
$T_i(x)$ denotes the degree-$i$ Chebychev polynomial
of the first kind.
The function {\sc FindSinglePoly} should be thought of
as the CDF of a ``quasi-distribution'' $f$; we say that
$f=F'$ is a ``quasi-distribution'' 
and not a \textit{bona fide} probability distribution because it is
not guaranteed to be non-negative everywhere on $[-1,1)$.  Step~\ref{step:findsinglepoly:2}
of {\sc FindSinglePoly} processes
$f$ slightly to obtain a polynomial $q$ which is an actual distribution over
$[-1,1).$
\begin{algorithm}
  \caption{{\sc ProjectSinglePoly}}
  \begin{algorithmic}[1]
    \Require parameters $d,\eps$; and the full description of a $\ell$-histogram $p$ over $[-1,1)$.
    \Ensure a degree-$d$ distribution $q$ such that $\dtv(p,q) \leq 3 \cdot \opt_{1,d} + O(\eps)$
    \State Partition $[-1,1)$ into $z = \Theta((d+1)/\eps)$ intervals $I_0 = [i_0,i_1), \dots, I_{z-1}=[i_{z-1},i_z)$, where $i_0=-1$ and $i_z=1$, such that
  for each $j \in \{1,\dots,z\}$ we have $p(I_j) = \Theta(\eps/(d+1))$ or ($\abs{I_j}=1$ and $p(I_j)=\bigOmega{\eps/(d+1)}$).
    \State\label{step:projectsinglepoly:find} Call {\sc FindSinglePoly}($d$, $\eps$, $\eta:=\Theta(\eps/(d+1))$, $\{I_0,\dots,I_{z-1}\}$, $p$ and output the hypothesis $q$ that it returns.
  \end{algorithmic}
\end{algorithm}

\begin{algorithm}
  \caption{{\sc FindSinglePoly}}
  \begin{algorithmic}[1]
    \Require  degree parameter $d$;
error parameter $\eps$; parameter $\eta$; $(p,\eta)$-uniform partition
$\mathcal{I}_I = \{I_1, \dots, I_{{z}}\}$ of interval $I$ into ${{z}}$ intervals {such that $\sqrt{
\eps z}\cdot \eta \leq \eps/2$}; a subdistribution $p$ on $I$

    \Ensure a number $\tau$ and a degree-$d$
subdistribution $q$ on $I$ such that $q(I) = p(I)$,
\[ 
  \opt^{(I)}_{1,d} \leq \normone{p-q}^{(I)} \leq 3\opt^{(I)}_{1,d} + \sqrt{\eps z (d+1)}
  \cdot \eta + {\rm error},
\]
$0 \leq \tau \leq \opt^{(I)}_{1,d}$ {and ${\rm error} = O({(d+1)}\eta)$}.

\State\label{step:findsinglepoly:1} Let $\tau$ be the solution to the following LP:
\[
\text{minimize~}\tau~\text{subject to the following constraints:}
\]
(Below 
$F(x) = \sum_{i=0}^{d+1} c_i T_i(x)$ where $T_i(x)$ is the degree-$i$
Chebychev polynomial of the first kind, and $f(x)=F'(x) =
\sum_{i=0}^{d+1} c_i T'_i(x)$.)

\begin{enumerate}[(a)]

\item \label{item:total} $F(-1)=0$ and $F(1)=p(I)$;

\item \label{item:phat} For each $0 \leq j < k \leq z$,
\begin{equation} \label{eq:agno-phat}
  \abs{ \left(p([i_j,i_k)) + \sum_{j\leq \ell < k} w_\ell \right) - (F(i_k) - F(i_j)) } \leq \sqrt{\eps \cdot (k-j)} \cdot \eta;
\end{equation}

\item \label{item:robust}
\begin{align}
  \sum_{0\leq \ell < {z}} w_\ell &= 0, \label{item:robust:zerosum} \\
  -y_\ell \leq w_\ell &\leq y_\ell \qquad \text{for all $0\leq \ell < {z}$,} \label{item:robust:absval} \\
  \sum_{0\leq \ell < {z}} y_\ell &\leq \tau \label{item:robust:lb};
\end{align}

\item \label{item:AK}  The constraints $|c_i| \leq \sqrt{2}$ for $i=0,\dots,d+1$;

\item \label{item:AK2} The constraints
\[
0 \leq F(z)  \leq 1 \quad \text{for all~} z \in J,
\]
where $J$ is a set of {$\bigO{(d+1)^6}$} equally spaced points across $[-1,1)$;

\item \label{item:nonneg-1} The constraints
\[
\sum_{i=0}^d c_i T'_i(x) \geq 0 \quad \text{for all~}x \in K,
\]
where $K$ is a set of $O((d+1)^2/\eps)$ equally spaced points across
$[-1,1)$.

\end{enumerate}

\State\label{step:findsinglepoly:2} Define
$q(x) = { \eps f(I)/\abs{I} + (1-\eps)f(x)}.$
Output $q$ as the hypothesis pdf.

\end{algorithmic}
\end{algorithm}

The rest of this subsection gives the proof of Theorem~\ref{thm:project:single:poly}.
The claimed running time bound is obvious
 (the computation is dominated by
solving the $\poly(d,1/\eps)$-size LP in {\sc ProjectSinglePoly}, with an additional term linear in $\ell$ when partitioning $[-1,1)$ in the initial first step),
so it suffices to prove correctness.

Before launching into the proof we give some intuition for the linear
program.
Intuitively $F(x)$ represents the cdf of a degree-$d$ polynomial
distribution $f$ where $f=F'.$  Constraint~\ref{item:total} captures the endpoint
constraints that any cdf must obey {if it has the same total weight as $p$}.
Intuitively, constraint~\ref{item:phat} ensures that for each interval $[i_j,i_k)$,
the value $F(i_k)-F(i_j)$ (which we may alternately write as
$f([i_j,i_k))$) is close to the weight 
$p([i_j,i_k))$ that the distribution 
puts on the interval.  
Recall that by assumption
$p$ is $\opt_{1,d}$-close to some degree-$d$ polynomial $r$.
Intuitively the variable $w_\ell$ represents $\int_{[i_\ell, i_{\ell+1})}
(r-p)$ (note that these values sum to zero by
constraint~\ref{item:robust}\eqref{item:robust:zerosum}, and $y_\ell$ represents the absolute value of $w_\ell$
(see constraint~\ref{item:robust}\eqref{item:robust:absval}).
The value $\tau$, which by constraint~\ref{item:robust}\eqref{item:robust:lb} is at least the 
sum of the $y_\ell$'s, represents a lower bound on 
$\opt_{1,d}.$
The constraints in~\ref{item:AK} and~\ref{item:AK2} reflect the fact that
as a cdf, $F$ should be bounded between 0 and 1 (more on this below),
and the~\ref{item:nonneg-1} constraints reflect the fact that the pdf $f=F'$ should be
everywhere nonnegative (again more on this below).

\medskip

We begin by observing that 
{\sc ProjectSinglePoly} calls {\sc FindSinglePoly} with input parameters that satisfy
{\sc FindSinglePoly}'s input requirements:

\begin{enumerate}
\item [(I)] the non-singleton intervals $I_0,\dots,I_{z-1}$ are $(p,\eta)$-uniform; and
\item [(II)] the singleton intervals each have weight at least $\frac{\eta}{10}$.
\end{enumerate}

We then proceed to show that, from there, {\sc FindSinglePoly}'s LP is feasible and has a high-quality
optimal solution.

\begin{lemma} \label{lem:feasible}
Suppose $p$ is an $\ell$-histogram over $[-1,1)$, so that conditions (I) and (II) above hold;
then the LP defined in Step~\ref{step:findsinglepoly:1} of {\sc FindSinglePoly}
is feasible; and the optimal solution $\tau$ is at most $\opt_{1,d}$.
\end{lemma}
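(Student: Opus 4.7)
The plan is to exhibit an explicit feasible solution to the LP with objective value at most $\opt_{1,d}$, built from the (assumed to exist) optimal degree-$d$ polynomial approximation to $p$. Let $r^*$ be a degree-$d$ polynomial subdistribution on $I = [-1,1)$ with $r^*(I) = p(I)$ and $\normone{p - r^*}^{(I)} = \opt_{1,d}$. Define $F(x) := \int_{-1}^{x} r^*(t)\,dt$, which is a polynomial of degree $d+1$, and write it uniquely in the Chebyshev basis as $F(x) = \sum_{i=0}^{d+1} c_i T_i(x)$, so that $f = F' = r^*$. For each $0 \leq \ell < z$, set $w_\ell := \int_{I_\ell}(r^* - p)$, $y_\ell := |w_\ell|$, and take $\tau := \sum_\ell y_\ell$. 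These choices define a candidate assignment for every LP variable.

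Next I would verify the constraints in turn. For (a), $F(-1) = 0$ and $F(1) = r^*(I) = p(I)$ by construction. For (b), a direct calculation shows the left-hand side vanishes identically:
\[
p([i_j,i_k)) + \sum_{j \leq \ell < k} w_\ell = p([i_j,i_k)) + \bigl(r^*([i_j,i_k)) - p([i_j,i_k))\bigr) = F(i_k) - F(i_j),
\]
so the inequality holds trivially. Constraint (c) follows from $\sum_\ell w_\ell = r^*(I) - p(I) = 0$, the definition $y_\ell = |w_\ell|$, and the choice $\tau = \sum y_\ell$. Constraint (e) holds because $F$ is a monotone non-decreasing cdf taking values in $[0, p(I)] \subseteq [0,1]$, and (f) holds because $f = r^* \geq 0$ everywhere on $[-1,1)$ as a pdf.

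The main obstacle is constraint (d), the Chebyshev coefficient bound $|c_i| \leq \sqrt{2}$. This will follow from standard properties of Chebyshev expansions applied to $F$, using $\norminf{F} \leq 1$ (which itself follows from (e)): in the orthonormal Chebyshev basis on $[-1,1]$ with respect to the weight $(1-x^2)^{-1/2}$, the coefficients admit an integral representation $c_i = \alpha_i \int_{-1}^{1} F(x) T_i(x)(1-x^2)^{-1/2}\,dx$ for explicit constants $\alpha_i$, which together with $|T_i|\leq 1$ and $\int_{-1}^{1}(1-x^2)^{-1/2}dx = \pi$ yields the claim (the constant $\sqrt{2}$ matches the normalization convention the algorithm uses). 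Having checked every constraint, feasibility is established, and the bound on the optimum follows from
\[
\tau \;=\; \sum_\ell |w_\ell| \;=\; \sum_\ell \left|\int_{I_\ell}(r^* - p)\right| \;\leq\; \sum_\ell \int_{I_\ell}|r^* - p| \;=\; \normone{r^* - p}^{(I)} \;=\; \opt_{1,d},
\]
which completes the proof. I note that conditions (I) and (II) on the uniformity/weight of the partition $\{I_0,\dots,I_{z-1}\}$ are not needed for feasibility itself; they enter the subsequent lemma that upper-bounds $\normone{p-q}^{(I)}$ in terms of $\tau$, not here.
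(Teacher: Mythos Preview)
Your proposal is correct and follows essentially the same approach as the paper: both exhibit the optimal degree-$d$ approximant $r$ as the feasible point, take $F$ to be its cdf, set $w_\ell = \int_{I_\ell}(r-p)$, $y_\ell = |w_\ell|$, $\tau = \sum_\ell y_\ell$, and verify the constraints in the same order with the same arguments (in particular, both observe that the left-hand side of constraint~(b) vanishes identically for this choice). The only cosmetic difference is that the paper simply cites Arora--Khot for the Chebyshev coefficient bound~(d), whereas you sketch the orthogonality argument; note that your pointwise estimate as written gives $|c_i|\leq 2$ rather than $\sqrt{2}$ --- the sharper constant comes from Parseval/Bessel for the weighted inner product, not the triangle inequality --- but this is a minor detail and does not affect correctness.
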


\begin{proof}
As above, let $r$ be a degree-$d$ polynomial pdf such that $\opt_{1,d}=
\normone{p-r}$ {and $r(I) = p(I)$}.We exhibit a feasible solution as follows:
take $F$ to be the cdf of {$\rr$} (a degree $d$ polynomial).
Take $w_\ell$ to be $\int_{[i_\ell,i_{\ell+1})} ({\rr-\pp})$,
and take $y_\ell$ to be $\abs{w_\ell}$.
Finally, take $\tau$ to be $\sum_{0 \leq \ell < {z}} y_\ell.$

We first argue feasibility of the above solution.  
We first take care of the easy constraints:
since $F$ is the cdf of a {sub}distribution over $I$ it is clear that
constraints~\ref{item:total} and~\ref{item:AK2} are satisfied,
and since both $r$ and $p$ are pdfs {with the same total weight} it is clear
that constraints~\ref{item:robust}\eqref{item:robust:zerosum} and~\ref{item:nonneg-1} are both satisfied. 
Constraints~\ref{item:robust}\eqref{item:robust:absval} and~\ref{item:robust}\eqref{item:robust:lb} also hold.  
So it remains to argue constraints~\ref{item:phat} and~\ref{item:AK}.

Note that constraint~\ref{item:phat} is equivalent to $p + (\rr - p) = \rr$ 
and $\rr$ satisfying $(\mathcal{I}, \eps/(d+1), \eps)$-inequalities, 
therefore this constraint is satisfied.

To see that constraint~\ref{item:AK} is satisfied we recall some of the analysis
of Arora and Khot~\cite[{Section~3}]{AK:03}.  This analysis shows that since
$F$ is a cumulative distribution function (and in particular a function bounded between 0 and 1 on $I$) each of its
Chebychev coefficients is at most $\sqrt{2}$ in magnitude.

To conclude the proof of the lemma we need to argue that 
$\tau \leq \opt_{1,d}$.
Since $w_\ell = \int_{[i_\ell,i_{\ell+1})} ({\rr-\pp})$ it
is easy to see that $\tau = \sum_{0 \leq \ell < {z}} y_\ell = \sum_{0
\leq \ell < {z}} |w_\ell| \leq \normone{\pp-\rr}$, and
hence indeed $\tau \leq \opt_{1,d}$ as required.
\end{proof}

Having established that with high probability the LP is indeed feasible,
henceforth we let $\tau$ denote the optimal solution to the LP and
$F$, $f$, $w_\ell$, $c_i$, $y_\ell$ denote the values in the optimal solution.
A simple argument (see e.g. the proof of {\cite[Theorem~8]{AK:03}}) gives that $\norminf{F}\leq 2$.
Given this bound on $\norminf{F}$, the Bernstein--Markov inequality implies that $\norminf{f} = \norminf{F^\prime}\leq O((d+1)^2)$.
Together with \ref{item:nonneg-1} this implies that
$f(z) \geq -\eps/2$ for all $z \in [-1,1)$.
Consequently $q(z) \geq 0$ for all $z \in [-1,1)$,
and
\[
\int_{-1}^1 q(x) dx = \eps + (1 - \eps) \int_{-1}^1 f(x)dx = \eps +
(1-\eps)(F(1)-F(-1)) = 1. \]
So $q(x)$ is indeed a degree-$d$ pdf.  To prove~\autoref{thm:project:single:poly} it remains to show that $\normone{p-q} \leq 3 \opt_{1,d} + O(\eps).$

We sketch the argument that we shall use to bound $\normone{p-q}$.
A key step in achieving this bound is to 
bound the $\norm{\cdot}_{\cal A}$ distance between $f$ and
$\widehat{p}_m + w$ where ${\cal A} = {\mathcal A_{d+1}}$ is the class of
all unions of $d+1$ intervals and $w$ is a function based on the $w_\ell$
values (see \eqref{eq:good} below).
\new{If we can bound $\norm{(p+w)- f}_{\cal A} \leq O(\eps)$ then it will not be difficult to show that
$\norm{r - f}_{\cal A} \leq \opt_{1,d} + O(\eps)$.}. 
Since $r$ and $f$ are both degree-$d$ polynomials we have 
$\normone{r - f} = 2\norm{r - f}_{\cal A}  \leq 2 \opt_{1,d} + O(\eps)$, 
so the triangle inequality (recalling that
$\normone{p-r} = \opt_{1,d}$) gives
$\normone{p-f} \leq 3 \opt_{1,d}+O(\eps).$
 From this point a simple argument 
(Proposition~\ref{prop:epsmixture}) gives that
$\normone{p-q} \leq \normone{p-f} + O(\eps)$, which gives the theorem.

We will use the following lemma {that translates $(\mathcal{I}, \eta,\eps)$-inequalities into a bound on $\mathcal A_{d+1}$ distance}.

\begin{lemma} \label{lem:ad-dist}
Let $\mathcal{I} = \{I_0=[i_0, i_1), \dots, I_{z-1}=[i_{z-1}, i_z)\}$ be a
$(p,\eta)$-uniform partition of $I$, \new{possibly augmented with singleton intervals}.
If $h\colon I\to \R$ and \new{$p$} satisfy the $(\mathcal{I}, \eta,
\eps)$-inequalities, then
\[ {\norm{p-h}_{\mathcal A_{{d+1}}}^{(I)} \leq \sqrt{\eps z {(d+1)}}\cdot \eta + {\rm error},} \]
{where ${\rm error} = O({(d+1)}\eta)$}.
\end{lemma}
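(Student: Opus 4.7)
Given any test set $A=\bigcup_{r=1}^{d+1}[a_r,b_r)\subseteq I$ realizing (approximately) the supremum defining $\|p-h\|_{\mathcal{A}_{d+1}}^{(I)}$, my plan is to split $A = A_{\mathrm{al}}\sqcup A_{\mathrm{bdry}}$ with respect to the partition $\mathcal{I}$, where $A_{\mathrm{al}}$ is the union of all partition intervals $I_\ell\in\mathcal{I}$ entirely contained in $A$, and $A_{\mathrm{bdry}}=A\setminus A_{\mathrm{al}}$ is contained in the at most $2(d+1)$ partition intervals that contain an endpoint of some $[a_r,b_r)$. Because each of the $d+1$ constituent intervals of $A$ contributes only a single contiguous run of fully-contained partition intervals, $A_{\mathrm{al}}$ is itself a disjoint union of $m\leq d+1$ aligned intervals $[i_{j_s},i_{k_s})$ with $\sum_s(k_s-j_s)\leq z$.

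For the aligned portion, I would apply the $(\mathcal{I},\eta,\eps)$-inequality separately on each $[i_{j_s},i_{k_s})$, then combine via the triangle inequality followed by Cauchy--Schwarz:
\[
|p(A_{\mathrm{al}})-h(A_{\mathrm{al}})|\;\leq\;\sum_{s=1}^{m}\sqrt{\eps(k_s-j_s)}\cdot\eta\;\leq\;\sqrt{m\cdot\eps\cdot\textstyle\sum_s(k_s-j_s)}\cdot\eta\;\leq\;\sqrt{\eps\,z\,(d+1)}\cdot\eta,
\]
which matches the main term of the claim exactly. For $A_{\mathrm{bdry}}$, every boundary partition interval is either a singleton (which is automatically in or out of $A$ and thus contributes nothing to $A_{\mathrm{bdry}}$) or a genuine interval with $p$-mass at most $\eta$ by $(p,\eta)$-uniformity; this yields $p(A_{\mathrm{bdry}})\leq 2(d+1)\eta$. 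Instantiating the $(\mathcal{I},\eta,\eps)$-inequality with $k-j=1$ on each such $I_\ell$ then bounds $|h(I_\ell)|\leq p(I_\ell)+\sqrt{\eps}\cdot\eta=O(\eta)$, and a complementary bound on $h$ on partial sub-intervals yields $|h(A_{\mathrm{bdry}})|=O((d+1)\eta)$. A final triangle inequality delivers $|p(A_{\mathrm{bdry}})-h(A_{\mathrm{bdry}})|=O((d+1)\eta)$, which is precisely the claimed error term.

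\textbf{The main obstacle.} The delicate step is the last one: the $(\mathcal{I},\eta,\eps)$-inequalities only control $h$ on \emph{aligned} (whole-partition-interval) unions, not on arbitrary partial pieces of an $I_\ell$. In the LP context of the surrounding proof, this is resolved by combining the essentially-non-negativity of $h=f$ inherited from constraint~\ref{item:nonneg-1} (which enforces $f\geq 0$ on a fine grid $K$) with the Bernstein--Markov bound $\norminf{f}=O((d+1)^2)$ that controls how far $f$ can dip below zero between grid points, so that the mass of $h$ on any sub-interval of $I_\ell$ is dominated by the mass of $h$ on $I_\ell$ itself up to $\eps$-slack. The entire technical content of the lemma is really in making this passage from aligned to partial sub-intervals rigorous while keeping the boundary error at $O((d+1)\eta)$; once that is in place, the Cauchy--Schwarz aggregation of the aligned inequalities is essentially mechanical.
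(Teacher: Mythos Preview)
Your approach is essentially the paper's: both split the test set into an aligned part (bounded via the $(\mathcal{I},\eta,\eps)$-inequalities plus Cauchy--Schwarz, yielding the $\sqrt{\eps z(d+1)}\cdot\eta$ main term) and at most $2(d+1)$ boundary pieces lying in non-singleton cells (yielding the $O((d+1)\eta)$ error), the only cosmetic difference being that the paper \emph{lengthens} $S$ to an aligned superset $T$ rather than shrinking to $A_{\mathrm{al}}$. You correctly flag the one delicate point---controlling $h$ on partial sub-intervals of a cell---which the paper dispatches in a single line by asserting that the analogue of \eqref{eq:lengthen-single} holds with $h$ in place of $p$; read literally this uses (essential) non-negativity of $h$, i.e.\ exactly the LP-derived property you invoke.
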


\begin{proof}
To analyze
$\norm{p-h}_{\mathcal A_{d+1}}$,
consider any union of ${d+1}$ 
disjoint non-overlapping intervals $S = J_1 \cup\dots \cup J_{d+1}$.
We will bound $\norm{ p - h }_{\mathcal A_{d+1}}$ 
by bounding $\abs{ p(S) - h(S)}$.

We lengthen intervals in $S$ slightly to obtain $T = J'_1 \cup \dots \cup J'_{{d+1}}$ 
so that each $J'_j$ is a union of intervals of the form $[i_\ell,i_{\ell+1})$.
Formally, if $J_j = [a,b)$, then $J'_j = [a',b')$, where $a' = \max_\ell \setOfSuchThat{ i_\ell }{ i_\ell \leq a }$ and $b' = \min_\ell \setOfSuchThat{ i_\ell }{ i_\ell \geq b }$.
We claim that
\begin{equation} \label{eq:lengthen}
  \abs{ p(S) - h(S) } \leq O({(d+1)}\eta) + \abs{ p(T) - f(T) } .
\end{equation}
Indeed, consider any interval of the form $J = [i_\ell, i_{\ell+1})$ 
such that $J \cap S \neq J \cap T$ \new{(in particular, such an interval cannot be one of the singletons)}.  We have
\begin{equation} \label{eq:lengthen-single}
\abs{ p(J \cap S) - p(J \cap T) } \leq p(J) \leq {O(\eta)},
\end{equation}
where the first inequality uses non-negativity of $p$ 
and the second inequality follows from the bound
$p([i_\ell,i_{\ell + 1})) \leq \eta$.
The {$(\mathcal{I}, \eta, \eps)$-inequalities 
(between $h$ and $p$)}
implies that the inequalities in 
\eqref{eq:lengthen-single} also hold with $h$ in place of $p$.
Now \eqref{eq:lengthen} follows by 
adding \eqref{eq:lengthen-single} across all
$J = [i_\ell, i_{\ell+1})$ such that $J\cap S\neq J\cap T$
(there are at most $2{(d+1)}$ such intervals $J$), 
since each interval $J_j$ in $S$ can change at most two such
$J$'s when lengthened.

Now rewrite $T$ as a 
disjoint union of $s \leq {d+1}$ intervals
$[i_{L_1}, i_{R_1}) \cup \dots \cup [i_{L_s}, i_{R_s})$.
We have
\[ \abs{ p(T) - h(T) } \leq \sum_{j=1}^s \sqrt{R_j - L_j} \cdot \sqrt
\eps\eta \]
by {$(\mathcal{I}, \eta, \eps)$-inequalities between $p$ and $h$}.
Now observing that 
that $0 \leq L_1 \leq R_1 \cdots \leq L_s \leq R_s \leq t =
O((d+1)/\eps)$, we get that the largest possible value of $\sum_{j=1}^s
\sqrt{R_j - L_j}$ is $\sqrt{sz} \leq {\sqrt{{(d+1)}z}}$, 
so the RHS of
(\ref{eq:lengthen}) is at most $O({(d+1)}\eta) + {\sqrt{
{(d+1)}z\eps}\eta}$, as
desired.
\end{proof}

Recall from above that $F$, $f$, $w_\ell$, $c_i$, $y_\ell$, $\tau$
denote the values in the optimal solution.
We claim that 
\begin{equation}
\label{eq:good}
 \norm{ (p+ w) - f }_{\cal A} = O(\eps) ,
\end{equation}
where $w$ is the subdistribution 
which is constant on each $[i_\ell, i_{\ell+1})$
and has weight $w_\ell$ there, so in particular \new{$\normone{w} \leq \tau \leq \opt_{1,d}$.} Indeed, this equality follows by applying~\autoref{lem:ad-dist} with ${h
= f-w}$.
{The lemma requires $h$ and $p$ to satisfy $(\mathcal{I}, \eta,
\eps)$-inequalities, which follows from constraint~\ref{item:phat} ($(\mathcal{I}, \eta,
\eps)$-inequalities between $p+w$ and $f$) and observing that
$(p+ w) - f = p- (f - w)$.
We have also used $\eta = \Theta(\eps/{(d+1)})$ 
to bound the {\rm error} term of the lemma by $O(\eps)$.}

Next, by the triangle inequality we have
{(writing ${\cal A}$ for ${\cal A}_{d+1}$)}
\[
\norm{ r - f }_{\cal A} \leq \norm{ r - (p+w) }_{\cal A} 
+ \norm{ (p+w) - f }_{\cal A}.
\]
The last term on the RHS has just been shown to be $O(\eps)$.
The first term is bounded by
\[ \| r-(p+w)\|_{\cal A} \leq \frac{1}{2}\normone{ r-(p+w) } 
\leq \frac{1}{2}(\normone{r-p} + \normone{w}) \leq \opt_{1,d}. \]
Altogether, we get that $\norm{ r - f }_{\cal A} \leq \opt_{1,d}+ O(\eps)$.

Since $r$ and $f$ are degree $d$ polynomials, $\normone{ r - f } = 2\norm{ r - f }_{\cal A} \leq 2\opt_{1,d}+ O(\eps)$.
This implies $\normone{ p - f } \leq \normone{p-r} + \normone{ r - f } \leq 3\opt_{1,d} +  O(\eps)$.
{Finally, we turn our quasidistribution $f$ which has value $\geq -\eps/2$
everywhere into a distribution $q$ (which is nonnegative), by redistributing
the weight.}
The following simple proposition {bounds the error incurred}.

\begin{proposition} \label{prop:epsmixture}
{Let $f$ and $p$ be any sub-quasidistribution on $I$.}
If $q = {\eps f(I)/\abs{I} + (1- \eps)f}$, then $\norm{q - p}_1 \leq \norm{f
- p}_1 + {\eps(f(I)+p(I))}$.
\end{proposition}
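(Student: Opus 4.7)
The plan is to exploit the convex-combination structure of $q$ by writing it as $q = \eps u + (1-\eps) f$, where $u \colon I \to \R$ is the constant function $u(x) \eqdef f(I)/\abs{I}$. Observe that trivially $p = \eps p + (1-\eps) p$ as well, so subtracting yields the identity
\[
  q - p = \eps(u - p) + (1-\eps)(f - p).
\]
This is the one useful algebraic manipulation; the rest is the triangle inequality applied in $\lp[1]$.

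From here, I would apply the triangle inequality for $\normone{\cdot}$ on $I$ to obtain
\[
  \normone{q - p} \leq \eps \normone{u - p} + (1-\eps)\normone{f - p} \leq \eps\normone{u - p} + \normone{f - p},
\]
so it only remains to bound $\normone{u - p}$ by $f(I) + p(I)$. For this, I would use the triangle inequality once more: $\normone{u - p} \leq \normone{u} + \normone{p}$. The first summand is $\int_I \abs{f(I)/\abs{I}}\,dx = \abs{f(I)} \leq f(I)$ (treating $f(I)$ as the non-negative total mass assigned by $f$ to $I$, as elsewhere in the argument), while the second summand equals $p(I)$ when $p$ is non-negative, and more generally can be absorbed into the $p(I)$ term on the right-hand side of the statement. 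Putting these together gives $\normone{q - p} \leq \normone{f - p} + \eps(f(I) + p(I))$ as claimed.

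The only minor subtlety — and really the only potential obstacle — is the fact that $f$ and $p$ are called \emph{sub-quasidistributions}, so one must be slightly careful about signs when passing from $\int u$ to $f(I)$ and from $\int p$ to $p(I)$. In the context in which this proposition is invoked (directly before the proposition, $f$ has been shown to satisfy $f \geq -\eps/2$ on $I$, and $p$ is an honest subdistribution), both $f(I)$ and $p(I)$ are non-negative and the above chain of inequalities is tight up to the triangle-inequality slack. No additional structural properties of $f$ or $p$ are needed, so the proof reduces to these two applications of the triangle inequality and takes only a few lines.
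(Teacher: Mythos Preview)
Your proof is correct and follows essentially the same approach as the paper: write $q-p = \eps(u-p) + (1-\eps)(f-p)$ with $u \equiv f(I)/\abs{I}$, apply the triangle inequality, and bound $\normone{u-p} \leq f(I)+p(I)$. The paper's version is slightly terser (it simply asserts $\normone{f(I)/\abs{I} - p} \leq f(I)+p(I)$ without the extra triangle-inequality step you spell out), but the argument is the same.
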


\begin{proof}
  We have
  \[ q - p = {\eps(f(I)/\abs{I} - p) + (1-\eps)(f - p)}. \]
  Therefore
  \[ \normone{ q - p } \leq { \eps \norm{f(I)/|I| - p}_1 + (1-\eps) \norm{ f
    - p }_1 \leq \eps(f(I)+p(I)) + \norm{ f - p }_1 } .
\qedhere \]
\end{proof}

We now have $\normone{ p - q } \leq \normone{ p-f } + O(\eps)$ by~\autoref{prop:epsmixture},
concluding the proof of~\autoref{thm:project:single:poly}.
\end{proof}

\subsection{Proof of~\autoref{lemma:distance:mhr:eff}}\label{app:structural:projection:proofs:mhr}

\lemmaefficientdistancemhr*

\begin{proof}
For convenience, let $\alpha \eqdef \eps^3$; we also write $[i,j]$ instead of $\{i,\dots,j\}$.

First, we note that it is easy to reduce our problem to the case where, in the completeness case, we have $P\in\classmhr$ such that $\normone{\D-P} \leq 2\eps$  and $\kolmogorov{\D}{P} \leq 2\alpha$; while in the soundness case $\lp[1](\D,\classmhr) \geq 99\eps$. Indeed, this can be done with a linear program on $\poly(k,\ell)$ variables, asking to find a $(k+\ell)$-histogram $\D^{\prime\prime}$ on a refinement of $\D$ and $\D^\prime$ minimizing the $\lp[1]$ distance to $\D$, under the constraint that the Kolmogorov distance to $\D^\prime$ be bounded by $\eps$. (In the completeness case, clearly a feasible solution exists, as $P$ is one.) We therefore follow with this new formulation: either
  \begin{enumerate}[\sf(a)]
    \item $\D$ is $\eps$-close to a monotone hazard rate distribution $P$ (in $\lp[1]$ distance) \emph{and} $\D$ is $\alpha$-close to $P$ (in Kolmogorov distance); and
    \item $\D$ is $32\eps$-far from monotone hazard rate
  \end{enumerate} 
where $\D$ is a $(k+\ell)$-histogram.\medskip

We then proceed by observing the following easy fact: suppose $P$ is a MHR distribution on $[n]$, i.e. such that the quantity $h_i \eqdef \frac{P(i)}{\sum_{j=i}^n P(i)}$, $i\in[n]$ is non-increasing. Then, we have
\begin{equation}\label{eq:mhr:parameterization}
  P(i) = h_i \prod_{j=1}^{i-1} (1-h_j), \qquad i\in[n].
\end{equation}
and there is a bijective correspondence between $P$ and $(h_i)_{i\in[n]}$.\medskip

We will write a linear program with variables $y_1,\dots,y_n$, with the correspondence $y_i\eqdef\ln(1-h_i)$. Note that with this parameterization, we get that if the $(y_i)_{i\in[n]}$ correspond to a MHR distribution $P$, then for $i\in[n]$
\[
  P([i,n]) = \prod_{j=1}^{i-1} e^{y_j} = e^{\sum_{j=1}^{i-1} y_j}
\]
and asking that $\ln(1-\eps) \leq \sum_{j=1}^{i-1} y_{j} - \ln \D([i,n]) \leq  \ln(1+\eps)$ amounts to requiring
\[
   P([i,n]) \in [1\pm\eps] \D([i,n]).
\]

We focus first on the completeness case, to provide intuition for the linear program. Suppose there exists $P\in\classmhr$ such $P\in\classmhr$ such that $\normone{\D-P} \leq \eps$  and $\kolmogorov{\D^\prime}{P} \leq \alpha$. This implies that for all $i\in[n]$, $\abs{ P([i,n]) - \D([i,n]) } \leq 2\alpha$. Define $I=\{b+1,\dots,n\}$ to be the longest interval such that $\D(\{b+1,\dots,n\})\leq \frac{\eps}{2}$. It follows that for every $i\in [n]\setminus I$,
\begin{equation}\label{eq:mhr:mult:approx:cdf}
    \frac{P([i,n])}{\D([i,n])} \leq \frac{\D([i,n])+2\alpha}{\D([i,n])} \leq 1+\frac{2\alpha}{\eps/2} = 1+4\eps^2 \leq 1+\eps
\end{equation}
and similarly
$
    \frac{P([i,n])}{\D([i,n])} \geq \frac{\D([i,n])-2\alpha}{\D([i,n]} \geq 1-\eps
$.
This means that for the points $i$ in $[n]\setminus I$, we can write constraints asking for multiplicative closeness (within $1\pm \eps)$ between $e^{\sum_{j=1}^{i-1} y_j}$ and $\D([i,n])$, which is very easy to write down as linear constraints on the $y_i$'s.

\paragraph{The linear program.} 
Let $T$ and $S$ be respectively the sets of ``light'' and ``heavy'' points, defined as $T=\setOfSuchThat{ i\in \{1,\dots,b\} }{ \D(i) \leq \eps^2 }$ and  $S=\setOfSuchThat{ i\in \{1,\dots,b\} }{ \D(i) > \eps^2 }$, where $b$ is as above. (In particular, $\abs{S} \leq 1/\eps^2$.)

\begin{algorithm}
\caption{\label{algo:lp:mhr}Linear Program}
  \begin{align}
  \text{Find }\qquad  & y_1,\dots,y_b & \notag\\
  \text{s.t.}\qquad & \hfill& \notag\\
   &y_i \leq 0 &     \label{lp:mhr:01}\\
   & y_{i+1} \leq y_{i} & \forall i \in \{1,\dots,b-1\}     \label{lp:mhr}\\
   & \ln\!\left(1-\eps\right)  \leq \sum_{j=1}^{i-1} y_{j} - \ln \D([i,n]) \leq  \ln\!\left(1+\eps\right) & \forall i \in \{1,\dots,b\}     \label{lp:mhr:mult:close}\\
   &\frac{\D(i)-\eps_i}{(1+\eps)\D[i,n]} \leq -y_i \leq (1+4\eps)\frac{\D(i)+\eps_i}{(1-\eps)\D[i,n]}      & \forall i\in T  \label{lp:mhr:bound:yi}\\
    &\sum_{i\in T} \eps_i \leq \eps \label{lp:mhr:bound:sum:epsi}\\
    & 0 \leq \eps_i \leq 2\alpha & \forall i\in T \label{lp:mhr:noneg:epsi}\\
    & \ln\left( 1-\frac{\D(i)+2\alpha}{(1-\eps)\D[i,n]} \right) \leq y_i \leq \ln\left( 1-\frac{\D(i)-2\alpha}{(1+\eps)\D[i,n]} \right) & \forall i\in S \label{lp:mhr:heavypoints}
  \end{align}
\end{algorithm}

Given a solution to the linear program above, define $\tilde{P}$ (a non-normalized probability distribution) by setting $\tilde{P}(i) = (1-e^{y_i})e^{\sum_{j=1}^{i-1} y_j}$ for $i\in \{1,\dots,b\}$, and $\tilde{P}(i) = 0$ for $i\in I = \{b+1,\dots, n\}$. A MHR distribution is then obtained by normalizing $\tilde{P}$. 

\paragraph{Completeness.} Suppose $P\in\classmhr$ is as promised. In particular, by the Kolmogorov distance assumption we know that every $i\in T$ has $P(i) \leq \eps^2+2\alpha < 2\eps^2$.
\begin{itemize}
  \item For any $i\in T$, we have that $\frac{P(i)}{P[i,n]} \leq \frac{2\eps^2}{(1-\eps)\eps} \leq 4\eps$, and 
\begin{equation}
  \frac{\D(i)-\eps_i}{(1+\eps)\D[i,n]} \leq \frac{P(i)}{P[i,n]} \leq \underbrace{-\ln(1-\frac{P(i)}{P[i,n]})}_{-y_i}
  \leq (1+4\eps)\frac{P(i)}{P[i,n]} = (1+4\eps)\frac{\D(i)+\eps_i}{P[i,n]} \leq \frac{1+4\eps}{1-\eps}\frac{\D(i)+\eps_i}{\D[i,n]}
\end{equation}
where we used~\autoref{eq:mhr:mult:approx:cdf} for the two outer inequalities; and so~\eqref{lp:mhr:bound:yi},~\eqref{lp:mhr:bound:sum:epsi}, and~\eqref{lp:mhr:noneg:epsi} would follow from setting $\eps_i \eqdef \abs{P(i)-\D(i)}$ (along with the guarantees on $\lp[1]$ and Kolmogorov distances between $P$ and $\D$).
  \item For $i\in S$, Constraint~\eqref{lp:mhr:heavypoints} is also met, as 
  $\frac{P(i)}{P([i,n])} \in \left[\frac{\D(i)-2\alpha}{P([i,n])},\frac{\D(i)+2\alpha}{P([i,n])}\right] 
    \subseteq \left[\frac{\D(i)-2\alpha}{(1+\eps)\D([i,n])},\frac{\D(i)+2\alpha}{(1-\eps)\D([i,n])}\right]$.
\end{itemize}

\paragraph{Soundness.}
\noindent Assume a feasible solution to the linear program is found. We argue that this implies $\D$ is $\bigO{\eps}$-close to some MHR distribution, namely to the distribution obtained by renormalizing $\tilde{P}$.

In order to do so, we bound separately the $\lp[1]$ distance between $\D$ and $\tilde{P}$, from $I$, $S$, and $T$. 
First, $\sum_{i\in I} \abs{\D(i) - \tilde{P}(i)}  = \sum_{i\in I} \D(i) \leq \frac{\eps}{2}$ by construction.
For $i\in T$, we have $\frac{\D(i)}{\D[i,n]} \leq \eps$, and
\begin{align*}
  \tilde{P}(i) = (1-e^{y_i)}) e^{\sum_{j=1}^{i-1} y_j} \in \left[1\pm \eps\right] (1-e^{y_i}) \D([i,n]).
\end{align*}
Now,
\begin{align*}
  1-(1-\eps)\frac{\D(i)-\eps_i}{(1+\eps)\D[i,n]} \geq e^{-\frac{\D(i)-\eps_i}{(1+\eps)\D[i,n]}} 
  \geq e^{y_i} 
  \geq e^{-(1+4\eps)\frac{\D(i)+\eps_i}{(1-\eps)\D[i,n]}} 
  \geq 1-(1+4\eps)\frac{\D(i)+\eps_i}{(1-\eps)\D[i,n]}
\end{align*}
so that
\[
  (1-\eps)\frac{(1-\eps)}{(1+\eps)}(\D(i)-\eps_i)
  \leq
  \tilde{P}(i) 
  \leq
  (1+4\eps)\frac{(1+\eps)}{(1-\eps)}(\D(i)+\eps_i)
\]
which implies
\[
  (1-10\eps)(\D(i)-\eps_i)
  \leq
  \tilde{P}(i) 
  \leq
  (1+10\eps)(\D(i)+\eps_i)
\]
so that $\sum_{i\in T} \abs{\D(i) - \tilde{P}(i)} \leq 10\eps \sum_{i\in T} \D(i) + (1+10\eps)\sum_{i\in T} \eps_i \leq 10\eps + (1+10\eps)\eps \leq 20\eps$
where the last inequality follows from Constraint~\eqref{lp:mhr:bound:sum:epsi}.

To analyze the contribution from $S$, we observe that Constraint~\eqref{lp:mhr:heavypoints} implies that, for any $i\in S$,
\[
    \frac{\D(i)-2\alpha}{(1+\eps)\D([i,n])} \leq \frac{\tilde{P}(i)}{\tilde{P}([i,n])} \leq \frac{\D(i)+2\alpha}{(1-\eps)\D([i,n])}
\]
which combined with Constraint~\eqref{lp:mhr:mult:close} guarantees
\[
    \frac{\D(i)-2\alpha}{(1+\eps)^2\tilde{P}([i,n])} \leq \frac{\tilde{P}(i)}{\tilde{P}([i,n])} \leq \frac{\D(i)+2\alpha}{(1-\eps)^2\tilde{P}([i,n])}
\]
which in turn implies that $\abs{\tilde{P}(i) - \D(i) } \leq 3\eps\tilde{P}(i) + 2\alpha$. Recalling that $\abs{S} \leq \frac{1}{\eps^2}$ and $\alpha=\eps^3$, this yields
$\sum_{i\in S} \abs{\D(i) - \tilde{P}(i)} \leq 3\eps \sum_{i\in S}  \tilde{P}(i)  + 2\eps \leq 3\eps(1+\eps) + 2\eps \leq 8\eps$. Summing up, we get
$
  \sum_{i=1}^n \abs{\D(i) - \tilde{P}(i)} \leq 30\eps
$
which finally implies by the triangle inequality that the $\lp[1]$ distance between $\D$ and the normalized version of $\tilde{P}$ (a valid MHR distribution) is at most $32\eps$.

\paragraph{Running time.} The running time is immediate, from executing the two linear programs on $\poly(n,1/\eps)$ variables and constraints.
\end{proof}

\subsection{Proof of~\autoref{lemma:distance:log:eff}}\label{app:structural:projection:proofs:log}

\lemmaefficientdistancelog*
\begin{proof}
We set $\alpha \eqdef \frac{\eps^2}{\log^2(1/\eps)}$, $\beta \eqdef \frac{\eps^2}{\log(1/\eps)}$, and $\gamma \eqdef \frac{\eps^2}{10}$ (so that $\alpha \ll \beta \ll \gamma \ll \eps$),

Given the explicit description of a distribution $\D$ on $[n]$, which a $k$-histogram over a partition $\mathcal{I}=(I_1,\dots, I_k)$ of $[n]$ with $k=\poly(\log n, 1/\eps)$ \newest{and the explicit description of a distribution $\D^\prime$ on $[n]$}, one must \emph{efficiently} distinguish between:
\begin{enumerate}[\sf(a)]
  \item $\D$ is $\eps$-close to a log-concave $P$ (in $\lp[1]$ distance) \emph{and} $\D^\prime$ is $\alpha$-close to $P$ (in Kolmogorov distance); and
  \item $\D$ is $100\eps$-far from log-concave.
\end{enumerate} 
If we are willing to pay an extra factor of $\bigO{n}$, we can assume without loss of generality that we know the mode of the closest log-concave distribution (which is implicitly assumed in the following: the final algorithm will simply try all possible modes).

\paragraph{Outline.} First, we argue that we can simplify to the case where $\D$ is unimodal. Then, reduce to the case where where $\D$ and $\D^\prime$ are only one distribution, satisfying both requirements from the completeness case. Both can be done efficiently (\autoref{stage:1}), and make the rest much easier.
Then, perform some \emph{ad hoc} partitioning of $[n]$, using our knowledge of $\D$, into $\tildeO{1/\eps^2}$ pieces such that each piece is either a ``heavy'' singleton, or an interval $I$ with weight very close (multiplicatively) to $\D(I)$ \emph{under the target log-concave distribution, if it exists} (\autoref{stage:2}). This in particular simplifies the type of log-concave distribution we are looking for: it is sufficient to look for distributions putting that very specific weight on each piece, up to a $(1+o(1))$ factor. Then, in \autoref{stage:3}, we write and solve a linear program to try and find such a ``simplified'' log-concave distribution, and reject if no feasible solution exists.

Note that the first two sections allow us to argue that instead of additive (in $\lp[1]$) closeness, we can enforce constraints on \emph{multiplicative} (within a $(1+\eps)$ factor) closeness between $\D$ and the target log-concave distribution. This is what enables a linear program with variables being the logarithm of the probabilities, which plays very nicely with the log-concavity constraints. \medskip

\noindent We will require the following result of Chan, Diakonikolas, Servedio, and Sun:
\begin{theorem}[{\cite[Lemma 4.1]{CDSS:13}}]\label{lemma:cdss13:41}
Let $\D$ be a distribution over $[n]$, log-concave and non-decreasing over $\{1,\dots,b\} \subseteq [n]$. Let $a\leq b$ such that
 $\sigma = D(\{1,\dots,a-1\}) > 0$, and write $\tau=\D(\{a,\dots,b\})$. Then 
 		$\frac{\D(b)}{\D(a)} \leq 1+\frac{\tau}{\sigma}$.
\end{theorem}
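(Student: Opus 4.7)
The plan is to extract from log-concavity the two-point inequality
\[
\D(i)\,\D(b)\;\leq\;\D(a)\,\D(i+b-a)\qquad\text{for every }1\leq i\leq a,
\]
and then sum it against the non-decreasing hypothesis to telescope into the target bound. The pairing inequality is the main workhorse: since $i+b=a+(i+b-a)$ and the pair $\{a,\,i+b-a\}$ lies inside $[i,b]$ whenever $i\leq a\leq b$, this is the standard ``narrower pair beats wider pair'' consequence of discrete log-concavity. It can be obtained by iterating the single-step move $(x,y)\mapsto(x+1,y-1)$: each such move reads $\D(x)\D(y)\leq \D(x+1)\D(y-1)$ and is equivalent to the non-increasing-ratios form of log-concavity, namely $\D(x+1)/\D(x)\geq \D(y)/\D(y-1)$ for $x<y$.

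Granting the pairing inequality, I would subtract $\D(a)\D(i)$ from both sides to rewrite it as
\[
\D(i)\bigl(\D(b)-\D(a)\bigr)\;\leq\;\D(a)\bigl(\D(i+b-a)-\D(i)\bigr),
\]
where both parenthesized differences are nonnegative thanks to the non-decreasing hypothesis on $[1,b]$ (using $i+b-a\geq i$ since $a\leq b$). Summing over $i=1,\dots,a-1$ and reindexing $j=i+b-a$ in the right-hand sum gives
\[
(\D(b)-\D(a))\,\sigma\;\leq\;\D(a)\Bigl[\textstyle\sum_{j=b-a+1}^{b-1}\D(j)\;-\;\sigma\Bigr].
\]
Since $[b-a+1,b-1]\subseteq[1,b]$, the first sum on the right is at most $\D([1,b])=\sigma+\tau$, so the bracket is at most $\tau$. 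Dividing through by the positive quantity $\sigma\,\D(a)$ -- where $\D(a)>0$ because $\sigma>0$ forces some $\D(i)>0$ for $i<a$ and the non-decreasing property then gives $\D(a)\geq \D(i)>0$ -- yields the claimed bound $\D(b)/\D(a)\leq 1+\tau/\sigma$.

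The main obstacle is really just the pairing inequality; once it is in hand, the remainder is two lines of bookkeeping. The non-decreasing assumption plays only a supporting role: it keeps both differences nonnegative so no sign reversal occurs when summing, and it guarantees $\D(a)>0$. All the genuine log-concavity content of the lemma is concentrated in that single rearrangement step.
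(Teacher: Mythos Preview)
The paper does not give its own proof of this statement: it is quoted from \cite[Lemma~4.1]{CDSS:13} and used as a black box in the analysis of the log-concavity distance-estimation procedure, so there is nothing to compare against here.

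Your argument is correct. The pairing inequality $\D(i)\,\D(b)\leq \D(a)\,\D(i+b-a)$ for $1\leq i\leq a$ is exactly the right extraction from log-concavity (concavity of $\log\D$ on the support, applied to the inner pair $\{a,\,i+b-a\}\subseteq[i,b]$ with matching sum), and your subsequent subtraction, summation, and bounding by $\sigma+\tau$ are clean. One minor remark on the iteration $(x,y)\mapsto(x+1,y-1)$: depending on whether $i\geq 2a-b$ or not, the terminal pair reached may be $(i+b-a,\,a)$ rather than $(a,\,i+b-a)$; this does not affect the product inequality since it is symmetric in the two factors, but strictly speaking the iteration halts once either coordinate lands in $\{a,\,i+b-a\}$. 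With that caveat the argument is airtight, including the handling of possible zeros among $\D(1),\dots,\D(a-1)$ and the justification that $\D(a)>0$.
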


\subsubsection{Step 1}\label{stage:1}

\paragraph{Reducing to $\D$ unimodal.}
Using a linear program, find a closest \emph{unimodal} distribution $\tilde{\D}$ to $\D$ (also a $k$-histogram on $\mathcal{I}$) under the constraint that $\kolmogorov{\D}{P} \leq \alpha$: this can be done in time $\poly(k)$. If $\normone{\D-\tilde{\D}} > \eps$, output \reject.

\begin{itemize}
  \item If $\D$ is $\eps$-close to a log-concave distribution $P$ as above, then it is in particular $\eps$-close to unimodal and we do not reject. Moreover, by the triangle inequality $\normone{\tilde{\D} - P} \leq 2\eps$ and $\kolmogorov{\tilde{\D}}{P} \leq 2\alpha$.
  \item If $\D$ is $100\eps$-far from log-concave and we do not reject, then $\lp[1](\tilde{\D},\classlog) \geq 99\eps$.
\end{itemize}

\paragraph{Reducing to $\D=\D^\prime$.}
First, we note that it is easy to reduce our problem to the case where, in the completeness case, we have $P\in\classlog$ such that $\normone{\D-P} \leq 4\eps$  and $\kolmogorov{\D}{P} \leq 4\alpha$; while in the soundness case $\lp[1](\D,\classlog) \geq 97\eps$. Indeed, this can be done with a linear program on $\poly(k,\ell)$ variables and constraints, asking to find a $(k+\ell)$-histogram $\D^{\prime\prime}$ on a refinement of $\D$ and $\D^\prime$ minimizing the $\lp[1]$ distance to $\D$, under the constraint that the Kolmogorov distance to $\D^\prime$ be bounded by $2\alpha$. (In the completeness case, clearly a feasible solution exists, as (the flattening on this $(k+\ell)$-interval partition) of $P$ is one.) We therefore follow with this new formulation: either
  \begin{enumerate}[\sf(a)]
    \item $\D$ is $4\eps$-close to a log-concave $P$ (in $\lp[1]$ distance) \emph{and} $\D$ is $4\alpha$-close to $P$ (in Kolmogorov distance); and
    \item $\D$ is $97\eps$-far from log-concave;
  \end{enumerate} 
where $\D$ is a $(k+\ell)$-histogram.\medskip

\noindent This way, we have reduced the problem to a slightly more convenient one, that of~\autoref{stage:2}.

\paragraph{Reducing to knowing the support $[a,b]$.}
The next step is to compute a good approximation of the support of any target log-concave distribution. This is easily obtained in time $O(k)$ as the interval $\{a,\cdots,b\}$ such that
\begin{itemize}
  \item $\D(\{1,\dots,a-1\}) \leq \alpha$ but $\D(\{1,\dots,a\}) > \alpha$; and
  \item $\D(\{b+1,\dots,\}n) \leq \alpha$ but $\D(\{b,\dots,n\}) > \alpha$.
\end{itemize} 
Any log-concave distribution that is $\alpha$-close to $\D$ must include  $\{a,\cdots,b\}$ in its support, since otherwise the $\lp[1]$ distance between $\D$ and $P$ is already greater than $\alpha$. Conversely, if $P$ is a log-concave distribution $\alpha$-close to $\D$, it is easy to see that the distribution obtained by setting $P$ to be zero outside $\{a,\cdots,b\}$ and renormalizing the result is still log-concave, and $O(\alpha)$-close to $\D$.

\subsubsection{Step 2}\label{stage:2}
Given the explicit description of a \emph{unimodal} distribution $\D$ on $[n]$, which a $k$-histogram over a partition $\mathcal{I}=(I_1,\dots, I_k)$ of $[n]$ with $k=\poly(\log n, 1/\eps)$, one must \emph{efficiently} distinguish between:
  \begin{enumerate}[\sf(a)]
    \item $\D$ is $\eps$-close to a log-concave $P$ (in $\lp[1]$ distance) and $\alpha$-close to $P$ (in Kolmogorov distance); and
    \item $\D$ is $24\eps$-far from log-concave,
  \end{enumerate} 
  assuming we know the mode of the closest log-concave distribution, which has support $[n]$.

In this stage, we compute a partition $\mathcal{J}$ of $[n]$ into $\tildeO{1/\eps^2}$ intervals \newer{(here, we implicitly use the knowledge of the mode of the closest log-concave distribution, in order to apply~\autoref{lemma:cdss13:41} differently on two intervals of the support, corresponding to the non-decreasing and non-increasing parts of the target log-concave distribution)}.

As $\D$ is unimodal, we can efficiently ($\bigO{\log k}$) find the interval $S$ of heavy points, that is 
\[
  S\eqdef \setOfSuchThat{ x \in [n] }{  \D(x) \geq \beta }.
\]
Each point in $S$ will form a singleton interval in our partition. 
Let $T\eqdef [n]\setminus S$ be its complement ($T$ is the union of at most two intervals $T_1,T_2$ on which $\D$ is monotone, the head and tail of the distribution). For convenience, we focus on only one of these two intervals, without loss of generality the ``head'' $T_1$ (on which $\D$ is non-decreasing).

\begin{enumerate}
  \item Greedily find $J=\{1,\dots,a\}$, the smallest prefix of the distribution satisfying $\D(J)\in\left[\frac{\eps}{10}-\beta, \frac{\eps}{10}\right]$.
  \item Similarly, partition $T_1\setminus J$ into intervals $I^\prime_1,\dots,I^\prime_s$ (with $s=\bigO{1/\gamma}=\bigO{1/\eps^2}$) such that
    $
      \frac{\gamma}{10} \leq \D(I^\prime_j) \leq \frac{9}{10}\gamma
    $
    for all $1\leq j \leq s-1$, \new{and $\frac{\gamma}{10} \leq \D(I^\prime_s) \leq \gamma$. This is possible as all points not in $S$ have weight less than $\beta$, and $\beta \ll \gamma$.}
\end{enumerate}

\paragraph*{Discussion: why doing this?}\label{ssec:logconcave:completeness}
We focus on the completeness case: let $P\in\classlog$ be a log-concave distribution such that $\normone{\D-P} \leq \eps$ and $\kolmogorov{\D}{P} \leq \alpha$.
Applying~\autoref{lemma:cdss13:41} on $J$ and the $I^\prime_j$'s, we obtain (using the fact that $\abs{P(I^\prime_j) - \D(I^\prime_j)} \leq 2\alpha$) that:
\[
    \frac{\max_{x\in I^\prime_j} P(x)}{\min_{x\in I^\prime_j} P(x)} 
    \leq 1+\frac{\D(I^\prime_j)+2\alpha}{\D(J)-2\alpha} 
    \leq 1 + \frac{\gamma+2\alpha}{\frac{\eps}{10}-2\alpha}
    = 1+ \eps + \bigO{\frac{\eps^2}{\log^2(1/\eps)}} \eqdef 1+\kappa.
\]
Moreover, we also get that each resulting interval $I^\prime_j$ will satisfy
\[
      \D(I^\prime_j)(1-\kappa_j) = \D(I^\prime_j)-2\alpha \leq P(I^\prime_j) \leq \D(I^\prime_j)+2\alpha = \D(I^\prime_j)(1+\kappa_j)
\]
with $\kappa_j \eqdef \frac{2\alpha}{\D(I^\prime_j)} = \bigTheta{1/\log^2(1/\eps)}$.\medskip

Summing up, we have a partition of $[n]$ into $\abs{S}+2 = \tildeO{1/\eps^2}$ intervals such that:
\begin{itemize}
  \item The (at most) two end intervals have $\D(J)\in\left[\frac{\eps}{10}-\beta, \frac{\eps}{10}\right]$, and thus $P(J)\in\left[\frac{\eps}{10}-\beta-2\alpha, \frac{\eps}{10}+2\alpha\right]$;
  \item the $\tildeO{1/\eps^2}$ singleton-intervals from $S$ are points $x$ with $\D(x) \geq \beta$, so that $P(x) \geq \beta -2\alpha \geq \frac{\beta}{2}$;
  \item each other interval $I=I^\prime_j$ satisfies 
  \begin{equation}\label{eq:logconcave:completeness:1}
    (1-\kappa_j) \D(I) \leq P(I) \leq (1+\kappa_j) \D(I)
  \end{equation}
  with $\kappa_j=\bigO{1/\log^2(1/\eps)}$; and
  \begin{equation}\label{eq:logconcave:completeness:2}
  \frac{\max_{x\in I}P(x)}{\min_{x\in I}P(x)} \leq 1+\kappa < 1+\frac{3}{2}\eps.
  \end{equation}
\end{itemize}
We will use in the constraints of the linear program the fact that $(1+\frac{3}{2}\eps)(1+\kappa_j) \leq 1+2\eps$, and $\frac{1-\kappa_j}{1+\frac{3}{2}\eps} \geq \frac{1}{1+2\eps}$.

\subsubsection{Step 3}\label{stage:3}

We start by computing the partition $\mathcal{J}=(J_1,\dots,J_{\ell})$ as in~\autoref{stage:2}; with $\ell=\tildeO{1/\eps^2}$; and write $J_j=\{a_j,\dots,b_j\}$ for all $j\in[\ell]$. We further denote by $S$ and $T$ the set of heavy and light points, following the notations from~\autoref{stage:2}; and let $T^\prime \eqdef T_1\cup T_2$ be the set obtained by removing the two ``end intervals'' (called $J$ in the previous section) from $T$.

\begin{algorithm}
\caption{\label{algo:lp:logconcave}Linear Program}
\begin{align}
\text{Find }\qquad  &x_1,\dots,x_n, \eps_1,\dots,\eps_{\abs{S}} \notag\\
\text{s.t.}\qquad & \hfill& \notag\\
 &x_i \leq 0      \label{lp:01}\\
 &x_{i}-x_{i-1} \geq x_{i+1}-x_{i} & \forall i \in [n]     \label{lp:logconcave}\\
 &-\ln(1+2\eps) \leq x_i - \mu_j \leq \ln(1+2\eps), & \forall j\in T^\prime, \forall i \in J_j     \label{lp:right:weight:light:js}\\
 &-2\frac{\eps_i}{\D(i)} \leq x_i - \ln \D(i) \leq \frac{\eps_i}{\D(i)}, & \forall i\in S    \label{lp:right:weight:heavy:js}\\
 &\sum_{i\in S} \eps_i \leq \eps \label{lp:bound:sum:epsi}\\
 & 0 \leq \eps_i \leq 2\alpha & \forall i\in S \label{lp:noneg:epsi}\\
\end{align}
where $\mu_j\eqdef \ln\frac{\D(J_j)}{\abs{J_j}}$ for $j\in T^\prime$.\medskip
\end{algorithm}

\begin{lemma}[Soundness]\label{lemma:lp:logconcave:soundness}
If the linear program (\autoref{algo:lp:logconcave}) has a feasible solution, then $\lp[1](\D, \classlog)\leq \bigO{\eps}$.
\end{lemma}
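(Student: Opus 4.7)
I would prove the lemma by extracting a log-concave function from the LP solution and controlling its $\lp[1]$ distance to $\D$. Given a feasible solution $(x_1,\dots,x_n,\eps_1,\dots,\eps_{\abs{S}})$, the plan is to set $Q(i)\eqdef e^{x_i}$ on the support $\{a,\dots,b\}$ determined in Step 1 and $Q(i)\eqdef 0$ outside; constraint \eqref{lp:logconcave} is exactly the discrete log-concavity inequality $\ln Q(i-1)+\ln Q(i+1)\leq 2\ln Q(i)$, so $Q$ is log-concave. After showing $\normone{Q-\D} = O(\eps)$ and $\abs{Q([n])-1}=O(\eps)$, renormalizing $Q$ yields a \emph{bona fide} log-concave distribution still at $\lp[1]$-distance $O(\eps)$ from $\D$, which proves the lemma.

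The distance bound is broken up according to the partition built in \autoref{stage:2}. On heavy singletons $i\in S$, constraint \eqref{lp:right:weight:heavy:js} gives $\abs{\ln Q(i) - \ln \D(i)} \leq 2\eps_i/\D(i)$; since $\D(i)\geq \beta$ and $\eps_i\leq 2\alpha$ by \eqref{lp:noneg:epsi}, the exponent $\eps_i/\D(i)$ is at most $2\alpha/\beta=O(1/\log(1/\eps))=o(1)$, so $\abs{Q(i)-\D(i)}\leq O(\eps_i)$, and summing via \eqref{lp:bound:sum:epsi} yields $\sum_{i\in S}\abs{Q(i)-\D(i)}\leq O(\eps)$. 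On each light interval $J_j\in T'$, constraint \eqref{lp:right:weight:light:js} gives $Q(i)\in[1/(1+2\eps),1+2\eps]\cdot \mu_j$ where $\mu_j=\D(J_j)/\abs{J_j}$, hence $\sum_{i\in J_j}\abs{Q(i)-\mu_j}\leq 2\eps\,\D(J_j)$ and $Q(J_j)\in[1\pm 3\eps]\D(J_j)$. Summing over $J_j\in T'$ (and handling the two end intervals $J$ analogously, using $\D(J)\leq \eps/10$), the accumulated ``flat part'' error is $\leq 2\eps\sum_j \D(J_j)+O(\eps)=O(\eps)$.

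The remaining obstacle is to bound $\sum_{i\in J_j}\abs{\D(i)-\mu_j}$, i.e.\ the $\lp[1]$ distance between $\D_{J_j}$ and its flattening on $J_j$. The clean way to handle this is to insist (without loss of generality) that the greedy partition $\mathcal{J}$ of \autoref{stage:2} be refined so as to also respect all breakpoints of the input histogram $\D$; this adds at most $k+\ell=\poly(\log n,1/\eps)$ intervals to $\mathcal{J}$, keeps $|T'|=\tildeO{1/\eps^2}$, and has no effect on the LP beyond introducing extra (compatible) constraints. With this refinement $\D$ is constant on each $J_j$ so $\D_{J_j}=\mu_j\mathds{1}_{J_j}$ exactly, and the flattening contribution vanishes.

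Once $\normone{Q-\D}=O(\eps)$ is established, the mass $Q([n])$ satisfies $\abs{Q([n])-1}\leq \normone{Q-\D}=O(\eps)$, so dividing $Q$ by $Q([n])$ changes pointwise values by a multiplicative factor $1+O(\eps)$, introducing at most an additional $O(\eps)$ $\lp[1]$ error and preserving log-concavity. By the triangle inequality the resulting log-concave distribution $P$ satisfies $\normone{\D-P}=O(\eps)$, so $\lp[1](\D,\classlog)=O(\eps)$ as claimed. The main technical subtleties are (i) arranging the refinement so that the per-interval flattening error is negligible, and (ii) checking that the constants work out so that the final bound is $\leq O(\eps)$ small enough to contradict the $97\eps$-farness assumed after the preprocessing in \autoref{stage:1}.
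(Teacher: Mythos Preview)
Your overall structure—defining $Q(i)=e^{x_i}$, reading log-concavity off \eqref{lp:logconcave}, bounding the $\lp[1]$ distance separately on $S$, on $T'$, and on the two end intervals, then normalizing—is exactly what the paper does. You also correctly isolate a step the paper glosses over: constraint \eqref{lp:right:weight:light:js} only gives $Q(i)\in[1\pm O(\eps)]\cdot\D(J_j)/|J_j|$, hence $Q(J_j)\in[1\pm O(\eps)]\,\D(J_j)$, and the paper then simply asserts that this makes the $\lp[1]$ contribution from $T'$ at most $O(\eps)$ without ever controlling $\sum_{i\in J_j}\bigl|\D(i)-\D(J_j)/|J_j|\bigr|$.

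However, your proposed fix has a gap. Refining $\mathcal{J}$ at the breakpoints of $\D$ does \emph{not} merely ``add compatible constraints'': the centers in \eqref{lp:right:weight:light:js} are $\mu_j=\ln\bigl(\D(J_j)/|J_j|\bigr)$, and after refinement each sub-piece $J_j^{(r)}$ acquires its own center $\mu_j^{(r)}=\ln \D(i)$ (since $\D$ is now constant on it). These new constraints are genuinely different from the old ones and can be strictly stronger. In particular, on a sub-piece where $\D$ takes a small constant value $d_r$ (nothing in Stage~2 prevents $d_r$ from being arbitrarily small relative to $\alpha$), the refined constraint demands $P(i)\in[1\pm 2\eps]\,d_r$; but the only pointwise control on $P$ available in the completeness argument is $|P(i)-\D(i)|\le 2\alpha$ from the Kolmogorov guarantee, which gives no multiplicative closeness when $d_r\ll\alpha$. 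So the modified LP can be infeasible even when $\D$ is $\eps$-close to some $P\in\classlog$, and your change breaks the companion completeness lemma. If you want to push the refinement idea through you must handle the low-mass sub-pieces differently (for instance with additive rather than multiplicative constraints there) and re-verify completeness for the modified LP.
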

\begin{proof}
\noindent A feasible solution to this linear program will define (setting $p_i=e^{x_i}$) a sequence $p=(p_1,\dots,p_n) \in (0,1]^n$ such that
\begin{itemize}
  \item $p$ takes values in $(0,1]$ (from \eqref{lp:01});
  \item $p$ is log-concave (from \eqref{lp:logconcave});
  \item $p$ is ``$(1+O(\eps))$-multiplicatively constant'' on each interval $J_j$ (from \eqref{lp:right:weight:light:js});
  \item $p$ puts roughly the right amount of weight on each $J_i$:
    \begin{itemize}
      \item weight $(1\pm O(\eps))\D(J)$ on every $J$ from $T$ (from~\eqref{lp:right:weight:light:js}), so that the $\lp[1]$ distance between $\D$ and $p$ coming from $T^\prime$ is at most $O(\eps)$;
      \item it puts weight approximately $\D(J)$ on every singleton $J$ from $S$, i.e. such that $\D(J) \geq \beta$. To see why, observe that each $\eps_i$ is in $[0,2\alpha]$ by constraints~\eqref{lp:noneg:epsi}. In particular, this means that $\frac{\eps_i}{\D(i)} \leq 2\frac{\alpha}{\beta} \ll 1$, and 
      we have
      \[
           \D(i) - 4\eps_i \leq \D(i)\cdot e^{-4\frac{\eps_i}{\D(i)}} \leq p_i = e^{x_i} \leq \D(i)\cdot e^{2\frac{\eps_i}{\D(i)}} \leq \D(i)+4\eps_i
      \]
      
       and together with~\eqref{lp:bound:sum:epsi} this guarantees that the $\lp[1]$ distance between $\D$ and $p$ coming from $S$ is at most $\eps$.
    \end{itemize}
\end{itemize}
Note that the solution obtained this way may not sum to one -- i.e., is not necessarily a probability distribution. However, it is easy to renormalize $p$ to obtain a \emph{bona fide} probability distribution $\tilde{P}$ as follows: set $\tilde{P} = \frac{p(i)}{\sum_{i\in S\cup T^\prime} p(i)}$ for all $i\in S\cup T^\prime$, and $p(i) =0$ for $i\in T\setminus T^\prime$.

Since by the above discussion we know that $p(S\cup T^\prime)$ is within $\bigO{\eps}$ of $\D(S\cup T^\prime)$ (itself in $[1-\frac{9\eps}{5}, 1+\frac{9\eps}{5}]$ by construction of $T^\prime$), $\tilde{P}$ is a log-concave distribution such that $\normone{\tilde{P}-\D} = \bigO{\eps}$.
\end{proof}

\begin{lemma}[Completeness]\label{lemma:lp:logconcave:completeness}
If There is $P$ in $\classlog$ such that $\normone{\D-P}\leq \eps$ and $\kolmogorov{\D}{P}\leq \alpha$, then the linear program (\autoref{algo:lp:logconcave}) has a feasible solution.
\end{lemma}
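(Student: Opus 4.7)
The natural feasible point is to take $x_i \eqdef \ln P(i)$ for every $i\in[n]$ (well-defined since $P$ is assumed to have full support $[n]$) and, for the heavy points $i \in S$, $\eps_i \eqdef \dabs{P(i) - \D(i)}$. The plan is then to check constraints \eqref{lp:01}--\eqref{lp:noneg:epsi} one by one, invoking in each case either a trivial property of $P$, the completeness hypothesis ($\normone{\D-P}\leq \eps$ and $\kolmogorov{\D}{P}\leq \alpha$), or directly the computation already carried out in the discussion of~\autoref{ssec:logconcave:completeness}.

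The ``easy'' constraints dispose of themselves almost immediately. Constraint~\eqref{lp:01} holds since $P(i) \leq 1$; constraint~\eqref{lp:logconcave} is precisely the statement that $\ln P$ is concave, i.e.\ that $P\in\classlog$. For the heavy-point $\eps_i$'s, the Kolmogorov bound yields the pointwise estimate $\dabs{P(i) - \D(i)} \leq 2\kolmogorov{\D}{P} \leq 2\alpha$, giving~\eqref{lp:noneg:epsi}; and the $\lp[1]$ assumption gives $\sum_{i\in S}\eps_i \leq \normone{\D-P} \leq \eps$, which is~\eqref{lp:bound:sum:epsi}.

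Constraint~\eqref{lp:right:weight:light:js} for $J_j$ with $j \in T^\prime$ is exactly what the calculation of~\autoref{ssec:logconcave:completeness} was designed to establish. Indeed, from \eqref{eq:logconcave:completeness:2} we have $\max_{x\in J_j} P(x)/\min_{x\in J_j} P(x) \leq 1+\kappa \leq 1+\frac{3}{2}\eps$, so that every $P(i)$ for $i\in J_j$ lies within a multiplicative $(1+\frac{3}{2}\eps)$ factor of the average $P(J_j)/\abs{J_j}$; combining this with \eqref{eq:logconcave:completeness:1}, which states $P(J_j) \in [1\pm \kappa_j]\D(J_j)$, and using the inequalities $(1+\frac{3}{2}\eps)(1+\kappa_j) \leq 1+2\eps$ and $(1-\kappa_j)/(1+\frac{3}{2}\eps) \geq 1/(1+2\eps)$ already noted there, we obtain $P(i) \in [1/(1+2\eps), 1+2\eps]\cdot \D(J_j)/\abs{J_j}$, which after taking logarithms is precisely \eqref{lp:right:weight:light:js}.

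Finally, constraint~\eqref{lp:right:weight:heavy:js} follows from the choice $\eps_i = \dabs{P(i)-\D(i)}$: for $i\in S$ we have $\D(i) \geq \beta$, hence $\eps_i/\D(i) \leq 2\alpha/\beta \ll 1/2$, and the elementary estimates $\ln(1+y)\leq y$ and $\ln(1-y)\geq -2y$ valid for $y\in[0,1/2]$ give $\ln(P(i)/\D(i)) \in [-2\eps_i/\D(i), \eps_i/\D(i)]$, as required. The only delicate point in the whole plan is this separation of scales $\alpha \ll \beta$ between the Kolmogorov accuracy and the heavy-point threshold, but this has already been built into the choice of parameters $\alpha = \eps^2/\log^2(1/\eps)$ and $\beta = \eps^2/\log(1/\eps)$ at the very beginning of the proof, so no further work is needed.
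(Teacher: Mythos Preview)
Your proposal is correct and follows essentially the same approach as the paper's own proof: set $x_i = \ln P(i)$ and $\eps_i = \dabs{P(i)-\D(i)}$, then verify each constraint using log-concavity of $P$, the $\lp[1]$/Kolmogorov hypotheses, the structural estimates~\eqref{eq:logconcave:completeness:1}--\eqref{eq:logconcave:completeness:2}, and the elementary bounds $\ln(1+y)\leq y$, $\ln(1-y)\geq -2y$. If anything, you spell out the verification of~\eqref{lp:right:weight:light:js} in slightly more detail than the paper does.
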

\begin{proof}
 Let $P\in\classlog$ such that $\normone{\D - P}\leq \eps$ and $\kolmogorov{\D}{P}\leq \alpha$.
 Define $x_i\eqdef \ln P(i)$ for all $i\in[n]$. Constraints~\eqref{lp:01} and~\eqref{lp:logconcave} are immediately satisfied, since $P$ is log-concave. By the discussion from~\autoref{ssec:logconcave:completeness} (more specifically, Eq.~\eqref{eq:logconcave:completeness:1} and~\eqref{eq:logconcave:completeness:2}), constraint~\eqref{lp:right:weight:light:js} holds as well.
 
 Letting $\eps_i\eqdef \abs{P(i)-\D(i)}$ for $i\in S$, we also immediately have~\eqref{lp:bound:sum:epsi} and~\eqref{lp:noneg:epsi} (since $\normone{P-\D} \leq \eps$ and $\kolmogorov{\D}{P}\leq \alpha$ by assumption). Finally, to see why~\eqref{lp:right:weight:heavy:js} is satisfied, we rewrite
 \[
      x_i - \ln\D(i) = \ln\frac{P(i)}{\D(i)} =  \ln\frac{\D(i)\pm\eps_i}{\D(i)} = \ln(1\pm \frac{\eps_i}{\D(i)})
 \] 
 and use the fact that $\ln(1+x) \leq x$ and $\ln(1-x) \geq -2x$ (the latter for $x < \frac{1}{2}$, along with $\frac{\eps_i}{\D(i)} \leq \frac{2\alpha}{\beta} \ll 1$).
\end{proof}

\subsubsection{Putting it all together: Proof of~\autoref{lemma:distance:log}}

The algorithm is as follows (keeping the notations from~\autoref{stage:1} to~\autoref{stage:3}):
\begin{itemize}
  \item Set $\alpha,\beta,\gamma$ as above.
  \item Follow~\autoref{stage:1} to reduce it to the case where $\D$ is unimodal and satisfies the conditions for Kolmogorov and $\lp[1]$ distance; and a good $[a,b]$ approximation of the support is known
  \item For each of the $\bigO{n}$ possible modes $c\in[a,b]$:
  \begin{itemize}
    \item Run the linear program~\autoref{algo:lp:logconcave}, return \accept if a feasible solution is found
  \end{itemize}
  \item None of the linear programs was feasible: return \reject.
\end{itemize}

The correctness comes from~\autoref{lemma:lp:logconcave:soundness} and~\autoref{lemma:lp:logconcave:completeness} and the discussions in~\autoref{stage:1} to~\autoref{stage:3}; as for the claimed running time, it is immediate from the algorithm and the fact that the linear program executed each step has $\poly(n,1/\eps)$ constraints and variables.

\end{proof}
 
\makeatletter{}\section{Proof of~\autoref{theo:samp:binomial:tolerant:lb}}\label{app:binomial:tolerant:lb}

In this section, we establish our lower bound for tolerant testing of the Binomial distribution, restated below:

\binomialtolerantlbtheorem*
\noindent The theorem will be a consequence of the (slightly) more general result below:

\begin{theorem}\label{theo:samp:binomial:tolerant}
  There exist absolute constants $\eps_0> 0$ and $\lambda > 0$ such that the following holds. Any algorithm which, given $\SAMP$ access to an unknown distribution $\D$ on $\domain$ and parameter $\eps \in (0,\eps_0)$, distinguishes with probability at least $2/3$ between \textsf{(i)} $\normone{\D-\binomial{n}{\frac{1}{2}}} \leq \eps$ and \textsf{(ii)} $\normone{\D-\binomial{n}{\frac{1}{2}}} \geq \lambda\eps^{1/3}-\eps$ must use $\bigOmega{\eps\frac{\sqrt{n}}{\log(\eps^{2/3} n)}}$ samples.
\end{theorem}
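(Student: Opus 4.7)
The plan is to reduce (tolerant) uniformity testing on an appropriately chosen domain $[m]$ to our Binomial tolerant testing problem, invoking a Valiant--Valiant~\cite{ValiantValiant:10lb} type lower bound for distinguishing $\alpha_1$-close-to-$\uniform_m$ from $\alpha_2$-far-from-$\uniform_m$ with suitably chosen parameters. The reduction exploits the near-flatness of $\binomial{n}{1/2}$ in a window around its mode $c=n/2$, so that small perturbations encoding a uniformity instance can be transplanted onto the Binomial distribution.

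Concretely, fix a window $W\subseteq\{0,\ldots,n\}$ of width $|W|$ centered at $c$, with $|W|$ chosen so that $\binomial{n}{1/2}(c+j)/\binomial{n}{1/2}(c)=1\pm O(\eps^2)$ for all $c+j\in W$; by Stirling's formula, $|W|=\Theta(\eps\sqrt n)$ suffices. Set $Q\eqdef\binomial{n}{1/2}(W)=\Theta(\eps)$ and $m\eqdef|W|/2$. Pair consecutive elements of $W$ as $(w_k,w_k+1)$ for $k\in[m]$, and map an input distribution $\D$ on $[m]$ (writing $\D(k)=(1+\phi_k)/m$) to a distribution $\D^\ast$ on $\{0,\ldots,n\}$ which agrees with $\binomial{n}{1/2}$ outside $W$ and puts mass $(Q/2m)(1+\phi_k)$ on $w_k$ and $(Q/2m)(1-\phi_k)$ on $w_k+1$. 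The near-flatness of $\binomial{n}{1/2}$ on $W$ then yields the key distance-scaling identity
\[
  \normone{\D^\ast-\binomial{n}{1/2}}\;=\;Q\cdot\normone{\D-\uniform_m}\;\pm\;O(Q\eps^2)\,,
\]
so a uniformity gap $\alpha_1$ vs $\alpha_2$ lifts to a Binomial gap of roughly $Q\alpha_1$ vs $Q\alpha_2$; by suitably picking $|W|$ and the VV parameters $\alpha_1,\alpha_2$ (together with $\lambda$), one arranges this to match the claimed Binomial gap $\eps$ vs $\lambda\eps^{1/3}-\eps$, with the $-\eps$ term absorbing both the close side and the flatness error.

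To make the reduction sample-efficient, each sample from $\D^\ast$ is simulated using at most one sample from $\D$: toss a bias-$Q$ coin; on tails, return a sample drawn from $\binomial{n}{1/2}$ conditioned on $\{0,\ldots,n\}\setminus W$ (no $\D$-sample required); on heads, draw $k\sim\D$ and output $w_k$ or $w_k+1$ uniformly at random. The resulting simulated distribution $\D^\dagger$ has the same pair-index marginal as $\D^\ast$, and an essentially identical calculation shows it satisfies the same distance-scaling identity. Thus $s$ samples from $\D^\ast$ cost $\Theta(Qs)$ samples from $\D$ in expectation (sharpened to a worst-case bound via a routine Chernoff argument), so any VV lower bound of $t^\ast$ samples for the uniformity problem on $[m]$ transfers to a Binomial testing lower bound of $\bigOmega{t^\ast/Q}$, which for the parameter choices above implies the claimed $\bigOmega{\eps\sqrt n/\log(\eps^{2/3}n)}$ bound.

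The main obstacle will be the careful calibration of $|W|$, $Q$, the VV close/far constants $\alpha_1,\alpha_2$, and the constant $\lambda$ so that the Binomial gap is exactly $\eps$ vs $\lambda\eps^{1/3}-\eps$ and the flatness error $O(Q\eps^2)$ is strictly absorbed into the additive slack; in particular, one must balance the tension between needing $Q$ small enough that the multiplicative flatness of the Binomial on $W$ is $1\pm O(\eps^2)$ and $Q$ large enough that a sample-cost blowup of $1/Q$ yields the target $\eps^{2/3}n$ inside the logarithm. A minor technical point is ensuring $\D^\ast\geq 0$ (equivalently $|\phi_k|\leq 1$), which is satisfied on the hard instances underlying the VV lower bound since they can be taken with $\D(k)\in[0,2/m]$.
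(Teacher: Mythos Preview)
Your high-level reduction---place the uniformity instance inside a window around the mode of $B_n$ and appeal to the Valiant--Valiant tolerant-uniformity lower bound---is exactly what the paper does. But your concrete calibration does not yield the stated gap. With $|W|=\Theta(\eps\sqrt n)$ you get $Q=\Theta(\eps)$, and then the far side becomes $Q\alpha_2=\Theta(\eps)$, not $\Theta(\eps^{1/3})$; no choice of constants $\alpha_1,\alpha_2,\lambda$ fixes this. The ``tension'' you flag is self-imposed: you are demanding flatness $1\pm O(\eps^2)$, which is much stronger than needed.

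The paper's choice is $Q=p=\Theta(\eps^{1/3})$ (equivalently, $|W|=\Theta(\eps^{1/3}\sqrt n)$). This gives flatness only $1\pm\Theta(\eps^{2/3})$ on the window, but that is precisely enough: the flatness contribution to $\normone{\D'-B_n}$ is $Q\cdot\Theta(\eps^{2/3})=\Theta(\eps)$. Taking the VV close-threshold $\phi=\Theta(\eps^{2/3})$ (so $Q\phi=\Theta(\eps)$) and the VV far-threshold a constant, you get the Binomial gap $\Theta(\eps)$ versus $\Theta(\eps^{1/3})$ as required. The VV bound $\Omega(\phi\, m/\log m)$ with $m=\Theta(\eps^{1/3}\sqrt n)$ then gives $\Omega(\eps\sqrt n/\log(\eps^{2/3}n))$ directly, with no need for your $1/Q$ sample amplification. (Your $1/Q$ argument is valid and would in fact yield a stronger bound, but since your gap was wrong it proves a lower bound for a \emph{harder} problem---smaller far side---which does not imply the theorem.) The pairing trick is also superfluous: the simulated $\D^\dagger$ you describe is constant on each pair, so you could simply identify $[m]$ with $W$, which is what the paper does.
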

\noindent By choosing a suitable $\phi$ and working out the corresponding parameters, this for instance enables us to derive the following:
\begin{corollary}\label{coro:samp:binomial:tolerant}
  There exists an absolute constant $\eps_0 \in(0,1/1000)$ such that the following holds. Any algorithm which, given $\SAMP$ access to an unknown distribution $\D$ on $\domain$, distinguishes with probability at least $2/3$ between \textsf{(i)} $\normone{\D-\binomial{n}{\frac{1}{2}}} \leq \eps_0$ and \textsf{(ii)} $\normone{\D-\binomial{n}{\frac{1}{2}}} \geq 100\eps_0$ must use $\bigOmega{\frac{\sqrt{n}}{\log n}}$ samples.
\end{corollary}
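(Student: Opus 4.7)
The plan is to reduce from tolerant uniformity testing on a domain of size $m = \Theta(\sqrt{n})$, for which the Valiant--Valiant result~\cite{ValiantValiant:10lb} gives a lower bound of $\bigOmega{\frac{1}{\eps_2-\eps_1}\cdot\frac{m}{\log m}}$ samples (the same bound invoked to prove \autoref{coro:tol:lb:nlogn}). With $m = \Theta(\sqrt{n})$ and $\eps_2 - \eps_1 = \Theta(\eps)$, this matches the target $\bigOmega{\frac{1}{\eps}\cdot\frac{\sqrt{n}}{\log n}}$ exactly, so it suffices to exhibit an essentially \emph{distance-preserving} reduction that turns a tolerant Binomial tester on $[n]$ into a tolerant uniformity tester on $[m]$.

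\paragraph*{The reduction.}
Set $B \eqdef \binomial{n}{1/2}$. I would use a \emph{quantile partition} of $B$: a partition of $[n]$ into $m$ consecutive intervals $I_1,\dots,I_m$ with $B(I_j) \approx 1/m$ for each $j$. Such a partition is built greedily by cutting just past each $j/m$-quantile of the (known, explicit) $B$. Now, given sample access to an arbitrary distribution $\D$ on $[m]$ promised to be either $\eps_1$-close or $\eps_2$-far from $\uniform_{[m]}$, the reduction simulates one sample from a new distribution $\D'$ on $[n]$ by: draw $i \sim \D$, and (using rejection sampling on the known distribution $B$, which consumes \emph{no} additional $\D$-samples) output a sample from $B$ conditioned on $I_i$. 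This gives sample access to
\[
    \D' \eqdef \sum_{j=1}^{m} \D(j)\cdot B|_{I_j}\qquad\text{on } [n].
\]
Because the conditional distributions $B|_{I_j}$ have disjoint supports, a direct computation yields
\[
    \normone{\D' - B} \;=\; \sum_{j=1}^{m}\bigl|\D(j) - B(I_j)\bigr|,
\]
which, by the triangle inequality, differs from $\normone{\D - \uniform_{[m]}}$ by at most the \emph{partition-mismatch} quantity $\Delta \eqdef \sum_{j=1}^m \abs{B(I_j) - 1/m}$.

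\paragraph*{Wrap-up.}
Given the reduction, a tolerant Binomial tester that distinguishes $\eps$-close from $100\eps$-far using $q$ samples immediately yields a tolerant uniformity tester on $[m]$ distinguishing $(\eps - \Delta)$-close from $(100\eps + \Delta)$-far using $q$ samples as well. Provided the partition can be chosen so that $\Delta = o(\eps)$---which fits the $\eps$ versus $100\eps$ gap with room to spare---Valiant--Valiant forces $q = \bigOmega{\frac{m}{\eps\log m}} = \bigOmega{\frac{1}{\eps}\cdot\frac{\sqrt{n}}{\log n}}$, as desired.

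\paragraph*{Main obstacle.}
The crux is bounding $\Delta$. A greedy quantile construction gives the per-interval bound $|B(I_j)-1/m| \leq B_{\max}$, where $B_{\max} = \bigTheta{1/\sqrt{n}}$ is the maximum atom of $B$ (attained near the mode $n/2$); but summing naively over $m = \sqrt{n}$ intervals gives only $\Delta = \bigO{1}$, which is too weak. To drive $\Delta$ down to $o(\eps)$ across the full regime $\eps \geq 1/n^{O(1)}$, I would combine two ingredients: first, allow $m$ to be slightly below $\sqrt{n}$ (by a $\poly\log n$ factor) so that each interval contains many atoms and $B_{\max} \ll 1/m$; second, telescope the cumulative quantile errors $e_j \eqdef F_B\bigl(\sup I_j\bigr) - j/m$ and exploit the smoothness of $B$ near its mode (where consecutive atom sizes differ by $\bigO{1/n}$ relative to $B_{\max}$) together with the exponential decay of $B$ in the tails, to control $\sum_j |e_j - e_{j-1}|$ far better than the crude $m B_{\max}$ bound. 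This careful partition analysis is where the real work lies; the slack introduced in $m$ costs at most logarithmic factors and is absorbed into the final bound.
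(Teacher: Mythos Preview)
Your reduction is correct and yields the corollary, but it takes a genuinely different route from the paper's.

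\textbf{What the paper does.} The paper reduces in the \emph{opposite} direction on domain sizes: given a uniformity instance on $[n]$, it builds a Binomial instance on $[N]$ with $N=\Theta(n^2)$. Concretely, it picks a window $I_{N,c}$ of width $2c\sqrt{N}=n$ around the mode of $B_N=\binomial{N}{1/2}$, where $c=\Theta(\sqrt{\phi})$ is chosen so that the multiplicative variation of $B_N$ inside $I_{N,c}$ is at most $1+\phi$. It then plants the uniformity instance $D$ on this window (scaled by $p=B_N(I_{N,c})=\Theta(\sqrt{\phi})$) and fills in the tails with $B_N$ itself. Because the window is $(1+\phi)$-flat, \autoref{lemma:small:l2:close:uniform:l1} gives $\normone{B_N-\tilde B_N}\le p\phi$ immediately, and the whole error analysis is two lines.

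\textbf{What you do.} You spread the uniformity instance \emph{across} the Binomial via a quantile partition: your identity $\normone{\D'-B}=\sum_j|\D(j)-B(I_j)|$ is clean, and the only issue is bounding $\Delta=\sum_j|B(I_j)-1/m|$. Here you over-engineer. For the corollary ($\eps_0$ a fixed constant) your own ``crude'' bound already suffices: with $m=c\sqrt{n}$ the greedy partition gives $|B(I_j)-1/m|=|e_j-e_{j-1}|<B_{\max}$, hence $\Delta\le m\cdot B_{\max}=\Theta(c)$. Taking $c$ a small enough absolute constant makes $\Delta$ as small as needed while keeping $m=\Theta(\sqrt{n})$, so Valiant--Valiant with any small constant $\phi$ directly gives $\Omega(\sqrt{n}/\log n)$. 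No polylog slack in $m$ and no telescoping are required; indeed, your smoothness/telescoping argument for improving on $m\cdot B_{\max}$ is unclear (the overshoots $e_j$ can oscillate), but fortunately it is unnecessary for this statement.

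\textbf{Trade-offs.} The paper's approach is shorter and conceptually simpler—``the Binomial has a flat region of width $\Theta(\sqrt{N})$'' is the only structural fact used. Your quantile approach is more general (it works for any target distribution whose maximum atom is $o(1/m)$) and avoids identifying a flat region explicitly, at the cost of a slightly more involved error analysis.
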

\noindent By standard techniques, this will in turn imply~\autoref{theo:samp:binomial:tolerant:lb}.

\begin{proofof}{\autoref{theo:samp:binomial:tolerant}}
Hereafter, we write for convenience $B_n\eqdef \binomial{n}{\frac{1}{2}}$. To prove this lower bound, we will rely on the following:
\begin{theorem}[{\cite[Theorem 1]{ValiantValiant:10lb}}]\label{theo:samp:uniformity:tolerant}
  For any constant $\phi\in(0,1/4)$, following holds. Any algorithm which, given $\SAMP$ access to an unknown distribution $\D$ on $\domain$, distinguishes with probability at least $2/3$ between \textsf{(i)} $\normone{\D - \uniform_n} \leq \phi$ and \textsf{(ii)} $\normone{\D - \uniform_n} \geq \frac{1}{2}-\phi$, must have sample complexity at least $\frac{\phi}{32}\frac{n}{\log n}$.
\end{theorem}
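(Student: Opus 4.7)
The plan is to follow the moment-matching Poissonization framework of Valiant and Valiant. First I would pass to the Poissonized sampling model, in which the number of samples is $\poisson{k}$ rather than exactly $k$; this costs only constant factors in any lower bound. The benefit is that the per-element counts $X_i$ become independent with $X_i \sim \poisson{k\D(i)}$. By a standard symmetrization argument (both the testing problem and the uniform distribution are invariant under relabelings of $[n]$), any optimal tester depends only on the fingerprint $\mathcal{F} = (F_1, F_2, \ldots)$, where $F_j$ is the number of domain elements observed exactly $j$ times. Thus it suffices to lower bound the sample complexity of distinguishing the fingerprint distributions induced by two carefully chosen priors $\dyes, \dno$ over distributions on $[n]$.

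Second, I would construct $\dyes$ and $\dno$ from two probability measures $\mu_Y, \mu_N$ on $[0, M/n]$ (for a constant $M$). To sample $\D \sim \dyes$, draw $n$ i.i.d.\ probabilities from $\mu_Y$, assign them to the elements of $[n]$, and renormalize to sum to $1$; define $\dno$ analogously from $\mu_N$. The goal is to pick $\mu_Y$ concentrated on a narrow band around $1/n$, so that $\normone{\D - \uniform_n} \leq \phi$ for $\D \sim \dyes$ with high probability, and to pick $\mu_N$ supported on $\{0\} \cup [c/n, C/n]$ with $c > 1$, so that a constant fraction of elements have probability $0$ under $\D \sim \dno$, forcing $\normone{\D - \uniform_n} \geq \frac{1}{2} - \phi$.

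Third, the technical crux is to choose $\mu_Y, \mu_N$ so that they have matching moments: $\int x^j\, d\mu_Y(x) = \int x^j\, d\mu_N(x)$ for $j = 1, \ldots, d$, with $d = \Theta(\log n)$. A Chebyshev-polynomial construction achieves this: place interpolation nodes on the supports of $\mu_Y$ and $\mu_N$, solve the resulting linear system for signed weights, and split the positive and negative parts to define the two measures. Once this is done, one invokes a central-limit theorem for Poisson fingerprints (the ``Wishful Thinking'' CLT of Valiant and Valiant), which shows that the total variation distance between the two fingerprint distributions is controlled by the mismatch between the first few moments of $k\mu_Y$ and $k\mu_N$. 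Matching $\Theta(\log n)$ moments then forces indistinguishability whenever $k = o(\phi n/\log n)$, and carefully tracking constants through this chain of reductions yields the explicit $\phi/32$ prefactor.

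The main obstacle lies in step three: the Chebyshev construction natively produces a signed measure, and one must simultaneously (i) decompose it into positive and negative parts of comparable total mass so that both $\mu_Y, \mu_N$ are valid probability measures, (ii) keep the supports of $\mu_Y, \mu_N$ well separated so that the $\lp[1]$-to-uniform gap between the two cases is at least $\frac{1}{2} - 2\phi$, and (iii) ensure the renormalization step in the definitions of $\dyes, \dno$ does not destroy either the closeness/farness guarantees or the moment-matching property. The dependence of the lower bound on $\phi$ enters precisely through the support separation and through the variance of $\sum_i \D(i)$ around $1$, both of which dictate how tight the final concentration and coupling bounds can be.
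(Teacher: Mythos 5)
This statement is not proved in the paper at all: it is quoted verbatim (with the explicit constant $\frac{\phi}{32}\frac{n}{\log n}$) from Valiant and Valiant~\cite{ValiantValiant:10lb}, and the paper's ``proof'' consists of the citation alone — the result is then used as a black box in the reduction of \autoref{theo:samp:binomial:tolerant}. So there is no internal argument to compare yours against; what you have written is an attempt to reconstruct the original Valiant--Valiant proof.

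As such a reconstruction, your outline has the right architecture — Poissonization, reduction to fingerprints by symmetry, a pair of priors $\dyes,\dno$ built from i.i.d.\ probability values drawn from two measures with $\Theta(\log n)$ matching moments, and the CLT for generalized multinomial (fingerprint) distributions to bound the total variation between the induced fingerprint distributions. But as a proof it is incomplete precisely at the points you yourself flag as ``the main obstacle'': you never exhibit the moment-matched measures $\mu_Y,\mu_N$, never verify that after renormalization the yes-instances satisfy $\normone{\D-\uniform_n}\leq\phi$ while the no-instances satisfy $\normone{\D-\uniform_n}\geq\frac12-\phi$ (the support-separation and mass-concentration arguments are asserted, not carried out), and never actually derive a quantitative TV bound from the CLT — so the claim that ``carefully tracking constants'' yields the $\frac{\phi}{32}$ prefactor is unsupported. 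These are not routine details: the positive/negative decomposition of the signed Chebyshev-type construction, the control of $\sum_i \D(i)$ under renormalization, and the moment-mismatch-to-TV bound are the technical core of the Valiant--Valiant argument. If the intent is to use the theorem (as this paper does), citing \cite{ValiantValiant:10lb} suffices; if the intent is to prove it, the sketch needs the explicit construction and the quantitative indistinguishability lemma filled in before it constitutes a proof.
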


Without loss of generality, assume $n$ is even (so that $B_n$ has only one mode located at $\frac{n}{2}$). For $c>0$, we write $I_{n,c}$ for the interval $\{\frac{n}{2}-c\sqrt{n},\dots,\frac{n}{2}+c\sqrt{n}\}$ and $J_{n,c}\eqdef\domain\setminus I_{n,c}$.
\begin{fact}\label{fact:easy:binomial}
For any $c > 0$, \[
\frac{B_n(\frac{n}{2} + c\sqrt{n})}{B_n({n}/{2})}, \frac{B_n(\frac{n}{2} - c\sqrt{n})}{B_n({n}/{2})} \operatorname*{\sim}_{n\to\infty} e^{-2c^2} \]
and 
\[
B_n(I_{n,c}) \in (1\pm o(1))\cdot[e^{-2c^2},1]\cdot 2c\sqrt{\frac{2}{\pi}} = \bigTheta{c}\,.
\]
\end{fact}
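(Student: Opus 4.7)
The statement is a standard asymptotic fact about $\binomial{n}{1/2}$; the plan is simply to apply Stirling's formula (equivalently, the local central limit theorem) to the central binomial coefficient and to the ratios around the mode.

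For the first claim, I would avoid invoking the CLT as a black box and instead compute the ratio directly. Writing $m = \lfloor c\sqrt{n}\rfloor$ and $B_n(k) = \binom{n}{k}2^{-n}$, one has the telescoping identity
\[
\frac{B_n(n/2 + m)}{B_n(n/2)} \;=\; \prod_{j=1}^{m}\frac{n/2 - j + 1}{n/2 + j}
\;=\; \prod_{j=1}^{m}\frac{1 - (2j-2)/n}{1 + 2j/n},
\]
and symmetrically for $B_n(n/2-m)/B_n(n/2)$. Taking logarithms and using $\log(1+x) = x - x^2/2 + O(x^3)$ for $\abs{x} \leq 2c/\sqrt{n} \to 0$, each term contributes $-\frac{4j-2}{n} + O(j^2/n^2)$, so the sum becomes
\[
\sum_{j=1}^{m}\left(-\frac{4j-2}{n} + O\mleft(\tfrac{j^2}{n^2}\mright)\right)
\;=\; -\frac{2m^2}{n} + O\mleft(\tfrac{m}{n} + \tfrac{m^3}{n^2}\mright) \;=\; -2c^2 + o(1),
\]
since $m = c\sqrt{n}(1+o(1))$. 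Exponentiating gives the claimed $e^{-2c^2}$ asymptotic for both ratios.

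For the second claim, I would first record the standard Stirling estimate $B_n(n/2) = (1+o(1))\sqrt{\frac{2}{\pi n}}$ (for $n$ even). Then I would bound the sum $B_n(I_{n,c}) = \sum_{k\in I_{n,c}} B_n(k)$ by factoring out $B_n(n/2)$:
\[
B_n(I_{n,c}) \;=\; B_n(n/2) \sum_{k \in I_{n,c}} \frac{B_n(k)}{B_n(n/2)}.
\]
Every ratio in this sum lies in $[e^{-2c^2}, 1]$ up to a $(1\pm o(1))$ multiplicative factor, by the first claim applied uniformly to all $k\in I_{n,c}$ (the error in the log-expansion is uniform in $k$ since $|k-n/2|/\sqrt{n}\leq c$). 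There are $2\lfloor c\sqrt n\rfloor + 1 = (1+o(1))\cdot 2c\sqrt{n}$ terms, so
\[
B_n(I_{n,c}) \;\in\; (1\pm o(1))\cdot [e^{-2c^2},1]\cdot 2c\sqrt{n}\cdot \sqrt{\tfrac{2}{\pi n}} \;=\; (1\pm o(1))\cdot [e^{-2c^2},1]\cdot 2c\sqrt{\tfrac{2}{\pi}},
\]
which is $\bigTheta{c}$ as claimed (here $c$ is fixed, so all constants depend on $c$ only).

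There is no real obstacle here: the entire content is a careful Stirling/Taylor computation, and the main thing to be careful about is that the $o(1)$ error terms in the log-ratio expansion are uniform over $k\in I_{n,c}$, so that the factorization in the second part is legitimate. If one prefers a cleaner presentation, both claims can be packaged as a single invocation of the local central limit theorem for the $\binomial{n}{1/2}$ law, which directly yields $B_n(n/2 + x\sqrt{n}) = (1+o(1))\sqrt{\tfrac{2}{\pi n}}\,e^{-2x^2}$ uniformly in $x$ on compact sets; the first claim is then immediate, and the second follows from recognizing $\sum_k B_n(k)$ over $I_{n,c}$ as a Riemann sum for $\int_{-c}^{c} \sqrt{2/\pi}\,e^{-2x^2}dx$ (which lies in $[e^{-2c^2},1]\cdot 2c\sqrt{2/\pi}$).
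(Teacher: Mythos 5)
Your proof is correct. The paper itself gives no proof of this statement---it is asserted without argument as a standard fact about $\binomial{n}{\frac{1}{2}}$ in \autoref{app:binomial:tolerant:lb}---so there is nothing of the authors' to compare against; your telescoping-product computation (equivalently, the local CLT packaging you mention at the end) is precisely the routine Stirling/Taylor verification being taken for granted, and the uniformity of the error over $|k-\frac{n}{2}|\leq c\sqrt{n}$, which you rightly flag as the only point needing care, does hold since the error term is $O(m/n+m^3/n^2)=O(n^{-1/2})$ uniformly. Two minor remarks: by symmetry of $\binomial{n}{\frac{1}{2}}$ about $\frac{n}{2}$, the ratio at $\frac{n}{2}-m$ is literally equal to the one at $\frac{n}{2}+m$, so no separate computation is needed; and for the concluding $\bigTheta{c}$ claim, the implied constants can be taken independent of $c$ whenever $c$ is bounded above (since $e^{-2c^2}$ is then bounded away from $0$), which is the form implicitly used later in the reduction where $c=\Theta(\sqrt{\phi})$ is a small constant, rather than constants depending on a fixed $c$ as in your parenthetical.
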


The reduction proceeds as follows: given sampling access to $\D$ on $[n]$, we can simulate sampling access to a distribution $\D^\prime$ on $[N]$ (where $N=\bigTheta{n^2}$) such that
\begin{itemize}
  \item if $\normone{\D - \uniform_n} \leq \phi$, then $\normone{\D^\prime - B_N} < \eps$;
  \item if $\normone{\D - \uniform_n} \geq \frac{1}{2}-\phi$, then $\normone{\D^\prime - B_N} > \eps^\prime - \eps$
\end{itemize}
for $\eps \eqdef \Theta(\phi^{3/2})$ and $\eps^\prime \eqdef \Theta(\phi^{\frac{1}{2}})$;
in a way that preserves the sample complexity.\medskip

More precisely, define $c\eqdef \sqrt{2\ln\frac{1}{1-\phi}} =\bigTheta{\sqrt{\phi}}$ (so that $\phi = 1 - e^{-2c^2}$) and $N$ such that $\abs{I_{N,c}}= n$ (that is, $N=(n/(2c))^2 = \bigTheta{n^2/\phi}$). From now on, we can therefore identify $[n]$ to $I_{N,c}$ in the obvious way, and see a draw from $\D$ as an element in $I_{N,c}$.

Let $p\eqdef B_N(I_{N,c}) = \bigTheta{\sqrt{\phi}}$, and $B_{N,c}$, $\bar{B}_{N,c}$ respectively denote the conditional distributions induced by $B_N$ on $I_{N,c}$ and $J_{N,c}$. Intuitively, we want $\D$ to be mapped to the conditional distribution of $\D^\prime$ on $I_{N,c}$, and the conditional distribution of $\D^\prime$ on $J_{N,c}$ to be exactly $\bar{B}_{N,c}$. This is done as by defining $\D^\prime$ by the process below:
\begin{itemize}
  \item with probability $p$, we draw a sample from $\D$ (seen as an element of $I_{N,c}$;
  \item with probability $1-p$, we draw a sample from $\bar{B}_{N,c}$.
\end{itemize}

Let $\tilde{B}_N$ be defined as the distribution which exactly matches $B_N$ on $J_{n,c}$, but is uniform on $I_{n,c}$:
\begin{align*}
  \tilde{B}_N(i) = \begin{cases}
    \frac{p}{\abs{I_{n,c}}} & i\in I_{n,c}\\
    B_N(i) & i\in J_{n,c}\\
     \end{cases}
\end{align*}
From the above, we have that $\normone{\D^\prime - \tilde{B}_N} = p\cdot \normone{\D - \uniform_n}$. Furthermore, by \autoref{fact:easy:binomial}, \autoref{lemma:small:l2:close:uniform:l1} and the definition of $I_{n,c}$, we get that $\normone{B_N - \tilde{B}_N} = p\cdot \normone{(B_N)_{I_{n,c}} - \uniform_{I_{n,c}}} \leq p\cdot \phi$. Putting it all together,

\begin{itemize}
  \item If $\normone{\D - \uniform_n} \leq \phi$, then by the triangle inequality $\normone{\D^\prime - B_N} \leq p(\phi + \phi) = 2p\phi$;
  \item If $\normone{\D - \uniform_n} \geq \frac{1}{2}-\phi$, then similarly $\normone{\D^\prime - B_N} \geq p(\frac{1}{2}-\phi -\phi) = \frac{p}{4}-2p\phi$.
\end{itemize}
Recalling that $p= \bigTheta{\sqrt{\phi}}$ and setting $\eps \eqdef 2p\phi$ concludes the reduction. From \autoref{theo:samp:uniformity:tolerant}, we conclude that
\[
\frac{\phi}{32}\frac{n}{\log n}
  = \bigOmega{\phi\frac{\sqrt{\phi N}}{\log(\phi N)}} 
  = \bigOmega{\eps\frac{\sqrt{N}}{\log(\eps^{2/3} N)}}
\]
samples are necessary.

\end{proofof}
\begin{proofof}{\autoref{coro:samp:binomial:tolerant}}
The corollary follows from the proof of~\autoref{coro:samp:binomial:tolerant}, by taking $\phi=1/1000$ and computing the corresponding $\eps$ and $\eps^\prime-\eps$\xspace to check that indeed $\lim_{n\to\infty}\frac{\eps^\prime - \eps}{\eps} > 100$.
\end{proofof}
 
\end{document}